\newtheorem{thm}{Theorem}
\newtheorem{lemma}{Lemma}
\newtheorem{prop}{Proposition}
\newtheorem{defn}{Definition}
\begin{document}

  \title{\bf Semi-parametric local variable selection under misspecification}
  \author{David Rossell, 
    Pompeu Fabra University\\
    and \\
    Arnold Kisuk Seung, 
    University of California at Irvine \\
    and \\
    Ignacio Saez, 
    Mount Sinai \\
    and \\
    Michele Guindani \\
    University of California at Los Angeles}
  \maketitle

\bigskip
\begin{abstract}
Local variable selection aims to  test for the effect of covariates on an outcome within specific regions. We outline a challenge that arises in the presence of non-linear effects and model misspecification. Specifically, for common semi-parametric methods even slight model misspecification can result in a high false positive rate, in a manner that is highly sensitive to the chosen basis functions. We propose a methodology based on orthogonal cut splines that avoids false positive inflation for any choice of knots,  and achieves consistent local variable selection. Our approach offers simplicity, handles both continuous and categorical covariates, and provides theory for high-dimensional covariates and model misspecification. We discuss settings with either independent or dependent data. Our proposal allows including adjustment covariates that do not undergo selection, enhancing the model's flexibility. 
Our examples describe salary gaps associated with various discrimination factors at different ages, and the effects of covariates on functional data measuring brain activation at different times.
\end{abstract}

\noindent%
{\it Keywords:}  local null testing, semi-parametric model, Bayesian model selection, Bayesian model averaging, functional data
\vfill

Local variable selection, or local null hypothesis testing,  is an important problem in many fields. Given  a set of covariates $x$, the task is testing whether they have an effect on an outcome $y$, at specific values indicated by additional covariates $z$.  For example, we assess whether
disparities in salary ($y$) are associated to covariates such as race and gender ($x$) at different ages ($z$). Although there may be no evidence for disparities in individuals who recently entered the workforce (e.g., due to newly implemented policies), said disparities may exist at other career stages. 
Importantly, the primary objective in these examples is testing (multiple) scientific hypotheses indexed by $z$, not merely estimating covariate effects.  Our examples illustrate how estimation-focused methods may be prone to false positive inflation.  

Additive regression is a natural semi-parametric framework for analyzing local covariate effects, particularly when the sample size $n$ or computational power are not sufficient for fully non-parametric methods, or one wants simpler models to facilitate interpretation. While additive regression models are standard, their application to local variable selection has not been well-studied. We contribute primarily in two ways.
First, we highlight an important drawback: even under slight model misspecification, most additive regression methods present an inflated type I error for local effects.
 This issue is sensitive to the chosen basis functions,  and raises a significant concern for local variable selection. 
Second, we address this pitfall through a simple modification that preserves the interpretability and computational features of additive regression. We give theoretical and empirical results that our approach achieves consistent local variable selection when the model is misspecified. To our knowledge, ours are the first results for high-dimensional local variable selection under misspecification.

Figure \ref{fig:splinefit} offers an illustration with
a binary covariate $x$, defining two groups (see Section \ref{sec:simulation_iid}). The true group means are equal for $z \leq 0$ and different for $z>0$. We seek to identify the $z$'s at which the group means differ. We fit cubic B-splines to the combined data from both groups ($n=1,000$). 
The solid black lines in the left panel are the group means projected on the B-spline basis, i.e.  the approximation recovered as $n \rightarrow \infty$. Critically, said projections are no longer equal for $z \leq 0$. 
Their differences are small, but as $n$ grows significant type I error issues arise. The right panel shows that a near-one posterior probability is assigned to group differences for $z \in (-1,0)$ (similar issues occur when using posterior intervals or P-values). 
Here we employed moderately many knots, 20 for the baseline mean and 10 for the group differences. 
 Figure \ref{fig:splinefit_varyingknots} shows that type I errors also occur for other knots and sample sizes (bottom left panel).
Our proposal is more robust in that it prevents type I errors, regardless of the knot placement. 
It can hence operate with relatively few knots, 
which simplifies computations (the model space grows exponentially with the knots),
and one may use Bayesian model averaging to learn how many knots are needed.

\begin{figure}
\begin{center}
\begin{tabular}{cc}
\includegraphics[width=0.48\textwidth, height=0.42\textwidth]{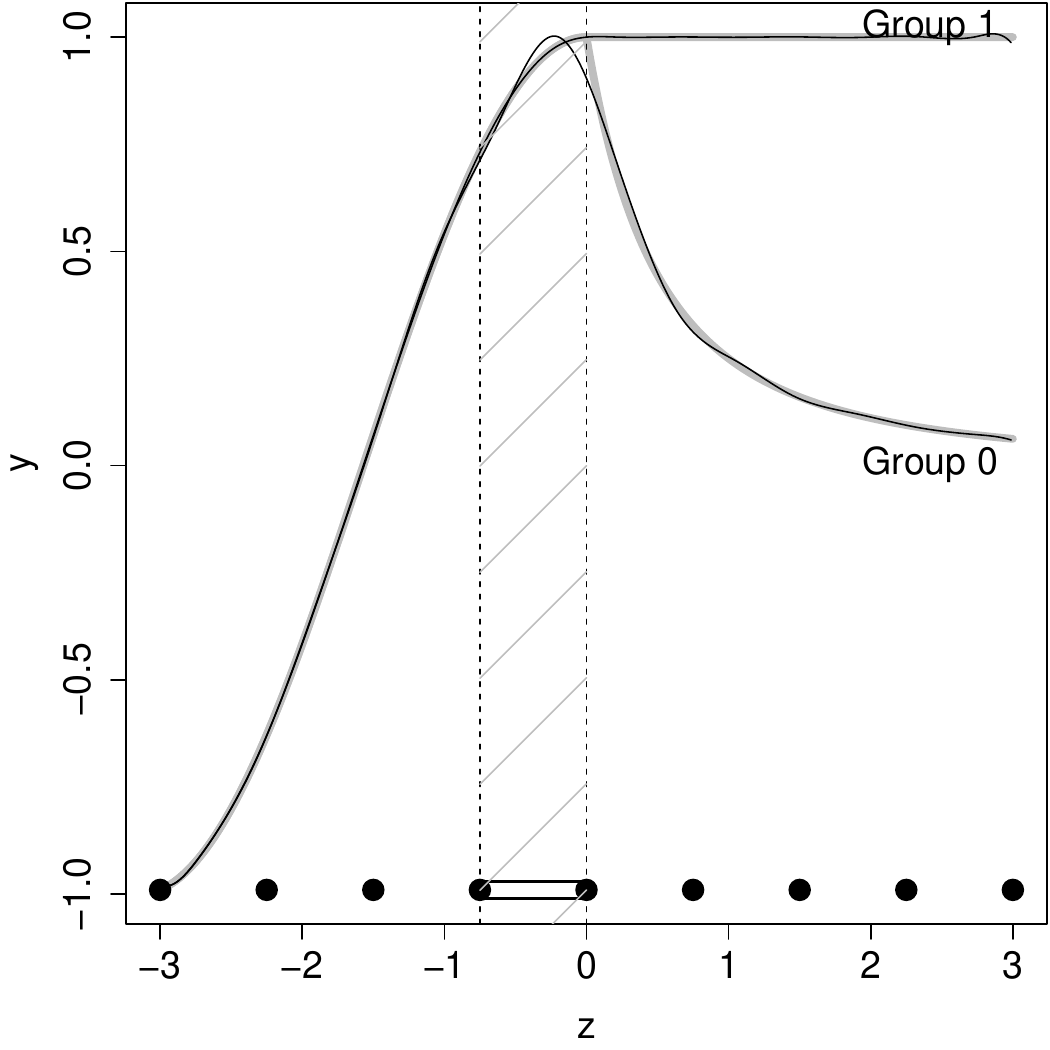} &
\includegraphics[width=0.48\textwidth, height=0.42\textwidth]{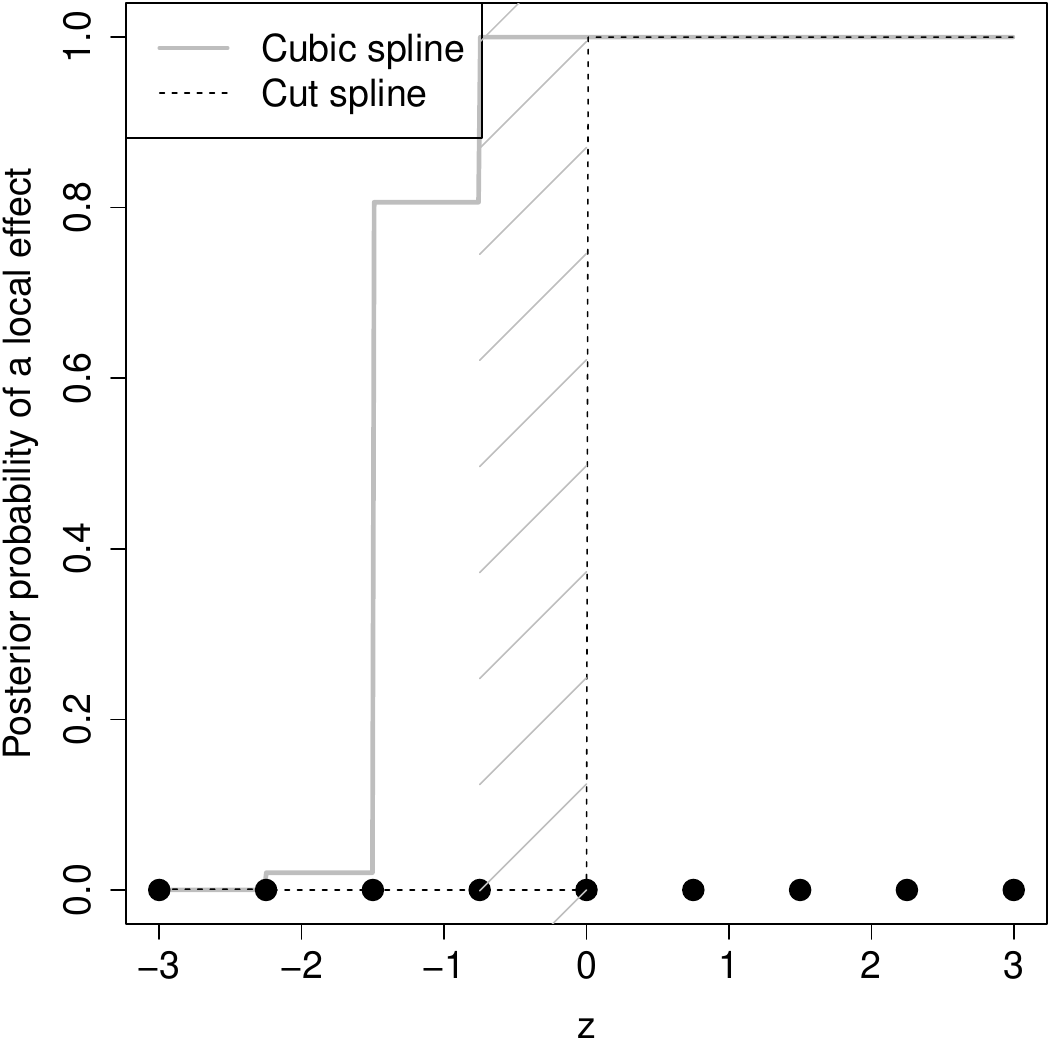} \\
\end{tabular}
\end{center}
\caption{Simulation. Left: true group means and their cubic B-spline projections (left). Right: posterior probability of local group differences, using 9 knots (black dots) for cubic splines and for cut cubic B-splines based on $n=1000$.}
\label{fig:splinefit}
\end{figure}

We consider settings with independent data 
and some extensions to dependent data. The latter setting received considerable attention.
\cite{boehm:2015} addressed local variable selection by combining spike-and-slab priors with latent Gaussian processes, \cite{jhuang:2019} by using horseshoe priors, and \cite{kang:2018} by applying soft-thresholding to a Gaussian process. For functional data on a common grid (e.g., mass spectrometry, images), \cite{morris:2011} 
employed basis functions such as wavelets, Fourier transforms, and splines. 
These methods test whether covariate effects are near 0, rather than exactly 0 as is our goal here. There is also abundant research in settings where $z$ is categorical. For example, \cite{smith_michael:2007} proposed a Bayesian framework for regressions on a lattice, whereas \cite{scheel:2013} 
focused on variable selection over discrete spatial locations.
 Similar problems have been addressed (\cite{warren2012,wilson2017}) to estimate pollutant effects during pregnancy. These models focus on estimation, thus lacking a formal testing framework, as we do here. 

In work related to ours, \cite{deshpande:2020} employed Bayesian additive regression trees for varying-coefficient models, 
and established near-minimax estimation rates when the number of covariates $p$ grows slower than $n$.
 However, their work did not provide theory for local null tests. Furthermore,  \cite{paulon:2023}  considered a non-parametric framework for time-varying variable selection with categorical covariates, to detect high-order covariate interactions. They proved parameter estimation and variable selection consistency for fixed $p$. Here we consider a simpler additive model that is applicable beyond longitudinal data, allows for continuous and  categorical  covariates and multivariate $z$, and provides high-dimensional theory under misspecification.

The key in our approach lies in using basis functions that capture covariate effects, locally for each neighborhood in $z$,
and are orthogonal to basis used outside that neighborhood. We refer to this construction as an {\it orthogonal cut basis}. It can be viewed as an extension of a piece-wise constant parameterization that maintains certain orthogonality properties and allows including higher-degree terms (e.g., cubic, as in Figure \ref{fig:splinefit}). In dependent data settings, an additional element is required: a block-diagonal approximation to the covariance that captures local dependence. 

We set notation. Let $y = (y_1, \ldots, y_n) \in \mathbb{R}^n$ be the observed outcomes,
$x_i= (x_{i1},\ldots,x_{ip})$ the $p$ covariates for individual $i$, and $z_i \in \mathbb{R}^d$ the ``coordinates" of interest. 
For example $y_i$ is the salary of individual $i$, $z_i$ the age, and $x_i$ records race and gender.
The goal is to evaluate the effect of $x_i$ on $y_i$ at each possible $z_i$. 
Our theory and methods also hold when \(z_i\) is multivariate,  see Section \ref{ssec:simulation_iid_bivar} and Figures \ref{fig:bspline_fit_cubic_bivar}-\ref{fig:simiid_bivar} for bivariate examples.
We note however that multivariate functional data requires
specifying suitable dependence models, which falls beyond our scope.  
Additionally to $z_i$, there may be adjustment covariates included in the model without undergoing testing. For simplicity we omit said covariates from our exposition, but they are implemented in our software and used in our salary example. 
We denote the (unknown) data-generating density of $y$ as $F$, which generally lies outside the assumed model class. For purposes of local null testing, it is convenient to partition the support of $z$ into a set of regions $\mathcal{R}= \{R_1,\ldots,R_{|\mathcal{R}|}\}$ (e.g., the 9 intervals in Figure \ref{fig:splinefit}). 
The idea is that these are the smallest regions that are of practical interest. We later discuss that, to improve power  and reduce sensitivity to the chosen knots,  one may consider multiple sets of regions $\mathcal{R}_1,\ldots,\mathcal{R}_L$ and combine them in a multi-resolution analysis via Bayesian model averaging. 
However, for clarity we consider a single $\mathcal{R}$ for now.
 Our method does not require that knots are perfectly aligned with the $z$ values where covariate effects transition from zero to non-zero. If a covariate truly has no effect for all $z$ in a region, then there is no effect, otherwise there is an effect for some $z$ in that region that will be asymptotically detected by our approach. Our multi-resolution analysis offers a way to increase the precision in certain regions, if so desired, as we illustrate in our simulations. Our framework applies to regions of arbitrary shape, e.g. defined by brain regions, but our software focuses on hyper-rectangles that lead to simpler interpretation (akin to tree-based methods). 
Our methods are implemented in the R package mombf.



\section{Local variable selection in group comparisons}
\label{sec:single_covar}

\subsection{Local selection in model-based tests}
\label{ssec:model}

We consider first the case of a single categorical covariate $x_{i1} \in \{1,\ldots,K\}$, defining $K$ distinct groups. 
 Continuous and discrete covariates taking infinitely many values, and multiple covariates, are discussed in Section \ref{sec:multiple_covar}. 
Suppose that the data-generating distribution $F$ has an expectation given by a baseline function $f_0(z_i)$ plus an interaction term $f_1(x_{i1}, z_i)$. 
Specifically,
\begin{align}
E_F(y_i \mid z_i, x_i) = f_0(z_i) + f_1(x_{i1}, z_i), \quad i=1,\ldots,n,
\label{eq:datagentruth_1covar}
\end{align}
where  $f_0$ and $f_1$ are  continuous functions in $z_i$.   An identifiability condition is required in \eqref{eq:datagentruth_1covar}, e.g. 
 a sum-to-zero constraint $\sum_{k=1}^K f_1(k,z)=0$ at each $z$,
so that 
$f_1(k,z)$ represents the deviation from the baseline mean for group $k$ at $z$.
The goal of local variable selection is to assess whether the groups have a zero effect at a specific value of $z$. This is expressed by the local null hypothesis
\begin{align}
f_1(1,z) = f_1(2,z)= \ldots = f_1(K,z)= 0.
\label{eq:localnull_1covar}
\end{align}
In practice, the data-generating $F$, and the functions $(f_0,f_1)$, are unknown and need to be approximated. 
A common strategy is to model $f_0$ and $f_1$ using basis functions (such as splines) and to assume a specific error model for the outcome $y=(y_1,\ldots,y_n)$, say Gaussian.
Specifically, consider the approximation  of $f_0(z_i)$ with $\beta_0(z_i) = w_{i0}^T \eta_0$ and $f_1(k,z_i)$ with $\beta_{1k}(z_i) = w_{ik}^T \eta_{1k}$, where $w_{i0}=w_{i0}(z_i) \in \mathbb{R}^{l_0}$ and $w_{ik}=w_{ik}(z_i) \in \mathbb{R}^{l_1}$ evaluate the chosen basis at $z_i$, $(\eta_0, \eta_{1k})$ denote the corresponding parameters, and $(l_0,l_1)$ their dimension.
Then,  \eqref{eq:datagentruth_1covar} is approximated by
\begin{align}
E_\eta(y_i \mid z_i, x_{i1}=k)= \beta_0(z_i) + \beta_{1 k}(z_i)= w_{i0}^T \eta_0 + w_{i k}^T \eta_{1 k}.
\label{eq:vcmodel_1covar}
\end{align}
Equation \eqref{eq:datagentruth_1covar} is a varying coefficient model \citep{hastie:1993}, where the regression coefficients $\beta_0(z_i)$ and $\beta_{1k}(z_i)$ depend on $z_i$.
Assuming Gaussian errors, we can express \eqref{eq:vcmodel_1covar} as
\begin{align}
 y \mid Z, X \sim N \left( W_0 \eta_0 + W_1 \eta_1, \Sigma \right),
\label{eq:vcmodel_1covar_matrix}
\end{align}
where $W_0$ is an $n \times l_0$ matrix with $i^{th}$ row given by $w_{i0}$, $W_1$ is an $n \times l_1 K$ matrix with $i^{th}$ row given by $(w_{i1}^T,\ldots,w_{iK}^T)$, and $\Sigma$ is the error covariance. For instance, one may assume independence and take $\Sigma = \sigma^2 I$ for some $\sigma^2 > 0$, or consider an appropriate dependence model.
For identifiability, we impose an orthogonality constraint $W_1^T W_0 = 0$ such that $W_0\eta_0$ represents the  baseline mean and $W_1\eta_1$ represents  group deviations from the baseline. This is achieved by defining a standard basis (e.g. splines) and letting $W_1$ be the residuals from regressing that basis onto $W_0$, see Section \ref{ssec:misspec}.
For simplicity, we assume that $W$ is non-random, meaning that the covariates $X$ and coordinates $Z$ are fixed, and that any subset of $W$ with $\leq n$ columns has full column-rank.

Under the model specified in \eqref{eq:vcmodel_1covar}, the local null hypothesis at $z$ can be expressed as 
\begin{align}
 \beta_{11}(z)= \ldots= \beta_{1K}(z)=0 \Longleftrightarrow 
w_1^T \eta_{11}= \ldots = w_K^T \eta_{1K}=0,
\label{eq:localnull_1covar_vcmodel}
\end{align}
where $w_k=w_k(z)$ is the basis function of $\beta_{1k}$ evaluated at $z$.
That is, the local null test at a given $z$ assesses whether particular linear combinations of the parameters $\eta_1$ are zero.

To facilitate interpretation, we use a local basis such that conducting the test for any $z$ within a given region $R_b$
is defined by finding zeroes in $\eta_{1k}$ (rather than its linear combinations). 
By local, we mean that the value of $\beta_{1k}(z)$ in each region $R_b$ is defined by a small subset of parameters in $\eta_{1k}$.  For instance, B-spline bases have minimal support: in an $l$-degree B-spline, each interval between consecutive knots is represented by $l+1$ coefficients.
 For multivariate $z$, we use tensor products of B-splines. 
Figure \ref{fig:splinefit_suppl} (left) illustrates a cubic B-spline ($l=3$) in which a local hypothesis is tested for $z \in [-0.75, 0]$ (black square) with $K=2$ groups. 
Only $l+1=4$ bases (marked in black) have nonzero values in this interval, specifically bases 2-5. That is, if $\eta_{12}=\eta_{13}=\eta_{14}=\eta_{15}=0$ it then follows that $\beta_{11}(z)= 0$ for all $z \in [-0.75, 0]$, where we recall that the deviation of group $K=2$ from the baseline is $\beta_{12}(z) = -\beta_{11}(z)$, from the identifiability constraint $W_1^T W_0=0$ discussed above.

\subsection{Model Misspecification and orthogonal cut basis}
\label{ssec:misspec}

One of our main messages is that model-based tests, such as \eqref{eq:localnull_1covar_vcmodel}, can lead to incorrect conclusions when the model is misspecified, i.e. \eqref{eq:vcmodel_1covar_matrix} does not perfectly capture the true mean \eqref{eq:datagentruth_1covar}.
Under mild conditions, likelihood-based estimators $\hat{\eta}=(\hat{\eta}_0,\hat{\eta}_1)$ for the parameters $\eta=(\eta_0,\eta_1)$ in \eqref{eq:vcmodel_1covar_matrix} converge to the optimal values that minimize the Kullback-Leibler divergence. 
For the independent errors case  ($\Sigma= \sigma^2 I$  in \eqref{eq:vcmodel_1covar_matrix}), the asymptotic value minimizes mean squared prediction error under the data-generating $F$, and is given by a least-squares regression of $E_F(y \mid Z,X)$ on $W$,
\begin{align}
\eta^*= \arg\min_\eta E_F \left( (y - W\eta)^T (y - W\eta) \mid Z,X \right)= (W^T W)^{-1} W^T E_F (y \mid Z,X).
\label{eq:asympsolution_indep}
\end{align}
The issue is that these optimal values in $\eta^*$ associated with a specific region may be non-zero even when the group means are equal in that region. In Figure \ref{fig:splinefit},
 although the group means are equal for all $z \in [-0.75,0]$, the optimal coefficients $\eta_{12}^*,\ldots,\eta_{15}^*$ associated to this region are non-zero.
 As $n \rightarrow \infty$, standard frequentist and Bayesian tests  provide overwhelming evidence for a group effect at such $z$ values, resulting in false positives. 
This occurs because $\eta_{12}^*,\ldots,\eta_{15}^*$ also play a role in approximating the group means for $z \not\in [-0.75,0]$, e.g. by setting non-zero entries in $\eta_1^*$ one obtains a better approximation to the true group means for $z>0$.

To address this issue, we introduce the concept of an \emph{orthogonal cut basis}. This is a basis for $\beta_{1k}(z)$ that has support only in a region of $z$ values and is orthogonal to bases outside that region.
Then, the optimal coefficients $\eta_{1}^*$ in \eqref{eq:asympsolution_indep}  contain zeroes when there are truly no local group differences, e.g. such as $z < 0$ in Figure \ref{fig:splinefit}. 
Figure \ref{fig:splinefit_suppl} (right) shows a cut cubic B-spline basis. 
The cut B-spline basis for region $R_b$ is equal to the B-spline basis in $R_b$ and to 0 outside $R_b$, hence it does not contribute to the group means outside $R_b$. 
Cut bases, similar to trees, set discontinuities in the estimated group differences. The baseline mean $\beta_0(z)$ can still be assumed to be continuous. Using B-splines for the baseline mean and cut B-splines for covariate effects combines desirable features of continuous semi-parametric and discontinuous non-parametric methods.

Specifically, we partition the support of $z$ into $|\mathcal{R}|$ regions and select a basis $W=(W_0,W_1)$ that satisfies two conditions.
Let $(y_b, W_{0b}, W_{1b})$ indicate the rows in $(y,\, W_0,\, W_1)$ for the observations in region $b$ (i.e. $z_i \in R_b$).
The first condition is to employ a cut basis $W_1$ for group effects $\beta_{1k}(z)$ within each region $b$.
Then, $(W_0,W_1)$ can be written as
\begin{align} 
W_0= \begin{pmatrix} W_{01} \\ W_{02} \\ \ldots \\ W_{0|\mathcal{R}|} \end{pmatrix}
; \hspace{5mm}
W_1= \begin{pmatrix} 
W_{11} & 0 & \ldots &  0 \\
0 & W_{12} & \ldots & 0 \\
\ldots & & & \\
0 & 0 & \ldots & W_{1|\mathcal{R}|}
\end{pmatrix}.
\label{eq:blockdiag_basis}
\end{align}
The second condition is that $W_{1b}^T W_{0b}=0$ (orthogonality), and is satisfied as follows. Let $\widetilde{W}_{1b}$ be a cut basis 
such that $\widetilde{W}_{1b}^T W_{0b} \neq 0$, then we set
$
W_{1b}= (I - W_{0b} (W_{0b}^T W_{0b})^{-1} W_{0b}) \widetilde{W}_{1b},
$
which are the residuals obtained from regressing $\widetilde{W}_{1b}$ onto $W_{1b}$. Consequently,  $W_{1b}^T W_{0b}=0$.
We refer to  any $W_1$ satisfying \eqref{eq:blockdiag_basis} and $W_{1b}^T W_{0b}=0$ as an {\it orthogonal cut basis}.

Lemma \ref{lem:zero_indepmodel} below shows that the asymptotic $\eta^*$ contains zeroes. Specifically $\eta_{1b}^*$, which quantifies the group differences in region $b$,  is obtained by regressing the true mean $E_F(y_b \mid X,Z)$ onto $W_{1b}$ (see Section \ref{supplsec:proof_zero_indepmodel} for the proof). 
Therefore, if $E_F(y_b \mid X,Z)$ is not linearly associated with $W_1$ in region $b$ (the $K$ data-generating group means are equal in that region), it follows that $\eta_{1b}^*=0$.

\begin{lemma}
Consider $\eta^*=(\eta_0^*,\eta_1^*)$ in \eqref{eq:asympsolution_indep}, where $W=(W_0,W_1)$, $W_0$ is an $n \times l_0$ matrix and $W_1$ an $n \times l_1$ an orthogonal cut basis as in \eqref{eq:blockdiag_basis} satisfying $W_0^T W_1=0$.
Let $\eta_1^*= (\eta_{11},\ldots,\eta_{1|\mathcal{R}|}^*)$ where $\eta_{1b} \in \mathbb{R}^{l_b}$ are the parameters associated to $W_{1b}$.
Then, $ \eta_{1b}^*= (W_{1b}^T W_{1b})^{-1} W_{1b}^T E_F(y_b \mid X,Z)$.
\label{lem:zero_indepmodel}
\end{lemma}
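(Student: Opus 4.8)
The plan is to obtain the result by a direct evaluation of the closed form in \eqref{eq:asympsolution_indep}, namely $\eta^*= (W^T W)^{-1} W^T E_F(y \mid Z,X)$, by peeling off two successive layers of block structure. The full-rank assumption on $W$ (any subset of $\leq n$ columns has full column rank) guarantees that $W^T W$, and each of the smaller Gram matrices appearing below, are invertible, so every inverse is well defined.

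First I would use the orthogonality $W_0^T W_1 = 0$ hypothesized in the statement. Writing $W = (W_0, W_1)$, this makes the overall Gram matrix block diagonal,
\begin{align}
W^T W = \begin{pmatrix} W_0^T W_0 & 0 \\ 0 & W_1^T W_1 \end{pmatrix},
\end{align}
so its inverse is block diagonal as well. Multiplying by $W^T E_F(y \mid Z,X)$, whose two blocks are $W_0^T E_F(y \mid Z,X)$ and $W_1^T E_F(y \mid Z,X)$, decouples the baseline and group-effect coefficients and yields in particular $\eta_1^* = (W_1^T W_1)^{-1} W_1^T E_F(y \mid Z,X)$.

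Second, I would exploit the cut structure of $W_1$ in \eqref{eq:blockdiag_basis}. Because the regions $R_1,\ldots,R_{|\mathcal{R}|}$ partition the observations, each row of $W_1$ has nonzero entries in exactly one block $W_{1b}$; consequently $W_{1b}^T W_{1c} = 0$ whenever $b \neq c$, so $W_1^T W_1$ is itself block diagonal with diagonal blocks $W_{1b}^T W_{1b}$. Likewise, the $b$-th block of $W_1^T E_F(y \mid Z,X)$ equals $W_{1b}^T E_F(y_b \mid Z,X)$, where $y_b$ collects the observations with $z_i \in R_b$. Substituting the block-diagonal inverse into the expression for $\eta_1^*$ and reading off the $b$-th block gives $\eta_{1b}^* = (W_{1b}^T W_{1b})^{-1} W_{1b}^T E_F(y_b \mid Z,X)$, which is the claim.

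There is no substantive obstacle here: the argument is entirely linear algebra. The one step that merits care is verifying that $W_{1b}^T W_{1c} = 0$ for $b \neq c$, since it is precisely this vanishing, a direct consequence of each observation belonging to a single region in the cut construction, that separates the regions and forces $\eta_{1b}^*$ to depend only on the data and basis within region $b$. I would spell that out explicitly, as it is exactly the property responsible for the zeros in $\eta_1^*$ that the surrounding discussion invokes.
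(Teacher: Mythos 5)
Your proposal is correct and follows essentially the same route as the paper's proof: both exploit $W_0^T W_1 = 0$ to block-diagonalize $W^T W$ and decouple $\eta_1^*$ from $\eta_0^*$, and then use the cut structure \eqref{eq:blockdiag_basis} to block-diagonalize $W_1^T W_1$ region by region, reading off $\eta_{1b}^* = (W_{1b}^T W_{1b})^{-1} W_{1b}^T E_F(y_b \mid X,Z)$. Your explicit remark that $W_{1b}^T W_{1c} = 0$ for $b \neq c$ is exactly the fact the paper's displayed block-diagonal form of $W_1^T W_1$ encodes, so the two arguments coincide in every substantive step.
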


\subsection{Functional data}
\label{ssec:misspec_fda}

The presence of dependence, as in functional or dense longitudinal data, introduces additional challenges. 
Specifically, when assuming a given $\Sigma \neq \sigma^2 I$, the asymptotic solution becomes
\begin{align}
 \widetilde{\eta}^*&= \arg\min_\eta E_F \left( (y - W\eta)^T \Sigma^{-1} (y - W\eta) \mid Z,X \right)=
(W^T \Sigma^{-1} W)^{-1} W^T \Sigma^{-1} E_F(y \mid Z,X)
\nonumber \\
&=\eta^* + (W^T \Sigma^{-1} W)^{-1} W^T \Sigma^{-1} \left( E_F(y \mid Z,X) - W\eta^* \right),
\label{eq:asympsolution_dep}
\end{align}
where $\eta^*$ is defined as in \eqref{eq:asympsolution_indep}. Unless the second term in \eqref{eq:asympsolution_dep} is zero, $\tilde{\eta}^* \neq \eta^*$,
meaning that the optimal estimator for independent and dependent data are not equal.
Hence, even if $\eta^*$ contains zeroes -- as guaranteed by Lemma \ref{lem:zero_indepmodel}  when using orthogonal cut bases --  $\tilde{\eta}^*$ may not contain zeroes.
Intuitively, under correlated errors,  the optimal parameters for one region depend on those from other regions. To address this issue, we employ a block-diagonal approximation to $\Sigma$ that accounts only for within-region dependence.

 We consider a setting where one observes $M$ functions, each of which is evaluated at a common number of $z$ values.
Our model assumes independence across functions and across regions ($\Sigma_{ij}=0$ if observations $(i,j)$ are in different regions). 
Block-diagonal approximations are often reasonable for functional data, as long-range dependence tends to be weak,
see \cite{varin:2011}  (Section 3.1) for related composite likelihood ideas.

In the case of functional data, 
Lemma \ref{lem:zero_depmodel} shows that using such a block-diagonal $\Sigma$ leads to an asymptotic $\tilde{\eta}_1^*$ in \eqref{eq:asympsolution_dep} that contains zeroes when a covariate has no local effects. 
Specifically, if $E_F(y_b \mid X,Z)$ does not differ across groups in region $b$ then it is linearly independent of $W_{1b}$ and hence also of $\Sigma_b^{-1} W_{1b}$,
and then  $\tilde{\eta}_{1b}^*=0$ by Lemma \ref{lem:zero_depmodel}.

In our examples with univariate $z \in \mathbb{R}$, we use a simple parametric covariance, 
using either a first-order autoregressive or moving average model, selecting the preferred model with the Bayesian information criterion. 
Subsequently, we perform inference assuming that $\Sigma = \hat{\Sigma}$ is known. 
This is convenient in that one can take as a working model
\begin{align}
\tilde{y} \mid Z, X, \hat{\Sigma} \sim N \left( \widetilde{W}_0 \eta_0 + \widetilde{W}_1 \eta_1, \sigma^2 I \right),
\label{eq:vcmodel_1covar_fda}
\end{align}
where $\tilde{y}= \hat{\Sigma}^{-1/2} y$, $\widetilde{W}_0= \hat{\Sigma}^{-1/2} W_0$ and $\widetilde{W}_1= \hat{\Sigma}^{-1/2} W_1$.
Our strategy allows using standard Bayesian algorithms for independent errors on $\tilde{y}$, including model search.
 One could also use more flexible covariance models, particularly for multivariate $z$, and treat $\Sigma$ as unknown. Computing marginal likelihoods gets much costlier, however, see Section \ref{sec:discussion}.


\section{Multiple covariates}
\label{sec:multiple_covar}

We extend our approach to settings involving a covariate vector $x_i = (x_{i1},\ldots,x_{ip})$, where the data-generating truth in \eqref{eq:datagentruth_1covar} is expanded to
$E(y_i \mid z_i, x_i) = f_0(z_i) + f_1(x_i, z_i)$.
Similar to \eqref{eq:datagentruth_1covar}, we set an identifiability constraint so that $f_0(z_i)$ represents the baseline mean and $f_1(x_i,z_i)$ is a deviations from the mean. Then,  the local null test for covariate $j$ at $z$ corresponds to assessing whether $f_1(x_i,z)=0$ depends on $x_{ij}$ or not. To impose this constraint, we orthogonalize the basis used for $f_1$ with respect to that for $f_0$, as explained below.

As discussed in relation to \eqref{eq:vcmodel_1covar}, model-based approaches for local variable selection approximate $(f_0,f_1)$ using a suitable function class and then expressing the selection in terms of the parameters of that class. While one may model $f_1(x,z)$ non-parametrically, and assess the effect of $x$ at each given $z$, the computational burden increases sharply as $p$ or the number of possible $z$ values increase (the model space grows exponentially with the number of parameters). Instead, we consider a semi-parametric model that assumes additive covariate effects,
\begin{align}
 E_{\eta}(y_i \mid z_i, x_i)= \beta_0(z_i) + \sum_{j=1}^p \, \beta_{1j}(z_i) x_{ij}= w_{i0}^T \eta_0 + \sum_{j=1}^p \, w_{ij}^T \eta_{1j}.
\label{eq:vcmodel_additive}
\end{align}
Analogously to \eqref{eq:vcmodel_1covar}, $\beta_0(z_i)= w_{i0}^T \eta_0$ approximates $f_0(z_i)$,
$x_{ij} \beta_{1j}(z_i)= w_{ij}^T \eta_{1j}$ provides an additive approximation to $f_1(x_i,z_i)$,
$w_{i0}=w_{i0}(z_i) \in \mathbb{R}^{l_0}$ and $w_{ij}=w_{ij}(z_i) \in \mathbb{R}^{l_1}$ are basis expansions computed from $(z_i,x_i)$,
 and $(\eta_0,\eta_{11},\ldots,\eta_{1p}) \in \mathbb{R}^{l_0 + l_1 p}$ are the corresponding parameters.
We denote the total number of parameters as $q= l_0 + l_1p$.

The data-generating expectation $E_F(y_i \mid z_i,x_i)$ may not be additive, e.g. the effect of covariate $j$ at $z$ may interact with that of covariate $j'$. However, the parameters in the assumed model \eqref{eq:vcmodel_additive} remain interpretable, e.g. $\beta_{1j}(z)$ is the mean effect of covariate $j$ at coordinate $z$.
 If covariate $j$ truly has a non-linear effect (conditional on $z$), then $\beta_{1j}(z)$ asymptotically recovers the slope of the best linear approximation, i.e. the average effect for a unit increase in covariate $j$. 
 Model \eqref{eq:vcmodel_additive} does not include higher-order interactions. If the effect of covariate $j$ truly depends on other covariates, then $\beta_{1j}(z)$ captures the mean effect averaged across these other covariates.  Thus, testing whether $\beta_{1j}(z)=0$ remains a sensible goal.
If desired, one may relax the additivity assumption, e.g. by incorporating interactions and tensor products,  and define $w_i$ to be the vector containing all such terms. 

Assuming a Gaussian distribution for $y$, \eqref{eq:vcmodel_additive} can be written as
in \eqref{eq:vcmodel_1covar_matrix}
where $(W_0,W_1)$ is an orthogonal cut basis as in \eqref{eq:blockdiag_basis}. The only difference relative to \eqref{eq:blockdiag_basis} is that we extend $W_{1b}$, the basis for local covariate effects, by multiplying it by the value of each covariate (akin to interaction terms in standard regression). 
For brevity we refer further details to Section \ref{supplsec:cutbasis_multiplecovar}.
There we also discuss that, since $W_1$ is an orthogonal cut basis, Lemmas \ref{lem:zero_indepmodel} and \ref{lem:zero_depmodel}  still apply:
if $E_F(y_b \mid X,Z)$ is linearly independent of covariate $j$ given the other covariates, then the asymptotic covariate effect in region $b$ is $\eta_{1jb}^*=0$ (for dependent data, $\tilde{\eta}_{1jb}^*=0$).


As a practical remark, although our discussion applies to flexible basis functions such as cubic splines, the computational burden can be substantial when one has many covariates. In our examples, using cubic splines for the baseline ($W_0$) and 0-degree splines for local covariate effects led to faster computations without compromising the quality of local variable selection. 

\section{A framework for local null testing}
\label{sec:framework}

\subsection{Bayesian model selection and averaging}
\label{ssec:bms}

Equation \eqref{eq:vcmodel_1covar_matrix} defines the likelihood associated to the observed $y \in \mathbb{R}^n$, given the design matrices $W_0,W_1$ constructed from the $p$ covariates in $X$, the coordinates in $Z$, and the corresponding parameters $\eta=(\eta_0,\eta_1) \in \mathbb{R}^{l_0+l_1 p}$.
Under model \eqref{eq:vcmodel_1covar_matrix}, one may test for the effect of covariate $j=1,\ldots,p$ in region $b \in \{1,\ldots,|\mathcal{R}|\}$ by testing $\eta_{1jb}=0$ versus $\eta_{1jb} \neq 0$, resulting in a total of $p |\mathcal{R}|$ tests (Section \ref{supplsec:cutbasis_multiplecovar}).
 One may use alternatives to our cut basis, such as a Haar wavelet basis, but then local tests are no longer given by single parameters being 0, see Section \ref{supplsec:haar_basis}. 

We now discuss how to incorporate local tests into a standard Bayesian model selection framework,
how to set priors, and how to use Bayesian model averaging to consider multiple choices of knots, which we refer to as a multi-resolution analysis.

To describe any arbitrary model within the local variable selection process, we introduce latent variables $\gamma= \{ \gamma_{jb} \}$, where
$\gamma_{0b}= \mbox{I}(\eta_{0b} \neq 0)$ and $\gamma_{jb}= \mbox{I}(\eta_{1jb} \neq 0)$ 
serve as indicators for the inclusion covariate $j=1,\ldots,p$ across regions $b=1,\ldots,|\mathcal{R}|$. The model size, i.e. the number of non-zero parameters, is denoted as $|\gamma|_0 = \sum_{jb} \gamma_{jb}$.
 We denote by $W_\gamma$ the subset of columns in the design matrix $W=(W_0,W_1)$ such that $\gamma_{0b}=1$, $\gamma_{jb}=1$.  

The goal is to select the optimal model $\gamma^*$, where $\gamma_{jb}^*= \mbox{I}(\eta_{1jb}^* \neq 0)$, with $\eta^*$ minimizing the  mean squared prediction error under the data-generating truth $F$ in \eqref{eq:asympsolution_indep} and \eqref{eq:asympsolution_dep}.
Under model misspecification, $\gamma^*$ is the optimal model among those under consideration: it has smallest dimension $|\gamma^*|_0$ among models that are Kullback-Leibler closest to $F$ \citep{rossell:2022}.

The posterior probability of model $\gamma$ is obtained as
\begin{align}
 p(\gamma \mid y)= \frac{p(y \mid \gamma) \, p(\gamma)}{p(y)}=
\frac{p(\gamma) \int p(y \mid \eta)\,  p(\eta \mid \gamma)\,  d\eta}{p(y)},
\label{eq:pp}
\end{align}
where $p(\gamma)$ is the model's prior probability and $p(\eta \mid \gamma)$ the prior on the parameters under model $\gamma$ (see Section \ref{ssec:priors}).
Given the posterior distribution $p(\gamma \mid y)$ for the entire vector $\gamma$, one can assess the local effect of covariate $j$ in region $b$ using the marginal posterior inclusion probabilities
\begin{align}
 \mbox{pr}(\gamma_{jb}=1 \mid y)= \sum_{\gamma: \gamma_{jb}=1} \, p(\gamma \mid y).
\label{eq:margpp}
\end{align}

As shown in Section \ref{sec:theory}, under mild regularity conditions, $p(\gamma^* \mid y)$ converges to 1 as $n$ grows, and thus \eqref{eq:margpp} concentrates on $\gamma^*_{jb}$ uniformly across the pairs $(j, b)$.
Hence, as $n$ grows, using either \eqref{eq:pp} or \eqref{eq:margpp} leads to consistently selecting $\gamma^*$ and vanishing family-wise type I-II errors.


We propose including local covariate effects that have high marginal posterior probability in \eqref{eq:margpp}. Specifically, we set $\hat{\gamma}_{jb}=\mbox{pr}(\gamma_{jb}=1 \mid y) \geq t$, for some threshold $t \in (0,1)$. In our illustrations, we used $t=0.95$. This choice ensures that the model-based posterior expected false discovery proportion 
remains below 0.05 \citep{mueller:2004}. 
This procedure also has a connection with frequentist type I errors. From Corollary 2 in \cite{rossell:2022}, for any true $\gamma^*_{jb}=0$, the frequentist probability of falsely selecting a covariate $j$ in region $b$ under any data-generating $F$ is upper bounded by
$ P_F(\hat{\gamma}_{jb}=1) \leq t^{-1} E_F \left( \mbox{pr}(\gamma_{jb} =1 \mid y) \right)$.

\subsection{Priors for the local effects and model selection}
\label{ssec:priors}

Posterior model probabilities in \eqref{eq:pp} require a prior $p(\gamma)$ on the models and a prior $p(\eta_\gamma \mid \gamma)$ on the coefficients under each model.
For the latter we take 
\begin{align}
 \eta_\gamma \mid \gamma &\sim N(0, g V_\gamma),
\label{eq:prior_paramters}
\end{align}
where $V_\gamma$ is a $|\gamma|_0 \times |\gamma|_0$ positive-definite matrix, and $g \in \mathbb{R}^+$.
We consider two default $V_\gamma$, a diagonal matrix (Normal shrinkage prior), and an extension of the intrinsic conditionally auto-regressive prior \citep{besag:1974} that adds a diagonal matrix into $V_\gamma^{-1}$ to obtain a proper prior (Section \ref{supplsec:prior_parameters}). The first prior imposes a penalty on the $L_2$ norm of $\eta$, in particular on local effects driving deviations from the baseline mean. The second prior also encourages spatial smoothness in $z$.
In both cases, 
we set a default $g=1$ so that prior precision's trace equals that of the unit information prior of \cite{schwarz:1978}.
In our examples, both priors gave very similar local testing results. This is supported by our Bayes factor rates (Section \ref{sec:theory}) and occurs because we use relatively few knots, hence encouraging smoothness is less critical than when one has many knots. By default we recommend the Normal shrinkage as it is faster computationally (its prior normalization constant and posterior precision matrix are simpler).
Our theory allows for other $g$ and $V_\gamma$, e.g. letting $g$ grow with $n$ to enforce sparsity, at the cost of decreased statistical power, see \cite{narisetty:2014,rossell:2022}.


In settings with very strong dependence and moderate $n$,  we found that replacing \eqref{eq:prior_paramters} by a group pMOM prior with default parameters (\cite{rossell:2021}) helped prevent type I error inflation. Therefore, for dependent data, we recommend the group pMOM prior as the default choice (see Sections \ref{sec:simulation_fda} and \ref{ssec:saez_data} for examples).

For the model prior, we consider a Complexity prior akin to \cite{castillo:2015}, 
\begin{align}
p(\gamma)=
\frac{C}{q^{c |\gamma|_0}} \times {q \choose |\gamma|_0}^{-1} \mbox{I}(|\gamma|_0 \in \{0,\ldots,\bar{q}\}),
\label{eq:prior_modelsize}
\end{align}
where recall that $q$ is the total number of parameters defined after \eqref{eq:vcmodel_additive}, $c \geq 0$ a prior parameter, $C=(1 - 1/q^c) / (1-1/q^{c*(\bar{q}+1)})$ the normalizing constant,
and we consider that, although possibly $q \gg n$, one restricts attention to models with $|\gamma|_0 \leq \bar{q}$ parameters, where $\bar{q}$ is a user-defined bound such as $\bar{q} \leq n$ (since models with $|\gamma|_0 > n$ result in data interpolation).
For $c>0$ the prior probabilities decay exponentially with the model size, $|\gamma|_0$, and then equal prior probabilities are set on  all models that have the same size.

Setting $c=0$ returns the Beta-Binomial(1,1) prior of \cite{scott:2006}, i.e. a uniform prior on the model size, 
which served as the basis for the extended BIC of \cite{chen:2008}. 
While our primary emphasis is on Bayesian methods, the theoretical results of Section \ref{sec:theory} apply directly to the extended BIC framework by setting $c=0$ and $g=1$.  In our theoretical framework we consider a general $c$, but in our examples we set \(c = 0\). In our experience this yields a better balance between sparsity and power to detect non-zero coefficients for finite $n$.

\subsection{Multi-resolution analysis}
\label{ssec:varying_resolutions}

To simplify exposition, so far we  assumed that one has a predefined set of regions $\mathcal{R}$ to perform the local null tests, e.g. the 8 intervals in Figure \ref{fig:splinefit}. 
In practice, it is often unclear what regions to use, or in other words, what resolution is appropriate for the local variable selection. 
Although our cut basis formulation prevents type I errors for any choice of regions, if one defines too many regions then the sample size per region becomes too small, leading to type II errors.
For example, suppose that one has a single binary covariate $x_i$ that defines two groups, and we want to compare $E(y_i \mid z_i, x_i)$ across these groups. If the group differences change little across large regions in $z_i$, it is advantageous to use fewer regions, to increase statistical power.  In contrast, if the group differences change rapidly with $z_i$, a finer resolution with more regions  may be preferred. 

The choice of the maximum number of regions  is typically guided by computational limitations (as the model space grows exponentially) or because higher resolutions offer limited practical interest. 
We hence propose a multi-resolution analysis to determine whether a coarser resolution may be more suitable, given the data at hand.
For example, in our salary application, we consider 2.5	year bins as the highest resolution, since measures targeting salary discrimination are unlikely to have noticeable effects in shorter time spans. Interestingly, our multi-resolution analysis places most posterior probability on (coarser) 10-year intervals. 

\begin{figure}[t!]
\begin{center}
\includegraphics[width=0.5\textwidth,height=0.5\textwidth]{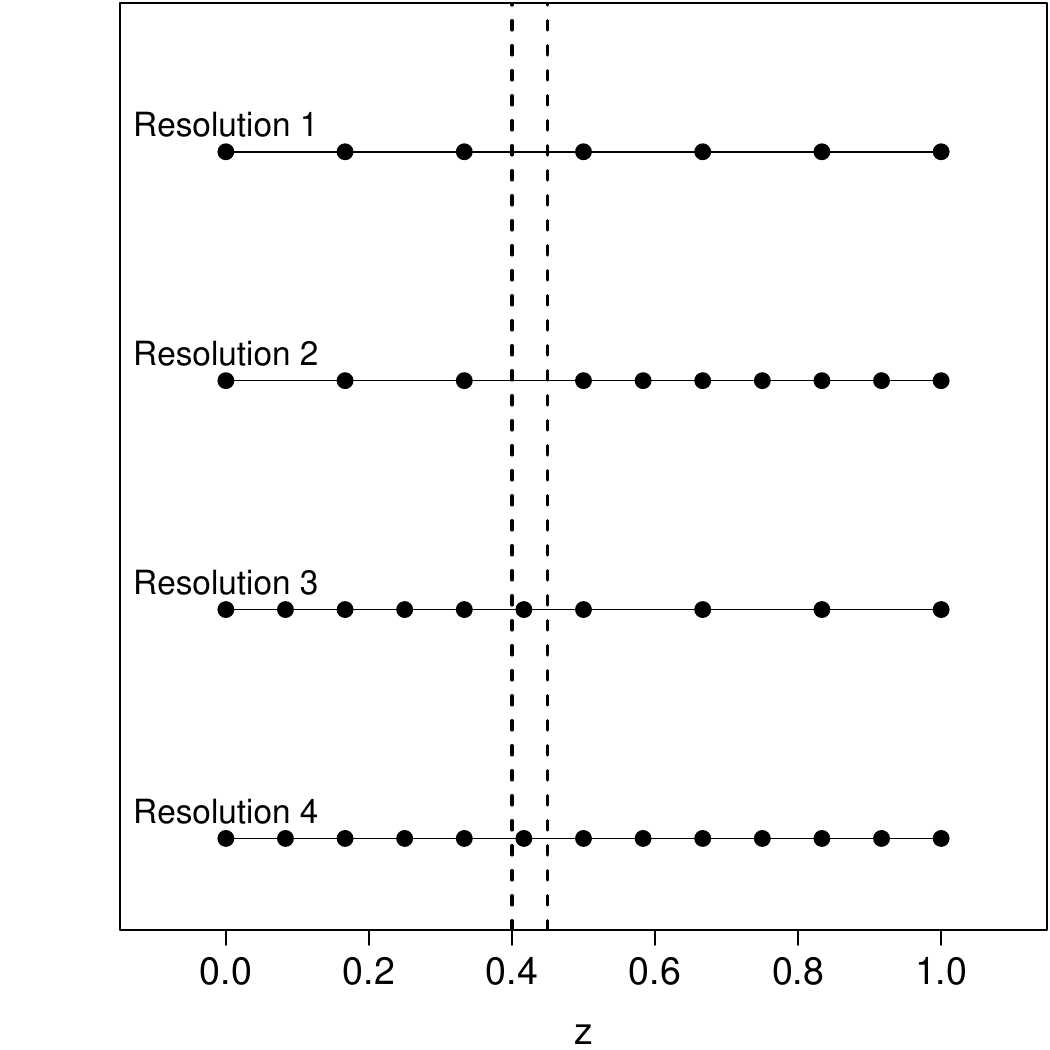}
\end{center}
\caption{Multi-resolution analysis with 4 resolutions and $z \in [0,1]$.
Resolution 1 has 6 regular intervals. Resolutions 2-3 split the [0,0.5] and [0.5,1] into 6 intervals.
Resolution 4 has 12 intervals.
In resolutions  1-2, $\beta_j(z=0.4) \neq 0$ if and only if $\beta_j(z=0.45) \neq 0$ (dashed lines), leading to prior dependence across $z$. 
}
\label{fig:multires}
\end{figure}

Bayesian model averaging provides a natural framework for handling multiple resolutions.  We consider $L$ resolutions, where each resolution $l=1,\ldots,L$ consists of a set of regions $\mathcal{R}_l={R}_{l1},\ldots,R_{l |\mathcal{R}_l|}$. 
 If so desired, the multi-resolution analysis can place more knots in certain parts of the support of $z$, see Figure \ref{fig:multires}. 
Similar to Section \ref{ssec:priors}, let $\eta_l=\{ \eta_{ljb} \}$ and $\gamma_l=\{ \gamma_{ljb} \}$ for $j=0,\ldots,p$ and $b=1,\ldots,|\mathcal{R}_l|$ represent the set of parameters and models associated with resolution $l$,
where the latent $\gamma_{ljb}= \mbox{I}(\eta_{lbj} \neq 0)$  indicates whether covariate $j$ has an effect in region $R_{lb}$. We consider a joint prior that factorizes as follows,
\begin{align}
 p(\eta_l, \gamma_l, \mathcal{R}_l)= p(\eta_l \mid \gamma_l, \mathcal{R}_l)\, p(\gamma_l \mid \mathcal{R}_l) \, p(\mathcal{R}_l).
\nonumber
\end{align}
where $p(\eta_l \mid \gamma_l, \mathcal{R}_l)$ is as in \eqref{eq:prior_paramters},
and the model selection prior in \eqref{eq:prior_modelsize} is modified to accommodate the multi-resolution setting as follows,
\begin{align}
p(\gamma_l \mid \mathcal{R}_l)=
\frac{C_l}{q_l^{c |\gamma|_0}} \times {q_l \choose |\gamma_l|_0}^{-1} \mbox{I}(|\gamma_l|_0 \in \{0,\ldots,\bar{q}\}),
\label{eq:prior_model_multires}
\end{align}
where $q_l= \mbox{dim}(\eta_l)$ is the number of parameters for resolution $l$,  $c \geq 0$ and $C_l=(1 - 1/q_l^c) / (1-1/q_l^{c*(\bar{q}+1)})$ the normalizing constant.
Finally, by default, uniform prior probabilities are assigned to the resolution levels, i.e., $p(\mathcal{R}_l)=1/L$.

The multi-resolution formulation induces dependence in the local variable selection across nearby $z$ values, as illustrated in Figure \ref{fig:multires}.
Specifically,  the probability of a local effect is 
$P \left( \beta_j(z) \neq 0 \right)= \sum_{l=1}^L  P \left( \beta_j(z) \neq 0 \mid \mathcal{R}_l  \right) p(\mathcal{R}_l)$,
where $P \left( \beta_j(z) \neq 0 \mid \mathcal{R}_l  \right)$ is  constant across all $z$ values within the same interval at resolution $l$. Hence, nearby $z$ and $z'$ receive similar prior probabilities $\mbox{pr}( \beta_j(z) \neq 0)$ and $\mbox{pr}(\beta_j(z') \neq 0)$.
The corresponding posterior probabilities are
\begin{align}
 P \left( \beta_j(z) \neq 0 \mid y \right)&=
 \sum_{l=1}^L  P \left( \beta_j(z) \neq 0 \mid y, \mathcal{R}_l  \right) \, p(\mathcal{R}_l \mid y),
 \nonumber
\end{align}
where $P \left( \beta_j(z) \neq 0 \mid \mathcal{R}_l, y  \right)$ given a resolution level $l$ is given in \eqref{eq:margpp}, and
$
 p(\mathcal{R}_l \mid y) \propto p(y \mid \mathcal{R}_l)\,  p(\mathcal{R}_l)= \sum_{\gamma_l} p(y \mid \gamma_l, \mathcal{R}_l) p(\gamma_l \mid \mathcal{R}_l) \, p(\mathcal{R}_l)$
is the marginal likelihood of resolution $l$.

Our multi-resolution strategy is designed to facilitate parallel computation across resolutions. We define a distinct model for each resolution  $\mathcal{R}_l$, obtain $\mbox{pr}(\beta_j(z) \neq 0 \mid y, \mathcal{R}_l)$ separately for each resolution, and then compute their weighted average.
A natural alternative is to induce prior dependence across resolutions. 
 This is because, if for example $\beta_j(z) \neq 0$ for some $z \in [0,1/2]$ and $\beta_j(z) = 0$ for all $z \in [1/2,1]$, then $\beta_j(z) \neq 0$ for some $z \in [0,1]$ at a coarser resolution. 
However, we chose not to adopt such more advanced priors here for computational reasons. 
First, computations would no longer be easily parallelizable.  Second, within a resolution level one can quickly obtain marginal likelihoods for a model by using rank 1 Cholesky decomposition updates, given that of another model that only differs by one zero parameter.

\section{Theoretical results}
\label{sec:theory}

\subsection{Bayes factors}
\label{ssec:bf}

We present two main results describing the asymptotic behavior of the proposed cut basis framework. First, we obtain rates at which the Bayes factor favors the optimal model $\gamma^*$ over some other model $\gamma$ (Theorem \ref{thm:bf}). Then, we establish that the  posterior model probabilities consistently select $\gamma^*$ (Theorem \ref{thm:pp}).  
These results imply that asymptotically one detects which covariates have an effect at each considered coordinate $z$, and hence attains consistent local variable selection. As discussed in Section \ref{sec:multiple_covar}, since the assumed additive structure in \eqref{eq:vcmodel_additive} may be misspecified, consistency refers to detecting non-zero marginal effects of each covariate in $x$, averaged across the values of other covariates in $x$.
We consider high-dimensional settings where the total number of parameters $q = l_0 + l_1p$ can grow with $n$.
The size of the optimal model $|\gamma^*|$ may also increase with $n$. 

We assume that the data-generating $F$ has sub-Gaussian tails, specifically  $F$ satisfies
$ y - W_{\gamma^*} \eta_{\gamma^*}^* \sim SG(0, \omega)$
for some $\omega > 0$, where $\eta^*{\gamma^*}$ is the optimal parameter value in \eqref{eq:asympsolution_dep}.
This assumption allows for $y$ to be dependent.  For example, if $y \sim N(\mu, \Omega)$ for some positive-definite $\Omega$, then $y \sim SG(\mu,\omega)$ where $\omega$ is the largest eigenvalue of $\Omega$.
We consider a misspecified setting where one specifies  a model that may not match $F$. 
To simplify the exposition and proofs,  we assume that the specified model has a fixed covariance, $\Sigma$.
Specifically, one specifies the model
\begin{align}
 y \mid \gamma, \eta_\gamma &\sim N(W_\gamma \eta_\gamma, \Sigma).
\label{eq:model_theorem}
\end{align}

Our results extend to unknown $\Sigma$ as follows. First, as long as $\hat{\Sigma} \stackrel{P}{\longrightarrow} \Sigma$ for a fixed positive-definite block-diagonal $\Sigma$, Theorems \ref{thm:bf}-\ref{thm:pp} continue to hold by the continuous mapping theorem.
If the tails of $\hat{\Sigma}^{-1/2} y$ are sub-Gaussian, then our rates remain exactly valid. If said tails are exponential or thicker, then the rates become slower.

The Bayes factor $B_{\gamma \gamma^*}= p(y \mid \gamma) / p(y \mid \gamma^*)$ compares a model $\gamma$ with the optimal $\gamma^*$, see Section \ref{supplsec:bf_derivation} for its expression. When $B_{\gamma \gamma^*}$ is close to zero then $\gamma^*$ is favored over $\gamma$. We give the rates at which $B_{\gamma \gamma^*}$ converges to 0 in probability as $n \rightarrow \infty$, assuming the following conditions.

\begin{enumerate}[leftmargin=*,label=(A\arabic*)]
\item The matrix $W_\gamma^T \Sigma^{-1} W_\gamma$ has full column rank $|\gamma|_0$.

\item Let $(\underline{l}_\gamma,\bar{l}_\gamma)$ be the smallest and largest eigenvalues of $V_{\gamma} W_{\gamma}^T \Sigma^{-1} W_\gamma/n$.
They satisfy $c_1 \leq \underline{l}_\gamma \leq \bar{l}_\gamma \leq c_2$ for all $n \geq n_0$ and some constants $c_1,c_2,n_0>0$.

\item $\Sigma^{-1}$ exists. Its largest eigenvalue $\tau$ satisfies $c_3 < \tau < c_4$ for constants $0< c_3,c_4 < \infty$.

\item $\lim_{n \rightarrow \infty} g n= \infty$.

\item Let $\lambda_\gamma= (\widetilde{W}_{\gamma^*} \eta_{\gamma^*}^*)^T (I - H_\gamma) \widetilde{W}_{\gamma^*} \eta_{\gamma^*}^*$,
where $\widetilde{W}_\gamma= \Sigma^{-1/2} W_\gamma$, 
$H_\gamma= \widetilde{W}_\gamma (\widetilde{W}_\gamma^T \widetilde{W}_\gamma)^{-1} \widetilde{W}_\gamma^T$.
For some sequence $d_n \geq 0$ such that $\lim_{n \rightarrow \infty} d_n= \infty$,
$$
\lim_{n \rightarrow \infty} \frac{\lambda_\gamma}{2 \log \lambda_\gamma} + \frac{|\gamma|_0 - |\gamma^*|_0}{2} \log(gn) - \omega \tau |\gamma|_0 \log d_n= \infty.
$$

\end{enumerate}

The conditions are fairly minimal, for brevity we refer their discussion to Section \ref{supplsec:conditions_thm_bf}.
A key quantity is $\lambda_\gamma$ in Assumption (A5), it is a non-centrality parameter measuring the sum of squares explained by the optimal $\gamma^*$ but not by model $\gamma$.
Under eigenvalue and beta-min conditions, $\lambda_\gamma$ is lower-bounded by $n$ times the smallest square entry in the optimal coefficients $|\eta_{\gamma^*}^*|$  \citep[see, e.g.,][Sections 2.2 and 5.4]{rossell:2022}.


Before stating Theorem \ref{thm:bf}, we interpret its main implications.
Part (i) indicates that overfitted models, which include all parameters in $\gamma^*$ and some extra, are essentially discarded at a polynomial rate $(gn)^{(|\gamma|_0 - |\gamma^*|_0)/2}$. 
Part (ii) states that non-overfitted models, which are missing parameters from $\gamma^*$,   are effectively discarded at an exponential rate in $\lambda_\gamma$, times a polynomial rate akin to that in Part (i). 
In the theorem statement the sequence $d_n$ should be considered as a lower-order term, e.g., $d_n= \log(gn)$, and $\delta$ as a constant close to 0.

\begin{thm}
Let $d_n\geq 0$ be any sequence such that $\lim_{n \rightarrow \infty} d_n= \infty$. Assume (A1)-(A5).
\begin{enumerate}[leftmargin=*,label=(\roman*)]
\item Overfitted models. If $\gamma^* \subset \gamma$, then
$\lim_{n \rightarrow \infty} P_F \left( B_{\gamma \gamma^*} \geq  \left( \frac{d_n}{gn} \right)^{\frac{|\gamma|_0 - |\gamma^*|_0}{2}} \right)= 0.$

\item Non-overfitted models. If $\gamma^* \not\subset \gamma$, then for all fixed $\delta>0$
$$
\lim_{n \rightarrow \infty} 
P_F \left( B_{\gamma \gamma^*} \geq \left( gnk_2 \right)^{-\frac{(|\gamma|_0-|\gamma^*|_0)}{2}} e^{-\frac{\lambda_\gamma (1-\delta)}{2 \log \lambda_\gamma} + \omega \tau |\gamma|_0 \log d_n} \right)= 0,
$$
where $\omega$ is the sub-Gaussian parameter, $\tau$ the largest eigenvalue of $\Sigma^{-1}$
and $k_2= \bar{l}_{\gamma^*} (1+\delta) / \underline{l}_\gamma$ is a constant under Assumption (A2).
\end{enumerate}
\label{thm:bf}
\end{thm}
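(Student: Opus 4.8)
The plan is to use the fact that, with the Gaussian working model \eqref{eq:model_theorem} and the Gaussian prior \eqref{eq:prior_paramters}, each marginal likelihood is a Gaussian integral with a closed form, so that $B_{\gamma\gamma^*}$ factors into an \emph{Occam penalty} (a ratio of determinants) and a \emph{goodness-of-fit} term (the exponential of a difference of penalized residual sums of squares). First I would apply the whitening transformation $\tilde y=\Sigma^{-1/2}y$, $\widetilde W_\gamma=\Sigma^{-1/2}W_\gamma$ of \eqref{eq:vcmodel_1covar_fda}, which reduces \eqref{eq:model_theorem} to a unit-variance regression. Integrating out $\eta_\gamma$ then gives, up to a common constant,
\begin{align*}
-2\log B_{\gamma\gamma^*}= \log\frac{|I+g\,\widetilde W_\gamma^T\widetilde W_\gamma V_\gamma|}{|I+g\,\widetilde W_{\gamma^*}^T\widetilde W_{\gamma^*}V_{\gamma^*}|}+\bigl(S_\gamma-S_{\gamma^*}\bigr),
\end{align*}
where $S_\gamma=\tilde y^T(I-P_\gamma)\tilde y$ and $P_\gamma=\widetilde W_\gamma(\widetilde W_\gamma^T\widetilde W_\gamma+(gV_\gamma)^{-1})^{-1}\widetilde W_\gamma^T$ is a ridge-regularized projection; the exact expression is the one in Section \ref{supplsec:bf_derivation}.

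Next I would bound the Occam factor deterministically. Under (A2) the eigenvalues of $g\,\widetilde W_\gamma^T\widetilde W_\gamma V_\gamma=gn\,(V_\gamma W_\gamma^T\Sigma^{-1}W_\gamma/n)$ lie in $[gn\,\underline l_\gamma,\,gn\,\bar l_\gamma]$, so for $n$ large (using $gn\to\infty$ from (A4)) the determinant ratio is squeezed between fixed powers of $gn\,\underline l_\gamma$ and $gn\,\bar l_\gamma(1+\delta)$. This produces the polynomial rate $(gn)^{-(|\gamma|_0-|\gamma^*|_0)/2}$ in both parts, with the leading constants collected into $k_2=\bar l_{\gamma^*}(1+\delta)/\underline l_\gamma$; the extra $d_n\to\infty$ in part (i) is precisely the room needed to absorb the fit fluctuations described next.

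The heart of the argument is the fit term. A short computation gives $S_\gamma-S_{\gamma^*}=\tilde y^T(H_{\gamma^*}-H_\gamma)\tilde y$ once $P_\gamma$ is replaced by the ordinary hat matrix $H_\gamma$, the ridge discrepancy $\tilde y^T(P_\gamma-H_\gamma)\tilde y$ being shown negligible under (A4). Writing $\tilde y=\widetilde W_{\gamma^*}\eta_{\gamma^*}^*+\epsilon$, where $\epsilon=\Sigma^{-1/2}(y-W_{\gamma^*}\eta_{\gamma^*}^*)$ is sub-Gaussian with parameter $\omega\tau$ because $\Sigma^{-1}$ has largest eigenvalue $\tau$, and using $H_{\gamma^*}\widetilde W_{\gamma^*}\eta_{\gamma^*}^*=\widetilde W_{\gamma^*}\eta_{\gamma^*}^*$, one obtains
\begin{align*}
S_\gamma-S_{\gamma^*}= \lambda_\gamma+2\,\epsilon^T(I-H_\gamma)\widetilde W_{\gamma^*}\eta_{\gamma^*}^*+\epsilon^T(H_{\gamma^*}-H_\gamma)\epsilon ,
\end{align*}
with the signal term exactly the non-centrality $\lambda_\gamma$ of (A5). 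The linear cross term has variance $4\omega\tau\lambda_\gamma$, so a sub-Gaussian tail bound makes it $O_P(\sqrt{\lambda_\gamma\log\lambda_\gamma})$, while the quadratic form is controlled by a Hanson--Wright inequality for sub-Gaussian vectors at scale $\omega\tau|\gamma|_0\log d_n$. Both are of strictly lower order than $\lambda_\gamma$, so with probability tending to one $S_\gamma-S_{\gamma^*}$ exceeds $\lambda_\gamma$ minus these fluctuations, which comfortably dominates the (conservatively stated) rate $\lambda_\gamma(1-\delta)/\log\lambda_\gamma$ appearing in the theorem and matched by condition (A5).

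Assembling the two cases is then routine. For overfitted $\gamma$ ($\gamma^*\subset\gamma$) one has $(I-H_\gamma)\widetilde W_{\gamma^*}=0$, hence $\lambda_\gamma=0$ and no cross term, leaving the sub-Gaussian quadratic form $\epsilon^T(H_\gamma-H_{\gamma^*})\epsilon$ on the $(|\gamma|_0-|\gamma^*|_0)$-dimensional extra subspace; Hanson--Wright gives $P_F\bigl(\epsilon^T(H_\gamma-H_{\gamma^*})\epsilon\geq(|\gamma|_0-|\gamma^*|_0)\log d_n\bigr)\to0$, which together with the Occam factor yields part (i). For non-overfitted $\gamma$, the positive $\lambda_\gamma$ dominates both fluctuation terms and gives the exponential decay of part (ii). The main obstacle I anticipate is the \emph{uniform, non-asymptotic} control of the two $\epsilon$-dependent terms at the sharp scales and the verification that the ridge-to-projection replacement is harmless: these steps must use sub-Gaussian concentration (Hanson--Wright and linear-form tail bounds) rather than exact Gaussian identities, and require careful bookkeeping of how the constants of (A2)--(A3) propagate into $k_2$ and into the $\omega\tau|\gamma|_0\log d_n$ correction.
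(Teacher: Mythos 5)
Your proposal is correct in substance and follows the same overall architecture as the paper's proof: the closed-form Bayes factor of Section \ref{supplsec:bf_derivation}, the deterministic squeeze of the determinant ratio via (A2) and (A4) into $(gnk_1)^{(|\gamma^*|_0-|\gamma|_0)/2}$ and $(gnk_2)^{(|\gamma^*|_0-|\gamma|_0)/2}$, the replacement of the ridge-penalized explained sum of squares $\tilde{s}_\gamma$ by its least-squares counterpart $s_\gamma$ up to $(1\pm\delta)$ factors, and sub-Gaussian concentration for the fit term; your part (i) is essentially the paper's argument, which realizes $s_\gamma - s_{\gamma^*}$ as a central $(|\gamma|_0-|\gamma^*|_0)$-dimensional sub-Gaussian quadratic form via Lemmas \ref{lem:waldstat_nested} and \ref{lem:subgaussian_quadform} and applies the right-tail bound of Lemma \ref{lem:tail_quadform_subgaussian}. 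Where you genuinely diverge is part (ii). The paper augments to the union model $\gamma'=\gamma\cup\gamma^*$ and writes $s_\gamma - s_{\gamma^*}= u_1^Tu_1 - u_2^Tu_2$ with $u_1\sim SG(0,\tilde{\omega})$ central and $u_2\sim SG(\mu,\tilde{\omega})$ non-central with $\mu^T\mu=\lambda_\gamma$, then invokes Lemma \ref{lem:subgaussian_difquad_tail}; the left-tail ingredient (Lemma \ref{lem:lefttail_quadform_subgaussian}) only certifies $u_2^Tu_2\gtrsim \lambda_\gamma/\log\lambda_\gamma$, which is exactly where the conservative $2\log\lambda_\gamma$ denominator in the theorem's rate originates. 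You instead expand $\tilde{y}$ around $\widetilde{W}_{\gamma^*}\eta_{\gamma^*}^*$, isolating the deterministic signal $\lambda_\gamma$ plus a linear cross term plus an indefinite quadratic form; since the cross term is $O_P(\sqrt{\lambda_\gamma\log\lambda_\gamma})$, your route would certify $S_\gamma - S_{\gamma^*}\geq \lambda_\gamma(1-o_P(1))$, i.e.\ a sharper exponential rate than the one stated, which a fortiori implies part (ii). Two caveats you should make explicit. First, the paper's sub-Gaussianity (Definition \ref{def:subgaussian}) is a multivariate MGF bound permitting dependent coordinates, while classical Hanson--Wright assumes independent components; for this class you need a Hsu--Kakade--Zhang-type bound valid for positive semi-definite quadratic forms, and the indefinite matrix $H_{\gamma^*}-H_\gamma$ must be split as $(H_{\gamma'}-H_{\gamma^*})$ minus $(H_{\gamma'}-H_\gamma)$ before applying it --- which quietly re-introduces the union-model decomposition you were bypassing, and is precisely the work done by the paper's bespoke Lemmas \ref{lem:tail_quadform_subgaussian}--\ref{lem:subgaussian_difquad_tail}. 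Second, because $|\gamma|_0$ may grow with $n$, your claim that the fluctuation terms are of strictly lower order than $\lambda_\gamma$ is not automatic but is exactly what Assumption (A5) is designed to guarantee, matching the $\omega\tau|\gamma|_0\log d_n$ allowance in the statement; with these two points patched, your argument is a valid and in fact slightly stronger alternative to the paper's proof.
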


\subsection{Model selection consistency}
\label{ssec:pp}

Theorem \ref{thm:pp} below shows that the posterior probability of the optimal model $p(\gamma^* \mid y) \stackrel{L_1}{\longrightarrow} 1$ as $n \rightarrow \infty$ and provides the associated rate.
The result bounds separately the total posterior probability assigned to the set of overfitted models ($S_0$),
and to non-overfitted models that are either smaller ($S_1$) or larger ($S_2$) than the optimal $\gamma^*$. This decomposition helps understand false positive versus power trade-offs.

Theorem \ref{thm:pp} requires additional conditions involving the sub-Gaussian parameter $\omega$ of the data-generating $F$, the largest eigenvalue $\tau$ of $\Sigma^{-1}$, the prior dispersion parameter $g$ in \eqref{eq:prior_paramters}, and the model prior parameter $c \geq 0$ in \eqref{eq:prior_modelsize}. More specifically, 

\begin{enumerate}[leftmargin=*,label=(B\arabic*)]
\item The optimal model has size $|\gamma^*|_0 \leq \bar{q}$, where $\bar{q}$ is the maximum model size in \eqref{eq:prior_modelsize}.

\item $\omega \, \tau >1$.

\item $p(\gamma)$ is non-increasing in model size $|\gamma|_0 \in \{0,\ldots,\bar{q}\}$ and, for any $|\gamma|_0 > |\gamma^*|_0$, 
$$\lim_{n \rightarrow \infty} \frac{1}{2}\log(g n) + \frac{1}{|\gamma|_0 - |\gamma^*|_0} \log \left( \frac{p(\gamma^*)}{p(\gamma)} \right) - (1+2^{\frac{1}{2}})^2 \omega \tau \bar{q} = \infty .$$

\item For some $a \in (0, 1/(2 \omega \tau))$,
$\lim_{n\rightarrow \infty} a \log(g n) + a (c + 1) \log(q) - \log(q) = \infty.$


\item For any not over-fitted model $\gamma \not\subset \gamma^*$ of size $|\gamma|_0 \leq |\gamma^*|_0$ and any fixed $k >0$,
$$
\lim_{n \rightarrow \infty} t + \frac{\lambda_\gamma}{\log \lambda_\gamma} - k |\gamma|_0 \log \left( t + \frac{\lambda_\gamma}{\log \lambda_\gamma} \right) = \infty
$$
where
$
 t= (|\gamma|_0 - |\gamma^*|_0) \log(g n \, k_2) + 2 \log \left( p(\gamma^*)/p(\gamma) \right).
$

\item Let $\underline{\lambda}= \frac{\min_{|\gamma|_0 \leq |\gamma^*|_0} \lambda_\gamma}{\max\{|\gamma^*|_0-|\gamma|_0,1\}}$. Then,
$\lim_{n \rightarrow \infty}
\frac{\underline{\lambda}}{2 \omega \tau} - c \log(q) - \frac{1}{2} \log(gn)= \infty.$

\item For some $r< 1/(\omega \, \tau)$,
$\lim_{n \rightarrow \infty} \frac{(|\gamma^*|_0+1) \bar{q}^r}{q^{cr-1-|\gamma^*|_0} (q - |\gamma^*|_0)^r (g n)^{r/2}}= 0$.
\end{enumerate}

Assumption (B1) says that the optimal model $\gamma^*$ has positive prior probability. Assumption (B2) is a worst-case scenario, one obtains faster rates in Theorem \ref{thm:pp}(i) and (iii) if $\omega \tau < 1$, see the proof for details. 
Assumption (B3) is a mild requirement that $g\, n$ and $p(\gamma^*)/p(\gamma)$ are not too small relative to the model size $|\gamma|_0$.
(B4) bounds the total number of parameters $q$ as a function of $n$, and is satisfied by setting a large $c$ (sparse model prior) or $g$ (dispersed coefficient prior).
Assumptions (B5) and (B6) are stronger versions of (A5) requiring that the non-centrality parameters $\lambda_\gamma \geq (|\gamma^*|_0 - |\gamma|_0) \underline{\lambda}$ are large enough.
Altogether, (B4)-(B6) limit the amount of prior sparsity induced by the model prior parameter $c$ and the prior dispersion $g$, relative to $n$ and $q$.
Finally, Assumption (B7) is similar to (B2)-(B3) and ensures that the rate in Theorem \ref{thm:pp}(iii) converges to 0. 
(B7) is stronger than needed but simplifies the exposition, 
please see Assumption (B7') and Theorem \ref{thm:pp_alternative} in Section \ref{ssec:alt_thm_pp} for further details.


\begin{thm}

\underline{Part (i).}
Assume (A1)-(A5) and (B2)-(B4). Let $S_0= \{\gamma: \gamma^* \subset \gamma\}$.
There exist constants $k>0$ and $n_0$ such that, for all $n \geq n_0$ and $r<1/(\omega \tau)$,
\begin{align}
E_F(\mbox{pr}(S_0 \mid y)) < \frac{k (|\gamma^*|_0 +1)}{q^{r(c + 1) -1} (g n)^{\frac{r}{2}}}.
\nonumber
\end{align}

\underline{Part (ii).}
Assume (A1)-(A5), (B1), (B2), (B5) and (B6). Let $\underline{\lambda}$ be the signal strength parameter in (B6),
and $S_1= \{\gamma: |\gamma|_0 \leq |\gamma^*|_0, \gamma \not\subset \gamma^*\}$.
For $n \geq n_0$ and fixed $n_0$,
\begin{align}
E_F(\mbox{pr}(S_1 \mid y))
< 9 \exp \left\{  -\frac{\underline{\lambda}(1-\epsilon)}{2 \tilde{\omega}} + [|\gamma^*|_0(1+\epsilon) + c] \log q + \frac{1}{2} \log(gn) \right\},
\nonumber
\end{align}
for a constant $\alpha >0$ that may be taken arbitrarily close to 0.

\underline{Part (iii).}
Assume (A1)-(A5), (B1), (B2), (B5) and (B7). 
Let $S_2= \{\gamma: |\gamma|_0 > |\gamma^*|_0, \gamma \not\subset \gamma^*\}$.
There exist constants $k>0$ and $n_0$ such that, for all $n \geq n_0$ and $r<1/(\omega \tau)$,
$$
E_F(\mbox{pr}(S_2 \mid y)) \leq \frac{k (|\gamma^*|_0+1) b^{1/2} \bar{q}^r}{q^{cr-1-|\gamma^*|_0} (q - |\gamma^*|_0)^r (g n)^{r/2}}.
$$
\label{thm:pp}
\end{thm}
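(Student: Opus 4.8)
The plan is to convert each posterior-mass bound into a sum of Bayes factors and then invoke Theorem \ref{thm:bf}, but handled in expectation rather than in probability. The starting point is the elementary identity behind \eqref{eq:pp}: writing $p(y)=\sum_{\gamma'}p(\gamma')p(y\mid\gamma')$ and retaining only the $\gamma^*$ term in the denominator gives, for any set $S$ of models,
\[
\mbox{pr}(S\mid y)\le \sum_{\gamma\in S}\frac{p(\gamma)}{p(\gamma^*)}\,B_{\gamma\gamma^*}.
\]
Since $\mbox{pr}(S\mid y)\in[0,1]$, for any $r\in(0,1)$ we have $\mbox{pr}(S\mid y)\le\mbox{pr}(S\mid y)^r$, and subadditivity of $x\mapsto x^r$ yields $E_F(\mbox{pr}(S\mid y))\le\sum_{\gamma\in S}(p(\gamma)/p(\gamma^*))^r\,E_F(B_{\gamma\gamma^*}^r)$. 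The whole argument then reduces to (a) bounding the fractional moment $E_F(B_{\gamma\gamma^*}^r)$ of a single Bayes factor, and (b) summing over $S$. The choice $r<1/(\omega\tau)$ is exactly what is needed in (a): using the closed-form Gaussian marginal likelihood under \eqref{eq:model_theorem} and the prior \eqref{eq:prior_paramters}, $B_{\gamma\gamma^*}^r$ is a determinant factor times the exponential of $r/2$ times a quadratic form in the sub-Gaussian residuals $\widetilde{W}_{\gamma^*}$-centered noise $\tilde\epsilon=\Sigma^{-1/2}(y-W_{\gamma^*}\eta_{\gamma^*}^*)$; the moment generating function of such a rank-$k$ quadratic form is finite precisely when $r<1/(\omega\tau)$, and is then bounded by $(1-r\omega\tau)^{-k/2}$.

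For Part (i) I would specialize to $\gamma^*\subset\gamma$ with $k=|\gamma|_0-|\gamma^*|_0$ extra columns. The determinant factor contributes the overfitting penalty of order $(gn)^{-rk/2}$ (controlled through the eigenvalue bounds (A2)--(A3)), while the exponential-of-quadratic-form moment contributes a factor $C^k$ with $C=(1-r\omega\tau)^{-1/2}$. Counting the $\binom{q-|\gamma^*|_0}{k}\le q^k$ models of each size and inserting the complexity-prior ratio from \eqref{eq:prior_modelsize}, which carries a factor of order $q^{-crk}$, turns the sum over $S_0$ into a geometric series in $k$. Assumptions (B2)--(B4) guarantee the common ratio is below one so the series converges, with its dominant $k=1$ term producing the stated $q^{-(r(c+1)-1)}(gn)^{-r/2}$ rate and the $(|\gamma^*|_0+1)$ prefactor emerging from the geometric summation.

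Parts (ii) and (iii) use the same machinery, but now $\gamma$ omits at least one parameter of $\gamma^*$, so the projection $I-H_\gamma$ in Assumption (A5) leaves the signal $\widetilde{W}_{\gamma^*}\eta_{\gamma^*}^*$ largely unexplained and the residual sum of squares is inflated by the non-centrality $\lambda_\gamma$. After centering, $\log B_{\gamma\gamma^*}$ acquires a deterministic term of order $-\lambda_\gamma/2$, plus a term linear in $\tilde\epsilon$ and a rank-$k$ quadratic term; bounding the fractional moment then produces the exponentially small factor $\exp\{-r\lambda_\gamma(1-\delta)/(2\log\lambda_\gamma)\}$ appearing in Theorem \ref{thm:bf}(ii). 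For Part (ii) I would sum this over the at most $q^{|\gamma^*|_0}$ models in $S_1$, using the minimal normalized signal $\underline{\lambda}$ of (B6) to factor out $\exp\{-\underline{\lambda}(1-\epsilon)/(2\tilde\omega)\}$ and (B5) to ensure the exponential beats both the $q^{|\gamma^*|_0(1+\epsilon)+c}$ model count and the $(gn)^{1/2}$ dimension factor. For Part (iii) the competing models are larger than $\gamma^*$, so both the overfitting penalty of Part (i) and the signal-deficiency factor of Part (ii) are present; here Assumption (B7) is what lets me discard the exponential term and still bound the combinatorial sum by a pure polynomial in $q$ and $gn$, yielding the stated rate.

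The main obstacle will be step (a) for the non-overfitted models: because $F$ is only sub-Gaussian rather than Gaussian, the cross term between the noise $\tilde\epsilon$ and the unexplained signal cannot be integrated exactly and must instead be Chernoff-bounded, which is what forces the $\lambda_\gamma/\log\lambda_\gamma$ correction (and hence the $\log\lambda_\gamma$ recurring throughout (A5), (B5), (B6)) in place of a clean $\lambda_\gamma/2$. A secondary difficulty is the uniformity of the summation in step (b): there are up to $\binom{q}{|\gamma|_0}$ competing models, so the decay of each $E_F(B_{\gamma\gamma^*}^r)$ must dominate this count uniformly. Verifying this is precisely the role of the sparsity conditions (B4) and (B7), and checking that the resulting geometric and combinatorial series converge to the claimed closed forms is the most calculation-heavy part of the argument.
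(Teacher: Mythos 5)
Your reduction is genuinely different from the paper's. You pass to fractional moments, $E_F(\mbox{pr}(S \mid y)) \leq \sum_{\gamma \in S} \left( p(\gamma)/p(\gamma^*) \right)^r E_F\left( B_{\gamma \gamma^*}^r \right)$ (valid, since $\mbox{pr}(\gamma \mid y) \leq \frac{p(\gamma)}{p(\gamma^*)} B_{\gamma \gamma^*}$, $\mbox{pr}(S\mid y)\le 1$, and $x \mapsto x^r$ is subadditive for $r<1$), and then control each $E_F(B^r_{\gamma\gamma^*})$ by a Chernoff/MGF computation. The paper instead bounds each model through the survival-function identity $E_F(p(\gamma \mid y)) \leq \int_0^1 P_F\left( B_{\gamma\gamma^*} > \frac{p(\gamma^*)/p(\gamma)}{1/v-1} \right) dv$, integrates the tail bounds behind Theorem \ref{thm:bf} via Propositions \ref{prop:intbound_sg_central}--\ref{prop:intbound_sg_noncentral}, and sums with the binomial generating-function bound of Lemma \ref{slem:binomial_ogf} (which is also what produces your $(|\gamma^*|_0+1)$ prefactor). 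For Part (i) your route does work: using the paper's Bayes-factor envelope (\eqref{seq:bf_term1} together with \eqref{eq:ineq_bayesssdif}, which you need and should make explicit), $B_{\gamma\gamma^*} \leq (gnk_2)^{-k/2} e^{(1+\delta)u^Tu/2}$ with $u \sim SG(0,\tilde\omega)$ of dimension $k=|\gamma|_0-|\gamma^*|_0$, and the joint-MGF definition of sub-Gaussianity does yield $E\, e^{t u^Tu} \leq (1-2t\tilde\omega)^{-k/2}$ for $t\tilde\omega<1/2$ (e.g.\ by Gaussian randomization, $e^{tu^Tu}=E_g e^{\sqrt{2t}\, g^Tu}$ with $g \sim N(0,I_k)$), so any $r<1/(\omega\tau)$ is admissible after taking $\delta$ small; you then land on essentially the paper's intermediate bound \eqref{seq:modelppbound_overfitted3} and the same combinatorial summation. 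Part (iii) also survives, because, exactly as the paper does in Proposition \ref{prop:intbound_sg_noncentral}(iii), you may discard the noncentral form via $-u_2^Tu_2 \leq 0$ and reduce to the central case of Part (i).

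The genuine gap is Part (ii). There $|\gamma|_0 \leq |\gamma^*|_0$, so the determinant factor $(gn)^{r(|\gamma^*|_0-|\gamma|_0)/2}$ grows and the whole bound rests on extracting $e^{-\underline{\lambda}(1-\epsilon)/(2\tilde\omega)}$. Your fractional moment is $E \exp\left\{ \frac{r}{2}\left[ (1+\delta)u_1^Tu_1 - (1-\delta)u_2^Tu_2 \right] \right\}$, and $u_1,u_2$ are functions of the same $y$, hence dependent: the expectation does not factorize, and any H\"older/Cauchy--Schwarz split $E\, e^{A-B} \leq (E\, e^{pA})^{1/p} (E\, e^{-p'B})^{1/p'}$ degrades both exponents. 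With $p=p'=2$ the central factor forces $r<1/(2\omega\tau)$, halving the polynomial rate, while the noncentral factor, bounded as in the proof of Lemma \ref{lem:lefttail_quadform_subgaussian} by $E\, e^{-sB} \leq e^{-s\lambda(1-2s\tilde\omega)}$, gives at best decay $e^{-\lambda/(8\tilde\omega)}$ even before the H\"older loss and roughly $e^{-\lambda/(16\tilde\omega)}$ after it; worse, since $s=r(1-\delta)/2$ is tied to the same $r$ appearing everywhere else, taking $r$ near $1/(\omega\tau)$ drives $1-2s\tilde\omega$ toward $\delta$ and the exponential decay degenerates, whereas taking $r$ small destroys the $(gn)$ and $q^c$ factors needed against (B6). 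So a pure joint-MGF argument cannot reach the stated constant $1/(2\tilde\omega)$. The paper circumvents exactly this by staying with tail probabilities: in Lemma \ref{lem:subgaussian_difquad_tail} the event $\{bu_1^Tu_1-u_2^Tu_2>t\}$ is union-bounded into separate events for $u_1^Tu_1$ and $u_2^Tu_2$, where dependence costs nothing, and the threshold split at $\lambda/(\sigma^2\log\lambda)$ is what creates the $\lambda/\log\lambda$ corrections in (A5), (B5), (B6) --- not, as you suggest, a Chernoff bound on a cross term. To repair Part (ii) you would have to abandon the moment of $B^r$ and pass through $E[\min(1,x)]=\int_0^1 P(x>v)\,dv$, which is precisely the paper's device.
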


In Theorem \ref{thm:pp},  $r$ should regarded as being close to $1/(\omega\, \tau)$. Theorem \ref{thm:pp} (i) and (iii) state that overfitted and large non-overfitted models (respectively) receive vanishing posterior probability as $n$ grows, at a rate that is faster when the prior complexity and the dispersion parameters $(c,g)$ are large, and slower when either the optimal model is not sparse (i.e., $|\gamma^*|_0$ is large) or $\omega \tau$ is large.
In the well-specified case where $F$ is Gaussian with independent observations, then $\omega\, \tau=1$. Hence, the condition $r < 1/(\omega\,  \tau) <1$ reflects that under strongly dependent data or model misspecification, convergence rates get slower.
Similarly, Part (ii) states that small non-overfitted models are discarded at an exponential rate in $\underline{\lambda}/(2\omega\, \tau)$, which gets slower when $ \omega\, \tau>1$ and when either  $q$ or $|\gamma^*|_0$ are large. If $(c,g)$ are large, i.e. chosen to favor smaller models, then the convergence rate is slower. This reflects the intuitive notion that, by inducing stronger sparsity, the statistical power to detect truly active coefficients is reduced.

\section{Simulation studies}
\label{sec:results}

\subsection{Simulation with independent errors}
\label{sec:simulation_iid}

We consider a simulation with independent errors to compare our cut orthogonal basis with our same Bayesian framework with standard (uncut) cubic splines.
 We used the Gaussian shrinkage and extended auto-regressive priors discussed after \eqref{eq:prior_paramters}. 
We also considered the VC-BART of  \cite{deshpande:2020}  fitting a varying coefficient model via Bayesian additive regression trees,
 applying the fused lasso to our cut orthogonal basis (\cite{tibshirani:2005}, regularization parameters set with the extended Bayesian information criterion), 
and a least-squares regression where one obtains Benjamini-Hochberg adjusted P-values for interactions between each covariate and discretized coordinates $z$.
 Finally, we used gam function in R package mgcv to fit an additive model with generalized cross-validation, and used 95\% simultaneous confidence bands for $\hat{\beta}_j(z)$ to test local null hypotheses. We used the default 12 knots, and also 24. 
In our approach, we used a cubic B-spline with 20 knots for the baseline and a multi-resolution analysis involving 7, 9 and 11 knots to evaluate  the covariate effects. 
We considered cut basis of degree 0 (piecewise-constant) and 3 (cubic) but we only report the results from the former, since the results were very similar.
We used Markov Chain Monte Carlo  sampling to search over models characterized by different $\gamma$, and to obtain posterior probability estimates for the covariate effects as well as posterior samples for all the model parameters. We used the default specifications in the R package mombf (5,000 Gibbs iterations, with 500 burn-in, see \cite{rossell:2017}). 

We illustrate the issues that arise when using cubic B-splines, and how these issues are addressed by using a cut orthogonal basis. 
We consider two scenarios, with $n=100$ and $n=1,000$.  In both cases, we consider $p=10$ covariates, and a regular grid of $n$ values for $z_i \in [-3,3]$. The first covariate is a binary group indicator $x_{i1} \in \{0,1\}$ with equal group sizes. The remaining covariates are normally distributed with mean $x_{i1}$, and variance 1,  $x_{ij} \sim N(x_{i1},1)$. This results in  a mild empirical correlation between $x_1$ and each remaining covariate of 0.43. 
The expected outcome is set to depend truly only on the first covariate, and that only when $z>0$, as depicted in Figure \ref{fig:splinefit} (grey lines).
More in detail, we set the following model 
\begin{align}
 E(y_i \mid x_i, z_i)= 
\begin{cases}
\cos(z_i), &\mbox{ if } z_i \leq 0, \\
0,   &\mbox{ if } z_i>0, x_{i1}=1, \\
1/(z_i+1)^2, &\mbox{ if } z_i>0, x_{i1}=0.
\end{cases}
\nonumber
\end{align}
 The error variance is $0.25^2$. 
Section \ref{ssec:simulation_iid_bivar} shows an extension assessing our approach with bivariate $z$ where, as summarized in Figure \ref{fig:simiid_bivar}, using orthogonal cut basis also prevents type I errors.
For each scenario, we report averaged results over 100 independent replicates

Table \ref{tab:simiid} shows the type I error and power for our cut B-spline basis, and the competing methods.
 The results for our two prior covariances are very similar. The additive model in mgcv exhibits good power, but the type I error is inflated for covariate 1 and $z \in (-1,0]$ (0.28 for $n=100$, 0.08 for $n=1000$). The results with 24 knots were very similar (Table \ref{tab:simiid_extra}). 
P-value adjustment exhibits a low type I error, but the power is very low for small $n$. For example, for $n=100$ and $z \in (1,2]$, the estimated power for covariate 1 is 0, whereas for the proposed cut basis, it is 0.91. This finding highlights the importance of conducting local variable selection:  by learning that covariates 2-10 are unnecessary, the power to detect local effects for covariate 1 increases significantly.
 Fused lasso resulted in high false positive rates. 
One striking result that strongly supports our theoretical arguments is that when employing cubic splines for the local tests, both the posterior probabilities of rejecting the null hypothesis and the type I error for covariate 1 are near 1 for values  $z \in (-1,0)$. The 0-degree orthogonal cut spline is not affected by these issues. In this example it yields near-perfect inference, except that for $n=100$, the power to detect $\beta_1(z) \neq 0$ for $z \in (0,1)$ is approximately 0.25.
 Figure \ref{fig:simiid}  further illustrates these results. The left panel displays the average posterior probabilities $\mbox{pr}(\beta_j(z) \neq 0 \mid y)$ across the simulation replicates while the right panel presents the power function. Both panels once again reveal that  mgcv and  standard B-splines run into false positive inflation for covariate 1 at $z \in (-1,0)$.


VC-BART reported local effects for (truly inactive) covariates 2-10 in all simulations, when using a threshold of 0.95 posterior probability inclusion in a tree. For this reason, we decided instead  to reject a local null hypothesis at $z$ when the 95\% posterior interval for $\beta_j(z)$ did not include 0.   Although this procedure resulted in a lower type I error, it remained above 0.15 for some $z$'s, even for $n=1,000$, and the power was generally lower than for our methodology (Figure \ref{fig:simiid}, Table \ref{tab:simiid}).
As discussed, a strategy to prevent false positives in VC-BART is to only report estimated covariate effects above some threshold; however,  it is not obvious how to set the threshold. Additionally, this approach cannot detect effects below the selected threshold. 

Table \ref{tab:simiid_mse} shows the root mean squared estimation error for the local effects $\beta_j(z)$, averaged across covariates and $z$. Said error  was very similar for degree 0 cut splines and standard cubic B-splines, and for VC-BART it was nearly twice for $n=100$ and four times larger for $n=1,000$. 
 R package mgcv also attained excellent estimation error, both for 12 and 24 knots. 

Finally, our proposed approach had substantially lower computational time compared to VC-BART, and scaled better as $n$ increased (e.g.,  5 seconds vs. 7 minutes  for $n=1,000$, Table \ref{tab:cputime_simiid}). 
This advantage is likely because our approach allows pre-computation of sufficient statistics. After this pre-computation, the computational cost does not depend on $n$.
 R package mgcv was fastest (e.g. 4 seconds for $n=1,000$), since it fits a single model. 


 Section \ref{ssec:simulation_iid_misalign} studies the case where the true change-point is at $z=0.15$, where no knot is placed in any of the 3 resolutions.
Our method found evidence for a local effect of covariate 1 in regions that include any $z \geq 0.15$, and no evidence elsewhere.
Section \ref{ssec:simulation_iid_nonequispaced} then considers 4 resolutions, two of which place more knots on $z \geq 0$ than on $z<0$.
The evidence for local covariate effects was similar than in our previous simulations.
Interestingly, for $n=100$ the posterior distribution favored the resolution with less knots, whereas for larger $n$ it favored the resolutions placing more knots in $z>0$, where the effect of covariate 1 truly varies (Figure \ref{fig:simiid_nonequispaced}).

\begin{table}
\begin{center}
\begin{tabular}{ccccccccc} 
\multicolumn{9}{c}{$n=100$} \\ 
& \multicolumn{4}{c}{Covariate 1} & \multicolumn{4}{c}{Covariates 2-10} \\ 
 Region          & Gaussian &  ICAR+    & VCBART & GAM         & Gaussian &  ICAR+      & VCBART     & GAM  \\
                 & prior    &  prior    &        &                         & prior    &  prior      &            & \\
$z \in $ (-3,-2] & 0        &  0        &0.05    &  0.004      & 0       &  0       & 0.14$^{**}$&  0.04 \\
$z \in $ (-2,-1] & 0        &  0        &0.05    &  0.03       & 0       &  0       & 0.14$^{**}$&  0.04  \\ 
$z \in $  (-1,0] & 0        &  0        & 0      &  0.28$^{**}$ & 0.001   &  0       & 0.002      &  0.04 \\ 
$z \in $   (0,1] & 0.25     &  0.21     &0.18    &  0.79       & 0.001   &  0.001   & 0          &  0.05 \\ 
$z \in $   (1,2] & 0.91     &  0.91     &0.95    &  0.99       & 0       &  0.001   & 0          &  0.05 \\ 
$z \in $   (2,3] & 0.96     &  0.96     &0.95    &  0.94       & 0       &  0       & 0          &  0.04 \\ 
\multicolumn{9}{|c|}{$n=1000$} \\ 
& \multicolumn{4}{c}{Covariate 1} & \multicolumn{4}{c}{Covariates 2-10} \\ 
 Region         & Gaussian &  ICAR+  & VCBART     & GAM         & Gaussian & ICAR+& VCBART     & GAM \\
                & prior    &  prior  &          &                       & prior  & prior&            & \\
$z \in $(-3,-2] &   0      & 0       & 0.08$^{**}$  & 0.001       & 0      & 0     & 0.17$^{**}$&   0.006 \\ 
$z \in $(-2,-1] &   0      & 0       & 0.08$^{**}$  & 0.001       & 0     & 0     & 0.17$^{**}$&   0.006 \\ 
$z \in $ (-1,0] &   0      & 0        & 0         & 0.08$^{**}$  & 0    & 0     & 0          &  0.01 \\ 
$z \in $  (0,1] &   1      & 1        & 0.68      & 0.95        & 0     & 0    & 0          &  0.01 \\ 
$z \in $  (1,2] &   1      & 1        &  1        & 1           & 0     & 0    & 0.002      &  0.01 \\ 
$z \in $  (2,3] &   1      & 1        &  1        & 1           & 0     & 0    & 0.002      &  0.01 \\ 
\end{tabular}
\end{center}
\caption{Independent errors simulation. Proportion of rejected null hypothesis under the Gaussian shrinkage and extended intrinsic conditionally auto-regressive (ICAR+) priors, VC-BART and gam in R package mgcv (GAM) with default 12 knots. 
For covariate 1 and $z>0$ this is the statistical power, else it is the type I error. ** indicates a type I error $>0.05$}
\label{tab:simiid}
\end{table}

\subsection{Functional data simulation}
\label{sec:simulation_fda}

\begin{table}
\begin{center}
\begin{tabular}{ccccc} 
 Region          & Cut0 & Tensor & IT       & VC-BART  \\  
$z \in $ (-3,-2] & 0.02 & 0.032&   0.01   &  0.51 \\ 
$z \in $ (-2,-1] & 0.035& 0.032&   0.01   &  0.44 \\ 
$z \in $  (-1,0] & 0.047& 0.032&  0.0131  &  0.46 \\
$z \in $   (0,1] & 0.96 & 0.657&     0.389&  0.92 \\ 
$z \in $   (1,2] & 0.981& 0.791&     0.964&  1    \\ 
$z \in $   (2,3] & 1    & 0.800&    0.992 &  1    \\ 
\end{tabular}
\end{center}
\caption{Functional data simulation. Proportion of rejected null hypothesis for 0-degree cut orthogonal basis, tensor model, interval testing (IT) procedure  and VC-BART. For $z<0$ this is the type I error, for $z>0$ the statistical power}
\label{tab:simfda}
\end{table}

We simulate functional data for $M=50$ individuals over a uniform grid of $100$ time points within the interval $z \in [-3,3]$, with strongly correlated errors. For each individual, we consider  $p=10$ covariates and generate data over said grid. We employ the same  model for the outcome and covariates as in Section \ref{sec:simulation_iid}, that is  only covariate 1 truly has an effect (at coordinates $z > 0$). However,  now the errors are drawn from a mean zero, unit-variance, Gaussian process with autocorrelation $\mbox{cov}(\epsilon_{it}, \epsilon_{it'})= 0.99^{|t-t'|}$. We simulate $100$ independent replicates and report averaged results.

As discussed in Section \ref{ssec:priors}, we replaced the Gaussian shrinkage prior in \eqref{eq:prior_paramters} by the group product moment prior of \citet{rossell:2021}, with default parameters.
We compare our approach to the tensor model method of \cite{paulon:2023}, the interval testing procedure of \cite{pini:2016},  and VC-BART.
The former is designed to detect high-order interactions between multiple discrete covariates, and has two versions: one designed for a single covariate and another for multiple covariates. The multiple covariate version performed poorly, with a type I error rate exceeding 0.5. 
Given also that the latter method is designed for a single covariate, we first focus on comparing the three methods when considering only the truly active covariate 1.

Table \ref{tab:simfda} shows that our method has higher power than LFMM (0.96 vs. 0.657 for $z \in (0,1]$), with a similar type I error rate.  Similarly, the interval testing procedure exhibits low power in detecting the covariate effect within the range of $z \in (0,1]$. While the procedure controls the family-wise error rate within any specified interval of the domain, this also diminishes its power.  VC-BART exhibited high type I error rates. 
Figure \ref{fig:pp_simfda} (left) shows the average posterior probability of a local effect for covariate 1 for both the LFMM and our method. At coordinates $z<0$, where there is truly no covariate effect, these probabilities are close to 0 for both methods. However, for $z>0$, where the covariate effect truly exists, our method exhibits higher posterior probabilities.

The right panel of Figure \ref{fig:pp_simfda} and Table \ref{tab:simfda_10covar} present the results for our method when using all 10 covariates. The results for covariate 1 are overall similar to the single covariate setting, although the posterior probabilities for $z<0$ and the corresponding type I errors are slightly higher. For  $z>0$, the power is also slightly lower (e.g. from $0.96$ to $0.73$ for $z \in (0,1]$). For the truly inactive covariates $2$-$10$, the average posterior probabilities were below $0.2$ at any $z$, and the type I errors ranged from $0.05$ to $0.08$. In Section \ref{ssec:simulation_fda} we show results for a larger sample size, $M=100$ individuals. As illustrated in Table \ref{tab:simfda_10covar},  all type I errors are below $0.05$, and the statistical power is $\geq 0.99$ for all $z$. 
 For VC-BART, the type I error remained above 0.5.  


\section{Salary gaps versus age}
\label{ssec:salary}

\begin{figure}
\begin{center}
\begin{tabular}{cc}
\includegraphics[width=0.5\textwidth]{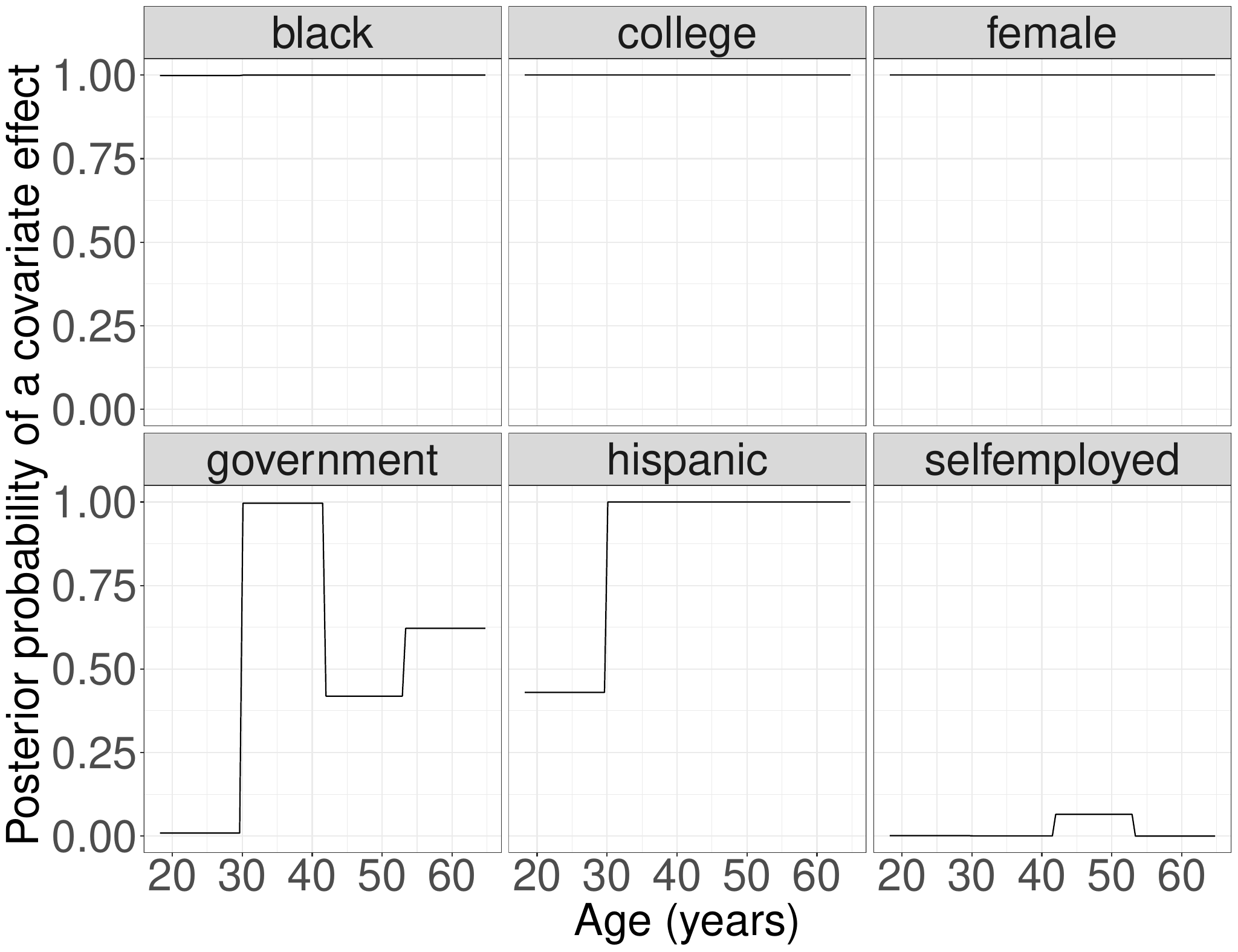} &
\includegraphics[width=0.5\textwidth]{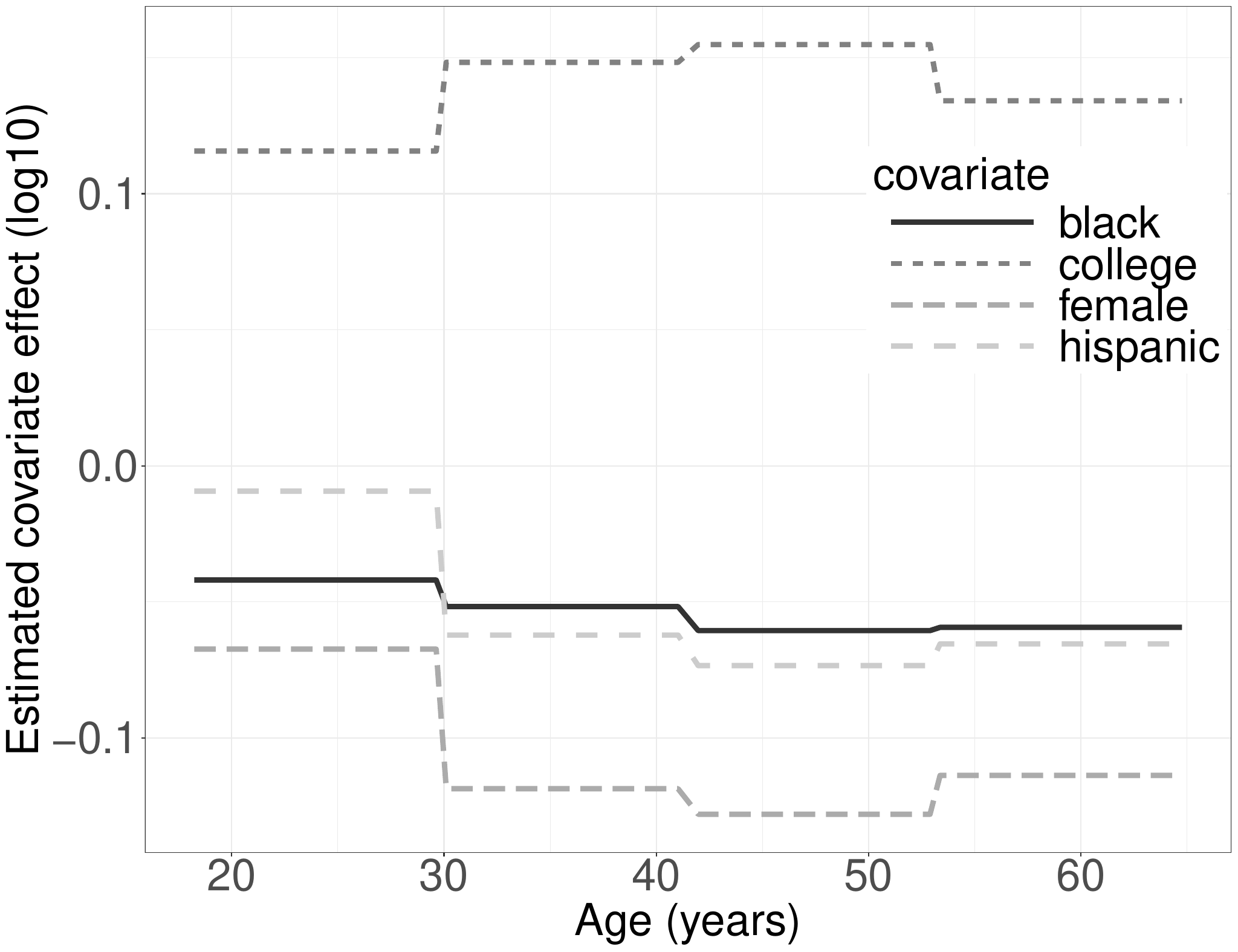}
\end{tabular}
\end{center}
\caption{Salary Data, discussed in Section \ref{ssec:salary}. Left: posterior probability of a salary gap associated with race, gender and college education versus age, adjusted by occupation and worker class. Right: corresponding estimated effect.}
\label{fig:salary}
\end{figure}

We used a dataset obtained from the 2019 Current Population Survey \citep{flood:2020}  to investigate the presence of a salary gap associated with potentially discriminatory factors, such as race (binary indicators for black race and Hispanic origin) and sex.
 Survey data require re-weighting for consistent estimation, but we do not pursue this here as the main goal was local testing. 
Specifically, the goal was to assess whether said gaps exist at all ages. The dataset includes $n=36,308$ single-race individuals aged 18-65 who were employed full-time (working 35-40 hours per week) and not serving in the military.
The response variable is the log-10 annual work income.  Besides assessing the effects of race and sex at various ages  ($z$ coordinates), we included the possession of a university degree as a benchmark that is expected to have a positive effect across all ages. We also examined the effects of being government-employed and self-employed. We also included the occupational sector as an adjustment covariate. The inclusion of the occupational sector is predetermined and no testing is performed. The data contain information on 24 occupational sectors, such as architect/engineer, computer/maths, construction, farming, food, maintenance, etc.


We performed a multi-resolution analysis using  four resolutions, defining 5, 10, 15 and 20 local testing regions respectively. The latter corresponds to 2.5 year bins, finer resolutions are unlikely of interest, as salary policies take years to have a noticeable effect.  
Figure \ref{fig:salary} summarizes the results. 
The existence of a black race, sex and college effect had a posterior probability near 1 at any age (left panel). 
College was associated with higher salaries, while black race and female sex were associated with lower salaries.
However, these gaps were estimated to be smaller for younger individuals, aged $\leq 30$ (right panel). 
Interestingly, we did not find strong evidence of salary gaps associated with Hispanic origins among individuals who had recently joined the workforce (aged $\leq 30$). However, strong evidence was found for such gaps at all other ages.
We emphasize  that the lack of evidence against a (local) null hypothesis does not prove its truth, as there may be limited power to detect such differences. However, given the large sample size, it appears that if there were truly a salary gap for individuals aged $\leq 30$, it should not be very large. 
We remark that these findings should be interpreted conditional on the assumption that two individuals are equal in terms of their occupational sector, college education, and other covariates. Our analysis is not designed to detect discrimination that may hinder individuals from obtaining a certain education or occupation.
 

Figure \ref{fig:salary_suppl} shows the sensitivity of our results to using a larger, and fixed, number of 20 and 30 knots.
With more knots statistical power should decrease, accordingly covariates black and hispanic show less evidence for local effects. 
For government and self-employed still no local effects are found, supporting that in our framework adding knots does not lead to false positives, whereas for college and female there continues to be evidence for local effects at most ages.

\section{Local brain activity over time}
\label{ssec:saez_data}

We consider a dataset from  \cite{saez:2018} measuring brain activity over time using multi-electrode electrocorticography.  
Patients played a game where they chose between a certain prize and a risky gamble. 
The goal is to assess whether certain game covariates are associated with brain activation and, if so, at what specific times. 
We consider eight covariates. Three are outcome-related: 1) whether the gamble resulted in a win or loss, 2) the reward prediction error (RPE) representing the difference between the obtained reward and the expected value of the gamble, and 3) the additional money that would have been won for the non-chosen option (regret). There was also a dummy covariate indicating whether participants chose to gamble or not and covariates about previous rounds (e.g. losing in the immediately preceding round).

Prior to time $z=0$ participants placed their bets. At exactly time $0$, the game's outcome (money gained) was revealed to the participants. The researchers recorded the brain activity before and after the outcome revelation, from $z=-450$ until $z=1500$ miliseconds.
Figure \ref{fig:ECoG} (left) shows the mean  high-frequency brain activity for an illustrative electrode averaged across 179 trials 
played by a single patient, split according to whether the game was won (top) or lost (bottom). 
See \cite{saez:2018} for further experimental and data pre-processing details.
 
First, we examine the local effect of each covariate separately. We  perform a multi-resolution analysis with 10, 20, 30 and 37 knots. The highest resolution corresponds to having a knot at every 50-ms interval, sufficient to capture changes in brain activity typically observed in this type of experiment (see \cite{saez:2018} for a discussion).
We find a significant increase in brain activity associated with game wins and regret. Specifically, the posterior probability of an effect for game wins is between 0.73 and 0.96 within the time range of $z \in (665,833)$. Similarly, we observe high probabilities for regret (between 0.89 and 1) in $z \in (665, 870)$. The estimated effects displayed temporal variability, gradually decreasing over time, see Figure \ref{fig:EcoG_effect_singles}. 

We next jointly include the eight covariates in our model, using multiple resolutions as above. 
This analysis identifies  regret 
as the only variable with a significant effect on brain activity, with a posterior probability of roughly 0.9 in the time range $z \in (720, 870)$ 
(Figure \ref{fig:ECoG}, right).
The differences with our previous analysis are likely because game wins (which before also had local effects) are highly correlated with regret.

Overall, our findings support that brain activity is associated to outcome-related covariates (gamble win, regret), rather than choice-related processes like gamble choice, and that it is important to account for the joint effect of multiple covariates.


\section{Discussion}
\label{sec:discussion}


Models assuming additive effects can provide more accurate inference than non-parametric methods when the sample size is moderate or there are many covariates, while facilitating interpretation and computation.  
If desired, one may also include covariate interactions, see the sex vs. race interaction in our salary example (Figure \ref{fig:salary_female_race}).
Although our orthogonal cut basis protect against type I error inflation in such models,
if one uses too many knots then power can suffer. However, our results support that the multi-resolution analysis can help improve power.

We focused on testing multiple scientific hypotheses, rather than estimating covariate effects. An alternative illustrated in Section \ref{sec:simulation_iid} is to use credible intervals for covariate selection. 
False positive inflation issues also apply to said intervals, unless one ensures the presence of zeroes in the asymptotic solution, e.g. our orthogonal cut basis.
Also, with many covariates, credible intervals can be inefficient compared to the exclusion of inactive covariates.

Our work enables the use of standard posterior inference algorithms.
These  algorithms were effective in our examples,
but computations become costlier when one considers many covariates, different resolutions, or as one adds local testing regions.
 Hence, with multivariate coordinates, restraint in setting many regions may be needed. 
When many covariate effects are zero (sparse truth) the posterior distribution concentrates on small models, which eases computation, but for non-sparse truths it would be interesting to develop tailored computations.
 For functional data we showed a proof-of-principle where one can use any covariance estimate $\hat{\Sigma}$ and then treats it as fixed. 
Fixing $\hat{\Sigma}$ underestimates uncertainty, but one still obtains consistent model selection and it enables
quickly computing marginal likelihoods via rank 1 Cholesky updates, and storing them whenever a model is revisited.
Developing statistically and computationally efficient covariance models that fully account for uncertainty would certainly be interesting future work.


Beyond our specific framework, we hope to shed light on subtle issues arising in local variable selection under misspecification, and help others put forward alternative solutions. 

\section*{Acknowledgments}

DR was funded by Ayudas Fundaci\'on BBVA Proyectos de Investigaci\'on Cient\'ifica en Matem\'aticas 2021, grant Consolidaci\'on investigadora CNS2022-135963 by the AEI, Europa Excelencia EUR2020-112096 from the AEI/10.13039/501100011033 and European Union NextGenerationEU/PRT.
ASK and MG were partially supported by NSF SES Award number 1659921.

\section*{Supplementary material}

The sections below contain all proofs, Lemma S1, and additional data analysis results. 
Additionally, folder 2023\_Rossell\_Kseung\_Guindani\_Saez\_localvarsel contains data and R scripts to reproduce our results, available at https://github.com/davidrusi/paper\_examples.




\section{Supplementary code}

R code to reproduce our examples is available at
{\small \texttt{https://github.com/davidrusi/paper\_examples/tree/main/2023\_Rossell\_Kseung\_Guindani\_Saez\_localvarsel}}

\section{Proof of Lemma \ref{lem:zero_indepmodel}}
\label{supplsec:proof_zero_indepmodel}

To ease notation let $\mu= (\mu_1^T,\ldots,\mu_{|\mathcal{R}|}^T)^T$ where $\mu_b= E_F(y_b \mid X,Z)$. We obtain
\begin{align}
\begin{pmatrix} \eta_0^* \\ \eta_1^* \end{pmatrix}= (W^T W)^{-1} W^T \mu=
\begin{pmatrix} (W_0^T W_0)^{-1} & 0 \\ 0 & (W_1^T W_1)^{-1} \end{pmatrix} \begin{pmatrix} W_0^T \\ W_1^T \end{pmatrix} \mu=
\begin{pmatrix} (W_0^T W_0)^{-1} W_0^T \\ (W_1^T W_1)^{-1} W_1^T \end{pmatrix} \mu,
\nonumber    
\end{align}
since $W_1^T W_0=0$.
Now, note that \eqref{eq:blockdiag_basis} implies that $W_1^TW_1$ is block-diagonal, i.e.  
\begin{align}
W_1^T W_1= \begin{pmatrix} 
             W_{11}^T W_{11} & 0 & \ldots & 0 \\
                          0 & W_{12}^T W_{12} & \ldots & 0 \\
                    \ldots &   &       & \ldots \\
                          0 & \ldots     &   &       W_{1|\mathcal{R}|}^T W_{1|\mathcal{R}|}
\end{pmatrix},
\nonumber
\end{align}
and hence
\begin{align}
&\begin{pmatrix} \eta_{11}^* \\ \ldots \\ \eta_{1|\mathcal{R}|}^* \end{pmatrix}= (W_1^T W_1)^{-1} W_1^T \mu= 
\nonumber \\
&\begin{pmatrix} 
             (W_{11}^T W_{11})^{-1} & 0 & \ldots & 0 \\
                          0 & (W_{12}^T W_{12})^{-1} & \ldots & 0 \\
                    \ldots &   &       & \ldots \\
                          0 & \ldots     &   &       (W_{1|\mathcal{R}|}^T W_{1|\mathcal{R}|})^{-1}
\end{pmatrix}
\begin{pmatrix} W_{11}^T & 0 & \ldots & 0 \\ 
0 & W_{12}^T & \ldots & 0 \\
\ldots & & &  \\ 
0 & 0 & \ldots & W_{1|\mathcal{R}|}^T \end{pmatrix} 
\begin{pmatrix} \mu_1 \\ \mu_2 \\ \ldots \\ \mu_{|\mathcal{R}|} \end{pmatrix}
\nonumber
\end{align}
and hence $\eta_{1b}^*= (W_{1b}^T W_{1b})^{-1} W_{1b}^T \mu_b$, as we wished to prove.

\section{Orthogonal cut basis with multiple covariates}
\label{supplsec:cutbasis_multiplecovar}

Recall that our assumed model is
\begin{align}
 y \mid Z, X \sim N(W_0 \eta_0 + W_1 \eta_1, \Sigma),
\nonumber
\end{align}
where $\Sigma$ is as in \eqref{eq:vcmodel_1covar_matrix} and $(W_0,W_1)$ is an orthogonal cut matrix as in \eqref{eq:blockdiag_basis}. Above $W_{1b}$ is obtained by multiplying a local cut basis by each covariate. 
Specifically,
\begin{align}
 W_{1b}= \begin{pmatrix} C_b \odot X_{b 1}, & \ldots, & C_b \odot X_{b p} \end{pmatrix},
\nonumber
\end{align}
where $C_b$ is the cut basis of region $b$, $X_{b j}$ is the column vector comprising the values of covariate $j$ for observations in region $b$, 
and $C_b \odot X_{b j}$ denotes the column-wise product obtained by multiplying each column in $C_b$ by $X_{b j}$, in an entry-wise fashion.

Let $\eta_{1j}=(\eta_{1j1}^T,\ldots,\eta_{1j|\mathcal{R}|}^T)^T$ where $\eta_{1jb} \in \mathbb{R}^p$ is the coefficient for the effect of covariate $j$ in region $b$. Then
the model-based local null hypothesis for covariate $j$ in region $R_b$ is 
\begin{align}
 \beta_{1j}(z)= 0 \mbox{ for all } z \in R_b  \Longleftrightarrow  \eta_{1jb} = 0.
\label{eq:localnull_additive}
\end{align}

Since $W_1$ is an orthogonal cut basis, Lemmas \ref{lem:zero_indepmodel} and \ref{lem:zero_depmodel}  directly apply to \eqref{eq:localnull_additive}. In fact, the statement and proof of Lemma \ref{lem:zero_depmodel} are given explicitly multiple covariate case.
This means that  if $E_F(y_b \mid X,Z)$ is linearly independent of covariate $j$ given the other covariates, then the asymptotic covariate effect in region $b$ is $\eta_{1jb}^*=0$ (for dependent data, $\tilde{\eta}_{1jb}^*=0$).

\section{Lemma \ref{lem:zero_depmodel}}
\label{supplsec:zero_depmodel}

Lemma \ref{lem:zero_depmodel} extends Lemma \ref{lem:zero_indepmodel} to dependent data settings where 
for each individual one observes data on a grid (e.g. longitudinal or functional data).
Specifically, let $y_i$ be observation $i=1,\ldots,n$, $m_i=1,\ldots,M$ denote the individual and $M$ the number of individuals, 
and $x_i=x_{m_i}$ be the covariates for individual $m_i$ (in particular, for group comparisons $x_{m_i}$ is a vector coding for group membership).
Suppose that for each individual we have the same number of measurements $n/M$.

We first recall the notation and setup. Recall that the assumed model is
\begin{align}
 y \mid Z, X \sim N \left( W_0 \eta_0 + W_1 \eta_1, \Sigma \right),
\nonumber
\end{align}
where $(W_0,W_1)$ are orthogonal cut basis as defined in Section \ref{ssec:misspec}. Recall that these can be written as
\begin{align} 
W_0= \begin{pmatrix} W_{01} \\ W_{02} \\ \ldots \\ W_{0|\mathcal{R}|} \end{pmatrix}
; \hspace{5mm}
W_1= \begin{pmatrix} 
W_{11} & 0 & \ldots &  0 \\
0 & W_{12} & \ldots & 0 \\
\ldots & & & \\
0 & 0 & \ldots & W_{1|\mathcal{R}|},
\end{pmatrix}.
\label{seq:blockdiag_basis}
\end{align}
where $W_{1b}$ is the basis evaluated at region $b=1,\ldots,|\mathcal{R}|$ interacted with the covariate effects.
Specifically,
\begin{align}
  W_{1b}= \begin{pmatrix} C_b \odot X_{b 1}, & \ldots, & C_b \odot X_{b p} \end{pmatrix},
\nonumber
\end{align}
where $C_b$ is the cut basis of region $b$, $X_{b j}$ the column vector with the values of covariate $j$ for observations in region $b$, 
and $C_b \odot X_{b j}$ the column-wise product multiplying each column in $C_b$ by $X_{b j}$, in an entry-wise fashion.

Finally, recall also that $\Sigma$ is assumed to have a block-diagonal structure across regions, i.e.
\begin{align}
 \Sigma= \begin{pmatrix} \Sigma_1 & 0 & \ldots & 0 \\
0 & \Sigma_2 & \ldots & 0 \\
\ldots & & & \\
0 & 0 & \ldots & \Sigma_{|\mathcal{R}|} \end{pmatrix}.
\nonumber
\end{align}

Denote by $\widetilde{\eta}^*=(\widetilde{\eta}_0^*,\widetilde{\eta}_1^*)$ the Kullback-Leibler optimal parameter value under the data-generating truth $F$. That is,
\begin{align}
 \widetilde{\eta}^*&= \arg\min_\eta E_F \left[ (y - W\eta)^T \Sigma^{-1} (y - W\eta) \mid Z,X \right]=
(W^T \Sigma^{-1} W)^{-1} W^T \Sigma^{-1} E_F(y \mid Z,X)
\nonumber
\end{align}

\begin{lemma}
Let $W=(W_0,W_1)$ where $W_1$ is orthogonal cut basis as in \eqref{eq:blockdiag_basis} satisfying $W_0^T W_1=0$.
Suppose that for each individual one has the same number of measurements $n/M$,
and that covariate values add to zero across individuals, i.e. $\sum_{m=1}^M x_{s_i} = 0$.

Let $\widetilde{\eta}_1^*= (\widetilde{\eta}_{11},\ldots,\widetilde{\eta}_{1|\mathcal{R}|}^*)$ where $\widetilde{\eta}_{1b}$ are the parameters associated to $W_{1b}$ (i.e. to region $b$).
Then, 
\begin{align}
 \widetilde{\eta}_{1b}^*= (W_{1b}^T \Sigma_b^{-1} W_{1b})^{-1} W_{1b}^T \Sigma_b^{-1} E_F(y_b \mid X,Z).
\nonumber
\end{align}

\label{lem:zero_depmodel}
\end{lemma}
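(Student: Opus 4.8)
The plan is to mirror the proof of Lemma \ref{lem:zero_indepmodel}, where everything followed once $W^T W$ was shown to be block-diagonal; here the analogous object is $W^T \Sigma^{-1} W$, and essentially all the difficulty is concentrated in showing it retains the same block structure once $\Sigma \neq \sigma^2 I$. First I would write $\mu = (\mu_1^T, \ldots, \mu_{|\mathcal{R}|}^T)^T$ with $\mu_b = E_F(y_b \mid X, Z)$ and use the block-diagonality of $\Sigma$ to get $\Sigma^{-1} = \mbox{diag}(\Sigma_1^{-1}, \ldots, \Sigma_{|\mathcal{R}|}^{-1})$. Because $W_1$ is block-diagonal across regions and $\Sigma^{-1}$ is too, the product $W_1^T \Sigma^{-1} W_1$ is immediately block-diagonal with $b$-th block $W_{1b}^T \Sigma_b^{-1} W_{1b}$, and likewise $W_1^T \Sigma^{-1} \mu$ stacks the vectors $W_{1b}^T \Sigma_b^{-1} \mu_b$. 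These steps are routine.

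The crux is the cross term $W_0^T \Sigma^{-1} W_1$: unlike in the independent case, the raw orthogonality $W_0^T W_1 = 0$ does \emph{not} imply $W_0^T \Sigma^{-1} W_1 = 0$ once a non-trivial $\Sigma$ is inserted. I would establish this $\Sigma$-orthogonality by exploiting the functional-data design. Since each individual is observed on the common grid and the within-region covariance is shared across, and independent between, individuals, $\Sigma_b$ factorizes as $I_M \otimes \Sigma_b^{(0)}$; the baseline block $W_{0b}$ equals $\mathbf{1}_M \otimes C_{0b}$ because the baseline basis depends only on $z$; and the column block of $W_{1b}$ for covariate $j$ equals $x_{\cdot j} \otimes C_b$, where $x_{\cdot j} = (x_{1j}, \ldots, x_{Mj})^T$ and $C_b$ is the cut basis. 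Applying the Kronecker identity $(A \otimes B)(C \otimes D) = (AC) \otimes (BD)$ then collapses $W_{0b}^T \Sigma_b^{-1} (C_b \odot X_{bj})$ to $\left(\sum_{m=1}^M x_{mj}\right) C_{0b}^T (\Sigma_b^{(0)})^{-1} C_b$. The sum-to-zero hypothesis $\sum_{m=1}^M x_{mj} = 0$ makes each such term vanish, so $W_{0b}^T \Sigma_b^{-1} W_{1b} = 0$ for every $b$, hence $W_0^T \Sigma^{-1} W_1 = 0$.

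With the $\Sigma$-orthogonality in hand, the matrix $W^T \Sigma^{-1} W$ is block-diagonal between its $W_0$ and $W_1$ parts, its inverse splits accordingly, and exactly as in Lemma \ref{lem:zero_indepmodel} the $\eta_1$ component of $\widetilde{\eta}^* = (W^T \Sigma^{-1} W)^{-1} W^T \Sigma^{-1} \mu$ reduces region by region to $\widetilde{\eta}_{1b}^* = (W_{1b}^T \Sigma_b^{-1} W_{1b})^{-1} W_{1b}^T \Sigma_b^{-1} \mu_b$, which is the claim. I expect the main obstacle to be this middle step: making the Kronecker factorization precise and confirming that it is exactly the two extra hypotheses of the lemma --- equal numbers of measurements per individual and covariates summing to zero across individuals --- that let the cross term factor as a scalar $\sum_{m} x_{mj}$ times a fixed matrix. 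These hypotheses do precisely the work that plain orthogonality did in the independent case, and once they are invoked the remainder of the argument is a transcription of the proof of Lemma \ref{lem:zero_indepmodel}.
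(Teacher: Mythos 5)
Your proof is correct and takes essentially the same route as the paper's: the paper whitens by $\Sigma^{-1/2}$ and verifies that $\widetilde{W}=\Sigma^{-1/2}W$ satisfies the two hypotheses of Lemma \ref{lem:zero_indepmodel}, and its key step --- decomposing the cross term over individuals as $\widetilde{W}_{0b}^T \widetilde{W}_{1b} = \sum_{m=1}^M \widetilde{W}_{0bm}^T \widetilde{W}_{1bm} = T_{0b}^T T_{1b}\sum_{m=1}^M x_m = 0$, using the common grid and the sum-to-zero covariates --- is exactly your Kronecker collapse of $W_{0b}^T \Sigma_b^{-1} W_{1b}$ to $\bigl(\sum_{m=1}^M x_{mj}\bigr) C_{0b}^T (\Sigma_b^{(0)})^{-1} C_b$. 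Your explicit $I_M \otimes \Sigma_b^{(0)}$ factorization is merely a tidier formalization of the paper's per-individual sum, and the remaining block-diagonal algebra coincides with the paper's reduction to Lemma \ref{lem:zero_indepmodel}.
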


\begin{proof}
Let $\widetilde{W}= \Sigma^{-1/2} W$ and $\tilde{y}= \Sigma^{-1/2} y$, the optimal solution can then be written as
\begin{align}
  \widetilde{\eta}^*&= (\widetilde{W}^T \widetilde{W})^{-1} \widetilde{W}^T E_F(\widetilde{y} \mid Z,X).
\nonumber
\end{align}

The proof strategy is to show that the conditions of Lemma \ref{lem:zero_indepmodel} hold for $\widetilde{W}=(\widetilde{W}_0,\widetilde{W}_1)=(\Sigma^{-1/2} W_0, \Sigma^{-1} W_1)$, which then immediately gives that
\begin{align}
  \widetilde{\eta}_{1b}^*= (\widetilde{W}_{1b}^T \widetilde{W}_{1b})^{-1} \widetilde{W}_{1b}^T E_F(\tilde{y}_b \mid X,Z)=
 (W_{1b}^T \Sigma_b^{-1} W_{1b})^{-1} W_{1b}^T \Sigma_b^{-1} E_F(y_b \mid X,Z),
\nonumber
\end{align}
where we used that $\tilde{y}_b= \Sigma^{-1/2} y_b$, proving Lemma \ref{lem:zero_depmodel}.

The first condition in Lemma \ref{lem:zero_indepmodel} is that $\widetilde{W}_1$ is a cut matrix. To see that this holds, note that
$ \widetilde{W}_1= \Sigma^{-1/2} W_1=$
\begin{align}
\begin{pmatrix} \Sigma_1 & 0 & \ldots & 0 \\
0 & \Sigma_2 & \ldots & 0 \\
\ldots & & & \\
0 & 0 & \ldots & \Sigma_{|\mathcal{R}|} \end{pmatrix}
\begin{pmatrix} 
W_{11} & 0 & \ldots &  0 \\
0 & W_{12} & \ldots & 0 \\
\ldots & & & \\
0 & 0 & \ldots & W_{1|\mathcal{R}|}
\end{pmatrix}
\nonumber
=\begin{pmatrix} 
\widetilde{W}_{11} & 0 & \ldots &  0 \\
0 & \widetilde{W}_{12} & \ldots & 0 \\
\ldots & & & \\
0 & 0 & \ldots & \widetilde{W}_{1|\mathcal{R}|},
\end{pmatrix}
\end{align}
where $\widetilde{W}_{1b}= \Sigma_b^{-1} W_{1b}$. 
This proves that $\widetilde{W}_1$ has the same structure as $W_1$ in \ref{seq:blockdiag_basis}, i.e. $\widetilde{W}_1$ is a cut basis.

The second condition in Lemma \ref{lem:zero_indepmodel} is that $\widetilde{W}_1$ is an orthogonal basis to $\widetilde{W}_0$, i.e. $\widetilde{W}_0^T \widetilde{W}_1= 0$. To show that this condition holds, first note that
\begin{align}
 \widetilde{W}_0^T \widetilde{W}_1
=
\widetilde{W}_0^T
\begin{pmatrix} 
\widetilde{W}_{11} & 0 & \ldots &  0 \\
0 & \widetilde{W}_{12} & \ldots & 0 \\
\ldots & & & \\
0 & 0 & \ldots & \widetilde{W}_{1|\mathcal{R}|},
\end{pmatrix}
=
\begin{pmatrix} 
\widetilde{W}_{01}^T \widetilde{W}_{11} \\
\widetilde{W}_{02}^T \widetilde{W}_{12} \\
\ldots \\
\widetilde{W}_{0|\mathcal{R}|}^T \widetilde{W}_{1|\mathcal{R}|},
\end{pmatrix},
\nonumber
\end{align}
where $\widetilde{W}_{0b}$ are the rows from $\widetilde{W}_0$ corresponding to region $b$.
Hence, it suffices to show that $\widetilde{W}_{0b}^T \widetilde{W}_{1b}=0$ for all $b$.
To see this, note that $\widetilde{W}_{0b}^T \widetilde{W}_{1b}$ can be computed by summing across individuals $s=1,\ldots,m$. That is,
\begin{align}
 \widetilde{W}_{0b}^T \widetilde{W}_{1b}
=\sum_{m=1}^M  \widetilde{W}_{0bm}^T \widetilde{W}_{1bm}.
\nonumber
\end{align}
Since all individuals are observed on the same grid we have that $W_{0bm}$ is constant across $m$, and similarly $\Sigma_{bm}$ is assumed constant across individuals, hence we may write $\widetilde{W}_{0bm}= \Sigma_{bm}^{-1/2} W_{0bm}= T_{0b}$, where $T_{0b}$ does not depend on the individual $m$.
Similarly, 
\begin{align}
\widetilde{W}_{1bm}= \Sigma_{bm}^{-1/2} W_{1bm}=
\Sigma_{bm}^{-1/2} \begin{pmatrix} C_{bm} \odot X_{bm 1}, & \ldots, & C_{bm} \odot X_{bm p} \end{pmatrix} 
= T_{1b} x_m
\nonumber
\end{align}
where $X_{b m j}$ is the column vector with the value of covariate $j$ for individual $m$, that is constant across observations in region $b$ and equal to $x_{mj}$,
and $T_{1b}= \Sigma_{bm}^{-1} C_{bm}$ does not depend on $m$ (since all individuals are observed on the same grid, $C_{bm}$ is constant across individuals indexed by $m$). Recall that $x_m= (x_{m1},\ldots,x_{mp})$ denotes the covariates for individual $m$.
We therefore obtain that
\begin{align}
 \widetilde{W}_{0b}^T \widetilde{W}_{1b}
=\sum_{m=1}^M  \widetilde{W}_{0bm}^T \widetilde{W}_{1bm}=
T_{0b}^T T_{1b} \sum_{m=1}^M x_m= 0,
\nonumber
\end{align}
since $\sum_{m=1}^M x_m=0$ (the covariates have zero sample mean) by assumption, completing the proof.

\end{proof}

\section{Handling partitions of group means defined by categorical covariates}

Our framework is primarily designed for binary and continuous covariates, and discrete covariates with infinite support that can be treated as continuous (possibly after a suitable transformation) or discretized into several categories.
If there is a discrete covariate defining $K$ groups, our approach effectively tests whether each group deviates from the overall mean.

There are however situations where one may be interested in learning whether a subset of the categories have the same expectation.
For simplicity, consider a covariate with 3 categories $x_i \in \{A,B,C\}$.
One may wish to assess whether the outcome has the same mean in groups A and B, and different than in group C, or perhaps whether the mean follows some other partition into groups.
Dealing with this situation when there are many groups requires separate developments beyond our scope, such as strategies to search over all possible sub-groups, and defining suitable priors on group partitions.

However, it is possible to accommodate situations with a few groups into our framework. 
The idea is to add new columns to the design matrix based on all the possible group combinations, and to then restrict the model space so that only certain combinations are allowed.
For concreteness, we focus on the 3 group case.
Let $b_{i1}= \mbox{I}(x_i = A)$, $b_{i2}= \mbox{I}(x_i= B)$, $b_{i3}= \mbox{I}(x_i= C)$ be indicators for the 3 groups. 
Let $b_{i4}= b_{i1} + b_{i2}= \mbox{I}(x_i \in \{A,B\})$, 
$b_{i5}= b_{i1} + b_{i3}= \mbox{I}(x_i \in \{A,C\})$,
and $b_{i6}= b_{i2} + b_{i3}= \mbox{I}(x_i \in \{B,C\})$.
For precision, as discussed in the paper, for identifiability in the actual design matrix $W_1$ we drop one category and orthogonalize the rest with respect to the baseline $W_0$ (e.g. if $W_0$ has 0 degree splines, then $W_1$ contains mean-centered versions of $b_{i1},b_{i2},\ldots$). 
Suppose that we drop $b_{i3}$ (the C category), and define the $i^{th}$ row in the design matrix $W_1$ as
$w_{i1}= b_{i1} - \hat{b}_{i1}$, $w_{i2}= b_{i2} - \hat{b}_{i2}$, $w_{i3}= b_{i4} - \hat{b}_{i4}$, $w_{i4}= b_{i5}-\hat{b}_{i5}$, $w_{i5}= b_{i6} - \hat{b}_{i6}$, where the $\hat{b}$'s are given by projecting the $b$'s onto $W_0$ (i.e. their least-squares prediction).
Let $\eta_1 \in \mathbb{R}^5$ be the corresponding regression coefficients, and $\gamma_j= \mbox{I}(\eta_{1j} \neq 0)$ the inclusion indicators defining the models.
The idea is to restrict the model space such that one considers separately the inclusion of $(b_{i1},b_{i2})$, which correspond to individual groups having different means,
 and that of $(b_{i4},b_{i5},b_{i6})$, which correspond to several groups having the same mean.
Specifically, in this example one would consider the following constrained model space, which features 5 models:
\begin{center}
\begin{tabular}{cccccc}
$\gamma_1$ & $\gamma_2$ & $\gamma_3$ & $\gamma_4$ & $\gamma_5$ & \\
0 & 0 & 0 & 0 & 0  & $\mu_A=\mu_B=\mu_C$  \\
1 & 1 & 0 & 0 & 0  & $\mu_A\neq \mu_B$, $\mu_A \neq \mu_C$, $\mu_B \neq \mu_C$  \\
0 & 0 & 1 & 0 & 0  & $\mu_A= \mu_B \neq \mu_C$  \\
0 & 0 & 0 & 1 & 0  & $\mu_A= \mu_C \neq \mu_B$  \\
0 & 0 & 0 & 0 & 1  & $\mu_A \neq \mu_B = \mu_C$  \\
\end{tabular}
\end{center}

One can then easily evaluate the marginal likelihood and prior probability for each of these 5 models, to obtain their posterior probabilities.

\section{Comparison to Haar basis}
\label{supplsec:haar_basis}

A possible alternative to our cut basis framework is to use Wavelets, and in particular using Haar basis presents analogies to our framework when using degree 0 splines. 
While using wavelets is a possible strategy, it is less convenient because then the selection of local effects would no longer correspond to selecting zero coefficients, as is the case with our cut orthogonal basis. 
As an example, consider the interval $z \in [0,1]$ and suppose that one uses a Haar basis with 3 resolutions to capture the differences between 2 groups (without loss of generality, we ignore the intercept in this discussion). Specifically, let the local group differences at $z$ be
\begin{align}
\sum_{r=1}^3 \sum_{j=1}^{2^r} \beta_{rj} w_{rj}(z)
\nonumber
\end{align}
where $\beta_{rj}$ is the coefficient for basis $j=1,\ldots,2^r$ in resolution $r=1,2,3$, and the basis is given by stepwise functions
\begin{align}
&w_{11}(z)= \mbox{I}(z \leq 1/2); w_{12}(z)= \mbox{I}(z > 1/2)
\nonumber \\
&w_{21}(z)= \mbox{I}(z \leq 1/4); w_{22}(z)= \mbox{I}(z \in (1/4,1/2]); \ldots; w_{28}(z)= \mbox{I}(z \in (3/4,1])
\nonumber \\
&w_{31}(z)= \mbox{I}(z \leq 1/8); w_{32}(z)= \mbox{I}(z \in (2/8,3/8]); \ldots; w_{38}(z)= \mbox{I}(z \in (7/8,1])
\nonumber
\end{align}

Then, testing for a local null effect at $z=0.1$ means testing the null hypothesis $H_0: \beta_{11} + \beta_{21} + \beta_{31}= 0$, 
whereas at $z=0.2$ it means testing $H_0: \beta_{11} + \beta_{21} + \beta_{32}= 0$.
It is of course possible to test for such linear parameter combinations to be zero, but our setting is more convenient in that one can simply test for zeroes in individual parameters (as you noted above), which allows one to directly use standard penalization or Bayesian model selection methods.

Also of interest in a Bayesian setting, our formulation facilitates specifying prior knowledge on the number of intervals where a local effect is expected, as this is directly given by the number of non-zero parameters, while this would be less straightforward using wavelets.

\section{Prior distribution on the parameters}
\label{supplsec:prior_parameters}

 As discussed, we consider two choices for the prior on the coefficients. 
First, we consider a Normal shrinkage prior
\begin{align}
  p(\eta_\gamma \mid \gamma)= N\left( \eta_\gamma; 0, g\,  \mbox{diag}(W_\gamma^TW_\gamma/n)^{-1} \right)
\label{eq:prior_normalshinkage}
\end{align}
where $g>0$ is the prior dispersion. By default, we set $g=1$  so that the trace of the prior precision equals that of the unit information prior of \cite{schwarz:1978}. 
Prior \eqref{eq:prior_paramters} can be viewed as imposing a penalty on the $L_2$ norm of $\eta$,
and further our multi-resolution analysis (Section \ref{ssec:varying_resolutions}) encourages smoothness across coordinates $Z$ in the fitted regression.

Second, we consider a prior that encourages smoothness in the estimated coefficients across the coordinates $z$.
The main idea is that intrinsic conditionally auto-regressive (ICAR) priors are improper and hence cannot be directly used for model selection (else, one incurs the so-called Jeffreys-Lindley-Bartlett paradox), hence we add a diagonal matrix to the prior precision.
Specifically, our novel prior, which we denominate ICAR+, considers
\begin{align}
p(\eta_\gamma \mid \gamma)=
N\left(\eta_{0\gamma}; 0, g S_{0\gamma}^{-1}   \right)
\prod_{j=1}^p N\left(\eta_{1j\gamma}; 0, g S_{1j\gamma}^{-1}   \right)
\nonumber
\end{align}
where $S_{0\gamma}$ and  $S_{1j\gamma}$ are the submatrices of
\begin{align}
&S_0= a P_0 + (1-a) n \mbox{diag}(W_0^T W_0)^{-1}
\nonumber \\
&S_{1j}= a P_1 + (1-a) n \mbox{diag}(W_{1j}^T W_{1j})^{-1}
\nonumber
\end{align}
obtained by selecting the row and columns with non-zero entries in $\gamma_0$ and $\gamma_{1j}$ respectively.
$P_0$ and $P_1$ are ICAR precision matrices (standardized to have trace equal to the unit information prior, i.e. $\mbox{dim}(\eta_{\gamma})$), and $a \in [0,1]$ with a default $a=0.5$.
Specifically, $P_0= D_0^T D_0 c_0$, where $D_0$ is a matrix that takes the difference between each entry in $\eta$ and the average of its neighbours according to the coordinates in $z$.
That is, $D_0$ is a $\mbox{dim}(\eta_0) \times \mbox{dim}(\eta_0)$ matrix with $(i,i)$ entry equal to 1 and $(i,j)$ entry $-1/M_i$, where $M_i$ is the number of neighbors of $i$, and $c_0= \{\mbox{dim}(\eta_0) / \mbox{tr}(D_0^TD_0)\}^{1/2}$ so that $\mbox{tr}(P_0)=\mbox{dim}(\eta_0)$. We define $P_1= D_1^T D_1 c_1$ analagously. Hence
\begin{align}
D_0 \eta_0= \begin{pmatrix} 
\eta_{01} - \frac{1}{N_1} \sum_{i \sim 1} \eta_{0i}   \\
\eta_{02} - \frac{1}{N_2} \sum_{i \sim 2} \eta_{0i}   \\
\ldots
\end{pmatrix}
\nonumber
\end{align}
where $i \sim j$ denotes that $i$ is a neighbor of $j$.
Assuming that the columns in $W_0$ have unit standard deviation, this gives that
$N\left(\eta_0; 0, g S_{0}^{-1}   \right)=$
\begin{align}
\frac{|S_0|^{1/2}}{(2\pi g)^{\mbox{dim}(\eta_0)/2}|}
\exp \left\{ \frac{1}{2g} \left[ 
a \sum_{j=1}^{\mbox{dim}(\eta_0)} \left( \eta_{0j} - \frac{1}{N_j} \sum_{i \sim j} \eta_{0i}  \right)^2
+ (1-a) \sum_{j=1}^{\mbox{dim}(\eta_0)} \eta_{0j}^2 \right] \right\}.
\nonumber
\end{align}
That is, the prior encourages each coefficient $\eta_{0j}$ and $\eta_{1j}$ to be close to the average of its neighbors, and it also penalizes its $L_2$ norm.
This is particularly interesting for $\eta_1$: since $\eta_1$ quantifies deviations from the mean (due to our orthogonalization step in defining the basis $W_1$), by penalizing the $L_2$ norm of $\eta_1$, the prior penalizes deviations from the baseline mean, i.e. encourages sparsity in the local covariate effects.

%

Our construction ensures that the prior precision for each entry in $\eta_\gamma$ is equal to $g$, which allows setting $g=1$ as a minimally informative default that mimics the unit information prior. Also, since the precision matrices $S_{0\gamma}$ and $S_{1\gamma}$ are obtained as a subset of a global $S_0$, $S_{1j}$, one can use fast rank 1 Cholesky updates that deliver significant savings when computing marginal likelihoods.

We remark that in the univariate case P-splines penalties for cubic splines \cite{eilers:1996} result in a prior precision matrix that is very similar to $P_0$ and $P_1$ above.

We also remark that a common strategy in the literature is to place priors on hyper-parameters such as $a$ or $g$. The issue is that this incurs a non-negligible computational cost, as one needs to update hyper-parameters within the model search, one cannot use the fast Cholesky updates and it is also not possible to save marginal likelihoods for each considered model in the MCMC search (since marginal likelihoods would be now conditional on hyper-parameters). Given that for purposes of (local) variable selection, the effect of such priors is mild (our model selection rates essentially remain unaltered), here we prefer to set $(a,g)$ to reasonable default values.

\section{Bayes factor derivation}
\label{supplsec:bf_derivation}

The Bayes factor comparing a model $\gamma$ with the optimal $\gamma^*$ is
\begin{align}
	\begin{split}
 B_{\gamma \gamma^*}&=
\frac{\left|g V_{\gamma^*} W_{\gamma^*}^T \Sigma^{-1} W_{\gamma^*}  + I \right|^{\frac{1}{2}}}{\left| g V_{\gamma} W_{\gamma}^T \Sigma^{-1} W_\gamma  + I \right|^{\frac{1}{2}}} \label{eq:bf_theorem}
\\
&\times \exp \left\{  \frac{1}{2} \left[ \hat{\eta}_\gamma^T ( W_\gamma^T \Sigma^{-1} W_\gamma + (g V_\gamma)^{-1}) \hat{\eta}_\gamma 
- \hat{\eta}_{\gamma^*}^T ( W_{\gamma^*}^T \Sigma^{-1} W_{\gamma^*} + (g V_{\gamma^*})^{-1}) \hat{\eta}_{\gamma^*}
\right] \right\},
\end{split}
\end{align}
where $\hat{\eta}_\gamma= E(\eta_\gamma \mid y,\gamma)= (W_\gamma^T \Sigma^{-1} W_\gamma + (g V_\gamma)^{-1})^{-1} W_\gamma^T \Sigma^{-1} y$. 

We derive Expression \eqref{eq:bf_theorem}. 
Recall that the assumed model is
\begin{align}
 y \mid \gamma &\sim N(W_\gamma \eta_\gamma, \Sigma)
\nonumber \\
 \beta_\gamma &\sim N(0, g V_\gamma),
\nonumber
\end{align}
where $\Sigma$ is an $n \times n$ and $V_\gamma$ a $|\gamma|_0 \times |\gamma|_0$ positive-definite matrix, and $g \in \mathbb{R}^+$.

The integrated likelihood under model $\gamma$ is hence
\begin{align}
 p(y \mid \gamma)= \int \frac{1}{(2\pi)^{\frac{n}{2}} |\Sigma|^{\frac{1}{2}} |g V_\gamma|^{\frac{1}{2}}} 
e^{  -\frac{1}{2} (y - W_\gamma \eta_\gamma)^T \Sigma^{-1} (y - W_\gamma \eta_\gamma)}
\frac{1}{(2\pi)^{\frac{|\gamma|_0}{2}}} 
e^{ -\frac{1}{2} \eta_\gamma^T (gV_\gamma)^{-1} \eta_\gamma}
d \eta_\gamma
\nonumber \\
=\frac{e^{-\frac{1}{2}y^T \Sigma^{-1} y}}{(2\pi)^{\frac{n}{2}} |\Sigma|^{\frac{1}{2}} |g V_\gamma|^{\frac{1}{2}}} 
\int \frac{1}{(2\pi)^{\frac{|\gamma|_0}{2}} } 
e^{-\frac{1}{2} \left[ \eta_\gamma^T (W_\gamma^T \Sigma^{-1} W_\gamma + (g V_\gamma)^{-1}) \eta_\gamma - 2 y^T \Sigma^{-1} W_\gamma \eta_\gamma  \right]}
d \eta_\gamma.
\nonumber
\end{align}
Denoting by $V_{post}= (W_\gamma^T \Sigma^{-1} W_\gamma + (g V_\gamma)^{-1})^{-1}$ and
$\hat{\eta}_\gamma= (W_\gamma^T \Sigma^{-1} W_\gamma + (g V_\gamma)^{-1})^{-1} W_\gamma^T \Sigma^{-1} y$, we obtain
\begin{align}
  p(y \mid \gamma)= 
\frac{e^{-\frac{1}{2}y^T \Sigma^{-1} y} e^{\frac{1}{2}\hat{\eta}_\gamma^T V_{post}^{-1} \hat{\eta}_\gamma}}{(2\pi)^{\frac{n}{2}} |\Sigma|^{\frac{1}{2}} |g V_\gamma|^{\frac{1}{2}}} 
\int \frac{1}{(2\pi)^{\frac{|\gamma|_0}{2}} } 
e^{-\frac{1}{2} \left[ \eta_\gamma^T V_{post}^{-1} \eta_\gamma - 2 y^T \Sigma^{-1} W_\gamma V_{post} V_{post}^{-1} \eta_\gamma + \hat{\eta}_\gamma^T V_{post}^{-1} \hat{\eta}_\gamma  \right]}
d \eta_\gamma=
\nonumber \\
\frac{e^{-\frac{1}{2}y^T \Sigma^{-1} y} e^{\frac{1}{2}\hat{\eta}_\gamma^T V_{post}^{-1} \hat{\eta}_\gamma}}{(2\pi)^{\frac{n}{2}} |\Sigma|^{\frac{1}{2}}  |g V_\gamma|^{\frac{1}{2}}} 
\int \frac{1}{(2\pi)^{\frac{|\gamma|_0}{2}}} 
e^{-\frac{1}{2} (\eta_\gamma - \hat{\eta}_\gamma)^T V_{post}^{-1} (\eta_\gamma - \hat{\eta}_\gamma)}
d \eta_\gamma=
\frac{e^{-\frac{1}{2}y^T \Sigma^{-1} y} e^{\frac{1}{2}\hat{\eta}_\gamma^T V_{post}^{-1} \hat{\eta}_\gamma} |V_{post}|^{\frac{1}{2}}}{(2\pi)^{\frac{n}{2}} |\Sigma|^{\frac{1}{2}}  |g V_\gamma|^{\frac{1}{2}}}.
\nonumber
\end{align}

Hence the Bayes factor is $ B_{\gamma \gamma^*}= p(y \mid \gamma) / p(y \mid \gamma^*)=$
\begin{align}
e^{\frac{1}{2}[\hat{\eta}_\gamma^T (W_\gamma^T \Sigma^{-1} W_\gamma + (g V_\gamma)^{-1}) \hat{\eta}_\gamma - \hat{\eta}_{\gamma^*}^T (W_{\gamma^*}^T \Sigma^{-1} W_{\gamma^*} + (g V_{\gamma^*})^{-1}) \hat{\eta}_{\gamma^*}]}
\frac{|(W_{\gamma^*}^T \Sigma^{-1} W_{\gamma^*} + (g V_{\gamma^*})^{-1})g V_{\gamma^*}|^{\frac{1}{2}}}{|(W_\gamma^T \Sigma^{-1} W_\gamma + (g V_\gamma)^{-1})g V_\gamma|^{\frac{1}{2}}},
\nonumber
\end{align}
as we wished to prove.

\newpage
\section{Discussion of the technical conditions of Theorem \ref{thm:bf}}
\label{supplsec:conditions_thm_bf}

Assumption (A1) implies that, although the number of parameters $q$ may be larger than $n$, we restrict our attention to full-rank models. This implies that the model size $|\gamma|_0 \leq n$, a common practice in model selection.  Assumption (A2) is a mild and can be relaxed, but simplifies our exposition. In our default setting, we assume $V_\gamma=I$. Thus (A2) is satisfied when the empirical covariance $W_\gamma^T \Sigma^{-1} W_\gamma/n$ converges to a fixed positive definite covariance. The assumption is also satisfied by Zellner's prior, where $V_\gamma^{-1}= W_\gamma^T \Sigma^{-1} W_\gamma/n$, so that $\underline{l}_\gamma=\bar{l}_\gamma$. Assumption (A3) holds in the homoskedastic independent errors setting where $\Sigma=\sigma^2 I$. For dependent data, (A3) is a mild condition that may be relaxed to accommodate cases where $\tau$ is not bounded as $n$ grows, see the proof of Theorem \ref{thm:bf}.  Assumption (A4) is minimal and ensures that  the prior dispersion $g$ does not vanish too fast with $n$ (it is satisfied by our default $g=1$ and by the discussed alternatives where $g$ may grow with $n$). In Assumption (A5) the parameter $\lambda_\gamma$ can be interpreted as a non-centrality parameter that  measures the sum of squares explained by model $\gamma^*$ but not by model $\gamma$. Assumption (A5) is minimal: otherwise,  Bayes factors are not consistent even in finite-dimensional settings with fixed $|\gamma|_0$. See also the discussion of Assumption (B6) in Section \ref{ssec:pp} on the relationship between $\lambda_\gamma$ and common beta-min and restricted eigenvalue conditions.


\newpage
\section{Auxiliary results for Theorem \ref{thm:bf}}
\label{supplsec:auxiliary_results}

We recall auxiliary results that are helpful in the proofs of our main results.
Lemmas \ref{lem:waldstat_nested}-\ref{lem:lefttail_quadform_subgaussian} are obtained from Sections S6 and S9 in \cite{rossell:2021}.
For brevity we do not reproduce their proofs, please see Sections S6 and S9 in \cite{rossell:2021}.
However, we do prove Lemma \ref{lem:lefttail_quadform_subgaussian} here since, although the result in \cite{rossell:2021} is correct, the proof offered there contains an error.

Lemma \ref{lem:subgaussian_difquad_tail} is a new result that we prove here bounding tail probabilities for certain differences between quadratic forms of sub-Gaussian vectors.

Lemma \ref{lem:waldstat_nested} is a well-known result. It expresses the difference between the explained sum of squares by two nested models $\gamma' \subset \gamma$ in terms of the regression parameters obtained under the larger model, after suitably orthogonalizing its design matrix.

Definition \ref{def:subgaussian} and Lemmas \ref{lem:subgaussian_lincomb}-\ref{lem:subgaussian_quadform} give the definition and two basic properties of 
sub-Gaussian random vectors: closed-ness under linear combinations and hat certain quadratic forms of $n$-dimensional sub-Gaussian vectors can be re-expressed as a quadratic form of $d$-dimensional sub-Gaussian vectors.
Lemmas \ref{lem:tail_quadform_subgaussian}-\ref{lem:lefttail_quadform_subgaussian} give bounds for right and left sub-Gaussian tail probabilities.

\begin{lemma}
  Let $y \in \mathbb{R}^n$ and $W_\gamma= (W_{\gamma'}, W_{\gamma \setminus \gamma'})$ be an $n$ times $|\gamma|_0$ matrix.
  Let $\hat{\eta}_\gamma= (W_\gamma^T W_\gamma)^{-1} W_\gamma^T y$
  and $\hat{\eta}_{\gamma'}= (W_{\gamma'}^T W_{\gamma'})^{-1} W_{\gamma'}^T y$
  the least-squares estimates associated to $\gamma$ and $\gamma'$, respectively.
Then
$$  \hat{\eta}_{\gamma}^T W_{\gamma}^TW_{\gamma} \hat{\eta}_{\gamma} - \hat{\eta}_{\gamma'}^T W_{\gamma'}^TW_{\gamma'} \hat{\eta}_{\gamma'}
=\tilde{\eta}_{\gamma \setminus \gamma'}^T \tilde{W}_{\gamma \setminus \gamma'}^T\tilde{W}_{\gamma \setminus \gamma'} \tilde{\eta}_{\gamma \setminus \gamma'}=
y^T \tilde{W}_{\gamma \setminus \gamma'} (\tilde{W}_{\gamma \setminus \gamma'}^T \tilde{W}_{\gamma \setminus \gamma'})^{-1} \tilde{W}_{\gamma \setminus \gamma'}^T y
$$
where $\tilde{\eta}_{\gamma \setminus \gamma'}= (\tilde{W}_{\gamma \setminus \gamma'}^T\tilde{W}_{\gamma \setminus \gamma'})^{-1} \tilde{W}_{\gamma \setminus \gamma'}^T y$
and $\tilde{W}_{\gamma \setminus \gamma'}= W_{\gamma \setminus \gamma'} - W_{\gamma'}(W_{\gamma'}^TW_{\gamma'})^{-1}W_{\gamma'}^TW_{\gamma \setminus \gamma'}$.
\label{lem:waldstat_nested}
\end{lemma}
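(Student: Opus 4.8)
The plan is to recognize both quadratic forms on the left-hand side as projections of $y$ onto the column spaces of the respective design matrices, and then to show that the difference of these two projection operators is itself the orthogonal projector onto the residualized columns $\tilde{W}_{\gamma \setminus \gamma'}$. First I would note that $W_\gamma \hat{\eta}_\gamma = H_\gamma y$, where $H_\gamma = W_\gamma (W_\gamma^T W_\gamma)^{-1} W_\gamma^T$ is the orthogonal projector onto $\mbox{col}(W_\gamma)$; since $H_\gamma$ is symmetric and idempotent, $\hat{\eta}_\gamma^T W_\gamma^T W_\gamma \hat{\eta}_\gamma = y^T H_\gamma y$, and likewise $\hat{\eta}_{\gamma'}^T W_{\gamma'}^T W_{\gamma'} \hat{\eta}_{\gamma'} = y^T H_{\gamma'} y$ with $H_{\gamma'}$ the projector onto $\mbox{col}(W_{\gamma'})$. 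Thus the left-hand side equals $y^T (H_\gamma - H_{\gamma'}) y$, and it suffices to establish the operator identity $H_\gamma - H_{\gamma'} = \tilde{H}$, where $\tilde{H} = \tilde{W}_{\gamma \setminus \gamma'} (\tilde{W}_{\gamma \setminus \gamma'}^T \tilde{W}_{\gamma \setminus \gamma'})^{-1} \tilde{W}_{\gamma \setminus \gamma'}^T$.

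The key step is an orthogonal decomposition of $\mbox{col}(W_\gamma)$. I would rewrite $\tilde{W}_{\gamma \setminus \gamma'} = (I - H_{\gamma'}) W_{\gamma \setminus \gamma'}$ and observe that $W_{\gamma'}^T \tilde{W}_{\gamma \setminus \gamma'} = W_{\gamma'}^T (I - H_{\gamma'}) W_{\gamma \setminus \gamma'} = 0$, so $\mbox{col}(\tilde{W}_{\gamma \setminus \gamma'}) \perp \mbox{col}(W_{\gamma'})$. Because each column of $W_{\gamma \setminus \gamma'}$ is the sum of the corresponding column of $\tilde{W}_{\gamma \setminus \gamma'}$ and of $H_{\gamma'} W_{\gamma \setminus \gamma'}$ (which lies in $\mbox{col}(W_{\gamma'})$), and since both $\tilde{W}_{\gamma \setminus \gamma'}$ and $W_{\gamma'}$ have columns inside $\mbox{col}(W_\gamma)$, one obtains the orthogonal direct sum $\mbox{col}(W_\gamma) = \mbox{col}(W_{\gamma'}) \oplus \mbox{col}(\tilde{W}_{\gamma \setminus \gamma'})$.

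The projector onto an orthogonal direct sum of two subspaces is the sum of the individual projectors, which gives $H_\gamma = H_{\gamma'} + \tilde{H}$ and hence $y^T (H_\gamma - H_{\gamma'}) y = y^T \tilde{H} y = y^T \tilde{W}_{\gamma \setminus \gamma'} (\tilde{W}_{\gamma \setminus \gamma'}^T \tilde{W}_{\gamma \setminus \gamma'})^{-1} \tilde{W}_{\gamma \setminus \gamma'}^T y$, the rightmost expression in the claim. The middle expression then coincides with the right one by direct substitution of $\tilde{\eta}_{\gamma \setminus \gamma'} = (\tilde{W}_{\gamma \setminus \gamma'}^T \tilde{W}_{\gamma \setminus \gamma'})^{-1} \tilde{W}_{\gamma \setminus \gamma'}^T y$, since the $(\tilde{W}_{\gamma \setminus \gamma'}^T \tilde{W}_{\gamma \setminus \gamma'})^{-1}$ factors cancel against the central $\tilde{W}_{\gamma \setminus \gamma'}^T \tilde{W}_{\gamma \setminus \gamma'}$.

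There is no serious obstacle here: the result is the classical decomposition underlying the Frisch--Waugh--Lovell identity. The only points demanding care are the full-column-rank assumption on submatrices of $W$ (which guarantees that all the inverses exist), and the verification that the two subspaces are genuinely orthogonal and together span $\mbox{col}(W_\gamma)$, which is precisely what the additivity-of-projectors argument needs. If one prefers an algebraic route avoiding projector language, the same identity can be checked by block-inverting $W_\gamma^T W_\gamma$ via the Schur complement of $W_{\gamma'}^T W_{\gamma'}$, but I expect the geometric argument above to be the cleanest.
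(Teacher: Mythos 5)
Your proof is correct and complete. Note that the paper itself does not prove this lemma: it is recalled as a known result, with the proof deferred to Sections S6 and S9 of the cited reference (Rossell et al., 2021), so there is no in-paper argument to compare against. Your route --- rewriting both explained sums of squares as $y^T H_\gamma y$ and $y^T H_{\gamma'} y$, verifying $W_{\gamma'}^T \tilde{W}_{\gamma \setminus \gamma'} = 0$, and invoking additivity of orthogonal projectors over the orthogonal direct sum $\mbox{col}(W_\gamma) = \mbox{col}(W_{\gamma'}) \oplus \mbox{col}(\tilde{W}_{\gamma \setminus \gamma'})$ --- is exactly the classical Frisch--Waugh--Lovell decomposition that the lemma encodes, and you correctly flag the one hypothesis doing real work, namely full column rank of $W_\gamma$, which guarantees both that $\tilde{W}_{\gamma \setminus \gamma'}^T \tilde{W}_{\gamma \setminus \gamma'}$ is invertible and that the two subspaces together span $\mbox{col}(W_\gamma)$ with the right dimensions. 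The final identification of the middle expression with the rightmost one by cancelling the $(\tilde{W}_{\gamma \setminus \gamma'}^T \tilde{W}_{\gamma \setminus \gamma'})^{-1}$ factors is also correct, so nothing is missing.
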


\begin{defn}
A $d$-dimensional random vector $s=(s_1,\ldots,s_d)$ follows a sub-Gaussian distribution with parameters $\mu \in \mathbb{R}^d$ and $\sigma^2 > 0$, which we denote by $s \sim \mbox{SG}(\mu,\sigma^2)$ if and only if
$$
E\left[ \exp\{ \alpha^T(s - \mu) \} \right] \leq \exp\{ \alpha^T\alpha \sigma^2/2 \}
$$
for all $\alpha \in \mathbb{R}^d$.
\label{def:subgaussian}
\end{defn}

\begin{lemma}
Let $s \sim \mbox{SG}(\mu, \sigma^2)$ be a sub-Gaussian $d$-dimensional random vector, and $A$ be a $q \times d$ matrix.
Then $As \sim \mbox{SG}(A\mu, \lambda \sigma^2)$, where $\lambda$ is the largest eigenvalue of $A^TA$, or equivalently the largest eigenvalue of $A A^T$.
\label{lem:subgaussian_lincomb}
\end{lemma}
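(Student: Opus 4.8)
The plan is to verify the moment-generating-function inequality in Definition \ref{def:subgaussian} directly, reducing the $q$-dimensional tilting vector for $As$ to a $d$-dimensional tilting vector for $s$ via the transpose of $A$. First I would fix an arbitrary $\beta \in \mathbb{R}^q$. The key algebraic observation is that $\beta^T(As - A\mu) = (A^T\beta)^T(s - \mu)$, so setting $\alpha = A^T\beta \in \mathbb{R}^d$ converts a centered linear form in $As$ into a centered linear form in $s$, to which the hypothesis applies verbatim.

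With this substitution I would invoke the sub-Gaussian property of $s$ to obtain $E[\exp\{\beta^T(As - A\mu)\}] = E[\exp\{\alpha^T(s-\mu)\}] \leq \exp\{\alpha^T\alpha\,\sigma^2/2\}$. It then remains to control $\alpha^T\alpha = \beta^T A A^T \beta$ by $\beta^T\beta$. The final step is the Rayleigh-quotient bound $\beta^T A A^T \beta \leq \lambda\,\beta^T\beta$, with $\lambda$ the largest eigenvalue of $AA^T$, which holds because $AA^T$ is symmetric positive semi-definite and its largest eigenvalue maximizes the Rayleigh quotient. Substituting yields $E[\exp\{\beta^T(As-A\mu)\}] \leq \exp\{\lambda\,\beta^T\beta\,\sigma^2/2\}$ for every $\beta \in \mathbb{R}^q$, which is exactly the statement $As \sim \mbox{SG}(A\mu, \lambda\sigma^2)$. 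To justify the parenthetical equivalence in the claim, I would note that $AA^T$ and $A^TA$ share the same nonzero eigenvalues (their characteristic polynomials agree up to a power of the variable), so $\lambda_{\max}(AA^T) = \lambda_{\max}(A^TA)$ whenever either is positive.

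There is essentially no obstacle here: the argument is a single substitution followed by an eigenvalue bound. The only point requiring a moment of care is the direction of the inequality — that the \emph{largest} eigenvalue, not the smallest, must appear. This is forced automatically, since the sub-Gaussian defining inequality is an upper bound and is preserved only when $\alpha^T\alpha$ is replaced by the largest admissible multiple of $\beta^T\beta$, namely $\lambda_{\max}(AA^T)\,\beta^T\beta$.
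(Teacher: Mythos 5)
Your proof is correct and proceeds by the standard argument: reduce the tilting vector via $\beta^T(As-A\mu)=(A^T\beta)^T(s-\mu)$, apply the sub-Gaussian bound for $s$, and control $\beta^T AA^T\beta \leq \lambda\,\beta^T\beta$ by the Rayleigh quotient, with $\lambda_{\max}(AA^T)=\lambda_{\max}(A^TA)$ since the two matrices share nonzero eigenvalues. The paper does not reproduce a proof of this lemma (it defers to Sections S6 and S9 of Rossell and Rubio, 2021), and the argument given there is essentially this same one, so there is nothing to flag.
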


\begin{lemma}
Let $y \sim \mbox{SG}(\mu,\sigma^2)$ be an $n$-dimensional sub-Gaussian random vector.
Let $W= (y-a)^T X (X^T X)^{-1} X^T (y-a)$, where $X$ is an $n \times d$ matrix such that $X^TX$ is invertible.
Then $W= s^T s$, where $s \sim \mbox{SG}((X^TX)^{-1/2} X^T (\mu - a), \sigma^2)$ is $d$-dimensional.
\label{lem:subgaussian_quadform}
\end{lemma}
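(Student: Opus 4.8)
The plan is to exhibit the vector $s$ explicitly as an affine image of $y$ and then verify both claims using direct matrix algebra together with Lemma \ref{lem:subgaussian_lincomb}. Since $X^TX$ is invertible it is symmetric positive-definite, so I take $(X^TX)^{-1/2}$ to be its symmetric positive-definite square root and set $A = (X^TX)^{-1/2} X^T$, a $d \times n$ matrix, and $s = A(y-a)$. The identity $W = s^Ts$ then follows by substitution, since $s^Ts = (y-a)^T X (X^TX)^{-1/2} (X^TX)^{-1/2} X^T (y-a) = (y-a)^T X (X^TX)^{-1} X^T (y-a) = W$, where I used that the symmetric square root satisfies $(X^TX)^{-1/2}(X^TX)^{-1/2} = (X^TX)^{-1}$.

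For the distributional claim I would first handle the linear part $Ay$ and then the constant shift. Applying Lemma \ref{lem:subgaussian_lincomb} with the matrix $A$ to $y \sim \mbox{SG}(\mu,\sigma^2)$ gives $Ay \sim \mbox{SG}(A\mu, \lambda \sigma^2)$, where $\lambda$ is the largest eigenvalue of $AA^T$. The crucial computation is $AA^T = (X^TX)^{-1/2} X^T X (X^TX)^{-1/2} = I_d$, the $d \times d$ identity, so every eigenvalue equals $1$ and $\lambda = 1$; hence $Ay \sim \mbox{SG}(A\mu, \sigma^2)$ with the sub-Gaussian parameter exactly preserved.

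Finally, to pass from $Ay$ to $s = Ay - Aa$, I would appeal directly to Definition \ref{def:subgaussian}: subtracting the deterministic vector $Aa$ leaves the centered quantity unchanged, since $s - A(\mu-a) = Ay - A\mu$, so the same moment-generating-function bound holds with the same $\sigma^2$, while the mean shifts to $A(\mu-a) = (X^TX)^{-1/2}X^T(\mu-a)$. This gives $s \sim \mbox{SG}\big((X^TX)^{-1/2}X^T(\mu-a),\, \sigma^2\big)$, as required. I do not anticipate a genuine obstacle here; the only points needing care are verifying $AA^T = I_d$ so that the parameter is preserved \emph{exactly} rather than merely bounded, and noting that Lemma \ref{lem:subgaussian_lincomb} as stated applies to linear maps, so the additive constant $-Aa$ must be treated separately through the definition.
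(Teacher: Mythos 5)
Your proof is correct and is essentially the canonical argument: the paper itself does not reproduce a proof of Lemma \ref{lem:subgaussian_quadform} (it is imported from Sections S6 and S9 of \citealp{rossell:2021}), and the derivation there is exactly your route. Setting $s=(X^TX)^{-1/2}X^T(y-a)$, verifying $W=s^Ts$ via the symmetric square root, and observing that $AA^T=(X^TX)^{-1/2}X^TX(X^TX)^{-1/2}=I_d$ so that Lemma \ref{lem:subgaussian_lincomb} preserves the parameter $\sigma^2$ exactly (with the deterministic shift $-Aa$ handled directly through Definition \ref{def:subgaussian}) is precisely the intended proof, including your correct attention to the fact that the constant shift lies outside the scope of the linear-map lemma as stated.
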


\begin{lemma} {\bf Central sub-Gaussian quadratic forms. Right-tail probabilities}
Let $s=(s_1,\ldots,s_d) \sim \mbox{SG}(0,\sigma^2)$. Then
\begin{enumerate}[leftmargin=*,label=(\roman*)]
\item For any $t>0$,
\begin{align}
 P \left( \frac{s^Ts}{\sigma^2} > d t [1 + (2/t)^{\frac{1}{2}} + 1/t] \right) \leq \exp\left\{- \frac{dt}{2} \right\}.
\nonumber
\end{align}

\item For any $q>0$ and any $k_0$ such that $k_0 \geq \frac{(1+k_0)}{q} + \left[\frac{2(1+k_0)}{q}\right]^{\frac{1}{2}}$,
\begin{align}
P \left( \frac{ s^T s}{\sigma^2} > dq \right) \leq \exp \left\{  - \frac{d q}{2(1+k_0)} \right\}
\nonumber
\end{align}

\item For any $q \geq 2 (1+2^{1/2})^2$,
\begin{align}
P \left( \frac{ s^T s}{\sigma^2} > dq \right) \leq \exp \left\{  - \frac{d q}{2(1+k_0)} \right\}
\nonumber
\end{align}
where $k_0\geq 2^{1/2}(1+2^{1/2})/q^{1/2}$.
\end{enumerate}
\label{lem:tail_quadform_subgaussian}
\end{lemma}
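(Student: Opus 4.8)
The plan is to reduce all three parts to a single Chernoff-type tail bound for the quadratic form $W = s^Ts/\sigma^2$, obtained from a moment generating function (MGF) estimate that holds for any sub-Gaussian vector in the sense of Definition \ref{def:subgaussian}, in particular without assuming the coordinates $s_i$ are independent. The key first step is to control $E[\exp\{\lambda s^Ts/2\}]$ by a Gaussian linearization. For $\lambda \sigma^2 < 1$ I would introduce an auxiliary $g \sim N(0,I_d)$ independent of $s$ and use the identity $\exp\{\lambda s^Ts/2\} = E_g[\exp\{\sqrt{\lambda}\, g^Ts\}]$. Taking expectations over $s$, exchanging the order of integration (justified by positivity of the integrand), and applying the defining sub-Gaussian inequality conditionally on $g$ with $\alpha = \sqrt{\lambda}\, g$ gives
$$E_s\left[\exp\{\lambda s^Ts/2\}\right] \le E_g\left[\exp\{\lambda \sigma^2 \|g\|^2/2\}\right] = (1 - \lambda\sigma^2)^{-d/2},$$
where the last equality is the $\chi^2_d$ MGF. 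Writing $u = \lambda\sigma^2/2 \in (0,1/2)$, this reads $E[\exp\{uW\}] \le (1-2u)^{-d/2}$.

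Next I would apply Markov's inequality to $\exp\{uW\}$ and optimize in $u$. For a threshold $dr$ with $r>1$, the bound $P(W > dr) \le (1-2u)^{-d/2}\exp\{-u d r\}$ is minimized at $u^* = \tfrac12(1 - 1/r)$, which yields the clean exponent
$$\log P(W > dr) \le -\tfrac{d}{2}\,(r - 1 - \log r).$$
Part (i) then follows by taking $r = 1 + \sqrt{2t} + t$: setting $x = \sqrt{2t}$, the required inequality $r - 1 - \log r \ge t$ is exactly $e^{x} \ge 1 + x + x^2/2$, which holds for all $x \ge 0$ by the Taylor expansion of the exponential.

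Parts (ii) and (iii) I would obtain as purely algebraic corollaries of (i), using monotonicity of the tail probability in its threshold. For (ii), I set $t = q/(1+k_0)$ so that the target bound is precisely $e^{-dt/2}$; writing $c = 1+k_0$ and multiplying the hypothesis $k_0 \ge \tfrac{c}{q} + \sqrt{2c/q}$ through by $q/c$ (and using $qk_0/c = q - t$) shows it is equivalent to $t + \sqrt{2t} + 1 \le q$, i.e. the part-(i) threshold does not exceed $q$. Hence $\{W>dq\}\subseteq\{W>d(t+\sqrt{2t}+1)\}$ and (i) applies. For (iii), I would verify that $k_0 = \sqrt{2}(1+\sqrt2)/\sqrt q$ satisfies the hypothesis of (ii) whenever $q \ge 2(1+\sqrt2)^2$ (at the endpoint this gives $k_0 = 1$, and bounding $1+k_0 \le 2$ reduces the required inequality to $\sqrt q \ge \sqrt2$), and note that the defect $k_0 - \tfrac{1+k_0}{q} - \sqrt{2(1+k_0)/q}$ is non-decreasing in $k_0$ for $q \ge 2$, so the conclusion persists for every larger admissible $k_0$.

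The main obstacle is the first step: Definition \ref{def:subgaussian} supplies only the vector MGF inequality and no independence of the $s_i$, so one cannot simply multiply one-dimensional tail bounds. The Gaussian-linearization device is what circumvents this, reducing the quadratic form to a family of linear forms $g^Ts$ to which the defining inequality applies directly. Once the MGF bound $(1-2u)^{-d/2}$ is in hand, everything reduces to elementary calculus (the optimization over $u$) and the single convexity inequality $e^x \ge 1 + x + x^2/2$.
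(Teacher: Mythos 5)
Your proof is correct, but note that the paper itself offers nothing to compare it against: for this lemma the supplement explicitly declines to reproduce a proof and instead points to Sections S6 and S9 of \cite{rossell:2021} (only the left-tail Lemma \ref{lem:lefttail_quadform_subgaussian} is re-proved in-house, to fix an error in the cited source). Judged on its own, your argument is complete and self-contained. The Gaussian-linearization step is exactly the right device given that Definition \ref{def:subgaussian} imposes only the joint MGF inequality with no coordinate independence: writing $\exp\{\lambda s^Ts/2\}=E_g[\exp\{\sqrt{\lambda}\,g^Ts\}]$ for $g \sim N(0,I_d)$, swapping expectations by Tonelli, and applying the definition with $\alpha=\sqrt{\lambda}\,g$ yields $E[e^{uW}]\leq (1-2u)^{-d/2}$ for $W=s^Ts/\sigma^2$, $u \in (0,1/2)$; this is the decoupling trick of Hsu--Kakade--Zhang, and the optimized Chernoff exponent $-\tfrac{d}{2}(r-1-\log r)$ with $r=1+\sqrt{2t}+t$ reproduces the Laurent--Massart form of part (i) via the inequality $e^x \geq 1+x+x^2/2$. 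Your algebra for parts (ii) and (iii) also checks out: multiplying the hypothesis of (ii) by $q/(1+k_0)$ gives precisely $q \geq t+\sqrt{2t}+1$ with $t=q/(1+k_0)$, so monotonicity of the tail reduces (ii) to (i); and in (iii) the minimal $k_0=2^{1/2}(1+2^{1/2})/q^{1/2}$ satisfies $k_0\leq 1$ on the stated range, whence bounding $1+k_0\leq 2$ reduces the verification to $q\geq 2$, while the defect's derivative $1-1/q-[2q(1+k_0)]^{-1/2}\geq 0$ for $q\geq 2$ extends the conclusion to all admissible $k_0$. In short: a correct, fully elementary replacement for an appeal to the literature that the paper leaves as a black box.
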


\begin{lemma} {\bf Non-central sub-Gaussian quadratic forms. Left-tail probabilities}
Let $s=(s_1,\ldots,s_d) \sim \mbox{SG}(\mu,\sigma^2)$ and $a \in (0,\mu^T\mu)$. Then
\begin{align}
 P(s^Ts < a) \leq \exp \left\{ - \frac{\mu^T\mu}{8\sigma^2} \left( 1 - \frac{a}{\mu^T\mu} \right)^2 \right\}.
\nonumber
\end{align}

\label{lem:lefttail_quadform_subgaussian}
\end{lemma}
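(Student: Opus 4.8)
The plan is to reduce this $d$-dimensional lower-deviation event to a one-dimensional sub-Gaussian tail bound by projecting $s$ onto the direction of its mean. Since $a \in (0,\mu^T\mu)$ forces $\mu \neq 0$, I would set $u = \mu / (\mu^T\mu)^{1/2}$, a unit vector. Applying Lemma \ref{lem:subgaussian_lincomb} with the $1 \times d$ matrix $A = u^T$, whose only nonzero eigenvalue of $A A^T = u^T u$ is $1$, shows that the scalar $u^T s$ is sub-Gaussian with mean $u^T\mu = (\mu^T\mu)^{1/2}$ and parameter $\sigma^2$.

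The key geometric step is a Cauchy--Schwarz containment of events. If $s^T s < a$, then $\|s\| < a^{1/2}$, and hence $|u^T s| \leq \|u\|\,\|s\| < a^{1/2}$. Therefore $\{ s^T s < a \} \subseteq \{ u^T s < a^{1/2} \}$, so it suffices to bound the lower tail of the scalar $u^T s$ at the threshold $a^{1/2}$, which, because $a < \mu^T\mu$, lies strictly below its mean $(\mu^T\mu)^{1/2}$.

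For the scalar tail I would apply the standard Chernoff argument to the centered and negated variable $(\mu^T\mu)^{1/2} - u^T s$: invoking Definition \ref{def:subgaussian} with a negative $\alpha$ and optimizing the exponent yields $P( u^T s < a^{1/2} ) \leq \exp\{ -((\mu^T\mu)^{1/2} - a^{1/2})^2 / (2\sigma^2) \}$. It then remains to verify that this exponent dominates the one claimed, i.e.\ that $((\mu^T\mu)^{1/2} - a^{1/2})^2 / 2 \geq (\mu^T\mu)(1 - a/\mu^T\mu)^2 / 8$. Writing $M = \mu^T\mu$ and factoring $(M - a)^2 = (M^{1/2} - a^{1/2})^2 (M^{1/2} + a^{1/2})^2$, this reduces, after cancelling the common positive factor $(M^{1/2} - a^{1/2})^2$, to the elementary inequality $(M^{1/2} + a^{1/2})^2 \leq 4M$, equivalently $a^{1/2} \leq M^{1/2}$, which holds since $a < M$. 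Chaining $P(s^T s < a) \leq \exp\{ -((\mu^T\mu)^{1/2} - a^{1/2})^2/(2\sigma^2) \} \leq \exp\{ -(\mu^T\mu)(1 - a/\mu^T\mu)^2/(8\sigma^2) \}$ then gives the stated bound.

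I expect the main subtlety, and presumably the origin of the error in the earlier proof, to lie in the reduction to one dimension: one must avoid attacking the non-central quadratic form $s^T s$ directly through a multivariate moment generating function, and instead exploit that a small $\|s\|$ necessarily makes the projection $u^T s$ onto the mean direction small. Once this projection step is in place, the remainder is routine; the only delicate point is the final inequality $(M^{1/2} + a^{1/2})^2 \leq 4M$, which also explains why the looser constant $1/8$ appears in the statement rather than the sharper $1/2$ furnished by the direct scalar bound.
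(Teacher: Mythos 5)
Your proposal is correct, but it takes a genuinely different route from the paper's proof. The paper attacks the quadratic form directly: it applies Markov's inequality to $e^{-t\, s^Ts}$, writes $s = z + \mu$ with $z \sim SG(0,\sigma^2)$, discards the quadratic term via $e^{-t z^Tz} \leq 1$, bounds the surviving linear term $E(e^{-2t\mu^Tz}) \leq e^{2t^2 \mu^T\mu\, \sigma^2}$ by sub-Gaussianity, and optimizes $t = (\mu^T\mu - a)/(4\mu^T\mu\,\sigma^2)$ to land exactly on the stated exponent $-(\mu^T\mu - a)^2/(8\mu^T\mu\,\sigma^2)$. You instead reduce to one dimension: the containment $\{s^Ts < a\} \subseteq \{u^Ts < a^{1/2}\}$ via Cauchy--Schwarz with $\|u\|=1$ is valid, the projection $u^Ts \sim SG((\mu^T\mu)^{1/2},\sigma^2)$ does follow from Lemma \ref{lem:subgaussian_lincomb} since $AA^T = u^Tu = 1$, the scalar Chernoff step gives $\exp\{-((\mu^T\mu)^{1/2}-a^{1/2})^2/(2\sigma^2)\}$, and your final comparison, resting on $(M^{1/2}+a^{1/2})^2 \leq 4M$ with $M=\mu^T\mu$, is also right. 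Your route buys two things: the intermediate bound $\exp\{-(M^{1/2}-a^{1/2})^2/(2\sigma^2)\}$ is strictly sharper than the lemma's (the slack is precisely the replacement of $(M^{1/2}+a^{1/2})^2$ by $4M$, vanishing only as $a \uparrow M$), and the argument is modular, isolating the sole probabilistic input as the textbook one-dimensional lower-tail bound. What the paper's route buys is that it produces the stated form of the exponent directly --- the form invoked verbatim downstream in Lemma \ref{lem:subgaussian_difquad_tail} and Propositions \ref{prop:intbound_sg_central}--\ref{prop:intbound_sg_noncentral} --- and it shows the constant $1/8$ arising from optimizing after dropping $e^{-tz^Tz}$, whereas in your version $1/8$ appears as a deliberate loosening at the end. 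It is worth noticing that the two proofs are secretly close: the paper's step $e^{-tz^Tz} \leq 1$ surrenders exactly the directions orthogonal to $\mu$ that your projection discards explicitly, so both arguments use only the sub-Gaussianity of the single linear functional along $\mu$.
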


\begin{proof} {\bf of Lemma \ref{lem:lefttail_quadform_subgaussian}.}
For any $t>0$ and $a>0$, it holds that
\begin{align}
 P( s^Ts < a)= P \left( e^{-t s^Ts} > e^{-t a} \right) \leq e^{ta} E(e^{-t s^Ts}),
\nonumber
\end{align}
where the right-hand side follows from Markov's inequality.
Since $s= z + \mu$ where $z \sim SG(0,\sigma^2)$, we obtain
\begin{align}
& P( s^Ts < a) \leq e^{ta} E(e^{-t [(z+\mu)^T (z+\mu)] })=
e^{t(a - \mu^T\mu)} E \left( \frac{e^{-2 t \mu^T z}}{e^{tz^T z}} \right)
\leq e^{t(a - \mu^T\mu)} E \left( e^{-2 t \mu^T z} \right),
\nonumber
\end{align}
where we used that $e^{-t z^Tz} \leq 1$.
Using the definition of sub-Gaussianity to bound the expectation on the right-hand side gives
\begin{align}
 P(s^Ts < a) \leq e^{t(a - \mu^T\mu) + 2 t^2 \mu^T\mu \sigma^2}.
\nonumber
\end{align}

The bound above holds for any $t>0$. The optimal $t$ is found by setting the first derivative to zero, which gives
\begin{align}
 t= \frac{\mu^T\mu - a}{4 \mu^T\mu \sigma^2}.
\nonumber
\end{align}
Note that to have $t>0$, we need that $a < \mu^T\mu$, which holds by assumption. Plugging in the optimal $t$ into the upper-bound gives
\begin{align}
 P(s^T s < a) &\leq 
\exp\left\{-\frac{(a-\mu^T\mu)^2}{4 \mu^T\mu \sigma^2} + \frac{\mu^T\mu \sigma^2 (\mu^T\mu - a)^2}{8 (\mu^T\mu)^2 \sigma^4}   \right\}
\nonumber \\
&=\exp\left\{-\frac{(\mu^T\mu - a)^2}{8 \mu^T\mu \sigma^2} \right\}
= \exp\left\{-\frac{\mu^T \mu}{8 \sigma^2} \left( 1 - \frac{a}{\mu^T\mu} \right)^2 \right\},
\nonumber
\end{align}
as we wished to prove.
\end{proof}

\begin{lemma} {\bf Tail probabilities for differences of sub-Gaussian quadratic forms}
Let $u_1 \sim SG(0,\sigma^2)$ be a $d_1$-dimensional sub-Gaussian vector and $u_2 \sim SG(\mu, \sigma^2)$ a $d_2$-dimensional sub-Gaussian vector, where $\sigma^2 > 0$ is finite and $\mu \in \mathbb{R}^{d_2}$.

\begin{enumerate}[leftmargin=*,label=(\roman*)]
\item  Let $a>0$, then
\begin{align}
 P\left( \frac{a u_1^Tu_1 - u_2^Tu_2}{\sigma^2} > t \right) \leq
 \exp \left\{ - \frac{d_1 q }{2(1+k_0)}\right\} + \exp \left\{ - \frac{\mu^T\mu}{8\sigma^2} \left( 1 - \frac{1}{\log \mu^T\mu} \right) \right\},
\nonumber
\end{align}
for any $t$ such that
\begin{align}
 q:= \frac{t}{a d_1} + \frac{\mu^T\mu}{a d_1 \sigma^2 \log \mu^T\mu} \geq 2(1+2^{1/2})^2
\nonumber
\end{align}
and $k_0= 2^{1/2}(1+2^{1/2})/q^{1/2}$.

In particular, for $t= d \log h$ where $d,h>0$,
\begin{align}
 P\left( \frac{u_1^Tu_1 - u_2^Tu_2}{\sigma^2} > t \right) \leq
h^{-\frac{d}{2a(1+k_0)}} e^{-\frac{\mu^T\mu}{2a(1+k_0) \sigma^2 \log \mu^T\mu}}
+ \exp \left\{ - \frac{\mu^T\mu}{8\sigma^2} \left( 1 - \frac{1}{\log \mu^T\mu} \right) \right\}.
\nonumber
\end{align}

\item Let $t \geq 2(1+2^{1/2})^2 a d_1$. Then
\begin{align}
  P\left( \frac{a u_1^Tu_1 - u_2^Tu_2}{\sigma^2} > t \right) \leq \exp \left\{ -\frac{t}{2 (1+k_0) a} \right\},
\nonumber
\end{align}
\end{enumerate}
for any $k_0 \geq 2^{1/2}(1+2^{1/2}) (a d_1/t)^{1/2}$.

\label{lem:subgaussian_difquad_tail}
\end{lemma}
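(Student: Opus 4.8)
The plan is to reduce both parts to the single--quadratic--form tail bounds already available, namely the right--tail bound for central forms in Lemma~\ref{lem:tail_quadform_subgaussian}(iii) and the left--tail bound for non--central forms in Lemma~\ref{lem:lefttail_quadform_subgaussian}. The one structural fact I will exploit throughout is that $u_2^T u_2 \ge 0$, which lets me control the difference $a\,u_1^T u_1 - u_2^T u_2$ by controlling $a\,u_1^T u_1$ from above and $u_2^T u_2$ from below separately.

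I would dispatch Part~(ii) first, as it is the direct case. Since $u_2^T u_2 \ge 0$, the event $\{a\,u_1^T u_1 - u_2^T u_2 > t\sigma^2\}$ is contained in $\{a\,u_1^T u_1 > t\sigma^2\} = \{u_1^T u_1/\sigma^2 > d_1 q\}$ with $q = t/(a d_1)$. The hypothesis $t \ge 2(1+2^{1/2})^2 a d_1$ is exactly $q \ge 2(1+2^{1/2})^2$, so Lemma~\ref{lem:tail_quadform_subgaussian}(iii) applies to the central $d_1$--dimensional vector $u_1 \sim SG(0,\sigma^2)$ and gives $\exp\{-d_1 q/(2(1+k_0))\} = \exp\{-t/(2a(1+k_0))\}$ for $k_0 \ge 2^{1/2}(1+2^{1/2})/q^{1/2} = 2^{1/2}(1+2^{1/2})(a d_1/t)^{1/2}$, which is the claimed bound.

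For Part~(i) I would introduce a threshold $c>0$ and use the inclusion
\[
\{a\,u_1^T u_1 - u_2^T u_2 > t\sigma^2\} \subseteq \{a\,u_1^T u_1 > t\sigma^2 + c\} \cup \{u_2^T u_2 < c\},
\]
valid because if both complementary events hold then the difference is at most $(t\sigma^2 + c) - c = t\sigma^2$. I would take $c = \mu^T\mu/\log\mu^T\mu$, which sits in $(0,\mu^T\mu)$ precisely when $\mu^T\mu > e$, as required by Lemma~\ref{lem:lefttail_quadform_subgaussian}. The first event rewrites as $\{u_1^T u_1/\sigma^2 > d_1 q\}$ with $d_1 q = t/a + \mu^T\mu/(a\sigma^2\log\mu^T\mu)$, i.e.\ exactly the $q$ in the statement; since $q \ge 2(1+2^{1/2})^2$ by assumption, Lemma~\ref{lem:tail_quadform_subgaussian}(iii) bounds it by $\exp\{-d_1 q/(2(1+k_0))\}$. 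The second event is a left tail of the non--central $u_2 \sim SG(\mu,\sigma^2)$, and Lemma~\ref{lem:lefttail_quadform_subgaussian} applied with threshold $c$ yields $\exp\{-\tfrac{\mu^T\mu}{8\sigma^2}(1 - 1/\log\mu^T\mu)^2\}$. A union bound combines the two, giving the first display of Part~(i). The ``in particular'' statement then follows by putting $t = d\log h$ and factoring $\exp\{-d_1 q/(2(1+k_0))\}$ into its $h^{-d/(2a(1+k_0))}$ and $\mu$--dependent pieces.

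The main obstacle, and the only genuinely delicate step, is the choice of the threshold $c = \mu^T\mu/\log\mu^T\mu$: it must be large enough to keep $q \ge 2(1+2^{1/2})^2$ so the central right--tail bound applies, yet strictly below $\mu^T\mu$ so the non--central left--tail bound applies, while simultaneously keeping the two resulting exponents of comparable order so that neither term trivially dominates. The $\log\mu^T\mu$ scaling is exactly what reconciles these competing demands. Beyond this, the care needed is bookkeeping: tracking that $u_2$ is non--central (mean $\mu$), which forces the use of Lemma~\ref{lem:lefttail_quadform_subgaussian} rather than a central bound, and recording the implicit requirement $\mu^T\mu > e$.
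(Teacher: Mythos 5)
Your proposal is correct and takes essentially the same route as the paper's own proof: the identical union-bound split of the event with the threshold $c=\mu^T\mu/\log \mu^T\mu$ (the paper's $t'-t/2=\lambda/[\sigma^2\log\lambda]$), Lemma~\ref{lem:tail_quadform_subgaussian}(iii) applied to the central term and Lemma~\ref{lem:lefttail_quadform_subgaussian} to the non-central left tail, and for Part~(ii) the same one-line reduction $a\,u_1^Tu_1-u_2^Tu_2\leq a\,u_1^Tu_1$ followed by the central tail bound. One small remark: as you note, Lemma~\ref{lem:lefttail_quadform_subgaussian} actually delivers the exponent with the squared factor $\left(1-1/\log\mu^T\mu\right)^2$, and the unsquared factor in the stated bound is a slip that the paper's own proof makes silently, so your explicit bookkeeping of the square (and of the implicit requirement $\mu^T\mu>e$) is, if anything, slightly more careful than the paper's.
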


\begin{proof} {\bf of Lemma \ref{lem:subgaussian_difquad_tail}, Part (i).}
Let $\lambda= \mu^T\mu$.
The union bound gives that, for any $t'>0$,
\begin{align}
  P\left( \frac{a u_1^Tu_1 - u_2^Tu_2}{\sigma^2} > t \right) =
 P\left( \frac{a u_1^Tu_1 - u_2^Tu_2}{\sigma^2} > \frac{t}{2} + t' - \left[t' - \frac{t}{2}\right] \right) 
\nonumber \\
\leq
P \left( \frac{u_1^T u_1}{\sigma^2} > \frac{t}{2} + t' \right) + P \left( \frac{u_2^Tu_2}{\sigma^2} < t' - \frac{t}{2} \right).
\label{seq:bound_sgdif}
\end{align}
In particular take $t'=t/2 + \lambda/[\sigma^2 \log \lambda]$, so that $t'-t/2= \lambda/[\sigma^2 \log \lambda]$ and $t'+t/2= t + \lambda/[\sigma^2 \log \lambda]$.
Then, using Lemma \ref{lem:lefttail_quadform_subgaussian} we have that the second term in \eqref{seq:bound_sgdif} is
\begin{align}
 P \left( \frac{u_2^T u_2}{\sigma^2} < \frac{\lambda}{\sigma^2 \log \lambda}  \right) \leq \exp \left\{ - \frac{\lambda}{8\sigma^2} \left( 1 - \frac{1}{\log \lambda} \right) \right\}.
\label{seq:bound_sgdif2}
\end{align}

The first term in \eqref{seq:bound_sgdif} is
\begin{align}
 P \left( \frac{u_1^Tu_1}{\sigma^2} > \frac{t}{a} + \frac{\lambda}{a \sigma^2 \log \lambda}  \right)=
P \left( \frac{u_1^Tu_1}{\sigma^2} > d_1 \left[ \frac{t}{d_1 a} + \frac{\lambda}{d_1 a \sigma^2 \log \lambda} \right]  \right).
\label{seq:bound_sgdif1}
\end{align}
Since
\begin{align}
 q= \frac{t}{d_1 a} + \frac{\lambda}{d_1 a \sigma^2 \log \lambda} \geq 2(1+2^{1/2})^2,
\label{seq:qcondition_sgdif}
\end{align}
by assumption, Lemma \ref{lem:tail_quadform_subgaussian}(iii) gives that \eqref{seq:bound_sgdif1} is
\begin{align}
 \leq \exp \left\{ - \frac{d_1 q }{2(1+k_0)}\right\}.
\nonumber
\end{align}
Combining \eqref{seq:bound_sgdif1} and \eqref{seq:bound_sgdif2} gives
\begin{align}
  P\left( \frac{a u_1^Tu_1 - u_2^Tu_2}{\sigma^2} > t \right) \leq 
 \exp \left\{ - \frac{d_1 q }{2(1+k_0)}\right\} + \exp \left\{ - \frac{\lambda}{8\sigma^2} \left( 1 - \frac{1}{\log \lambda} \right) \right\},
\nonumber
\end{align}
as we wished to prove.
Finally, for the particular case where one plugs in $t= d \log h$ where $d,h>0$ satisfy the condition \eqref{seq:qcondition_sgdif} above, gives
\begin{align}
 q= \frac{d \log h}{d_1 a} + \frac{\lambda}{d_1 a \sigma^2 \log \lambda}
\nonumber
\end{align}
and hence
\begin{align}
 \exp \left\{ - \frac{d_1 q }{2(1+k_0)}\right\}=
 \exp \left\{ - \frac{1}{2a(1+k_0)} \left[ d \log h + \frac{\lambda}{\sigma^2 \log \lambda} \right]\right\}=
 h^{-\frac{d}{2a(1+k_0)}} e^{-\frac{\lambda}{2a (1+k_0) \sigma^2 \log \lambda}}.
\nonumber
\end{align}

\end{proof}

\begin{proof} {\bf of Lemma \ref{lem:subgaussian_difquad_tail}, Part (ii).}
Since $a u_1^Tu_1 - u_2^Tu_2 \leq a u_1^T u_1$, it follows that
\begin{align}
  P\left( \frac{a u_1^Tu_1 - u_2^Tu_2}{\sigma^2} > t \right) \leq  
P\left( \frac{a u_1^Tu_1}{\sigma^2} > t \right)
=P\left( \frac{u_1^Tu_1}{\sigma^2} > d_1 \frac{t}{a d_1} \right).
\nonumber
\end{align}
Using Lemma \ref{lem:tail_quadform_subgaussian}(iii) gives that the right-hand side is
\begin{align}
\leq \exp \left\{ -\frac{t}{2 (1+k_0) a} \right\} 
\nonumber
\end{align}
for any $t/(ad_1) \geq 2(1+2^{1/2})^2$ and $k_0 \geq 2^{1/2}(1+2^{1/2}) (a d_1/t)^{1/2}$, as we wished to prove.
\end{proof}


\newpage
\section{Proof of Theorem \ref{thm:bf}}
\label{supplsec:proof_bf}

Recall that the Bayes factor is
\begin{align}
 B_{\gamma \gamma^*}&=
\frac{\left|g V_{\gamma^*} W_{\gamma^*}^T \Sigma^{-1} W_{\gamma^*}  + I \right|^{\frac{1}{2}}}{\left| g V_{\gamma} W_{\gamma}^T \Sigma^{-1} W_\gamma  + I \right|^{\frac{1}{2}}}
\nonumber \\
&\times \exp \left\{  \frac{1}{2} \left[ \hat{\eta}_\gamma^T ( W_\gamma^T \Sigma^{-1} W_\gamma + (g V_\gamma)^{-1}) \hat{\eta}_\gamma 
- \hat{\eta}_{\gamma^*}^T ( W_{\gamma^*}^T \Sigma^{-1} W_{\gamma^*} + (g V_{\gamma^*})^{-1}) \hat{\eta}_{\gamma^*}
\right] \right\},
\label{seq:bf_theorem}
\end{align}
where $\hat{\eta}_\gamma= E(\eta_\gamma \mid y,\gamma)= (W_\gamma^T \Sigma^{-1} W_\gamma + (g V_\gamma)^{-1})^{-1} W_\gamma^T \Sigma^{-1} y$.

The proof strategy is as follows. First we show that the first term in \eqref{seq:bf_theorem} is essentially given by $(g n)^{(|\gamma^*|_0 - |\gamma|_0)/2}$ (up to lower-order terms). To characterize the second term in \eqref{seq:bf_theorem} we note that the terms in the exponent are 
a Bayesian version of the sum of explained squares under model $\gamma$ minus that for $\gamma^*$, show that these are essentially equivalent to the least-squares counterparts. The latter sum-of-squares can be re-written as a quadratic form of sub-Gaussian vectors using Lemma \ref{lem:waldstat_nested}, which can be bounded using the inequalities developed in Section \ref{supplsec:auxiliary_results}.

Consider the first term in \eqref{seq:bf_theorem}. Under Assumption (A2) the eigenvalues of $g V_{\gamma} W_{\gamma}^T \Sigma^{-1} W_{\gamma} + I$
lie in $(gn \underline{l}_\gamma + 1, g n \bar{l}_\gamma + 1)$, hence
\begin{align}
\frac{(g n \underline{l}_{\gamma^*})^{\frac{|\gamma^*|_0}{2}}}{(g n \bar{l}_{\gamma} + 1)^{\frac{|\gamma|_0}{2}}}
\leq
\frac{(g n \underline{l}_{\gamma^*} + 1)^{\frac{|\gamma^*|_0}{2}}}{(g n \bar{l}_{\gamma} + 1)^{\frac{|\gamma|_0}{2}}}
\leq
\frac{\left|g V_{\gamma^*} W_{\gamma^*}^T \Sigma^{-1} W_{\gamma^*}  + I \right|^{\frac{1}{2}}}{\left| g V_{\gamma} W_{\gamma}^T \Sigma^{-1} W_\gamma  + I \right|^{\frac{1}{2}}}
\leq
\frac{(g n \bar{l}_{\gamma^*} + 1)^{\frac{|\gamma^*|_0}{2}}}{(g n \underline{l}_{\gamma} + 1)^{\frac{|\gamma|_0}{2}}}
\leq
\frac{(g n \bar{l}_{\gamma^*} + 1)^{\frac{|\gamma^*|_0}{2}}}{(g n \underline{l}_{\gamma})^{\frac{|\gamma|_0}{2}}}.
\nonumber
\end{align}

Using that $\lim_{n \rightarrow \infty} g n= \infty$ by Assumption (A4), and that $(\underline{l}_\gamma,\bar{l}_\gamma)$ are bounded by constants by Assumption (A2),
it is simple to show that 
\begin{align}
(g n k_1)^{\frac{|\gamma^*|_0-|\gamma|_0}{2}}
\leq
\frac{(g n \underline{l}_{\gamma^*})^{\frac{|\gamma^*|_0}{2}}}{(g n \bar{l}_{\gamma} k')^{\frac{|\gamma|_0}{2}}}
\leq
\frac{\left|g V_{\gamma^*} W_{\gamma^*}^T \Sigma^{-1} W_{\gamma^*}  + I \right|^{\frac{1}{2}}}{\left| g V_{\gamma} W_{\gamma}^T \Sigma^{-1} W_\gamma  + I \right|^{\frac{1}{2}}}
\leq
\frac{(g n \bar{l}_{\gamma^*} k')^{\frac{|\gamma^*|_0}{2}}}{(g n \underline{l}_{\gamma})^{\frac{|\gamma|_0}{2}}}
\leq (g n k_2)^{\frac{|\gamma^*|_0-|\gamma|_0}{2}}
\label{seq:bf_term1}
\end{align}
for large enough $n$, a fixed $k'$ that can be taken arbitrarily close to 1, and 
$k_1= \underline{l}_{\gamma^*} k' / \bar{l}_\gamma$ and $k_2= \bar{l}_{\gamma^*} k' / \underline{l}_\gamma$
where $0 < k_1,k_2 < \infty$ are finite non-zero constants by assumption.
This concludes the first part of the proof.

Regarding the second term in (\ref{seq:bf_theorem}), to ease notation let $\tilde{y}= \Sigma^{-1/2} y$, $\widetilde{W}_\gamma= \Sigma^{-1/2} W_\gamma$,
and $\tilde{\eta}_\gamma= (\widetilde{W}_\gamma^T \widetilde{W}_\gamma)^{-1} \widetilde{W}_\gamma^T \tilde{y}$ the least-squares estimate when regressing $\tilde{y}$ on $\widetilde{W}_\gamma$, which is guaranteed to exist by Assumption (A1).
Then the exponent in \eqref{seq:bf_theorem} features $\tilde{s}_\gamma - \tilde{s}_{\gamma^*}$, where
\begin{align}
\tilde{s}_\gamma= \hat{\eta}_\gamma^T (W_\gamma^T \Sigma^{-1} W_\gamma + (g V_\gamma)^{-1}) \hat{\eta}_\gamma=
\tilde{y}^T \widetilde{W}_\gamma [\widetilde{W}_\gamma^T \widetilde{W}_\gamma + (g V_\gamma)^{-1}]^{-1} \widetilde{W}_\gamma^T \tilde{y}
\label{seq:bayesian_sse}
\end{align}
can be interpreted as the Bayesian sum of explained squares by model $\gamma$.
Under Assumptions (A2) and (A4), $\tilde{s}_\gamma - \tilde{s}_{\gamma^*}$ is essentially equivalent to the difference between classical least-squares sum of explained squares $s_\gamma - s_{\gamma^*}$, where
\begin{align}
 s_\gamma= \tilde{y}^T \widetilde{W}_\gamma (\widetilde{W}_\gamma^T \widetilde{W}_\gamma)^{-1} \widetilde{W}_\gamma^T \tilde{y}=
\tilde{\eta}_\gamma^T \widetilde{W}_\gamma^T \widetilde{W}_\gamma \tilde{\eta}_\gamma.
\nonumber
\end{align}
Briefly, let
\begin{align}
\tilde{H}_\gamma&= \widetilde{W}_\gamma [\widetilde{W}_\gamma^T \widetilde{W}_\gamma + (g V_\gamma)^{-1}]^{-1} \widetilde{W}_\gamma^T
\nonumber \\
H_\gamma&= \widetilde{W}_\gamma [\widetilde{W}_\gamma^T \widetilde{W}_\gamma]^{-1} \widetilde{W}_\gamma^T
\nonumber
\end{align}
so that
\begin{align}
 \tilde{s}_\gamma - \tilde{s}_{\gamma^*}=
\tilde{y}^T( \tilde{H}_\gamma - \tilde{H}_{\gamma^*}) (H_\gamma - H_{\gamma^*})^{-1} (H_\gamma - H_{\gamma^*}) \tilde{y},
\nonumber
\end{align}
which lies in the interval $(s_\gamma - s_{\gamma^*}) (l_1, l_2)$,
where $(l_1,l_2)$ are the smallest and largest eigenvalues of $( \tilde{H}_\gamma - \tilde{H}_{\gamma^*}) (H_\gamma - H_{\gamma^*})^{-1}$.
Using Assumption (A2) it is possible to show that both $l_1$ and $l_2$ converge to 1 as $n \rightarrow \infty$, implying that for large enough $n$
\begin{align}
\tilde{s}_\gamma - \tilde{s}_{\gamma^*} \leq
\begin{cases}
 (s_\gamma - s_{\gamma^*}) (1 + \delta) \mbox{, if } s_\gamma - s_{\gamma^*} \geq 0 \\
 (s_\gamma - s_{\gamma^*}) (1 - \delta) \mbox{, if } s_\gamma - s_{\gamma^*} < 0
\end{cases}
\label{eq:ineq_bayesssdif}
\end{align}
where $\delta>0$ is a constant that may be taken arbitrarily close to 0 as $n$ grows.


The remainder of the proof characterizes the behavior of $s_\gamma - s_{\gamma^*}$, separately for the over-fitted case where $\gamma^* \subset \gamma$ and the non over-fitted case where $\gamma^* \not\subset \gamma$.

\subsection{Part (i). Case $\gamma^* \subset \gamma$}

Let $\widetilde{W}_\gamma= (\widetilde{W}_{\gamma^*}, \widetilde{W}_{\gamma \setminus \gamma^*})$ where $\widetilde{W}_{\gamma^*}$ are the columns corresponding to $\gamma^*$ and $\widetilde{W}_{\gamma \setminus \gamma^*}$ the remaining columns.
Lemma \ref{lem:waldstat_nested} gives that
\begin{align}
 s_\gamma - s_{\gamma^*}= \tilde{y}^T Z_{\gamma \setminus \gamma^*} (Z_{\gamma \setminus \gamma^*}^T Z_{\gamma \setminus \gamma^*})^{-1} Z_{\gamma \setminus \gamma^*}^T \tilde{y} \geq 0
\nonumber
\end{align}
where $Z_{\gamma \setminus \gamma^*}= (I - H_{\gamma^*}) \widetilde{W}_{\gamma \setminus \gamma^*}$ are the residuals from regressing $\widetilde{W}_{\gamma \setminus \gamma^*}$ on $\widetilde{W}_{\gamma^*}$, and $H_{\gamma^*}= \widetilde{W}_{\gamma^*} (\widetilde{W}_{\gamma^*}^T \widetilde{W}_{\gamma^*})^{-1} \widetilde{W}_{\gamma^*}^T$ the projection matrix onto the column span of $\widetilde{W}_{\gamma^*}$.

Recall that $y - W_{\gamma^*} \eta_{\gamma^*}^* \sim SG(0, \omega)$ by assumption, which by Lemma \ref{lem:subgaussian_lincomb} implies that
$$
\tilde{y} - \widetilde{W}_{\gamma^*} \eta_{\gamma^*}= \Sigma^{-1/2} (y - W_{\gamma^*} \eta_{\gamma^*}) \sim SG(0, \tilde{\omega}),
$$
where $\tilde{\omega}= \omega \tau$ and $\tau$ is the largest eigenvalue of $\Sigma^{-1}$.
Now, using that $(\widetilde{W}_{\gamma^*} \eta_{\gamma^*}^*)^T Z_{\gamma \setminus \gamma^*}=0$ and Lemma \ref{lem:subgaussian_quadform} gives that
\begin{align}
 s_\gamma - s_{\gamma^*}= 
(\tilde{y} - \widetilde{W}_{\gamma^*} \eta_{\gamma^*}^* + \widetilde{W}_{\gamma^*} \eta_{\gamma^*}^*)^T Z_{\gamma \setminus \gamma^*} (Z_{\gamma \setminus \gamma^*}^T Z_{\gamma \setminus \gamma^*})^{-1} Z_{\gamma \setminus \gamma^*}^T (\tilde{y} - \widetilde{W}_{\gamma^*} \eta_{\gamma^*}^* + \widetilde{W}_{\gamma^*} \eta_{\gamma^*}^*)=
\nonumber \\
(\tilde{y} - \widetilde{W}_{\gamma^*} \eta_{\gamma^*}^*)^T Z_{\gamma \setminus \gamma^*} (Z_{\gamma \setminus \gamma^*}^T Z_{\gamma \setminus \gamma^*})^{-1} Z_{\gamma \setminus \gamma^*}^T (\tilde{y} - \widetilde{W}_{\gamma^*} \eta_{\gamma^*}^*)
=u^T u,
\nonumber
\end{align}
where $u \sim SG(0, \tilde{\omega})$ is a sub-Gaussian vector of dimension $|\gamma|_0 - |\gamma^*|_0$.
Combining this result with \eqref{seq:bf_term1} and \eqref{eq:ineq_bayesssdif} gives
\begin{align}
B_{\gamma \gamma^*} \leq
\exp \left\{  \frac{1}{2} \left[ (1 + \delta) u^Tu + (|\gamma^*|_0-|\gamma|_0) \log(g n k_2) \right] \right\}.
\nonumber
\end{align}
Hence, consider any sequence $a_n \geq 0$, it follows that
\begin{align}
 P_F \left( B_{\gamma \gamma^*} \geq a_n \right) \leq 
P_F \left( \frac{1}{2} \left[ (1 + \delta) u^Tu + (|\gamma^*|_0-|\gamma|_0) \log(g n k_2) \right] \geq \log a_n \right)=
\nonumber \\
=P_F \left(  \frac{u^Tu}{\tilde{\omega}}  \geq \frac{|\gamma|_0 - |\gamma^*|_0}{\tilde{\omega} (1+\delta)}  \log\left(g n k_2a_n^{\frac{2}{|\gamma|_0 - |\gamma^*|_0}} \right) \right).
\label{seq:bfrighttail_overfitted}
\end{align}
Since $u \sim SG(0,\tilde{\omega})$, this is a right-tail probability of a sub-Gaussian quadratic form. Using Lemma \ref{lem:tail_quadform_subgaussian} with $d= |\gamma|_0 - |\gamma^*|_0$ and that $(\delta,k_2)$ are constants by assumption and $\tilde{\omega}$ is bounded by constants by Assumption (A3), said tail probability vanishes for any $a_n$ such that
$\lim_{n \rightarrow \infty} g n a_n^{2/(|\gamma|_0 - |\gamma^*|_0)}$.
In particular, take $a_n= b_n/(gn)^{(|\gamma|_0 - |\gamma^*|_0)/2}$, then the condition is that
\begin{align}
 \lim_{n \rightarrow \infty} g n a_n^{\frac{2}{|\gamma|_0 - |\gamma^*|_0}}=
b_n^{\frac{2}{|\gamma|_0 - |\gamma^*|_0}}= \infty
\Longleftrightarrow 
\lim_{n \rightarrow \infty} \frac{\log b_n}{(|\gamma|_0 - |\gamma^*|_0)/2}= \infty.
\nonumber
\end{align}
The latter condition holds for any $b_n$ such that $\log b_n= c_n (|\gamma|_0 - |\gamma^*|_0)/2$, for any $c_n$ such that $\lim_{n \rightarrow \infty} c_n= \infty$.
For these choices of $a_n$ and $b_n$ we obtain
\begin{align}
 a_n= \frac{b_n}{(gn)^{\frac{|\gamma|_0 - |\gamma^*|_0}{2}}}=
 \left( \frac{e^{c_n}}{gn} \right)^{\frac{|\gamma|_0 - |\gamma^*|_0}{2}}
\nonumber
\end{align}
In conclusion,
\begin{align}
\lim_{n \rightarrow \infty} P_F \left( B_{\gamma \gamma^*} \geq  \left( \frac{d_n}{gn} \right)^{\frac{|\gamma|_0 - |\gamma^*|_0}{2}} \right)= 0
\nonumber
\end{align}
for any $d_n= e^{c_n}$ such that $\lim_{n \rightarrow} d_n= \infty$, as we wished to prove.

\subsection{Part (ii). Case $\gamma^* \not\subset \gamma$}

Let $\gamma'$ be the union of models $\gamma^*$ and $\gamma$, i.e with design matrix $W_{\gamma'}$ containing all columns in $W_{\gamma^*}$ and $W_{\gamma}$, so that both $\gamma^*$ and $\gamma$ are contained in $\gamma'$.
Proceeding similarly to Part (i), Lemmas \ref{lem:waldstat_nested} and \ref{lem:subgaussian_quadform} give that
\begin{align}
 s_\gamma - s_{\gamma^*}=
s_\gamma - s_{\gamma'} + s_{\gamma'} - s_{\gamma^*}=
u_1^T u_1 - u_2^T u_2
\nonumber
\end{align}
where
\begin{align}
u_1^T u_1&=  s_{\gamma'} - s_{\gamma^*}= \tilde{y}^T Z_{\gamma' \setminus \gamma^*} (Z_{\gamma' \setminus \gamma^*}^T Z_{\gamma' \setminus \gamma^*})^{-1} Z_{\gamma' \setminus \gamma^*}^T \tilde{y}
\nonumber \\
u_2^T u_2&= s_{\gamma'} - s_{\gamma}= \tilde{y}^T Z_{\gamma' \setminus \gamma} (Z_{\gamma' \setminus \gamma}^T Z_{\gamma' \setminus \gamma})^{-1} Z_{\gamma' \setminus \gamma}^T \tilde{y}
\nonumber
\end{align}
and $Z_{\gamma' \setminus \gamma^*}= (I - H_{\gamma^*}) \widetilde{W}_{\gamma' \setminus \gamma^*}$, analogously for $Z_{\gamma' \setminus \gamma}$, and $H_{\gamma}$ and $H_{\gamma^*}$ are the projection matrix defined above.

The key is to bound the two terms $u_1^Tu_1$ and $u_2^Tu_2$, by noting that they are quadratic forms of sub-Gaussian vectors, which will allow us to use Lemma \ref{lem:subgaussian_difquad_tail}.
Specifically, recall from Part(i) that $\tilde{y} - \widetilde{W}_{\gamma^*} \eta_{\gamma^*}^* \sim SG(0, \tilde{\omega})$ where $\tilde{\omega}= \omega \tau$ and $\tau$ is the largest eigenvalue of $\Sigma^{-1}$.
Using that $(\widetilde{W}_{\gamma^*} \eta_{\gamma^*}^*)^T Z_{\gamma' \setminus \gamma^*}=0$ and Lemma \ref{lem:subgaussian_quadform} gives that
$u_1 \sim SG(0, \tilde{\omega})$ with dimension $p_{\gamma'}-|\gamma^*|_0$.
Regarding the term $u_2^T u_2$, since $\tilde{y} - \widetilde{W}_\gamma \eta_\gamma^* \sim SG(\widetilde{W}_{\gamma^*}^* \eta_{\gamma^*}^* - \widetilde{W}_\gamma \eta_\gamma^*, \tilde{\omega})$ by assumption, by Lemma \ref{lem:subgaussian_quadform} we have that $u_2 \sim SG(\mu,\tilde{\omega})$ is a $p_{\gamma'}-|\gamma|_0$ dimensional sub-Gaussian vector with 
$\mu= (Z_{\gamma' \setminus \gamma}^T Z_{\gamma' \setminus \gamma})^{-1/2} Z_{\gamma' \setminus \gamma}^T (\widetilde{W}_{\gamma^*} \eta_{\gamma^*}^* - \widetilde{W}_\gamma \eta_\gamma^*) \neq 0$.

Combining these results with \eqref{seq:bf_term1} and \eqref{eq:ineq_bayesssdif} gives
\begin{align}
B_{\gamma \gamma^*} \leq
\exp \left\{  \frac{1}{2} \left[  ((1 + \delta) u_1^Tu_1 - (1 - \delta) u_2^Tu_2) + (|\gamma^*|_0-|\gamma|_0) \log(g n k_2) \right] \right\}.
\nonumber
\end{align}
and hence, for any sequence $a_n \geq 0$,
\begin{align}
 P_F \left( B_{\gamma \gamma^*} \geq a_n  \right) \leq
P_F \left( \frac{1}{2} \left[ (1 - \delta) \left( \frac{1+\delta}{1-\delta} u_1^Tu_1 - u_2^Tu_2 \right) + (|\gamma^*|_0-|\gamma|_0) \log(g n k_2) \right] \geq \log a_n \right)
\nonumber \\
= 
P_F \left( \frac{\frac{1+\delta}{1-\delta} u_1^Tu_1 - u_2^Tu_2}{\tilde{\omega}}  \geq \frac{|\gamma|_0 - |\gamma^*|_0}{\tilde{\omega} (1 - \delta)} \left[\log(g n k_2 a_n^{\frac{2}{|\gamma|_0 - |\gamma^*|_0}}) \right] \right).
\label{seq:bftailprob_notsubsetcase}
\end{align}

\eqref{seq:bftailprob_notsubsetcase} is of the form given in Lemma \ref{lem:subgaussian_difquad_tail}, taking 
$t= d \log h$ with $d=(|\gamma|_0-|\gamma^*|_0)/[\tilde{\omega}(1-\delta)]$ and $h=gnk_2 a_n^{2/(|\gamma|_0-|\gamma^*|_0}$,
$a=(1+\delta)/(1-\delta)$, $d_1=p_{\gamma'}-|\gamma^*|_0$, $d_2=p_{\gamma'} - |\gamma|_0$ and
\begin{align}
 \mu^T\mu&= 
(\widetilde{W}_{\gamma^*} \eta_{\gamma^*}^* - \widetilde{W}_\gamma \eta_\gamma^*)^T Z_{\gamma' \setminus \gamma}
(Z_{\gamma' \setminus \gamma}^T Z_{\gamma' \setminus \gamma})^{-1} 
Z_{\gamma' \setminus \gamma}^T (\widetilde{W}_{\gamma^*}^* \eta_{\gamma^*}^* - \widetilde{W}_\gamma \eta_\gamma^*)
\nonumber \\
&=(\widetilde{W}_{\gamma^*} \eta_{\gamma^*}^*)^T (I-H_\gamma) \widetilde{W}_{\gamma^* \setminus \gamma}
(\widetilde{W}_{\gamma^* \setminus \gamma}^T (I- H_\gamma) \widetilde{W}_{\gamma^* \setminus \gamma})^{-1} 
\widetilde{W}_{\gamma^* \setminus \gamma}^T (I - H_\gamma) \widetilde{W}_{\gamma^*} \eta_{\gamma^*}^*
\nonumber
\end{align}
where to obtain the right-hand side we used that $I-H_\gamma$ is idempotent, $\widetilde{W}_{\gamma' \setminus \gamma}= \widetilde{W}_{\gamma^* \setminus \gamma}$ and that $\widetilde{W}_\gamma^T (I - H_\gamma)= \widetilde{W}_\gamma^T - \widetilde{W}_\gamma^T= 0$.

The expression for $\mu^T\mu$ can be simplified. First, note that the linear predictor $\widetilde{W}_{\gamma^*} \eta_{\gamma^*}^*= \widetilde{W}_{\gamma^* \setminus \gamma} \eta_{\gamma^* \setminus \gamma}^* + \widetilde{W}_{\gamma^* \cap \gamma} b_{\gamma^* \cap \gamma}$ can be decomposed into the contribution from $\gamma^*$ and $\gamma^* \cap \gamma$, where $b_{\gamma^* \cap \gamma}$ is the subset of $\eta_{\gamma^*}$ corresponding to elements in $\gamma^* \cap \gamma$.
Second, $(I - H_\gamma) \widetilde{W}_{\gamma^* \cap \gamma}=0$, since $\widetilde{W}_{\gamma^* \cap \gamma}$ is in the column span of $\widetilde{W}_\gamma$ and $H_\gamma$ is the corresponding linear projection operator. Hence,
\begin{align}
 \mu^T\mu&= (\eta_{\gamma^* \setminus \gamma}^*)^T (\widetilde{W}_{\gamma^*\setminus \gamma}^T (I-H_\gamma) \widetilde{W}_{\gamma^* \setminus \gamma}
(\widetilde{W}_{\gamma^* \setminus \gamma}^T (I- H_\gamma) \widetilde{W}_{\gamma^* \setminus \gamma})^{-1} 
\widetilde{W}_{\gamma^* \setminus \gamma}^T (I - H_\gamma) \widetilde{W}_{\gamma^*} \eta_{\gamma^*}^*
\nonumber \\
&= (\eta_{\gamma^* \setminus \gamma}^*)^T \widetilde{W}_{\gamma^* \setminus \gamma}^T (I - H_\gamma) \widetilde{W}_{\gamma^*} \eta_{\gamma^*}^*
= (\widetilde{W}_{\gamma^*} \eta_{\gamma^*}^*)^T (I - H_\gamma) \widetilde{W}_{\gamma^*} \eta_{\gamma^*}^*.
\label{seq:noncentrality_param}
\end{align}

To conclude the proof we apply Lemma \ref{lem:subgaussian_difquad_tail} and deduce the smallest $a_n$ one can set so that the tail probability \eqref{seq:bftailprob_notsubsetcase} vanishes as $n \rightarrow \infty$.
From Lemma \ref{lem:subgaussian_difquad_tail}, it suffices that $\mu^T \mu$ diverges to $\infty$, which holds by Assumption (A5), and that
\begin{align}
q:= \frac{d \log h}{d_1 a} + \frac{\mu^T\mu}{d_1 a \tilde{\omega} \log \mu^T\mu}
= \frac{\frac{1-\delta}{1+\delta}(|\gamma|_0-|\gamma^*|_0) \log \left( gnk_2 a_n^{2/(|\gamma|_0-|\gamma^*|_0)} \right)}{(|\gamma'|_0-|\gamma^*|_0) \tilde{\omega} (1-\delta)} + \frac{\mu^T\mu (1-\delta)/(1+\delta)}{(|\gamma'|_0-|\gamma^*|_0) \tilde{\omega} \log \mu^T\mu}
\nonumber
\end{align}
diverges to infinity as $n \rightarrow \infty$, which happens if and only if its exponential
\begin{align}
\left[ \left( gnk_2 \right)^{\frac{|\gamma|_0 - |\gamma^*|_0}{1+\delta}} a_n^{\frac{2}{1+\delta}} 
 e^{\frac{\mu^T\mu}{\log \mu^T\mu} \frac{1-\delta}{1+\delta}} \right]^{\frac{1}{\tilde{\omega}(|\gamma'|_0-|\gamma^*|_0)}}.
\nonumber
\end{align}
diverges to infinity. This can be achieved by taking
\begin{align}
 a_n= \left[ \left( gnk_2 \right)^{\frac{-(|\gamma|_0 - |\gamma^*|_0)}{1+\delta}} e^{-\frac{\mu^T\mu}{\log \mu^T\mu} \frac{1-\delta}{1+\delta}} b_n  \right]^{\frac{1+\delta}{2}}
\nonumber
\end{align}
where $b_n$ satisfies $\lim_{n \rightarrow \infty} b_n^{1/[\tilde{\omega}(|\gamma'|_0-|\gamma^*|_0)]}= \infty$.
The latter condition is equivalent to $\lim_{n \rightarrow \infty} [\log b_n]/[\tilde{\omega}(|\gamma'|_0-|\gamma^*|_0)] = \infty$, and is satisfied by taking
\begin{align}
 \log b_n= \tilde{\omega} (|\gamma'|_0 - |\gamma^*|_0) c_n \Longrightarrow b_n= e^{\tilde{\omega} (|\gamma'|_0 - |\gamma^*|_0) c_n}
\nonumber
\end{align}
for any $c_n$ that diverges to infinity. Then the expression for $a_n$ becomes
\begin{align}
 a_n= \left[ \left( gnk_2 \right)^{\frac{-(|\gamma|_0 - |\gamma^*|_0)}{1+\delta}} e^{-\frac{\mu^T\mu}{\log \mu^T\mu} \frac{1-\delta}{1+\delta} + (|\gamma'|_0 - |\gamma^*|_0) \tilde{\omega} c_n}  \right]^{\frac{1+\delta}{2}} 
= \left( gnk_2 \right)^{-\frac{(|\gamma|_0-|\gamma^*|_0)}{2}}
e^{-\frac{\mu^T\mu (1-\delta)}{2 \log \mu^T\mu}} d_n^{\tilde{\omega} (|\gamma'|_0-|\gamma^*|_0)}
\nonumber
\end{align}
where $d_n= e^{c_n (1+\delta)/2}$ is any sequence diverging to infinity.

Finally, we argue that $|\gamma'|_0 - |\gamma^*|_0$ may be replaced by $|\gamma|_0$ in the expression of $a_n$.
Since $|\gamma'|_0 - |\gamma^*|_0 \leq |\gamma|_0$ and for any $\tilde{a}_n \geq a_n$, 
\begin{align}
 P_F \left( B_{\gamma \gamma^*} \geq \tilde{a}_n \right) \leq P_F \left( B_{\gamma \gamma^*} \geq a_n \right) 
\nonumber
\end{align}
and the right-hand side vanishes as $n \rightarrow \infty$, we may replace $a_n$ by
\begin{align}
\tilde{a}_n= 
\left( gnk_2 \right)^{-\frac{(|\gamma|_0-|\gamma^*|_0)}{2}} e^{-\frac{\mu^T\mu (1-\delta)}{2 \log \mu^T\mu}} d_n^{\tilde{\omega} |\gamma|_0}
\nonumber
\end{align}
and still obtain that
$\lim_{n \rightarrow \infty} P_F \left( B_{\gamma \gamma^*} \geq \tilde{a}_n \right) \leq \lim_{n \rightarrow \infty} P_F \left( B_{\gamma \gamma^*} \geq a_n \right) = 0$.

\newpage
\section{Auxiliary results for Theorem \ref{thm:pp}}
\label{supplsec:aux_pp}

This section derives several auxiliary results used in the proof of Theorem \ref{thm:pp}.
Lemma \ref{slem:binomial_ogf} is an elementary statement that is useful in carrying out sums of posterior model probabilities over certain model sets.
The remaining results in this section derive bounds on integrals that involve certain tail probabilities for central and non-central sub-Gaussian quadratic forms, which are useful to bound the expected posterior probability assigned to an arbitrary model $\gamma$.
Proposition \ref{prop:intbound_sg_central} considers central sub-Gaussians, and is used in the proof of Theorem \ref{thm:pp} when considering overfitted models, i.e. models $\gamma \supset \gamma^*$ that contain all parameters in the optimal $\gamma^*$ plus some extra spurious parameters.
Theorem \ref{thm:pp} uses only Part (ii) of Proposition \ref{prop:intbound_sg_central}, but for completeness we also provide Part (i), which considers a situation where one obtains faster rates.

Proposition \ref{prop:intbound_sg_noncentral} considers the difference between a central and non-central sub-Gaussian quadratic forms. It is used in Theorem \ref{thm:pp} to bound the posterior probability assigned to non-overfitted models $\gamma \not\supset \gamma^*$.
Parts (i)-(ii) are analogous results, the latter corresponding to a more conservative setting where $a/b \leq 1$, for $(a,b)$ defined below.
The proof of Theorem \ref{thm:pp} uses Part (ii) to bound the posterior probability for non-overfitted models of size less than $\gamma^*$.
For these models, the argument $h$ as defined below is typically decreasing in $n$: roughly speaking, $h$ is a complexity penalty for larger models, which favors $\gamma$ over $\gamma^*$ if the latter has larger size. Hence, for posterior probability of $\gamma$ to vanish one must rely on the non-centrality parameter $\mu^T\mu$ to be large enough.
In contrast, when one considers non-overfitted models of size larger than $\gamma^*$ then $h$ is typically increasing. For those models one may then either use Proposition \ref{prop:intbound_sg_noncentral}(ii), or alternatively use \ref{prop:intbound_sg_noncentral}(iii) which relies solely on $h$ being large enough. The latter option leads to slightly simpler arguments when proving Theorem \ref{thm:pp}.

\begin{lemma}
For any two natural numbers $(k,\bar{l})$ it holds that
\begin{align}
 \sum_{l=k+1}^{\bar{l}} a^{l-k} {l \choose k} \leq \frac{1}{(1-a)^{k+1}} - 1.
\nonumber
\end{align}

Further, suppose that $k$ is either fixed or a non-decreasing function of $a$. Then
\begin{align}
 \lim_{a \rightarrow 0  } \frac{\frac{1}{(1 - a)^{k+1}} -1 }{e^{(k+1)a} -1}= 1.
\nonumber
\end{align}
\label{slem:binomial_ogf}
\end{lemma}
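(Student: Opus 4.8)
The plan is to identify the right-hand side with a negative-binomial (generalized binomial) power series. For $a \in [0,1)$ one has the absolutely convergent expansion $(1-a)^{-(k+1)} = \sum_{l=k}^{\infty} \binom{l}{k} a^{l-k}$, obtained by applying the generalized binomial theorem to the exponent $-(k+1)$ and re-indexing via $l = m + k$ (using $\binom{k+m}{m} = \binom{k+m}{k}$). The $l=k$ term equals $\binom{k}{k} a^0 = 1$, so subtracting it gives $(1-a)^{-(k+1)} - 1 = \sum_{l=k+1}^{\infty} \binom{l}{k} a^{l-k}$. Since every summand $\binom{l}{k} a^{l-k}$ is non-negative for $a \geq 0$, truncating the infinite sum at $l = \bar{l}$ can only decrease it, which yields $\sum_{l=k+1}^{\bar{l}} \binom{l}{k} a^{l-k} \leq (1-a)^{-(k+1)} - 1$, as claimed. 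The only point to record is the admissible range $a \in [0,1)$, where both sides are finite and the series converges.

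\textbf{Part 2.} I would first argue that the hypothesis lets us treat $k$ as constant in a neighbourhood of $a = 0$. If $k$ is fixed there is nothing to do; if $k$ is a non-decreasing, $\mathbb{N}$-valued function of $a$, then as $a \downarrow 0$ the value $k(a)$ decreases to its infimum $k_0 = \inf_a k(a) \in \mathbb{N}$, and because an integer-valued non-decreasing function attains this infimum, there is $a_1 > 0$ with $k(a) = k_0$ for all $a \in (0, a_1]$. Hence the limit reduces to the fixed-$k$ computation. There, both the numerator $(1-a)^{-(k+1)} - 1$ and the denominator $e^{(k+1)a} - 1$ vanish at $a = 0$, so I would apply L'Hopital's rule: differentiating gives $(k+1)(1-a)^{-(k+2)}$ and $(k+1) e^{(k+1)a}$, both equal to $k+1$ at $a = 0$, so the ratio tends to $(k+1)/(k+1) = 1$. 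Equivalently, one may match first-order Taylor coefficients, since each expression equals $(k+1)a + O(a^2)$.

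\textbf{Main obstacle.} This lemma is elementary, so there is no serious obstacle; the only points requiring care are (i) invoking the generalized binomial series and confirming term-wise non-negativity, so that truncation is a valid one-sided bound, and (ii) justifying the reduction in Part 2 to constant $k$, i.e. observing that the monotonicity and integrality of $k$ force it to be locally constant as $a \to 0$. Once these two observations are in place, both parts follow from the series identity and a one-line L'Hopital argument, respectively.
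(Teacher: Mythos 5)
Your proof is correct and follows essentially the same route as the paper's: Part 1 uses the identical negative-binomial generating-function identity together with term-wise non-negativity to justify truncation, and Part 2 resolves the same $0/0$ form via l'Hopital's rule (the paper merely inserts an intermediate step, first showing $\lim_{a \to 0} (1-a)^{-(k+1)}/e^{(k+1)a}=1$ via $[1-a]^{1/a} \to e^{-1}$ before differentiating). If anything, your explicit reduction of a non-decreasing integer-valued $k(a)$ to a locally constant $k$ near $a=0$ is a slightly more careful justification than the paper's, which applies l'Hopital while only observing that $k$ is bounded so that $(k+1)a \to 0$.
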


\begin{prop} {\bf Bound on integrated central sub-Gaussian tails.}
Let $u \sim \mbox{SG}(0,\sigma^2)$ be a $d$-dimensional sub-Gaussian vector, and
\begin{align}
 U(a,h)= \int_0^1 P \left( \frac{u^Tu}{\sigma^2} > d a \log \left( \frac{h}{(1/v-1)^{2/d}} \right) \right) dv
\nonumber
\end{align}
where $a>0$ is a constant and $h > e$.
\begin{enumerate}[leftmargin=*,label=(\roman*)]
\item Suppose that $a > 1 + \frac{q_0^{1/2}}{(q_0 + a \log\log h)^{1/2}}$, where $q_0=2(1+2^{1/2})^2$. Then 
\begin{align}
 U(a,h) \leq \frac{2 \max \left\{ (e^{2(1+2^{1/2})^2/a} \log h)^{d/2}, \log(h^{d/2})\right\} }{h^{d/2}}.
\nonumber
\end{align}

\item Suppose that $a \leq 1$ and that $\log h > 2 (1+2^{1/2})^2/a$. Then
\begin{align}
 U(a,h) \leq \frac{2\max \left\{ [e^{\frac{q_0}{a'}} \log (h^{\frac{a}{a'}})]^{\frac{d}{2}}, \log (h^{\frac{ad}{2a'}}) \right\} }{h^{\frac{ad}{2a'}}}
+ \frac{1}{2} \left( \frac{1}{h} \right)^{\frac{ad}{2(1+k_0)}},
\nonumber
\end{align}
for any $a'>1 + \frac{q_0^{1/2}}{(q_0 + a \log\log h)^{1/2}}$, where $k_0= q_0^{1/2}/(a \log h)^{1/2}$ and $q_0=2(1+2^{1/2})^2$.


Further, if $\log \left( h/\log h \right) > q_0/a$, then
\begin{align}
 U(a,h) \leq \frac{3\max \left\{ [e^{\frac{q_0}{a'}} \log (h^{\frac{a}{a'}})]^{\frac{d}{2}}, \log (h^{\frac{ad}{2a'}}) \right\} }{h^{\frac{ad}{2a'}}}.
\nonumber
\end{align}

\end{enumerate}
\label{prop:intbound_sg_central}
\end{prop}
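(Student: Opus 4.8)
The plan is to recognise $U(a,h)$ as the expectation of a bounded, increasing transform of the quadratic form $X := u^Tu/\sigma^2$, and then to control that expectation by splitting the domain according to whether the tail bound of Lemma \ref{lem:tail_quadform_subgaussian}(iii) is applicable. Note that $g(v) := da\log\!\left(h/(1/v-1)^{2/d}\right) = da\log h - 2a\log(1/v-1)$ is strictly increasing in $v$, with $g(0^+)=-\infty$ and $g(1^-)=+\infty$, so a layer-cake/Fubini change of variables gives
$$
U(a,h) = \int_0^1 P\!\left(X > g(v)\right)\,dv = E\!\left[\frac{1}{1+h^{d/2}e^{-X/(2a)}}\right].
$$
Equivalently, one may keep the integral form and split the unit interval at the point $v^*$ solving $g(v^*)=dq_*$ for a threshold $q_*\ge q_0:=2(1+2^{1/2})^2$. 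I would work with this split.

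First I would treat the region $\{v<v^*\}$ by the trivial bound $P(X>g(v))\le 1$. Solving $g(v^*)=dq_*$ gives $1/v^*-1 = h^{d/2}e^{-dq_*/(2a)}$, so this region contributes at most $v^* \le (e^{q_*/a}/h)^{d/2}$. The threshold $q_*$ is chosen hand in hand with the free parameter $k_0$ of Lemma \ref{lem:tail_quadform_subgaussian}(iii): taking $k_0 = q_0^{1/2}/(q_0+a\log\log h)^{1/2}$ forces $q_* = q_0/k_0^2 = q_0 + a\log\log h$, so that $q\ge q_*$ guarantees $k_0 \ge q_0^{1/2}/q^{1/2}$ throughout the tail region. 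With this choice $e^{q_*/a}=e^{q_0/a}\log h$, and so the truncation region contributes exactly $(e^{q_0/a}\log h)^{d/2}/h^{d/2}$, the first term of the Part (i) bound.

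On $\{v>v^*\}$, where $g(v)\ge dq_*\ge dq_0$, I would apply Lemma \ref{lem:tail_quadform_subgaussian}(iii) with this $k_0$, turning the integrand into $h^{-d\beta/2}(1/v-1)^{\beta}$ with $\beta := a/(1+k_0)$. Substituting $w=1/v-1$ reduces the task to bounding $\int_0^{w^*} w^{\beta}(1+w)^{-2}\,dw$ with $w^*=h^{d/2}e^{-dq_*/(2a)}$. Under the Part (i) hypothesis $a>1+k_0$ one has $\beta>1$, so the integral is at most $(w^*)^{\beta-1}/(\beta-1)$ and the region contributes $O(h^{-d/2})$; the alternative $\log(h^{d/2})$ inside the $\max$ absorbs the near-critical regime $\beta\approx 1$ (where $\int w^{-1}$ produces $\log w^* \approx \log(h^{d/2})$), so the factor $1/(\beta-1)$ never surfaces. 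Collecting the two regions and folding constants into the leading factor $2$ gives Part (i). For Part (ii), $a\le 1$ forces $\beta<1$ for every admissible $k_0$, and the same split applies, but now $\int_0^{w^*} w^{\beta}(1+w)^{-2}\,dw$ converges (to a Beta-function value) even as $w^*\to\infty$. Writing $a'=1+k_0$ makes $h^{-d\beta/2}=h^{-ad/(2a')}$, which is how the exponent $ad/(2a')$ and the auxiliary parameter $a'$ enter, while the leftover $\tfrac12(1/h)^{ad/(2(1+k_0))}$ records the tail region estimated with the second choice $k_0=q_0^{1/2}/(a\log h)^{1/2}$; the refined inequality then follows by using $\log(h/\log h)>q_0/a$ to dominate that extra term.

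The main obstacle I anticipate is the coordinated bookkeeping in this last step. The stated bounds are sharp enough that the $\max$ and the precise exponents are forced by the interplay between the admissibility constraint $k_0\ge q_0^{1/2}/q^{1/2}$ of Lemma \ref{lem:tail_quadform_subgaussian}(iii) (which must hold uniformly over the tail region, dictating $q_*$) and the location of $\beta=a/(1+k_0)$ relative to $1$. Making the truncation contribution come out as $(e^{q_0/a}\log h)^{d/2}/h^{d/2}$ rather than something cruder, and handling $\beta\approx 1$ without a blow-up of $1/(\beta-1)$ (which the $\log(h^{d/2})$ term is engineered to prevent), are the delicate points; the sub-Gaussian tail estimate itself is used only as a black box.
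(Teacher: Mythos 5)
Your Part (i) is essentially the paper's own argument. Your truncation point coincides with the paper's $v_0$ (your $w^*=1/v^*-1=h^{d/2}e^{-dq_*/(2a)}$ equals their $h^{d/2}[\log h]^{-d/2}e^{-q_0d/(2a)}$), your choice $k_0=q_0^{1/2}/(q_0+a\log\log h)^{1/2}$ is exactly theirs, and the tail integral is the same computation up to the substitution $w=1/v-1$. One bookkeeping point: the dichotomy you propose to dodge the $1/(\beta-1)$ blow-up is unnecessary and, as sketched, not uniform over $1<\beta<2$. The paper's cleaner device is that on the tail region the base $(1/v-1)^{2/d}/h$ is at most $1$ while $a/(1+k_0)>1$, so one simply lowers the exponent $da/(2(1+k_0))$ to $d/2$ and integrates $(1/v-1)$, obtaining $h^{-d/2}\log(1/v_0)\le \log(h^{d/2})/h^{d/2}$; equivalently, in your variables, $w^{\beta}\le (w^*)^{\beta-1}w$ for all $\beta\ge 1$ and $(w^*)^{\beta-1}h^{-d\beta/2}\le h^{-d/2}$, so the logarithmic bound holds for every $\beta>1$ at once and $1/(\beta-1)$ never appears.

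Part (ii) is where your proposal has a genuine gap. The paper does not reuse the $v^*$-split: it splits at $v=\tfrac{1}{2}$. On $(0,\tfrac{1}{2})$, since $v/(1-v)<1$ and $a/a'<1$, the threshold with parameters $(a,h)$ dominates the one with $(a',h^{a/a'})$, and Part (i) is invoked \emph{recursively} for the pair $(a',h^{a/a'})$; this recursion is the sole source of the free parameter $a'$ and of the entire first term of the stated bound. On $(\tfrac{1}{2},1)$, $v/(1-v)>1$ forces the threshold above $d\,a\log h$, and Lemma \ref{lem:tail_quadform_subgaussian}(iii) is applied with $q=a\log h$; this is where the hypothesis $\log h>q_0/a$ is consumed, where the prefactor $\tfrac{1}{2}$ (the length of that interval) comes from, and why $k_0=q_0^{1/2}/(a\log h)^{1/2}$ rather than your $q_0^{1/2}/(q_0+a\log\log h)^{1/2}$. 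Your sketch reproduces neither mechanism: a single split at $v^*$ with $\beta=a/(1+k_0)<1$ cannot generate this two-term structure, and your identification $a'=1+k_0$ contradicts the statement, in which $a'$ is a free parameter and $k_0$ enters the second term separately --- indeed the final refinement hinges on showing $a'>1+k_0$ under $\log(h/\log h)>q_0/a$, so the two cannot be equated. Moreover, $\int_0^\infty w^{\beta}(1+w)^{-2}dw=B(1+\beta,1-\beta)\asymp (1-\beta)^{-1}$ is unbounded as $a\uparrow 1$ and $k_0\downarrow 0$, so ``converges to a Beta-function value'' does not deliver the explicit constants of the statement; making it uniform would need $1-\beta\ge k_0/(1+k_0)$ together with the hypothesis $\log h>q_0/a$, which your main Part (ii) argument never invokes. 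Your closing step (using $\log(h/\log h)>q_0/a$ to fold the leftover term into the first, yielding the factor $3$) does match the paper, but it presupposes the two-term bound you have not actually derived.
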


\begin{prop} {\bf Bound on integrated non-central sub-Gaussian tails.}
Let $u_1 \sim SG(0,\sigma^2)$ be a sub-Gaussian vector of dimension $d_1$ and $u_2 \sim SG(\mu,\sigma^2)$ be of dimension $d_2$, where $\mu \in \mathbb{R}^{d_2}$ and $\sigma^2 \in \mathbb{R}^+$. Define
\begin{align}
 U(a,b,h)= \int_0^1 P \left( \frac{bu_1^Tu_1 - u_2^Tu_2}{\sigma^2} > a \log \left( \frac{h}{(1/v-1)^2} \right) \right) dv,
\nonumber
\end{align}
where $a>0$, $b>0$ and $h>0$.

\begin{enumerate}[leftmargin=*,label=(\roman*)]
\item Let $r=h^{\frac{1}{2}} e^{ \frac{\mu^T\mu}{2 a \sigma^2 \log \mu^T\mu}}$ and suppose that
\begin{align}
 \frac{a}{b} > 1 + \frac{q_0^{1/2}}{\left( q_0 + \frac{a}{b d_1} \log\log (h e^{ \frac{\mu^T\mu}{a \sigma^2 \log \mu^T\mu}}) \right)^{\frac{1}{2}}},
\nonumber
\end{align}
where $q_0= 2(1+2^{1/2})^2$ and that
$
 \log(h) + \mu^T\mu/[a \sigma^2 \log \mu^T\mu]= 2 \log(r) > d_1.
$
Then
\begin{align}
 U(a,b,h) \leq \exp \left\{ -\frac{\mu^T\mu}{8\sigma^2} \left( 1 - \frac{1}{\log \mu^T\mu} \right)\right\} +
\frac{1}{r} +
\frac{2 \max \left\{ \frac{e^{\frac{q_0b}{a}}}{d_1/2} \log \left(r \right) , \log \left( r \right)  \right\} }{r},
\nonumber
\end{align}

\item Let $r= h^{1/2} e^{\mu^T\mu/[2 a \sigma^2 \log \mu^T\mu]}$ and suppose that $a/b \leq 1$ and that
\begin{align}
  \log( h ) + \frac{\mu^T\mu}{a \sigma^2 \log \mu^T\mu} > \frac{q_0 b d_1}{a},
\nonumber
\end{align}
where $q_0=2 (1+2^{1/2})^2$. Then
\begin{align}
 \exp \left\{ -\frac{\mu^T\mu}{8\sigma^2} \left( 1 - \frac{1}{\log \mu^T\mu} \right)\right\} +
\frac{2 \max \left\{ \left[ \frac{a e^{\frac{q_0}{a'}}}{a'b d_1/2} \log r \right]^{\frac{d_1}{2}} , \log \left( r^{\frac{a}{b a'}} \right)  \right\}}{r^{\frac{a}{b a'}}}
+ \frac{3}{2 r^{\frac{a}{b (1+k_0)}}},
\nonumber
\end{align}
where $a'= 1 + q_0^{1/2}/[q_0 + a \log \log(r / [b d_1/2])]^{1/2}$ and
$k_0= q_0^{1/2}/[(2 a/b) \log r]^{1/2}$.

\item Suppose that $a \leq b$ and $\log(h) > q_0 b d_1/a$, where $q_0= 2(1+2^{1/2})^2$. Then
\begin{align}
 U(a,b,h) \leq \frac{2 \max \left\{([e^{q_0}/d_1] \log(h^{\frac{a}{b a'}}))^{d_1/2}  , \log(h^{\frac{a}{2b a'}}) \right\}}{h^{\frac{a}{2b a'}}}
+ \frac{1}{2 h^{\frac{a}{2b(1+k_0)}}}
\nonumber
\end{align}
for any 
$$
a'> 1 + \frac{q_0^{1/2}}{(q_0 + (a/b) \log \log (h^{1/d_1}))^{1/2}},
$$
where $k_0= [q_0b d_1/(a \log h)]^{1/2}$.

\end{enumerate}

\label{prop:intbound_sg_noncentral}
\end{prop}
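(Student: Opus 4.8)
The plan is to integrate the pointwise tail bounds of Lemma \ref{lem:subgaussian_difquad_tail} against $dv$ on $(0,1)$, after first collapsing the threshold into the single radius $r$. Writing the threshold as $t(v) = a\log h - 2a\log(1/v-1)$ and substituting into the quantity $q$ of Lemma \ref{lem:subgaussian_difquad_tail}(i), the definition $2\log r = \log h + \mu^T\mu/(a\sigma^2\log\mu^T\mu)$ yields the clean identity $q(v) = \tfrac{2a}{bd_1}\log\bigl(r/(1/v-1)\bigr)$. This is the key simplification: it absorbs both the $h$-dependence and the non-centrality contribution into $r$, so the integrand depends on $v$ only through the comparison of $1/v-1$ against $r$. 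For parts (i)-(ii) I would work from Lemma \ref{lem:subgaussian_difquad_tail}(i), which carries the two terms (a central power term and the non-centrality exponential inherited from Lemma \ref{lem:lefttail_quadform_subgaussian}); for part (iii) I would instead start from the cruder Lemma \ref{lem:subgaussian_difquad_tail}(ii), which discards $-u_2^Tu_2$ and thus carries no non-centrality correction, so that $\mu$ plays no role and $r$ is effectively replaced by $h^{1/2}$, explaining the absence of the exponential term there.

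First I would change variables to $w = 1/v-1$, so that $U(a,b,h) = \int_0^\infty P(X>t(v))\,(1+w)^{-2}\,dw$ with $X = (bu_1^Tu_1 - u_2^Tu_2)/\sigma^2$. The tail bound of Lemma \ref{lem:subgaussian_difquad_tail}(i) is valid only where $q(v)\ge q_0 := 2(1+2^{1/2})^2$, i.e. on $w \le w^* := r\,e^{-q_0 bd_1/(2a)}$. On the complementary region $w>w^*$ (where $v$ is near $0$ and the threshold is small or negative) I would simply use $P(X>t(v))\le 1$, whose integral $\int_{w^*}^\infty (1+w)^{-2}\,dw = (1+w^*)^{-1}$ accounts, up to constants, for the standalone $1/r$-type term in the stated bounds. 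The hypotheses $2\log r > d_1$ in part (i), $2\log r > q_0 bd_1/a$ in part (ii), and $\log h > q_0 bd_1/a$ in part (iii) are precisely what makes $w^*$ large enough for this leftover mass to be controlled.

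On the valid region $w\le w^*$ the non-centrality term $\exp\{-\tfrac{\mu^T\mu}{8\sigma^2}(1-1/\log\mu^T\mu)\}$ is constant in $v$, so integrating it against $(1+w)^{-2}\le 1$ reproduces the leading exponential in parts (i)-(ii). The central term, via the identity for $q(v)$, equals $(w/r)^{a/(b(1+k_0))}$ up to the $k_0$ correction; I would fix $k_0$ at its worst-case value $k_0 = q_0^{1/2}/\bigl((2a/b)\log r\bigr)^{1/2}$ so that it no longer depends on $w$, and then bound $r^{-s}\int_0^{w^*} w^{s}(1+w)^{-2}\,dw$ with $s = a/(b(1+k_0))$. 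This endpoint integral is what produces the $\max\{\cdot,\cdot\}/r^{\,\cdot}$ factors, where the maximum distinguishes the borderline logarithmic case $s=1$ from the genuine power cases. The three parts diverge exactly in how this integral behaves: when $a/b\le 1$ (part (ii), and part (iii) with $a\le b$) one has $s<1$, and $(1+w)^{-2}$ controls the tail in $w$, so one may essentially extend to $w=\infty$ after inflating the exponent slightly to $s/a'$; when $a/b>1$ (part (i)) the exponent can exceed $1$ and the integral is dominated by its upper endpoint, giving $(w^*)^{s-1}/r^{s}\asymp 1/r$.

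I expect the main obstacle to be the careful bookkeeping of this endpoint integral under a $v$-varying $k_0$. One must replace $k_0$ by a uniform constant and introduce the slightly inflated auxiliary exponent $a' > 1 + q_0^{1/2}/(\ldots)^{1/2}$ so that $s/a'$ stays strictly below the critical value where the integral near $w\to\infty$ (equivalently $v\to 0$) would cease to converge, while simultaneously verifying that the stated conditions on $q_0$, $a'$ and $k_0$ remain consistent with $q(v)\ge q_0$ throughout the region of integration. The remaining steps — bounding $(1+w)^{-2}$, collecting constants, and checking the $\max$ thresholds in each regime — are routine once this delicate control near the endpoints is in place.
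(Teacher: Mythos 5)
Your proposal is correct in substance and uses the same two ingredients as the paper: a union-bound decomposition of the difference $bu_1^Tu_1-u_2^Tu_2$ into a central right tail for $u_1$ plus a non-central left tail for $u_2$ thresholded at $\mu^T\mu/[\sigma^2\log\mu^T\mu]$ (handled by Lemma \ref{lem:lefttail_quadform_subgaussian}, constant in $v$), and the crude bound $bu_1^Tu_1-u_2^Tu_2\leq bu_1^Tu_1$ for part (iii). The organization differs, though. The paper performs the split inside the integrand and then recognizes the surviving integral as exactly the central quantity of Proposition \ref{prop:intbound_sg_central} with $h$ replaced by $he^{\mu^T\mu/[a\sigma^2\log\mu^T\mu]}$ (equivalently, radius $r$), so all endpoint bookkeeping ($v_0$, $a'$, $k_0$) is inherited from that proposition as a black box; part (iii) is a one-line reduction to Proposition \ref{prop:intbound_sg_central}(ii). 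You instead integrate the pointwise bounds of Lemma \ref{lem:subgaussian_difquad_tail} directly after substituting $w=1/v-1$; your identity $q(v)=\tfrac{2a}{bd_1}\log\bigl(r/(1/v-1)\bigr)$ is correct and makes transparent why the non-centrality merely promotes $h^{1/2}$ to $r$, and your endpoint analysis correctly explains both the standalone $1/r$ term and why the $a/b>1$ regime of part (i) is dominated by the upper endpoint. The paper's route buys modularity; yours buys a self-contained and arguably more illuminating computation.

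One detail in your sketch is internally inconsistent and must be repaired before the stated rates come out. You propose to split at $w^*=re^{-q_0bd_1/(2a)}$, the point where the tail bound barely applies, while using $k_0=q_0^{1/2}/[(2a/b)\log r]^{1/2}$ uniformly on $(0,w^*)$. But $k(v)=q_0^{1/2}/q(v)^{1/2}$ grows as $w\uparrow w^*$, reaching $k=1$ at $w^*$ where $q(v)=q_0$; your chosen $k_0$ is only a valid uniform bound where $q(v)\gtrsim(2a/[bd_1])\log r$, i.e.\ for $w$ bounded well inside $w^*$. With the worst-case $k_0=1$ the exponent degrades to $a/(2b)$, which is strictly weaker than the claimed $a/(ba')$. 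The cure is to pull the split point inward to $w\asymp r/(\log r)^{d_1/2}$-type values — this is precisely the role of the $[\log h]^{d/2}e^{q_0 d/(2a)}$ factor in the $v_0$ of Proposition \ref{prop:intbound_sg_central}'s proof — and it is exactly this inward split that generates the $\max\{[\,\cdot\,\log r]^{d_1/2},\log r\}$ numerators in the statement. Your closing paragraph's $a'$-inflation is the right fix, so this is a flaw of execution rather than of approach.
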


\subsection{Proof of Lemma \ref{slem:binomial_ogf}.} The Binomial coefficient's ordinary generating function states that
\begin{align}
 \sum_{l=0}^\infty {l \choose k} a^{l-k} = \frac{1}{(1 - a)^{k+1}},
\nonumber
\end{align}
which implies that 
\begin{align}
  \sum_{l=k+1}^{\bar{l}} a^{l-k} {l \choose k}
\leq  \sum_{l=0}^{\infty} a^{l-k} {l \choose k}   -  \sum_{l=k} a^{l-k} {l \choose k}
= \frac{1}{(1 - a)^{k+1}} - 1,
\nonumber
\end{align}
proving the first part of the result.

Now, to derive the limit as $a \rightarrow 0$, note that 
\begin{align}
 \frac{1}{(1 - a)^{k+1}}=
\left([1 - a]^{\frac{1}{a}} \right)^{-(k+1)a}
= \left(\frac{e^{-1}}{[1 - a]^{\frac{1}{a}}} \right)^{(k+1)a} e^{(k+1)a}
\nonumber
\end{align}
where $\lim_{a \rightarrow 0} [1-a]^{1/a}= e^{-1}$ from the definition of the exponential function.
Hence, since $k$ is either fixed or bounded above a constant for all $a$, we have $\lim_{a \rightarrow 0} (k+1)a=0$ and therefore that
\begin{align}
\lim_{a \rightarrow 0  } \frac{\frac{1}{(1 - a)^{k+1}}}{e^{(k+1)a}}
=\lim_{a \rightarrow 0} \frac{\left(\frac{e^{-1}}{[1 - a]^{\frac{1}{a}}} \right)^{(k+1)a} e^{(k+1)a}}{e^{(k+1)a}}= 1^0 = 1.
\nonumber
\end{align}

This implies that
\begin{align}
\lim_{a \rightarrow 0  } \frac{\frac{1}{(1 - a)^{k+1}} -1 }{e^{(k+1)a} -1}= \frac{0}{0}.
\nonumber
\end{align}
To solve the limit we apply l'Hopital's rule and take the derivative of both numerator and denominator,
\begin{align}
 \lim_{a \rightarrow 0} \frac{\frac{(k+1)}{(1-a)^{k+2}}}{(k+1)e^{(k+1)a}}=
\lim_{a \rightarrow 0} \frac{1}{1-a} \frac{\frac{1}{(1-a)^{k+1}}}{e^{(k+1)a}}=1,
\nonumber
\end{align}
and therefore
\begin{align}
 \lim_{a \rightarrow 0  } \frac{\frac{1}{(1 - a)^{k+1}} -1 }{e^{(k+1)a} -1}= 1,
\nonumber
\end{align}
as we wished to prove.

\subsection{Proof of Proposition \ref{prop:intbound_sg_central}, Part (i).}

The proof strategy is to apply Lemma \ref{lem:tail_quadform_subgaussian}(iii) to bound the probability in the term inside the integral defining $U(a,h)$, and then carrying out the integration.

Let $q= a \log(h/(1/v-1)^{2/d})$ in Lemma \ref{lem:tail_quadform_subgaussian}(iii), and note that $q$ is an increasing function in $v$.
To apply Lemma \ref{lem:tail_quadform_subgaussian}(iii) we need that $q \geq 2(1+2^{1/2})^2$, that is
\begin{align}
 \frac{h^{d/2}}{e^{\frac{q_0d}{2a}}} \geq 1/v -1 \Longleftrightarrow
v \geq \left( 1 + \frac{h^{d/2}}{e^{\frac{q_0d}{2a}}} \right)^{-1},
\label{seq:integral_lowlimit}
\end{align}
where to ease the upcoming expressions we defined $q_0=2(1+2^{1/2})^2$.
Note that in particular \eqref{seq:integral_lowlimit} holds for $v=v_0$, where
\begin{align}
 v_0= \left( 1 + \frac{h^{d/2}}{[\log h]^{\frac{d}{2}} e^{\frac{q_0d}{2a}}} \right)^{-1}.
\nonumber
\end{align}
Hence, by Lemma \ref{lem:tail_quadform_subgaussian}(iii),
\begin{align}
& U(a,h) \leq v_0 + \int_{v_0}^1 P \left( \frac{u^Tu}{\sigma^2} > d a \log \left( \frac{h}{(1/v-1)^{2/d}} \right) \right) dv
\nonumber \\
&\leq v_0 + \int_{v_0}^1 \exp \left\{  -\frac{da}{2(1+k(v))} \log \left( \frac{h}{[1/v-1]^{\frac{2}{d}}} \right)  \right\} dv
= v_0 + \int_{v_0}^1 \left[ \frac{(1/v-1)^{\frac{2}{d}}}{h} \right]^{\frac{da}{2(1+k(v))}} dv
\nonumber
\end{align}
for any $k(v) \geq 2^{1/2}(1+2^{1/2})/q^{1/2}$. 
It is easy to show the term $(1/v-1)^{2/d}/h \leq 1$ for all $v \in (v_0,1)$, hence the integral can be upper-bounded by replacing the power $da/[2(1+k(v))]$ by a smaller quantity.
Now, recall that $q$ is an increasing function of $v$ and note that $k(v)$ is largest when $q$ is smallest, i.e. when $v$ is smallest, so that $k(v) \leq k(v_0)$ 
and hence
\begin{align}
 \frac{da}{2(1+k(v))} \geq \frac{da}{2(1+k_0)}
\nonumber
\end{align}
where to ease notation we defined $k_0=k(v_0)$.
Therefore,
\begin{align}
 U(a,h) \leq v_0 + \int_{v_0}^1 \left[ \frac{(1/v-1)^{\frac{2}{d}}}{h} \right]^{\frac{da}{2(1+k_0)}} dv
\label{seq:int_bound1}
\end{align}
Note also that 
\begin{align}
k_0= k(v_0)= \frac{2^{1/2}(1+2^{1/2})}{\left[ a \log \left( \frac{h}{(1/v_0-1)^{\frac{2}{d}}} \right) \right]^{\frac{1}{2}}}
= \frac{q_0^{1/2}}{(q_0 + a \log\log h)^{1/2}}
\nonumber
\end{align}
where the right-hand side follows from simple algebra. Hence as $h$ grows $k_0$ may be taken arbitrarily close to 0.

To bound \eqref{seq:int_bound1}, note that $(1/v-1)^{2/d} \leq (1/v_0-1)^{2/d}= h/\log h < h$, since $\log h>1$ by assumption. Hence the integrand in \eqref{seq:int_bound1} is $<1$ and the integral is upper-bounded by taking a smaller power. 
Recall that by assumption
\begin{align}
a > 1 + \frac{q_0^{1/2}}{(q_0 + a \log\log h)^{1/2}} \Longrightarrow \frac{a}{1+k_0}>1
\nonumber
\end{align}
hence \eqref{seq:int_bound1} is upper-bounded by replacing the power $da/[2(1+k_0)]$ by $d/2$, giving
\begin{align}
U(a,b) &\leq v_0 + \frac{1}{h^{\frac{d}{2}}} \int_{v_0}^1 \frac{1}{v} -1 dv
=v_0 + \frac{1}{h^{\frac{d}{2}}} [\log(\frac{1}{v_0}) - (1-v_0)]
< v_0 + \frac{1}{h^{\frac{d}{2}}} \log(\frac{1}{v_0})
\nonumber \\
&= \left( 1 + \frac{h^{d/2}}{[\log h]^{\frac{d}{2}} e^{\frac{q_0d}{2a}}} \right)^{-1} + \frac{1}{h^{\frac{d}{2}}} \log(\frac{1}{v_0})
<  \frac{[\log h]^{\frac{d}{2}} e^{\frac{q_0d}{2a}}}{h^{d/2}} +
\frac{\log (h^{d/2})}{h^{\frac{d}{2}}}
\nonumber \\
& \leq \frac{2\max \left\{ [e^{\frac{q_0}{a}} \log h]^{\frac{d}{2}}, \log (h^{d/2}) \right\} }{h^{\frac{d}{2}}}
\nonumber
\end{align}
as we wished to prove.

\subsection{Proof of Proposition \ref{prop:intbound_sg_central}, Part (ii).}

The proof strategy is to split $U(a,h)$ as the sum of the integral for $v \in (0,0.5)$ plus that for $v \in (0.5,1)$. The first integral is then bound using Part (i) of this proposition, and the second integral using Lemma \ref{lem:tail_quadform_subgaussian}(iii).

Take any $a'>1 + \frac{q_0^{1/2}}{(q_0 + a \log\log h)^{1/2}}$. Then the integral for $v \in (0,0.5)$ is
\begin{align}
& \int_0^{0.5} P \left( \frac{u^Tu}{\sigma^2} > d a' \log \left( \frac{h^{\frac{a}{a'}}}{[1/v-1]^{\frac{2a}{d a'}}} \right) \right) dv
\nonumber \\
&< \int_0^{0.5} P \left( \frac{u^T u}{\sigma^2} > d a' \log \left( \frac{h^{\frac{a}{a'}}}{[1/v-1]^{\frac{2}{d}}} \right) \right) dv,
\label{seq:tailint_bound1}
\end{align}
since 
\begin{align}
 \frac{1}{[1/v-1]^{\frac{a}{a'}}}= \left( \frac{v}{1-v} \right)^{\frac{a}{a'}} > \frac{v}{1-v},
\nonumber
\end{align}
given that $v/(1-v) < 1$ for $v \in (0,0.5)$ and that $a/a'<1$. Applying Part (i) of the current Proposition \ref{prop:intbound_sg_central} gives that \eqref{seq:tailint_bound1} is
\begin{align}
\leq \frac{2\max \left\{ [e^{\frac{q_0}{a'}} \log (h^{\frac{a}{a'}})]^{\frac{d}{2}}, \log (h^{\frac{ad}{2a'}}) \right\} }{h^{\frac{ad}{2a'}}}.
\label{seq:tailint_bound2}
\end{align}

Next consider the integral for $v \in (0.5,1)$,
\begin{align}
 \int_{0.5}^1 P \left( \frac{u^Tu}{\sigma^2} > d \log \left( h^a \left[ \frac{v}{1-v} \right]^{2a/d} \right) \right) dv
\nonumber
\leq \int_{0.5}^1 P \left( \frac{u^T u}{\sigma^2} > d \log h^a \right) dv=
0.5  P \left( \frac{u^T u}{\sigma^2} > d \log h^a \right),
\nonumber
\end{align}
where in the inequality above we used that $v/(1-v)>1$ for $v \in (0.5,1)$.

We may now use Lemma \ref{lem:tail_quadform_subgaussian}(iii) setting $q= \log h^a$, since $\log h^a > 2 (1+2^{1/2})^2$ by assumption, giving that
\begin{align}
< \frac{1}{2} \exp \left\{ -\frac{d \log(h^a)}{2(1+k_0)} \right\}
= \frac{1}{2} \left( \frac{1}{h} \right)^{\frac{ad}{2(1+k_0)}},
\nonumber
\end{align}
for $k_0= 2^{1/2}(1+2^{1/2})/(a \log h)^{1/2}$.
Combining this expression with \eqref{seq:tailint_bound2} gives that
\begin{align}
 U(a,h) \leq \frac{2\max \left\{ [e^{\frac{q_0}{a'}} \log (h^{\frac{a}{a'}})]^{\frac{d}{2}}, \log (h^{\frac{ad}{2a'}}) \right\} }{h^{\frac{ad}{2a'}}}
+ \frac{1}{2} \left( \frac{1}{h} \right)^{\frac{ad}{2(1+k_0)}},
\nonumber
\end{align}
where recall that $k_0= 2^{1/2}(1+2^{1/2})/(a \log h)^{1/2}$, as we wished to prove.

As a final remark, note that when $a' > 1 + k_0$ the second term is smaller than the first one.
Further note that
\begin{align}
\log \left( \frac{h}{\log h} \right) > \frac{q_0}{a} \Leftrightarrow 
q_0 + a \log\log h < a \log h  \Rightarrow a' > 1 + k_0.
\nonumber
\end{align}
Hence if $\log \left( h/\log h \right) > q_0/a$ we obtain
\begin{align}
 U(a,h) \leq \frac{3\max \left\{ [e^{\frac{q_0}{a'}} \log (h^{\frac{a}{a'}})]^{\frac{d}{2}}, \log (h^{\frac{ad}{2a'}}) \right\} }{h^{\frac{ad}{2a'}}}.
\nonumber
\end{align}

\subsection{Proof of Proposition \ref{prop:intbound_sg_noncentral}, Parts (i)-(ii).}

The proof strategy is to split the integral into two terms, where the first term is an integral involving a central sub-Gaussian that can be bound using Proposition \ref{prop:intbound_sg_central}, and the second term involves a inequality that can be bound with Lemma \ref{lem:lefttail_quadform_subgaussian}.

Denote by $w= a \log(h/[1/v-1]^2)$ and let $w'>0$ be an arbitrary number, then the union bound gives
\begin{align}
& P \left( \frac{bu_1^Tu_1 - u_2^Tu_2}{\sigma^2}  > w \right) =
P \left(  \frac{bu_1^Tu_1 - u_2^Tu_2}{\sigma^2} > \frac{w}{2} + w' - (w' - \frac{w}{2})  \right)
\nonumber \\
&\leq P \left( \frac{b u_1^T u_1}{\sigma^2} > \frac{w}{2} + w' \right) +
P \left( \frac{u_2^Tu_2}{\sigma^2} < w' - \frac{w}{2} \right).
\label{seq:ineq_sgnoncentral}
\end{align}

We shall take $w'= w/2 + \mu^T\mu/[\sigma^2 \log \mu^T \mu]$ so that
$w' - w/2=\mu^T\mu/[\sigma^2 \log \mu^T\mu]$ and 
$w' + w/2= w + \mu^T\mu/[\sigma^2 \log \mu^T\mu]$.
Applying Lemma \ref{lem:lefttail_quadform_subgaussian} immediately gives that the second term in \eqref{seq:ineq_sgnoncentral}.
\begin{align}
 P \left( \frac{u_2^Tu_2}{\sigma^2} < w' - \frac{w}{2} \right) 
= P \left( \frac{u_2^Tu_2}{\sigma^2} < \frac{\mu^T\mu}{\sigma^2 \log \mu^T\mu} \right)
\leq \exp \left\{ -\frac{\mu^T\mu}{8\sigma^2} \left( 1 - \frac{1}{\log \mu^T\mu} \right)\right\}.
\nonumber
\end{align}

The first term in \eqref{seq:ineq_sgnoncentral} is
\begin{align}
 P \left( \frac{b u_1^T u_1}{\sigma^2} > a \log\left(\frac{h}{[1/v-1]^2}\right) + \frac{\mu^T\mu}{\sigma^2 \log \mu^T\mu} \right)=
 P \left( \frac{u_1^T u_1}{\sigma^2} > \frac{a}{b} \log\left(\frac{h e^{ \frac{\mu^T\mu}{a \sigma^2 \log \mu^T\mu}}}{[1/v-1]^2}\right) \right),
\nonumber
\end{align}
giving that
\begin{align}
 U(a,b,h) \leq \exp \left\{ -\frac{\mu^T\mu}{8\sigma^2} \left( 1 - \frac{1}{\log \mu^T\mu} \right)\right\} +
\int_0^1  P \left( \frac{u_1^T u_1}{\sigma^2} > \frac{d_1 a}{b} \log\left(\frac{h^{\frac{1}{d_1}} e^{ \frac{\mu^T\mu}{d_1 a \sigma^2 \log \mu^T\mu}}}{[1/v-1]^{2/d_1} }\right) \right) dv.
\label{seq:ineqint_sgnoncentral}
\end{align}

To bound the second term in \eqref{seq:ineqint_sgnoncentral}, we first split that integral according to the range of $v$ values such that the term inside the log is negative and positive. Note that
\begin{align}
\frac{h e^{ \frac{\mu^T\mu}{a \sigma^2 \log \mu^T\mu}}}{[1/v-1]^{2}} > 1
\Leftrightarrow
h e^{ \frac{\mu^T\mu}{a \sigma^2 \log \mu^T\mu}} > [1/v-1]^{2}
\Leftrightarrow
v > \left( 1 + h^{\frac{1}{2}} e^{ \frac{\mu^T\mu}{2a \sigma^2 \log \mu^T\mu}} \right)^{-1}=v_0,
\nonumber
\end{align}
where we denoted the right-hand side $v_0$ for convenience, hence the second term in \eqref{seq:ineqint_sgnoncentral} is
\begin{align}
\leq v_0 + \int_{v_0}^1  P \left( \frac{u_1^T u_1}{\sigma^2} > \frac{d_1 a}{b} \log\left(\frac{h^{\frac{1}{d_1}} e^{ \frac{\mu^T\mu}{d_1 a \sigma^2 \log \mu^T\mu}}}{[1/v-1]^{2/d_1} }\right) \right) dv.
\label{seq:ineqint_sgnoncentral_term2}
\end{align}

To bound the second term in \eqref{seq:ineqint_sgnoncentral_term2} we use Proposition \ref{prop:intbound_sg_central}. We do this separately for Part (i) and (ii).

\vspace{3mm}
\underline{{\bf Part (i).}} By assumption we have that 
\begin{align}
 \frac{a}{b} > 1 + \frac{q_0^{1/2}}{\left(q_0 + \frac{a}{b d_1} \log\log (h e^{ \frac{\mu^T\mu}{a \sigma^2 \log \mu^T\mu}})\right)^{\frac{1}{2}}},
\nonumber
\end{align}
where $q_0= 2(1+2^{1/2})^2$. 
To apply Then Proposition \ref{prop:intbound_sg_central}(i) we also need that
\begin{align}
h^{\frac{1}{d_1}} e^{ \frac{\mu^T\mu}{d_1 a \sigma^2 \log \mu^T\mu}}  > e
\Leftrightarrow
\log(h) + \frac{\mu^T\mu}{a \sigma^2 \log \mu^T\mu} > d_1,
\nonumber
\end{align}
which also holds by assumption.

Then Proposition \ref{prop:intbound_sg_central}(i) gives that the second term in \eqref{seq:ineqint_sgnoncentral_term2} is
\begin{align}
& \leq \frac{2 \max \left\{ e^{\frac{q_0b}{a}} \log \left(h^{\frac{1}{d_1}} e^{ \frac{\mu^T\mu}{d_1 a \sigma^2 \log \mu^T\mu}}  \right) , \log \left( h^{\frac{1}{2}} e^{ \frac{\mu^T\mu}{2 a \sigma^2 \log \mu^T\mu}} \right)  \right\} }{h^{\frac{1}{2}} e^{ \frac{\mu^T\mu}{2 a \sigma^2 \log \mu^T\mu}}}
= \frac{2 \max \left\{ \frac{e^{\frac{q_0b}{a}}}{d_1/2} \log \left(r \right) , \log \left( r \right)  \right\} }{r}
\nonumber
\end{align}
where $r=h^{\frac{1}{2}} e^{ \frac{\mu^T\mu}{2 a \sigma^2 \log \mu^T\mu}}$.
Combining this expression with \eqref{seq:ineqint_sgnoncentral} and \eqref{seq:ineqint_sgnoncentral_term2} and noting that $v_0= (1+h_0)^{-1} \leq 1/h_0$ gives that
\begin{align}
 U(a,b,h) \leq \exp \left\{ -\frac{\mu^T\mu}{8\sigma^2} \left( 1 - \frac{1}{\log \mu^T\mu} \right)\right\} +
\frac{1}{r} +
\frac{2 \max \left\{ \frac{e^{\frac{q_0b}{a}}}{d_1/2} \log \left(r \right) , \log \left( r \right)  \right\} }{r},
\nonumber
\end{align}
as we wished to prove.

\vspace{3mm}
\underline{{\bf Part (ii).}}
By assumption we have that $a/b \leq 1$ and that
\begin{align}
 \log \left( h^{1/d_1} e^{\frac{\mu^T\mu}{d_1 a \sigma^2 \log \mu^T\mu}} \right) > \frac{2 (1+2^{1/2})^2 b}{a}
\Leftrightarrow
 \log( h ) + \frac{\mu^T\mu}{a \sigma^2 \log \mu^T\mu} > \frac{q_0 b d_1}{a},
\nonumber
\end{align}
where recall that $q_0=2 (1+2^{1/2})^2$, which are the two conditions to apply Proposition \ref{prop:intbound_sg_central}(i) to the second term in \eqref{seq:ineqint_sgnoncentral_term2}.
Hence Proposition \ref{prop:intbound_sg_central}(i) gives that the second term in \eqref{seq:ineqint_sgnoncentral_term2} is
\begin{align}
 \leq \frac{2 \max \left\{ \left[ \frac{a e^{\frac{q_0}{a'}}}{a'b d_1/2} \log r \right]^{\frac{d_1}{2}} , \log \left( r^{\frac{a}{b a'}} \right)  \right\}}{r^{\frac{a}{b a'}}}
+ \frac{1}{2 r^{\frac{a}{b (1+k_0)}}}
\nonumber
\end{align}
where $r= h^{1/2} e^{\mu^T\mu/[2 a \sigma^2 \log \mu^T\mu]}$,
$a'= 1 + q_0^{1/2}/[q_0 + a \log \log r / (b d_1/2)]^{1/2}$ and
$k_0= q_0^{1/2}/[(2 a/b) \log r]^{1/2}$.
Combining this expression with \eqref{seq:ineqint_sgnoncentral} and \eqref{seq:ineqint_sgnoncentral_term2} and noting that $v_0= (1+h_0)^{-1} \leq 1/h_0$ gives that $U(a,b,h) \leq$
\begin{align}
 &\exp \left\{ -\frac{\mu^T\mu}{8\sigma^2} \left( 1 - \frac{1}{\log \mu^T\mu} \right)\right\} +
\frac{1}{r} +
\frac{2 \max \left\{ \left[ \frac{a e^{\frac{q_0}{a'}}}{a'b d_1/2} \log r \right]^{\frac{d_1}{2}} , \log \left( r^{\frac{a}{b a'}} \right)  \right\}}{r^{\frac{a}{b a'}}}
+ \frac{1}{2 r^{\frac{a}{b (1+k_0)}}}
\nonumber \\
&<
\exp \left\{ -\frac{\mu^T\mu}{8\sigma^2} \left( 1 - \frac{1}{\log \mu^T\mu} \right)\right\} +
\frac{2 \max \left\{ \left[ \frac{a e^{\frac{q_0}{a'}}}{a'b d_1/2} \log r \right]^{\frac{d_1}{2}} , \log \left( r^{\frac{a}{b a'}} \right)  \right\}}{r^{\frac{a}{b a'}}}
+ \frac{3}{2 r^{\frac{a}{b (1+k_0)}}}
\nonumber
\end{align}
since $a/[b(1+k_0)] < 1$ and $r>1$, as we wished to prove.

\subsection{Proof of Proposition \ref{prop:intbound_sg_noncentral}, Part (iii).}

The proof strategy is to note that, since $b u_1^Tu_1 - u_2^Tu_2 \leq b u_1^Tu_1$, it holds that
\begin{align}
 U(a,b,h) \leq \int_0^1 P \left( \frac{u_1^Tu_1}{\sigma^2} > d_1 \frac{a}{b} \log \left( \frac{h^{2/d_1}}{(1/v-1)^{1/d_1}} \right) \right) dv,
\nonumber
\end{align}
where the right-hand side can be bound using the result for central sub-Gaussians in Proposition \ref{prop:intbound_sg_central}.
Specifically, since $a/b \leq 1$ and $\log(h^{1/d_1}) > 2(1+2^{1/2})^2 b/a$ by assumption, we may directly apply Proposition \ref{prop:intbound_sg_central}(ii) to obtain that
\begin{align}
 U(a,b,h) \leq \frac{2 \max \left\{([e^{q_0}/d_1] \log(h^{\frac{a}{b a'}}))^{d_1/2}  , \log(h^{\frac{a}{2b a'}}) \right\}}{h^{\frac{a}{2b a'}}}
+ \frac{1}{2 h^{\frac{a}{2b(1+k_0)}}}
\nonumber
\end{align}
for any 
$$
a'> 1 + \frac{q_0^{1/2}}{[q_0 + (a/b) \log \log (h^{1/d_1})]^{1/2}},
$$
where $q_0= 2(1+2^{1/2})^2$ and $k_0= [q_0b d_1/(a \log h)]^{1/2}$.

\newpage
\section{Proof of Theorem \ref{thm:pp}}
\label{supplsec:proof_pp}

Let $S= \left\{ \gamma : \gamma^* \subset \gamma \right\}$ be the set of overfitted models and $S^c= \left\{  \gamma: \gamma^* \not\in \gamma \right\}$ that of non-overfitted models.
Their expected posterior probabilities under the data-generating $F$ are
\begin{align}
 E_F\left[ p(S \mid y) \right]&=
\sum_{l=|\gamma^*|_0+1}^{\bar{q}} \sum_{\gamma \in S, |\gamma|_0=l} E_F \left( p(\gamma \mid y) \right)
\label{seq:pp_spurious} \\
 E_F\left[ p(S^c \mid y) \right]&= \sum_{l=0}^{\bar{q}} \sum_{\gamma \in S^c, |\gamma|_0=l} E_F \left( p(\gamma \mid y) \right)
\label{seq:pp_nonspurious}
\end{align}
where $\bar{q}$ is the maximum model size (as defined by the model space prior), and
\begin{align}
 E_F \left( p(\gamma \mid y) \right)=
E_F \left( \frac{1}{1 + \sum_{\gamma' \neq \gamma} B_{\gamma' \gamma} \frac{p(\gamma')}{p(\gamma)}} \right)
\leq E_F \left( \frac{1}{1 + B_{\gamma^* \gamma} \frac{p(\gamma^*)}{p(\gamma)}} \right).
\nonumber
\end{align}
Since the right-hand side is the expectation of a positive random variable taking values in $[0,1]$, it may be obtained by integrating its survival (or right-tail probability) function, that is
\begin{align}
 & E_F (p(\gamma \mid y)) \leq
\int_0^1 P_F \left( \left[ 1 + B_{\gamma^* \gamma} \frac{p(\gamma^*)}{p(\gamma)} \right]^{-1} > v \right) dv
=\int_0^1 P_F \left( B_{\gamma \gamma^*} > \frac{p(\gamma^*)/p(\gamma)}{1/v -1} \right) dv.
\label{seq:pp_singlemodel}
\end{align}

For later reference note that under our model space prior
\begin{align}
 \frac{p(\gamma)}{p(\gamma^*)}= \frac{p(|\gamma|_0)}{p(|\gamma^*|_0)} \frac{{q \choose |\gamma^*|_0}}{{q \choose |\gamma|_0}}
= q^{-c(|\gamma|_0 - |\gamma^*|_0)} \frac{{q \choose |\gamma^*|_0}}{{q \choose |\gamma|_0}}
= q^{-c(|\gamma|_0 - |\gamma^*|_0)} \frac{{|\gamma|_0 \choose |\gamma^*|_0 }}{{q - |\gamma^*|_0 \choose |\gamma|_0 - |\gamma^*|_0}},
\label{seq:ratio_priormodelprob}
\end{align}
provided that both model sizes $|\gamma|_0 \leq \bar{q}$ and $|\gamma^*|_0 \leq \bar{q}$,
where $c \geq 0$ is the Complexity prior's parameter, and recall that $c=0$ corresponds to a Beta-Binomial(1,1) prior.
Note also that using \eqref{seq:bf_term1} and \eqref{eq:ineq_bayesssdif} gives that for large enough $n$
\begin{align}
 B_{\gamma \gamma^*} \leq
\begin{cases}
 (g n k_2)^{\frac{|\gamma^*|_0 - |\gamma|_0}{2}} \exp \left\{ (s_\gamma - s_{\gamma^*})(1 + \delta)/2 \right\} \mbox{, if } s_\gamma - s_{\gamma^*} \geq 0
\\
 (g n k_2)^{\frac{|\gamma^*|_0 - |\gamma|_0}{2}} \exp \left\{ (s_\gamma - s_{\gamma^*})(1 - \delta)/2 \right\} \mbox{, if } s_\gamma - s_{\gamma^*} <0
\end{cases}
\nonumber
\end{align}
where $k_2= \bar{l}_{\gamma^*}(1 + \delta) / \underline{l}_\gamma$, $\delta$ is a constant that can be taken arbitrarily close to 0, 
$\bar{l}_{\gamma^*}$ and $\underline{l}_\gamma$ are the eigenvalues defined in Condition (A2) (i.e. $k_2$ is bounded by a constant under (A2))
and \begin{align}
 s_\gamma= \tilde{y}^T \widetilde{W}_\gamma (\widetilde{W}_\gamma^T \widetilde{W}_\gamma)^{-1} \widetilde{W}_\gamma^T \tilde{y}=
\tilde{\eta}_\gamma^T \widetilde{W}_\gamma^T \widetilde{W}_\gamma \tilde{\eta}_\gamma
\nonumber
\end{align}
is the sum of explained squares by the least-squares estimator under model $\gamma$.

The proof strategy is to bound $E_f ( p(\gamma \mid y) )$ separately for overfitted $\gamma \in S$ and non-overfitted $\gamma \in S^c$, and then carrying out the deterministic sums in \eqref{seq:pp_spurious} and \eqref{seq:pp_nonspurious}.

\subsection{Single overfitted model}

Using \eqref{seq:pp_singlemodel} and \eqref{seq:bfrighttail_overfitted}, we have that
\begin{align}
 E_F (p(\gamma \mid y)) &\leq 
\int_0^1 P_F \left( B_{\gamma \gamma^*} > \frac{p(\gamma^*)/p(\gamma)}{1/v -1} \right) dv
\nonumber \\
&\leq \int_0^1 P_F \left( \frac{u^T u}{\tilde{\omega}}  > \frac{|\gamma|_0 - |\gamma^*|_0}{\tilde{\omega} (1 + \delta)} \log \left( g n k_2 \left[ \frac{p(\gamma^*)/p(\gamma)}{1/v-1} \right]^{\frac{2}{|\gamma|_0 - |\gamma^*|_0}} \right) \right) dv,
\label{seq:modelppbound_overfitted}
\end{align}
where $u \sim SG(0, \tilde{\omega})$ is a $|\gamma|_0 - |\gamma^*|_0$ dimensional sub-Gaussian vector, $\tilde{\omega}=\omega \tau$, $\omega$ is the sub-Gaussian dispersion parameter associated to $F$ and $\tau$ the largest eigenvalue of $\Sigma^{-1}$.

Since \eqref{seq:modelppbound_overfitted} corresponds to the integral $U(a,h)$ in Proposition \ref{prop:intbound_sg_central}, setting
$a= 1/[\tilde{\omega} (1 + \delta)]$ and $h= gnk_2 [p(\gamma^*)/p(\gamma)]^{2/(|\gamma|_0 - |\gamma^*|_0)}$,
and sub-Gaussian parameters $\sigma^2= \tilde{\omega}$ and $d= |\gamma|_0 - |\gamma^*|_0$.
Condition (B2) implies that $a \leq 1$, so that we may apply Part (ii) of Proposition \ref{prop:intbound_sg_central}. Note that if Condition (B2) did not hold then we would apply Part (i), which corresponds to a faster rate, i.e. (B2) considers a worse-case scenario.
Note also that Part (ii) requires that
\begin{align}
 \log h > \frac{q_0}{a}= q_0 \omega \tau (1+\delta)
\nonumber
\end{align}
which holds since
\begin{align}
\log h= \log(gnk_2) + \frac{2}{|\gamma|_0 - |\gamma^*|_0} \log \frac{p(\gamma^*)}{p(\gamma)} \geq \log(gnk_2)
\label{seq:logh_grows}
\end{align}
and the right-hand side diverges to infinity under Assumption (A4), where we used that $p(\gamma^*) \geq p(\gamma)$ for $|\gamma|_0 > |\gamma^*|_0$, from Assumption (B3).

Hence we may apply Proposition \ref{prop:intbound_sg_central}(ii) to \eqref{seq:modelppbound_overfitted}, obtaining
\begin{align}
 E_F (p(\gamma \mid y)) \leq 
\frac{2 \max \left\{  [e^{q_0} \log(h^{a/a'})]^{d/2}, \log\left(h^{\frac{ad}{2a'}}\right)\right\}}{h^{\frac{ad}{2a'}}} + \frac{1}{2 h^{\frac{ad}{2(1+k_0)}}},
\label{seq:modelppbound_overfitted2}
\end{align}
where $k_0= q_0^{1/2}/(a \log h)^{1/2}$ and 
$a'= 1 + q_0^{1/2}/(q_0 + a \log\log h)^{1/2}$.
Using \eqref{seq:logh_grows}, Assumption (B3) gives that $\log h > q_0/a + \log\log h$ and hence that $1+ k_0 \leq a'$, thus two times the second term in the right-hand side of \eqref{seq:modelppbound_overfitted2} is smaller than the first term.
Noting that as $n$ grows one may take $a'$ arbitrarily close to 1, it is also simple to see that the first term in \eqref{seq:modelppbound_overfitted2} is upper-bounded by $2 e^{q_0}/h^{\frac{ad(1-\epsilon)}{2}}$, for any fixed $\epsilon >0$, and all $n > n_0$ for some fixed $n_0$. Combining these observations gives that
\begin{align}
& E_F (p(\gamma \mid y)) \leq 2.5 h^{-\frac{ad(1-\epsilon)}{2}}
=2.5 \left( \frac{e^{q_0}}{(g n k_2)^{a (1-\epsilon)}} \right)^{\frac{d}{2}} \left( \frac{p(\gamma)}{p(\gamma^*)} \right)^{a (1-\epsilon)}
\nonumber \\
&=2.5 \left( \frac{b}{(g n)^{\frac{1-\epsilon}{\omega \tau}}} \right)^{\frac{|\gamma|_0 - |\gamma^*|_0}{2}} \left( \frac{p(\gamma)}{p(\gamma^*)} \right)^{\frac{1-\epsilon}{\omega \tau}}
 \label{seq:modelppbound_overfitted3}
\end{align}
for all $n \geq n_0$, where $b= e^{2(1+2^{1/2})^2}/k_2^{(1-\epsilon)/[\omega \tau]}$ is a constant.

\subsection{Sum across overfitted models}

Plugging \eqref{seq:modelppbound_overfitted3} into \eqref{seq:pp_spurious} gives that
\begin{align}
 E_F[p(S \mid y)] \leq \sum_{l=|\gamma^*|_0+1}^{\bar{q}} \sum_{|\gamma|_0=l, \gamma \supset \gamma} 
2.5 \left( \frac{b}{(g n)^{\frac{1-\epsilon}{\omega \tau}}} \right)^{\frac{|\gamma|_0 - |\gamma^*|_0}{2}} 
\left( \frac{p(\gamma)}{p(\gamma^*)} \right)^{\frac{1-\epsilon}{\omega \tau}}
\label{seq:pp_spurious_bound}
\end{align}
for all $n \geq n_0$ and some fixed $n_0$ and $b>0$, where $\epsilon$ is a constant that may be taken arbitrarily close to 0 as $n$ grows.

To prove the desired result we plug in the expression for $p(\gamma)/p(\gamma^*)$, and then use algebraic manipulation and Lemma \ref{slem:binomial_ogf} to carry out the summation. 
To alleviate upcoming expressions let $r= (1-\epsilon)/[\omega \tau]$, and recall that $r<1$.
Plugging \eqref{seq:ratio_priormodelprob} into \eqref{seq:pp_spurious_bound} gives
\begin{align}
  E_F[p(S \mid y)] \leq 2.5 \sum_{l=|\gamma^*|_0+1}^{\bar{q}} 
\left( \frac{b}{(g n)^{r}} \right)^{\frac{l - |\gamma^*|_0}{2}} 
q^{-cr(l - |\gamma^*|_0)} {l \choose |\gamma^*|_0 }^{r} {q - |\gamma^*|_0 \choose l - |\gamma^*|_0}^{-r}
\sum_{|\gamma|_0=l, \gamma \supset \gamma^*}  1
\nonumber \\
= 2.5 \sum_{l=|\gamma^*|_0+1}^{\bar{q}} 
\left( \frac{b^{1/2}}{q^{cr} (g n)^{r/2}} \right)^{l - |\gamma^*|_0}
{l \choose |\gamma^*|_0 }^r {q - |\gamma^*|_0 \choose l - |\gamma^*|_0}^{1-r}
\nonumber \\
<
2.5 \sum_{l=|\gamma^*|_0+1}^{\bar{q}} 
\left( \frac{b^{1/2} q^{1-r}}{q^{cr} (g n)^{r/2}} \right)^{l - |\gamma^*|_0}
{l \choose |\gamma^*|_0 } 
\label{seq:pp_spurious_bound2}
\end{align}
where in the second line of \eqref{seq:pp_spurious_bound2} we used that $\sum_{\gamma \in S, |\gamma|_0=l}  1= {q - |\gamma^*|_0 \choose l - |\gamma^*|_0}$ is the number of spurious models adding $l - |\gamma^*|_0$ out of the $q - |\gamma^*|_0$ spurious parameters,
and in the third line of \eqref{seq:pp_spurious_bound2} we used that
${l \choose |\gamma^*|_0}^r < {l \choose |\gamma^*|_0}$ (since $r \in (0,1)$) and that
${q - |\gamma^*|_0 \choose l - |\gamma^*|_0}  < (q-|\gamma^*|_0)^{l - |\gamma^*|_0} \leq q^{l - |\gamma^*|_0}$.

Lemma \ref{slem:binomial_ogf} gives that the right-hand side of \eqref{seq:pp_spurious_bound2} is
\begin{align}
< 2.5
\left[ \left(1 - \frac{b^{1/2}}{q^{r(c + 1) -1} (g n)^{\frac{r}{2}}} \right)^{-(|\gamma^*|_0 +1)} - 1 \right].
\label{seq:pp_spurious_bound3}
\end{align}

Lemma \ref{slem:binomial_ogf} also gives a simpler asymptotic version of \eqref{seq:pp_spurious_bound3}, under the condition that
\begin{align}
\lim_{n \rightarrow \infty} q^{r(c + 1) -1} (g n)^{r/2} = \infty
\Leftrightarrow \lim_{n\rightarrow \infty} q^{\frac{1-\epsilon}{2 \omega \tau}(c + 1) -1} (g n)^{\frac{1-\epsilon}{2 \omega \tau}}= \infty
\nonumber
\end{align}
which holds under Condition (B4).
Hence, applying Lemma \ref{slem:binomial_ogf} gives that
\begin{align}
2.5
\frac{b^{1/2} (|\gamma^*|_0 +1)}{q^{r(c + 1) -1} (g n)^{\frac{r}{2}}},
\nonumber
\end{align}
which proves our stated result.

\subsection{Single non-overfitted model}

From \eqref{seq:bftailprob_notsubsetcase}, for any non-overfitted model $\gamma \not\supset \gamma^*$ we have that
\begin{align}
 E_F (p(\gamma \mid y)) &\leq 
\int_0^1 P_F \left( B_{\gamma \gamma^*} > \frac{p(\gamma^*)/p(\gamma)}{1/v -1} \right) dv
\nonumber \\
&\leq \int_0^1 P_F \left( \frac{\frac{1+\delta}{1-\delta}u_1^T u_1 - u_2^T u_2}{\tilde{\omega}}  > \frac{1}{\tilde{\omega} (1 + \delta)} \log \left( \frac{(g n k_2)^{|\gamma|_0 - |\gamma^*|_0} [p(\gamma^*)/p(\gamma)]^2}{(1/v-1)^2} \right) \right) dv,
\label{seq:modelppbound_nonoverfitted}
\end{align}
where $\tilde{\omega}=\omega \tau$ and $\delta$ are as in \eqref{seq:modelppbound_overfitted},
$u_1 \sim SG(0,\tilde{\omega})$ has dimension $|\gamma'|_0 - |\gamma^*|_0$, $\gamma'= \gamma \cup \gamma^*$ is the model with design matrix combining all columns in $\gamma$ and those in $\gamma^*$, and $u_2 \sim SG(\mu_\gamma, \tilde{\omega})$ is a $|\gamma'|_0-|\gamma|_0$ dimensional sub-Gaussian vector with $\mu_\gamma= (Z_{\gamma' \setminus \gamma}^T Z_{\gamma' \setminus \gamma})^{-1/2} Z_{\gamma' \setminus \gamma}^T (\widetilde{W}_{\gamma^*} \eta_{\gamma^*}^* - \widetilde{W}_\gamma \eta_\gamma^*) \neq 0$.
Recall that $\delta$ should be thought of as a constant arbitrarily close to 0, that $\tilde{\omega}$ is bounded by constants under our assumptions, and that from \eqref{seq:noncentrality_param} the non-centrality parameter can be written as
\begin{align}
\lambda_\gamma= \mu_\gamma^T\mu_\gamma= (\widetilde{W}_{\gamma^*} \eta_{\gamma^*}^*)^T (I - H_\gamma) \widetilde{W}_{\gamma^*} \eta_{\gamma^*}^*.
\end{align}

The strategy is to note that \eqref{seq:modelppbound_nonoverfitted} is an integral of the form considered in Proposition \ref{prop:intbound_sg_noncentral}.
Specifically in Proposition \ref{prop:intbound_sg_noncentral} take
$\sigma^2= \tilde{\omega}$, $b=(1+\delta)/(1-\delta)$, $a= 1/[\tilde{\omega} (1+\delta)]$,
and $h=(g n k_2)^{|\gamma|_0 - |\gamma^*|_0} [ p(\gamma^*)/p(\gamma) ]^2$, and the sub-Gaussian dimensions to be
$d_1=|\gamma'|_0 - |\gamma^*|_0$ and $d_2=|\gamma'|_0 - |\gamma|_0$.
Proposition \ref{prop:intbound_sg_noncentral}(ii) considers the case where $a/b \leq 1$, whereas Part (i) considers $a/b$ that is sufficiently larger than 1. 
Since $\tilde{\omega}=\omega \tau > 1$ by Assumption (B2), we have that
\begin{align}
 \frac{a}{b} = \frac{(1-\delta)}{\tilde{\omega} (1+\delta)^2} < 1.
\nonumber
\end{align}
Note that if Assumption (B2) were not to hold then we could apply Proposition \ref{prop:intbound_sg_noncentral}(ii), which leads to faster rates.

The other condition to apply Proposition \ref{prop:intbound_sg_noncentral}(ii) is that
\begin{align}
& \log(h) + \frac{\lambda_\gamma}{a \sigma^2 \log \lambda_\gamma} > \frac{q_0bd_1}{a}
\Leftrightarrow
\nonumber \\
& (|\gamma|_0 - |\gamma^*|_0) \log(g n k_2) + 2 \log\left( \frac{p(\gamma^*)}{p(\gamma)} \right)
+ \frac{(1 + \delta) \lambda_\gamma}{\log \lambda_\gamma} > \frac{q_0 (|\gamma'|_0 - |\gamma^*|_0) \omega \tau (1+\delta)^2}{(1-\delta)}
\nonumber
\end{align}
where $q_0= 2(1+2^{1/2})^2$. This condition follows from Assumption (B5) for models of size $|\gamma|_0 \leq |\gamma^*|_0$, and from (B5') for models of size $|\gamma|_0 > |\gamma^*|_0$.
To see this, (B5) implies
\begin{align}
 (|\gamma|_0 - |\gamma^*|_0) \log(g n k_2) + 2 \log\left( \frac{p(\gamma^*)}{p(\gamma)} \right)
+ \frac{(1 + \delta) \lambda_\gamma}{\log \lambda_\gamma} > \frac{q_0 |\gamma|_0 \omega \tau (1+\delta)^2}{(1-\delta)}
\nonumber \\
\Rightarrow (|\gamma|_0 - |\gamma^*|_0) \log(g n k_2) + 2 \log\left( \frac{p(\gamma^*)}{p(\gamma)} \right)
+ \frac{(1 + \delta) \lambda_\gamma}{\log \lambda_\gamma} > \frac{q_0 (|\gamma'|_0 - |\gamma^*|_0) \omega \tau (1+\delta)^2}{(1-\delta)},
\nonumber
\end{align}
where we used that $|\gamma'|_0-|\gamma^*|_0 \leq |\gamma|_0$, and that $q_0$, $\omega$, $\tau$ and $\delta$ are constants.

Hence we may apply Proposition \ref{prop:intbound_sg_noncentral}(ii) to bound \eqref{seq:modelppbound_nonoverfitted}, obtaining that $ E_F (p(\gamma \mid y))$
\begin{align}
 \leq  
 \exp \left\{ -\frac{\lambda_\gamma}{8\sigma^2} \left( 1 - \frac{1}{\log \lambda_\gamma} \right)\right\} +
\frac{2 \max \left\{ \left[ \frac{a e^{\frac{q_0}{a'}}}{a'b d_1/2} \log r \right]^{\frac{d_1}{2}} , \log \left( r^{\frac{a}{b a'}} \right)  \right\}}{r^{\frac{a}{b a'}}}
+ \frac{3/2}{r^{\frac{a}{b (1+k_0)}}},
\label{seq:modelppbound_nonoverfitted2}
\end{align}
where $r=h^{1/2} e^{\lambda_\gamma/[2 \log \lambda_\gamma]]}$,
 $a'= 1 + q_0^{1/2}/[q_0 + a \log \log (2r / (b d_1))]^{1/2}$ and
$k_0= q_0^{1/2}/[(2 a/b) \log r]^{1/2}$.
For the sake of precision, in the particular case where $\gamma \subset \gamma^*$ is a strict subset of the optimal $\gamma^*$ (i.e. $\gamma$ misses some parameters from $\gamma^*$, but does not add any spurious parameters), then the second and third terms in Expression \eqref{seq:modelppbound_nonoverfitted2} are equal to zero (see the proposition's proof). To simplify the upcoming presentation we keep these terms in the remainder of the proof, however, with the understanding that we define $d_1=1$ in these cases.

The rest of this sub-section is devoted to simplifying \eqref{seq:modelppbound_nonoverfitted2}.
To simplify \eqref{seq:modelppbound_nonoverfitted2} it is possible to show that under Assumption (B5)  $\lim_{n \rightarrow \infty} r/bd_1= \infty$, hence $a'$ may be taken arbitrarily close to 1 and $k_0$ arbitrarily close to 0 as $n$ grows, and the third term in \eqref{seq:modelppbound_nonoverfitted2} is asymptotically smaller than the second term. 
To see this, (B5) implies that
\begin{align}
&\lim_{n \rightarrow \infty} (|\gamma|_0 - |\gamma^*|_0) \log(gnk_2) + 2 \log \left( \frac{p(\gamma^*)}{p(\gamma)} \right) + \frac{\lambda_\gamma}{\log \lambda_\gamma} - 2 \log(|\gamma|_0)= \infty \Leftrightarrow
\nonumber \\
&\lim_{n \rightarrow \infty} \log(h) + \frac{\lambda_\gamma}{\log \lambda_\gamma} - 2 \log(|\gamma|_0)= \infty
\Rightarrow
\nonumber \\
&\lim_{n \rightarrow \infty} \frac{1}{2}\log(h) + \frac{\lambda_\gamma}{2\log \lambda_\gamma} - \log(|\gamma'|_0 - |\gamma^*|_0)= \infty
\Leftrightarrow
\nonumber \\
&\lim_{n \rightarrow \infty} \frac{h^{1/2} e^{\lambda_\gamma/[2 \log \lambda_\gamma]]}}{b (|\gamma'|_0 - |\gamma^*|_0)}= \infty 
\Leftrightarrow \lim_{n \rightarrow \infty} \frac{r}{bd_1}= \infty,
\nonumber
\end{align}
since $b=(1+\delta)/(1-\delta)$ is a constant, and $|\gamma|_0 \geq |\gamma'|_0 - |\gamma^*|_0$.

These observations imply that there is a fixed $n_0$ such that for all $n \geq n_0$ we have
\begin{align}
 E_F (p(\gamma \mid y)) \leq  
 \exp \left\{ -\frac{\lambda_\gamma}{8\sigma^2} \left( 1 - \frac{1}{\log \lambda_\gamma} \right)\right\} +
\frac{3.5 \max \left\{ \left[ \frac{2 e^{q_0}}{\tilde{\omega}} \log r \right]^{\frac{|\gamma|_0}
{2}} , \log \left( r^{\frac{1}{\tilde{\omega}}} \right)  \right\}}{r^{\frac{1}{\tilde{\omega}}}}.
\label{seq:modelppbound_nonoverfitted3}
\end{align}
where we used that $d_1= |\gamma'|_0 - |\gamma^*|_0 \in [1, |\gamma|_0]$.

Although not essential to carry out the proof, this expression can be further simplified using Assumption (B5), by showing that both terms are asymptotically smaller than $r^{(1-\delta)/\tilde{\omega}}$ for any fixed $\delta>0$.
Clearly $\log(r^{\frac{1}{\tilde{\omega}}})/r^{\frac{1}{\tilde{\omega}}}$ is asymptotically smaller than $r^{(1-\delta)/\tilde{\omega}}$.
Hence, denoting $z= r^{1/\tilde{\omega}}= h^{1/[2\tilde{\omega}]} e^{\lambda_\gamma/[2 \tilde{\omega} \log \lambda_\gamma]}$, we just need to show that for any $\epsilon >0$
\begin{align}
&\lim_{n \rightarrow \infty} \frac{z^\epsilon}{[ \log z ]^{d_1/2}}= \infty
\Leftrightarrow
\lim_{n \rightarrow \infty} \epsilon \log(z) - \frac{d_1}{2} \log\log z= \infty
\Leftrightarrow
\nonumber \\
&\lim_{n \rightarrow \infty} \frac{\epsilon}{\tilde{\omega}} \left[\log(h) + \frac{\lambda_\gamma}{2 \tilde{\omega} \log \lambda_\gamma} \right]
- (|\gamma'|_0 - |\gamma^*|_0) \log \left( \log(h) + \frac{\lambda_\gamma}{2 \tilde{\omega} \log \lambda_\gamma} \right)= \infty.
\nonumber
\end{align}
Since $\tilde{\omega}$ is bounded by constants, and $|\gamma'|_0 - |\gamma^*|_0 \leq |\gamma|_0$, it suffices that
\begin{align}
 \lim_{n \rightarrow \infty} \left[\log(h) + \frac{\lambda_\gamma}{2 \tilde{\omega} \log \lambda_\gamma} \right]
- \frac{|\gamma|_0}{\epsilon} \log \left( \log(h) + \frac{\lambda_\gamma}{2 \tilde{\omega} \log \lambda_\gamma} \right)= \infty.
\nonumber
\end{align}
This latter condition holds Assumption (B5), since (B5) implies that for every fixed $\kappa = 1/\epsilon>0$ 
\begin{align}
\lim_{n \rightarrow \infty} \log(h) + \frac{(1+\delta) \lambda_\gamma}{\log \lambda_\gamma} - \frac{|\gamma|_0}{\epsilon} \log \left( \log(h) + \frac{\lambda_\gamma}{\log \lambda_\gamma} \right) = \infty
\nonumber
\end{align}

In conclusion, plugging  in $h=(g n k_2)^{|\gamma|_0 - |\gamma^*|_0} [ p(\gamma^*)/p(\gamma) ]^2$ gives that,
for every sufficiently large $n$,
\begin{align}
 E_F (p(\gamma \mid y)) \leq  
e^{ -\frac{\lambda_\gamma (1-\epsilon)}{8\tilde{\omega}}}
+ 3.5 \left( \frac{ p(\gamma)/p(\gamma^*) }{(gnk_2)^{\frac{|\gamma|_0 - |\gamma^*|_0}{2}} e^{\frac{\lambda_\gamma}{2 \tilde{\omega}}}} \right)^{1-\epsilon}
\label{seq:modelppbound_nonoverfitted4}
\end{align}
for any fixed $\epsilon>0$.

\subsection{Single non-overfitted model of size $|\gamma|_0 > |\gamma^*|_0$}

The goal is to bound \eqref{seq:modelppbound_nonoverfitted}. The strategy is to Proposition \ref{prop:intbound_sg_noncentral}(iii).
Specifically, in Proposition \ref{prop:intbound_sg_noncentral}(iii) take 
$\sigma^2= \tilde{\omega}$, $b=(1+\delta)/(1-\delta)$, $a= 1/[\tilde{\omega} (1+\delta)]$,
and $h=(g n k_2)^{|\gamma|_0 - |\gamma^*|_0} [ p(\gamma^*)/p(\gamma) ]^2$, and the sub-Gaussian dimensions to be
$d_1=|\gamma'|_0 - |\gamma^*|_0$ and $d_2=|\gamma'|_0 - |\gamma|_0$.

Proposition \ref{prop:intbound_sg_noncentral}(iii) requires $a/b \leq 1 \Leftrightarrow (1-\delta)/[\omega \tau (1+\delta)^2]$, which holds under Assumption (B2), since $\delta$ is a constant taken arbitrarily close to 0.
Proposition \ref{prop:intbound_sg_noncentral}(iii) also requires that
\begin{align}
& \log h > \frac{q_0 b d_1}{a} \Leftrightarrow
(|\gamma|_0 - |\gamma^*|_0) \log(g n k_2) + 2 \log \left( \frac{p(\gamma^*)}{p(\gamma)} \right) 
 > \frac{q_0 \tilde{\omega} (1+\delta)^2 (|\gamma'|_0-|\gamma^*|_0)}{1-\delta}
\nonumber \\
\Leftrightarrow
&\frac{1}{2}\log(g n k_2) + \frac{1}{|\gamma|_0 - |\gamma^*|_0} \log \left( \frac{p(\gamma^*)}{p(\gamma)} \right) 
 > \frac{(1+2^{1/2})^2 \tilde{\omega} (1+\delta)^2 (|\gamma'|_0-|\gamma^*|_0)}{(1-\delta) (|\gamma|_0 - |\gamma^*|_0)},
\end{align}
which holds under Assumption (B3), since 
\begin{align}
 \frac{q_0 \tilde{\omega} (1+\delta)^2 \bar{q}}{1-\delta}
\geq \frac{q_0 \tilde{\omega} (1+\delta)^2 |\gamma|_0}{1-\delta}
\geq \frac{q_0 \tilde{\omega} (1+\delta)^2 (|\gamma'|_0-|\gamma^*|_0)}{(1-\delta)(|\gamma|_0 - |\gamma^*|_0)}.
\nonumber
\end{align}
where we used that $|\gamma'|_0 - |\gamma^*|_0 \leq |\gamma|_0 \leq \bar{q}$ and that $|\gamma|_0 - |\gamma^*|_0 \geq 1$.


Since the conditions to apply Proposition \ref{prop:intbound_sg_noncentral}(iii) are met, we obtain
\begin{align}
E_F(p(\gamma \mid \gamma)) \leq \frac{2 \max \left\{([e^{q_0}/d_1] \log(h^{\frac{a}{b a'}}))^{d_1/2}  , \log(h^{\frac{a}{2b a'}}) \right\}}{h^{\frac{a}{2b a'}}}
+ \frac{1}{2 h^{\frac{a}{2b(1+k_0)}}}
\nonumber
\end{align}
where $$
a'> 1 + \frac{q_0^{1/2}}{[q_0 + (a/b) \log \log (h^{1/d_1})]^{1/2}},
$$
and $k_0= [q_0b d_1/(a \log h)]^{1/2}$ may both be taken arbitrarily close to 1 under Assumption (B3).
Then, arguing as in \eqref{seq:modelppbound_overfitted2} gives that
\begin{align}
& E_F (p(\gamma \mid y)) \leq 2.5 h^{-\frac{ad_1(1-\epsilon)}{2}}
=2.5 \left( \frac{b}{(g n)^{\frac{1-\epsilon}{\omega \tau}}} \right)^{\frac{|\gamma|_0 - |\gamma^*|_0}{2}} \left( \frac{p(\gamma)}{p(\gamma^*)} \right)^{\frac{1-\epsilon}{\omega \tau}}
 \label{seq:modelppbound_largenonoverfitted}
\end{align}
for all $n \geq n_0$ and some fixed $n_0$, where $b= e^{2(1+2^{1/2})^2}/k_2^{(1-\epsilon)/[\omega \tau]}$ is a constant.

\subsection{Sum across non-overfitted models of size $|\gamma|_0 \leq |\gamma^*|_0$}

The strategy is to split the sum into models with dimension $\leq |\gamma^*|_0$, where recall that $\gamma^*$ is the optimal model, and those of dimension $|\gamma|_0 > |\gamma^*|_0$. That is,
\begin{align}
 \sum_{l=0}^{\bar{q}} \sum_{|\gamma|_0= l, \gamma \not\subset \gamma^*} E_F(p(\gamma \mid y))
= \sum_{l=0}^{|\gamma^*|_0} \sum_{|\gamma|_0= l, \gamma \not\subset \gamma^*} E_F(p(\gamma \mid y))
+ \sum_{l=|\gamma^*|_0+1}^{\bar{q}} \sum_{|\gamma|_0= l, \gamma \not\subset \gamma^*} E_F(p(\gamma \mid y)).
\label{seq:modelpp_sumnonoverfit}
\end{align}

Consider the first term in \eqref{seq:modelpp_sumnonoverfit}. From \eqref{seq:modelppbound_nonoverfitted3}, there is a fixed $n_0$ such that for every $n \geq n_0$ it holds that
\begin{align}
 \sum_{l=0}^{|\gamma^*|_0} \sum_{|\gamma|_0= l, \gamma \not\subset \gamma^*} E_F(p(\gamma \mid y)) \leq
\sum_{l=0}^{|\gamma^*|_0} \sum_{|\gamma|_0= l, \gamma \not\subset \gamma^*} e^{ -\frac{\lambda_\gamma (1-\epsilon)}{8\tilde{\omega}}}
+ 3.5 \left( \frac{ p(\gamma)/p(\gamma^*) }{(gnk_2)^{\frac{|\gamma|_0 - |\gamma^*|_0}{2}} e^{\frac{\lambda_\gamma}{2 \tilde{\omega}}}} \right)^{1-\epsilon}.
\label{seq:pp_smallnonoverfit}
\end{align}

Let $\underline{\lambda}= \min_{|\gamma|_0 \leq |\gamma^*|_0} \lambda_\gamma/\max\{|\gamma^*|_0-|\gamma|_0,1\}$, so that
$
 e^{-\lambda_\gamma} \leq e^{-\underline{\lambda} \max\{|\gamma^*|_0 - |\gamma|_0,1\}}.
$
Then \eqref{seq:pp_smallnonoverfit} is
\begin{align}
&\leq 
\sum_{l=0}^{|\gamma^*|_0} \sum_{|\gamma|_0= l, \gamma \not\subset \gamma^*} e^{ -\frac{\lambda_\gamma (1-\epsilon)}{8\tilde{\omega}}}
+ 3.5 \left( \frac{ p(\gamma)/p(\gamma^*) }{(gnk_2)^{\frac{l - |\gamma^*|_0}{2}} e^{\frac{\lambda_\gamma}{2 \tilde{\omega}}}} \right)^{1-\epsilon}
=
\sum_{|\gamma|_0= |\gamma^*|_0, \gamma \not\subset \gamma^*} e^{ -\frac{\underline{\lambda} (1-\epsilon)}{8\tilde{\omega}}} + 3.5 e^{-\frac{\underline{\lambda}(1-\epsilon)}{2 \tilde{\omega}}} +
\nonumber \\
& \sum_{l=0}^{|\gamma^*|_0-1} \sum_{|\gamma|_0= l, \gamma \not\subset \gamma^*} \left[ e^{ -\frac{\underline{\lambda} (1-\epsilon)}{8\tilde{\omega}}} \right]^{|\gamma^*|_0 - l}
+ 3.5 \left( \left[ \frac{ e^{\frac{\underline{\lambda}}{2 \tilde{\omega}}} }{q^c (gnk_2)^{\frac{1}{2}}} \right]^{1-\epsilon} \right)^{l - |\gamma^*|_0}
\left[ {q \choose |\gamma^*|_0 }/ {q \choose l} \right]^{1-\epsilon}
\nonumber \\
&={q \choose |\gamma^*|_0}
4.5 e^{-\frac{\underline{\lambda}(1-\epsilon)}{2 \tilde{\omega}}} +
\sum_{l=0}^{|\gamma^*|_0-1} 
3.5 \left( \left[ \frac{ e^{\frac{\underline{\lambda}}{2 \tilde{\omega}}} }{q^c (gnk_2)^{\frac{1}{2}}} \right]^{1-\epsilon} \right)^{l - |\gamma^*|_0}
{q \choose |\gamma^*|_0 }^{1-\epsilon}  {q \choose l}^{\epsilon}
\nonumber \\
& \leq 4.5 e^{-\frac{\underline{\lambda}(1-\epsilon)}{2 \tilde{\omega}} + |\gamma^*|_0\log q} +
3.5 e^{|\gamma^*|_0 \log q} \sum_{l=0}^{|\gamma^*|_0-1} 
 \left( \left[ \frac{ e^{\frac{\underline{\lambda}}{2 \tilde{\omega}}} }{q^c (gnk_2)^{\frac{1}{2}}} \right]^{1-\epsilon} \right)^{l - |\gamma^*|_0}
q^{l \epsilon}=
\nonumber
\end{align}
where we used the expression for $p(\gamma)/p(\gamma^*)$ in \eqref{seq:ratio_priormodelprob},
and that there are ${q \choose l} \leq q^l$ models of size $l$.
Rearranging terms in the latter expression gives
\begin{align}
& 4.5 e^{-\frac{\underline{\lambda}(1-\epsilon)}{2 \tilde{\omega}} + |\gamma^*|_0\log q} +
3.5 e^{|\gamma^*|_0 \log q}
\left[ \frac{q^{c(1-\epsilon)} (gnk_2)^{(1-\epsilon)/2}}{ e^{\frac{\underline{\lambda}(1-\epsilon)}{2 \tilde{\omega}}} } \right]^{|\gamma^*|_0}
 \sum_{l=0}^{|\gamma^*|_0-1} 
 \left( \left[ \frac{ e^{\frac{\underline{\lambda}(1-\epsilon)}{2 \tilde{\omega}}} }{q^{c(1-\epsilon)-\epsilon} (gnk_2)^{\frac{1-\epsilon}{2}}} \right] \right)^{l}
\nonumber \\
&=
4.5 e^{-\frac{\underline{\lambda}(1-\epsilon)}{2 \tilde{\omega}} + |\gamma^*|_0\log q} +
3.5 e^{|\gamma^*|_0 \log q}
\left[ \frac{q^{c(1-\epsilon)} (gnk_2)^{(1-\epsilon)/2}}{ e^{\frac{\underline{\lambda}(1-\epsilon)}{2 \tilde{\omega}}} } \right]^{|\gamma^*|_0}
\frac{1 - \left[ \frac{ e^{\frac{\underline{\lambda}(1-\epsilon)}{2 \tilde{\omega}}} }{q^{c(1-\epsilon)-\epsilon} (gnk_2)^{\frac{1-\epsilon}{2}}} \right]^{|\gamma^*|_0 -1} }{1 - \left[ \frac{ e^{\frac{\underline{\lambda}(1-\epsilon)}{2 \tilde{\omega}}} }{q^{c(1-\epsilon)-\epsilon} (gnk_2)^{\frac{1-\epsilon}{2}}} \right]}
,
\label{seq:pp_smallnonoverfit2}
\end{align}
the right-hand side following from the geometric series.

To find a simpler asymptotic expression for \eqref{seq:pp_smallnonoverfit2}, note that
\begin{align}
\lim_{n \rightarrow \infty} \frac{ e^{\frac{\underline{\lambda}(1-\epsilon)}{2 \tilde{\omega}}} }{q^{c(1-\epsilon)-\epsilon} (gnk_2)^{\frac{1-\epsilon}{2}}}
\Leftrightarrow
\lim_{n \rightarrow \infty}
\frac{\underline{\lambda}(1-\epsilon)}{2 \tilde{\omega}} - [c(1-\epsilon)-\epsilon] \log(q) - \frac{1-\epsilon}{2} \log(gnk_2)= \infty,
\nonumber
\end{align}
which holds under Assumption (B6).
Hence for sufficiently large $n$ we have that \eqref{seq:pp_smallnonoverfit2} is
\begin{align}
&\leq 4.5 e^{|\gamma^*|_0 \log q}
\left( e^{-\frac{\underline{\lambda}(1-\epsilon)}{2 \tilde{\omega}}}
+ \left[ \frac{q^{c(1-\epsilon)} (gnk_2)^{(1-\epsilon)/2}}{ e^{\frac{\underline{\lambda}(1-\epsilon)}{2 \tilde{\omega}}} } \right]^{|\gamma^*|_0}
\left[ \frac{ e^{\frac{\underline{\lambda}(1-\epsilon)}{2 \tilde{\omega}}} }{q^{c(1-\epsilon)-\epsilon} (gnk_2)^{\frac{1-\epsilon}{2}}} \right]^{|\gamma^*|_0-1}
 \right)
\nonumber \\
&\leq 4.5 e^{|\gamma^*|_0 \log q}
\left( e^{-\frac{\underline{\lambda}(1-\epsilon)}{2 \tilde{\omega}}}
+ 
q^{\epsilon (|\gamma^*|_0-1)}
\left[ \frac{q^{c(1-\epsilon)} (gnk_2)^{(1-\epsilon)/2}}{ e^{\frac{\underline{\lambda}(1-\epsilon)}{2 \tilde{\omega}}} } \right] \right)
\nonumber \\
&= 4.5 e^{-\frac{\underline{\lambda}(1-\epsilon)}{2 \tilde{\omega}} + |\gamma^*|_0 \log q}
\left( 1
+ 
\exp \left\{ [\epsilon (|\gamma^*|_0-1) + c(1-\epsilon)] \log(q) + \frac{(1-\epsilon)}{2} \log(gnk_2) \right\}
\right)
\nonumber \\
&\leq 4.5 e^{-\frac{\underline{\lambda}(1-\epsilon)}{2 \tilde{\omega}} + |\gamma^*|_0 \log q}
\left( 1
+ 
\exp \left\{ [\epsilon |\gamma^*|_0 + c] \log(q) + \frac{1-\epsilon}{2} \log(gnk_2) \right\}
\right)
\nonumber \\
&
< 9 \exp \left\{  -\frac{\underline{\lambda}(1-\epsilon)}{2 \tilde{\omega}} + [|\gamma^*|_0(1+\epsilon) + c] \log q + \frac{1-\epsilon}{2} \log(gnk_2) \right\},
\nonumber
\end{align}
where for large enough $n$ we may upper-bound $(1-\epsilon)\log(gnk_2)$ by $\log(gn)$, as we wished to prove.

\subsection{Sum across non-overfitted models of size $|\gamma|_0 > |\gamma^*|_0$. Sparsity-based bound}

The strategy is to use the bound for $E_F(p(\gamma \mid y))$ given in \eqref{seq:modelppbound_largenonoverfitted}
and to proceed analogously to \eqref{seq:pp_spurious_bound} and \eqref{seq:pp_spurious_bound2}, where plugging in the expression of prior model probabilities in \eqref{seq:ratio_priormodelprob}, gives that
\begin{align}
& \sum_{l=|\gamma^*|_0+1}^{\bar{q}} \sum_{|\gamma|_0=l, \gamma \not\supset \gamma^*} E_F( p(\gamma \mid y)) \leq
 \sum_{l=|\gamma^*|_0+1}^{\bar{q}} 
\left( \frac{b^{1/2}}{q^{cr} (g n)^{r/2}} \right)^{l - |\gamma^*|_0} 
{l \choose |\gamma^*|_0 }^{r} {q - |\gamma^*|_0 \choose l - |\gamma^*|_0}^{-r}
\sum_{|\gamma|_0=l, \gamma \not\supset \gamma^*} 1
\nonumber \\
&= \sum_{l=|\gamma^*|_0+1}^{\bar{q}}  \left( \frac{b^{1/2}}{q^{cr} (g n)^{r/2}} \right)^{l - |\gamma^*|_0}  {l \choose |\gamma^*|_0 }^{r} {q - |\gamma^*|_0 \choose l - |\gamma^*|_0}^{-r} q^l
\label{seq:sum_largenonoverfit_sparsebound}
\end{align}
for all $n \geq n_0$ and some fixed $n_0$, where $r=(1-\epsilon)/[\omega \tau] < 1$ (from Assumption (B2)),
$b= e^{2(1+2^{1/2})^2}/k_2^{(1-\epsilon)/[\omega \tau]}$ is a constant
and the right-hand side follows from noting that there are
${q \choose l} - {q - |\gamma^*|_0 \choose l - |\gamma^*|_0} \leq q^l$ non-overfitted models of size $l$.

Using that ${l \choose |\gamma^*|_0}^r \leq {l \choose |\gamma^*|_0}$ for $r<1$ and that ${x \choose z} \geq (x/z)^z$ for all $(x,z)$, which implies that
${q - |\gamma^*|_0 \choose l - |\gamma^*|_0} \geq ([q - |\gamma^*|_0]/[l-|\gamma^*|_0])^{l - |\gamma^*|_0}$, we obtain that \eqref{seq:sum_largenonoverfit_sparsebound} is
\begin{align}
&\leq q^{|\gamma^*|_0}
\sum_{l=|\gamma^*|_0+1}^{\bar{q}}  \left( \frac{b^{1/2}}{q^{cr-1} (g n)^{r/2}} \right)^{l - |\gamma^*|_0}  {l \choose |\gamma^*|_0 }
\left( \frac{l-|\gamma^*|_0}{q - |\gamma^*|_0} \right)^{r(l-|\gamma^*|_0)}
\nonumber \\
&\leq q^{|\gamma^*|_0}
\sum_{l=|\gamma^*|_0+1}^{\bar{q}}  \left( \frac{b^{1/2} (\bar{q}-|\gamma^*|_0)^r}{q^{cr-1} (q - |\gamma^*|_0)^r (g n)^{r/2}} \right)^{l - |\gamma^*|_0}  {l \choose |\gamma^*|_0 }
\label{seq:sum_largenonoverfit_sparsebound2}
\end{align}

Finally, using Lemma \ref{slem:binomial_ogf} gives that the right-hand side of \eqref{seq:sum_largenonoverfit_sparsebound2} is asymptotically equal to
\begin{align}
\frac{q^{|\gamma^*|_0} (|\gamma^*|_0+1) b^{1/2} (\bar{q}-|\gamma^*|_0)^r}{q^{cr-1} (q - |\gamma^*|_0)^r (g n)^{r/2}}
\leq 
\frac{(|\gamma^*|_0+1) b^{1/2} \bar{q}^r}{q^{cr-1-|\gamma^*|_0} (q - |\gamma^*|_0)^r (g n)^{r/2}},
\nonumber
\end{align}
where note that under Assumption (B7) the right-hand side converges to 0 as $n \rightarrow \infty$.

\section{Alternative to Theorem \ref{thm:pp}}
\label{ssec:alt_thm_pp}

Theorem \ref{thm:pp_alternative} state a result that provides an alternative to Theorem \ref{thm:pp}(iii) where one obtains faster model selection rates, under Assumptions (B5') and (B7') that overall are milder than (B5) and (B7) used in Theorem \ref{thm:pp}(iii).

More precisely, (B5') is a slightly stronger version of (B5). Similar to (B5) it requires that the non-centrality parameter $\lambda_\gamma$ is large enough relative to the model size. The difference is that (B5') requires the condition to hold for all models, whereas (B5) required it only for models of size less than the optimal model ($|\gamma|_0 \leq |\gamma^*|_0$). This requirement is however not overly stringent, since for $|\gamma|_0 > |\gamma^*|_0$ the term
$
 t= (|\gamma|_0 - |\gamma^*|_0) \log(g n k_2) + 2 \log \left( p(\gamma^*)/p(\gamma) \right)
$
grows with $n$.

Assumption (B7') introduces a non-centrality parameter $\bar{\lambda}$ that lower-bounds the decrease in the explained sum of squares for each truly active parameter. Under betamin and restricted eigenvalue conditions, $\bar{\lambda}$ is proportional to $n$ times the smallest square entry in the optimal coefficients $\eta_{\gamma^*}^*$ (see \cite{rossell:2022}, Sections 2.2 and 5.4).

\begin{enumerate}[leftmargin=*,label=(B\arabic*')]
\setcounter{enumi}{4}
\item Condition (B5) holds for any non over-fitted model $\gamma \not\subset \gamma^*$ of size $|\gamma|_0 \leq \bar{q}$,
\end{enumerate}

\begin{enumerate}[leftmargin=*,label=(B\arabic*')]
\setcounter{enumi}{6}
\item Let $A_{jl}$ be the set of models of size $|\gamma|_0=l$ that select $j \leq |\gamma^*|_0-1$ out of the $|\gamma^*|_0$ truly active parameters, and $l-j$ inactive parameters, and
$$\bar{\lambda}= \min_{l=\{|\gamma^*|_0+1,\ldots,\bar{q}\}} \min_{\gamma \in A_{jl}} \lambda_\gamma/(|\gamma^*|_0-j).$$
Assume that, for some fixed $\epsilon >0$,
$$
\lim_{n \rightarrow \infty} \frac{\bar{\lambda}(1-\epsilon)}{ 8\omega\tau} - (\bar{q}+1) \log(q) - |\gamma^*|_0 \log |\gamma^*|_0 = \infty.
$$
\end{enumerate}

We next state the theorem and discuss its implications.

\begin{thm}
Assume Conditions (A1)-(A4), (B1), (B2), (B5'), and (B7'). 
Let $S_2= \{\gamma: |\gamma|_0 > |\gamma^*|_0, \gamma \not\subset \gamma^*\}$
and $\bar{\lambda}$ be the signal strength parameter defined in (B7'). 
Then there is a fixed $n_0$ such that
\begin{align}
E_F(P(S_2 \mid y)) \leq
 4.5 \exp \left\{  -\frac{\bar{\lambda}(1-\epsilon)}{ 8\omega\tau} + (\bar{q}-|\gamma^*|_0+1) \log(q - |\gamma^*|_0) + (|\gamma^*|_0-1) \log |\gamma^*|_0 \right\}
\nonumber
\end{align}
for all $n \geq n_0$ and a constant $\epsilon >0$ that may be taken arbitrarily close to 0.
\label{thm:pp_alternative}
\end{thm}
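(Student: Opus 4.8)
The plan is to reuse the per-model bound for non-overfitted models already established in the proof of Theorem~\ref{thm:pp}, namely \eqref{seq:modelppbound_nonoverfitted4}, and to sum it over $S_2$ using the signal lower bound supplied by (B7') rather than the prior-sparsity argument of Theorem~\ref{thm:pp}(iii). The first step is to note that \eqref{seq:modelppbound_nonoverfitted4} was obtained by applying Proposition~\ref{prop:intbound_sg_noncentral}(ii), and that the non-centrality hypothesis of that proposition is exactly what (B5') guarantees for every non-overfitted model of size up to $\bar q$, in particular for every $\gamma \in S_2$. For such $\gamma$ we have $|\gamma|_0 > |\gamma^*|_0$, hence $p(\gamma)/p(\gamma^*) \leq 1$ by (B3) and $(gnk_2)^{(|\gamma|_0-|\gamma^*|_0)/2} \geq 1$ for large $n$ by (A4); consequently the second summand in \eqref{seq:modelppbound_nonoverfitted4} is at most $3.5\,e^{-\lambda_\gamma(1-\epsilon)/(2\tilde\omega)} \leq 3.5\,e^{-\lambda_\gamma(1-\epsilon)/(8\tilde\omega)}$, which yields the clean per-model bound $E_F(p(\gamma\mid y)) \leq 4.5\,e^{-\lambda_\gamma(1-\epsilon)/(8\tilde\omega)}$, where $\tilde\omega=\omega\tau$. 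This is where the constant $4.5$ in the statement originates.

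Next I would pass from $\lambda_\gamma$ to $\bar\lambda$. Every $\gamma \in S_2$ lies in some set $A_{jl}$ with $j \leq |\gamma^*|_0 - 1$ active coefficients selected and $|\gamma|_0 = l$, so by the definition of $\bar\lambda$ in (B7') we have $\lambda_\gamma \geq \bar\lambda(|\gamma^*|_0 - j)$. Writing $m = |\gamma^*|_0 - j \geq 1$ for the number of omitted active coefficients, $Q = q - |\gamma^*|_0$ for the number of inactive coefficients, and $\beta = \bar\lambda(1-\epsilon)/(8\tilde\omega)$, the number of models in $A_{jl}$ is $\binom{|\gamma^*|_0}{m}\binom{Q}{l-j}$, so that
\begin{align}
E_F(P(S_2 \mid y)) &\leq 4.5 \sum_{m=1}^{|\gamma^*|_0}\binom{|\gamma^*|_0}{m}\, e^{-\beta m}\sum_{k}\binom{Q}{k},
\nonumber
\end{align}
where the inner sum runs over the admissible inactive counts $k=l-j$, bounded above by $\bar q - |\gamma^*|_0 + m$. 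The heart of the argument is then to evaluate this double sum.

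The main obstacle is precisely this combinatorial summation, because larger non-overfitted models carry more multiplicity (the inner sum over $k$ grows like $(q-|\gamma^*|_0)^{\bar q - |\gamma^*|_0 + m}$) and there are more ways to omit active coefficients as $m$ grows. The resolution is that (B7') forces $\beta$ to exceed $(\bar q + 1)\log q + |\gamma^*|_0 \log |\gamma^*|_0$ up to lower-order terms, so that $\bigl(|\gamma^*|_0\,(q-|\gamma^*|_0)\,e^{-\beta}\bigr)\to 0$ and the geometric-type series in $m$ is dominated by its $m=1$ term, i.e.\ by the models that omit exactly one active coefficient. I would bound $\binom{|\gamma^*|_0}{m}\leq |\gamma^*|_0^{|\gamma^*|_0-1}$ and $\sum_k\binom{Q}{k}$ by $(q-|\gamma^*|_0)^{\bar q - |\gamma^*|_0 + 1}$ up to a factor that is of lower order in the exponent, and then sum the geometric series (whose total is a constant arbitrarily close to $1$ for large $n$), arriving at
\begin{align}
E_F(P(S_2 \mid y)) &\leq 4.5\, |\gamma^*|_0^{\,|\gamma^*|_0-1}\,(q-|\gamma^*|_0)^{\bar q - |\gamma^*|_0 + 1}\, e^{-\bar\lambda(1-\epsilon)/(8\omega\tau)},
\nonumber
\end{align}
which is the claimed bound after taking logarithms. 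The delicate bookkeeping is to keep the constant at $4.5$ and to verify that the polynomial prefactors generated by the geometric sum and by $\sum_k\binom{Q}{k}$ are genuinely of lower order in the exponent; this is exactly the role of (B7'), and it is the step I expect to demand the most care.
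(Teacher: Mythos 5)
Your proposal matches the paper's own proof essentially step for step: the same per-model bound \eqref{seq:modelppbound_nonoverfitted4} justified by (B5'), the same decomposition of $S_2$ into the sets $A_{jl}$ with multiplicity ${|\gamma^*|_0 \choose j}{q-|\gamma^*|_0 \choose l-j}$, the same absorption of the prior-ratio summand via $(gnk_2)^{-(l-|\gamma^*|_0)}p(\gamma)/p(\gamma^*)<1$ to obtain the per-model bound $4.5\,e^{-\lambda_\gamma(1-\epsilon)/(8\omega\tau)}$, and the same geometric-series argument in which (B7') makes the models omitting exactly one active coefficient dominate, yielding precisely the stated bound. The only cosmetic differences are that you index by the number $m=|\gamma^*|_0-j$ of omitted active parameters rather than the number $j$ selected, and that you cite (B3) --- which is not among this theorem's hypotheses --- for $p(\gamma)/p(\gamma^*)\leq 1$, whereas the paper asserts the analogous inequality directly from the prior ratio \eqref{seq:ratio_priormodelprob}.
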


Theorem \ref{thm:pp_alternative} says that large non-overfitted models are discarded at an exponential rate that is essentially upper-bounded by $\bar{\lambda}/[8 \omega \tau] + \bar{q} \log q$, where $\bar{\lambda}$ can be thought of as proportional to $n$ under betamin and restricted eigenvalue conditions.
These models receive vanishing posterior probability as long as $\bar{q} \log q$ grows at a slower rate (since the term $(|\gamma^*|_0-1) \log |\gamma^*|_0$ is even smaller).

\subsection{Proof}

Consider the second term in \eqref{seq:modelpp_sumnonoverfit}, corresponding to models of dimension $|\gamma|_0 > |\gamma^*|_0$.
The strategy is to sum posterior model probabilities according to the model dimension $|\gamma|_0$ and the number of truly active parameters that the model is missing out of the $|\gamma^*|_0$ parameters in the optimal $\gamma^*$.

As defined in (B7'), let $A_{jl}$ be the set of models of size $|\gamma|_0=l$ that select $j \leq |\gamma^*|_0-1$ out of the $|\gamma^*|_0$ truly active parameters, and $l-j$ inactive parameters.
Since $S_2=\cup_{j=0}^{|\gamma^*|_0-1} A_j$ is the whole set of non-overfitted models of size $l$, we obtain that
\begin{align}
&E_F(p(S_2 \mid y)) = \sum_{l=|\gamma^*|_0+1}^{\bar{q}} \sum_{j=0}^{|\gamma^*|_0-1} E_F(p(A_{lj} \mid y)) \leq
\nonumber \\
& \sum_{l=|\gamma^*|_0+1}^{\bar{q}} \sum_{j=0}^{|\gamma^*|_0-1}
{|\gamma^*|_0 \choose j} {q - |\gamma^*|_0 \choose l-j}
\left[ e^{ -\frac{(|\gamma^*|_0-j)\bar{\lambda}(1-\epsilon)}{ 8\tilde{\omega}}}
+ 3.5 \left( \frac{ p(\gamma)/p(\gamma^*) }{(gnk_2)^{\frac{l - |\gamma^*|_0}{2}} e^{(|\gamma^*|_0-j)\frac{\bar{\lambda}}{2 \tilde{\omega}}}} \right)^{1-\epsilon} \right]
\label{seq:pp_largenonoverfit}
\end{align}
for all $n \geq n_0$ and some fixed $n_0$,
where we used \eqref{seq:modelppbound_nonoverfitted3},
that there are ${|\gamma^*|_0 \choose j}{q - |\gamma^*|_0 \choose l -j }$ models in the set $A_{jl}$,
and that by Assumption (B7') all $\gamma \in A_{jl}$ satisfy that $\bar{\lambda} \leq \lambda_\gamma/(|\gamma^*|_0-j)$.

Given that $e^{-\bar{\lambda}/8} > e^{\bar{\lambda}/2}$ and that, since $l > |\gamma^*|_0$, we have that $(gnk_2)^{-(l-|\gamma^*|_0)} p(\gamma)/p(\gamma^*) < 1$, 
\eqref{seq:pp_largenonoverfit} is upper-bounded by
\begin{align}
& \sum_{j=0}^{|\gamma^*|_0-1} {|\gamma^*|_0 \choose j} \sum_{l=|\gamma^*|_0+1}^{\bar{q}} 
 {q - |\gamma^*|_0 \choose l-j}
4.5 e^{ -\frac{(|\gamma^*|_0-j)\bar{\lambda}(1-\epsilon)}{ 8\tilde{\omega}}}
\nonumber \\
&\leq 
4.5 e^{ -\frac{|\gamma^*|_0\bar{\lambda}(1-\epsilon)}{ 8\tilde{\omega}}}
 \sum_{j=0}^{|\gamma^*|_0-1} \left( \frac{|\gamma^*|_0 e^{ \frac{\bar{\lambda}(1-\epsilon)}{ 8\tilde{\omega}}}}{q - |\gamma^*|_0} \right)^j  
\sum_{l=|\gamma^*|_0+1}^{\bar{q}} (q - |\gamma^*|_0)^{l}
\label{seq:pp_largenonoverfit2}
\end{align}
where to obtain the right-hand side we used that ${q - |\gamma^*|_0 \choose l - j} \leq (q-|\gamma^*|_0)^{l-j}$ and ${|\gamma^*|_0 \choose j} \leq |\gamma^*|_0^j$ and rearranged terms.

We now use the geometric series to obtain the inner summation in \eqref{seq:pp_largenonoverfit2}, which gives that \eqref{seq:pp_largenonoverfit2} is
\begin{align}
&=4.5 e^{ -\frac{|\gamma^*|_0\bar{\lambda}(1-\epsilon)}{ 8\tilde{\omega}}}
 \sum_{j=0}^{|\gamma^*|_0-1} \left( \frac{|\gamma^*|_0 e^{ \frac{\bar{\lambda}(1-\epsilon)}{ 8\tilde{\omega}}}}{q - |\gamma^*|_0} \right)^j  
\frac{(q-|\gamma^*|_0)^{\bar{q}+1} - (q-|\gamma^*|_0)^{|\gamma^*|_0+1}}{1 - (q-|\gamma^*|_0)}
\nonumber \\
&=4.5 e^{ -\frac{|\gamma^*|_0\bar{\lambda}(1-\epsilon)}{ 8\tilde{\omega}}}
(q-|\gamma^*|_0)^{\bar{q}}
 \sum_{j=0}^{|\gamma^*|_0-1} \left( \frac{|\gamma^*|_0 e^{ \frac{\bar{\lambda}(1-\epsilon)}{ 8\tilde{\omega}}}}{q - |\gamma^*|_0} \right)^j  
\label{seq:pp_largenonoverfit3}
\end{align}
where in the right-hand side we used that $(z^{\bar{q}+1} - z^{|\gamma^*|_0-1})/(1-z) \leq z^{\bar{q}}$ for all $z>0$.
Using again the geometric series gives that \eqref{seq:pp_largenonoverfit3} is
\begin{align}
&= 4.5 e^{ -\frac{|\gamma^*|_0\bar{\lambda}(1-\epsilon)}{ 8\tilde{\omega}}}
(q-|\gamma^*|_0)^{\bar{q}}
\left( \frac{1 - \left[ \frac{|\gamma^*|_0 e^{ \frac{\bar{\lambda}(1-\epsilon)}{ 8\tilde{\omega}}}}{q - |\gamma^*|_0} \right]^{|\gamma^*|_0}}{1 - \frac{|\gamma^*|_0 e^{ \frac{\bar{\lambda}(1-\epsilon)}{ 8\tilde{\omega}}}}{q - |\gamma^*|_0}} \right)
\nonumber \\
\asymp
&4.5 e^{ -\frac{|\gamma^*|_0\bar{\lambda}(1-\epsilon)}{ 8\tilde{\omega}}}
(q-|\gamma^*|_0)^{\bar{q}}
\left[ \frac{|\gamma^*|_0 e^{ \frac{\bar{\lambda}(1-\epsilon)}{ 8\tilde{\omega}}}}{q - |\gamma^*|_0} \right]^{|\gamma^*|_0 -1}
=  4.5 e^{ -\frac{\bar{\lambda}(1-\epsilon)}{ 8\tilde{\omega}}} (q - |\gamma^*|_0)^{\bar{q}-|\gamma^*|_0+1}  |\gamma^*|_0^{|\gamma^*|_0-1}
\nonumber \\
&=4.5 \exp \left\{  -\frac{\bar{\lambda}(1-\epsilon)}{ 8\tilde{\omega}} + (\bar{q}-|\gamma^*|_0+1) \log(q - |\gamma^*|_0) + (|\gamma^*|_0-1) \log |\gamma^*|_0 \right\},
\nonumber
\end{align}
as we wished to prove.

\section{Supplementary results}
\label{ssec:suppl_results}

\begin{figure}
\begin{center}
\begin{tabular}{cc}
\includegraphics[width=0.48\textwidth, height=0.42\textwidth]{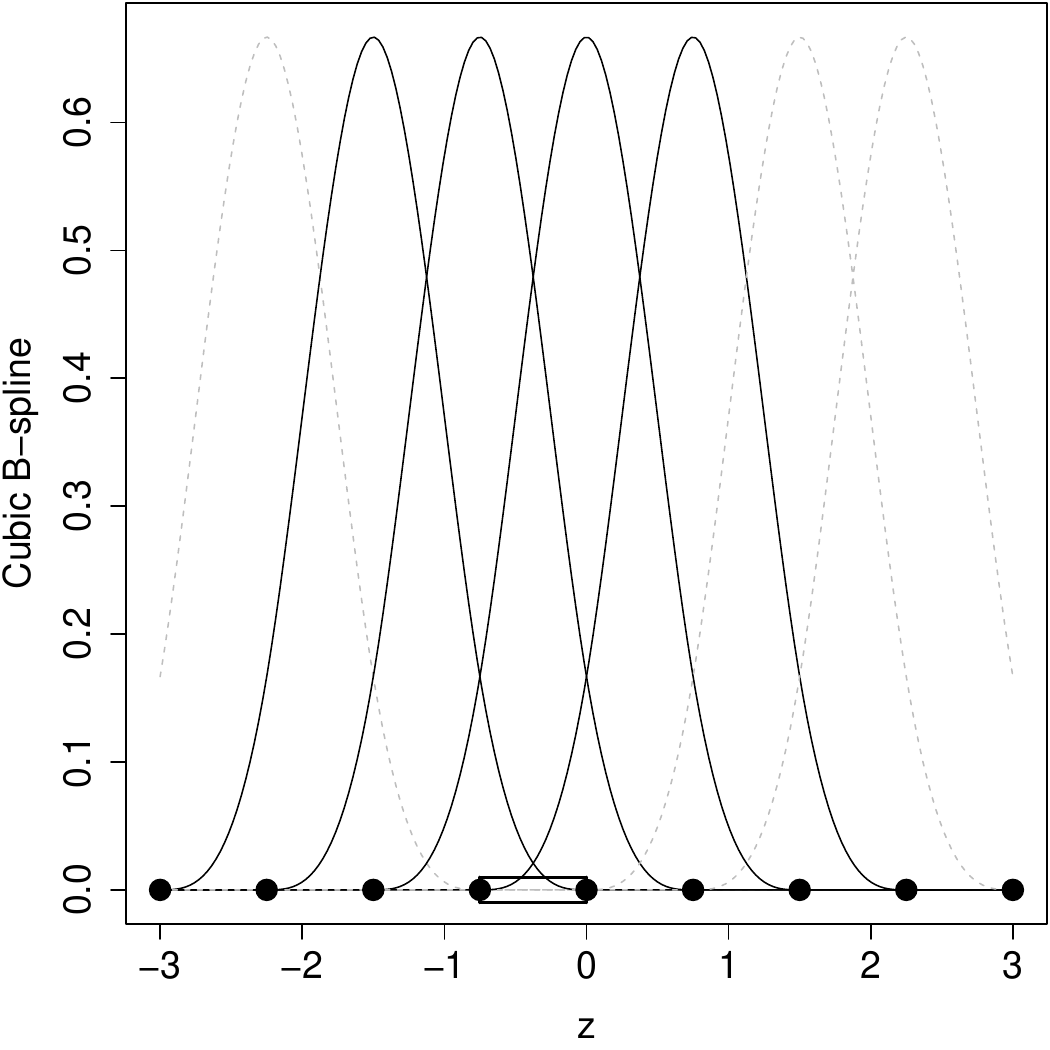} &
\includegraphics[width=0.48\textwidth, height=0.42\textwidth]{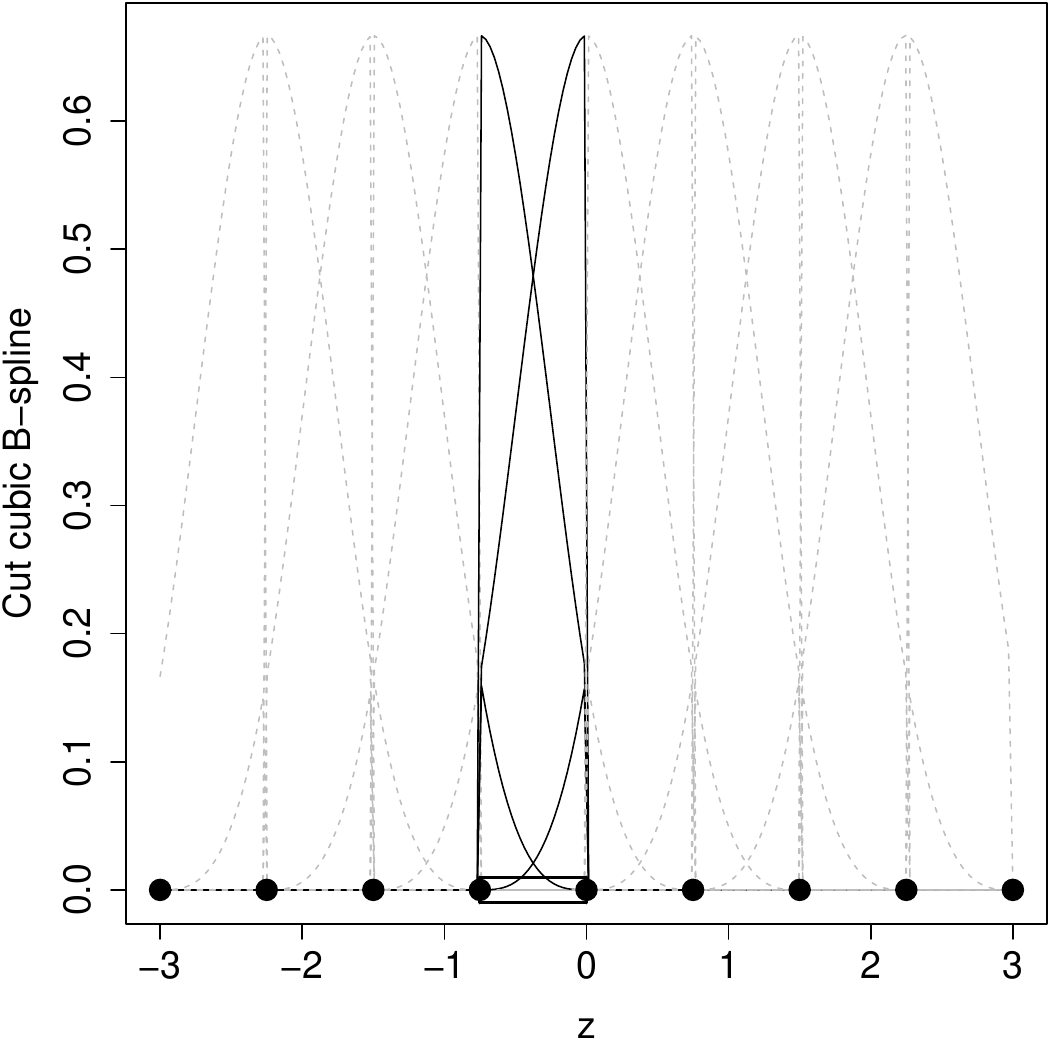} \\
\end{tabular}
\end{center}
\caption{Cubic B-splines (left) and cut cubic B-splines (right) for the simulated illustration in Figure \ref{fig:splinefit}.
}
\label{fig:splinefit_suppl}
\end{figure}

\subsection{Extension of the example in Figure \ref{fig:splinefit}}

\begin{figure}
\begin{center}
\begin{tabular}{cc}
\multicolumn{2}{c}{12 knots for baseline, 8 knots for local tests} \\
$n=1000$ & $n=2000$ \\
\includegraphics[width=0.48\textwidth, height=0.38\textwidth]{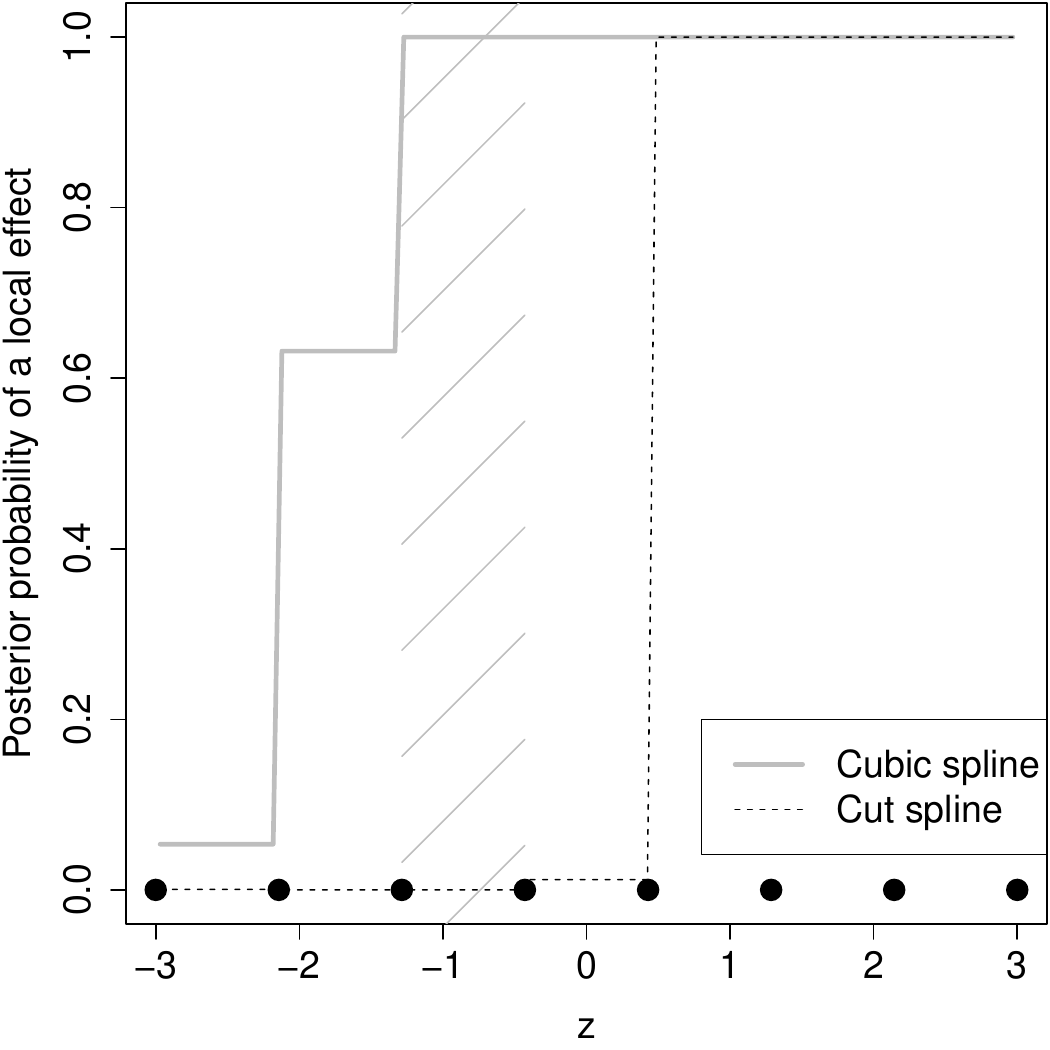} &
\includegraphics[width=0.48\textwidth, height=0.38\textwidth]{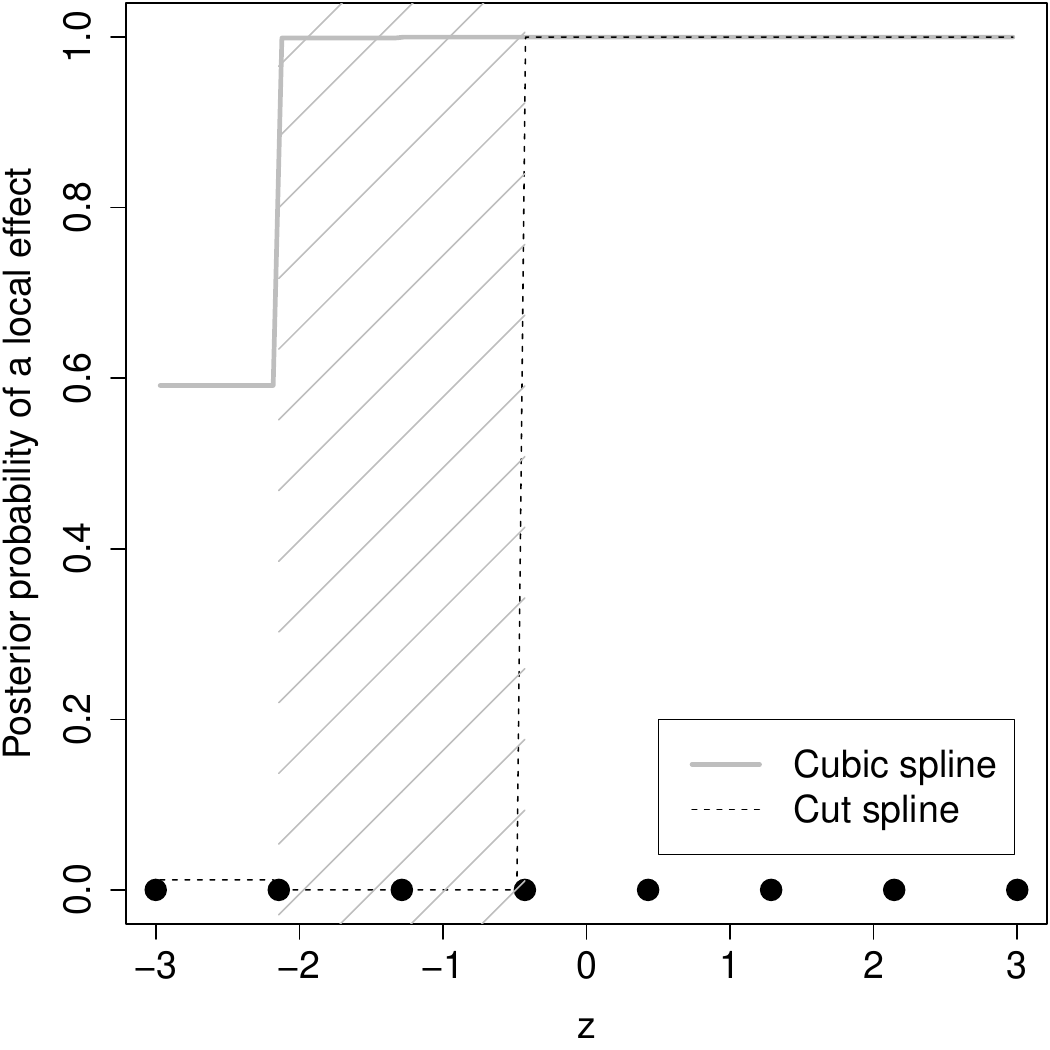} \\
\multicolumn{2}{c}{20 knots for baseline, 15 knots for local tests} \\
$n=1000$ & $n=2000$ \\
\includegraphics[width=0.48\textwidth, height=0.38\textwidth]{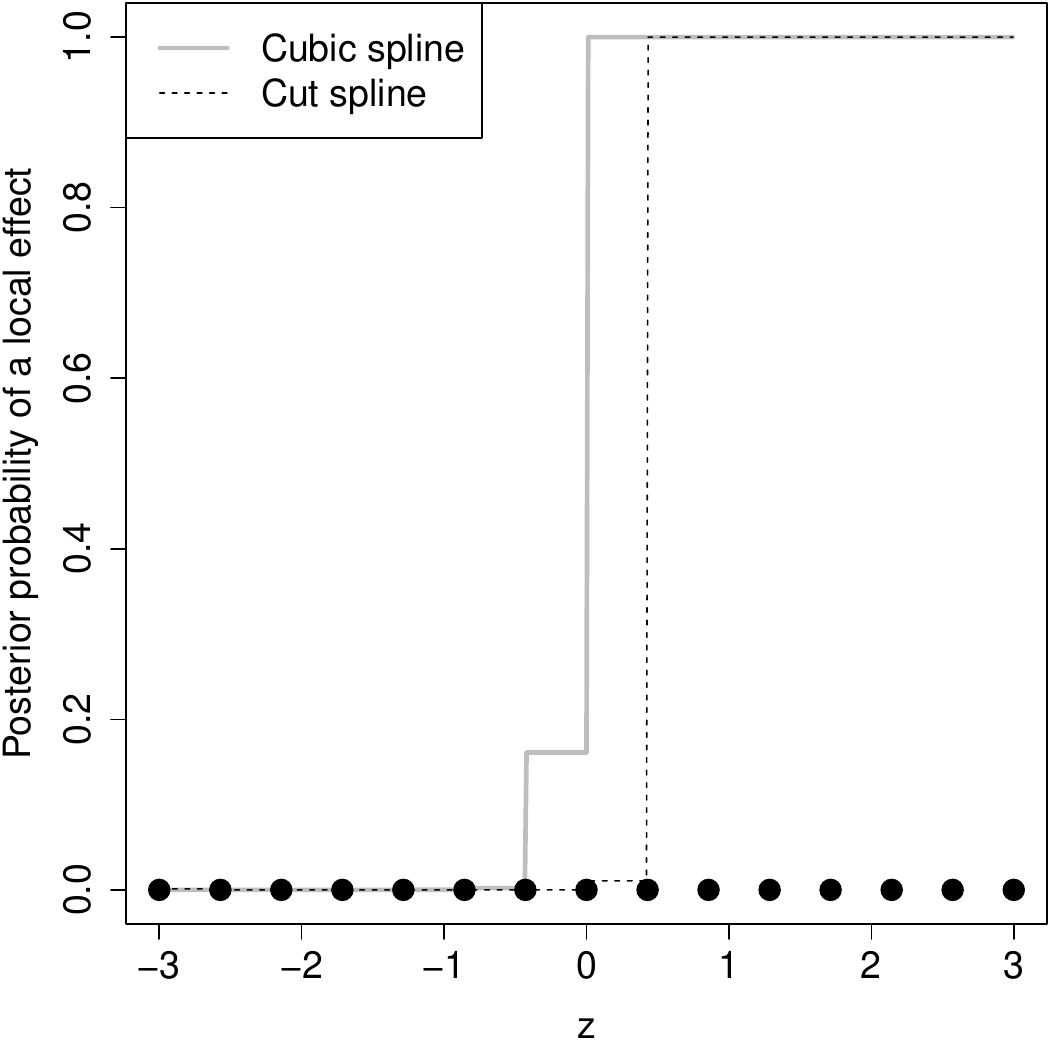} &
\includegraphics[width=0.48\textwidth, height=0.38\textwidth]{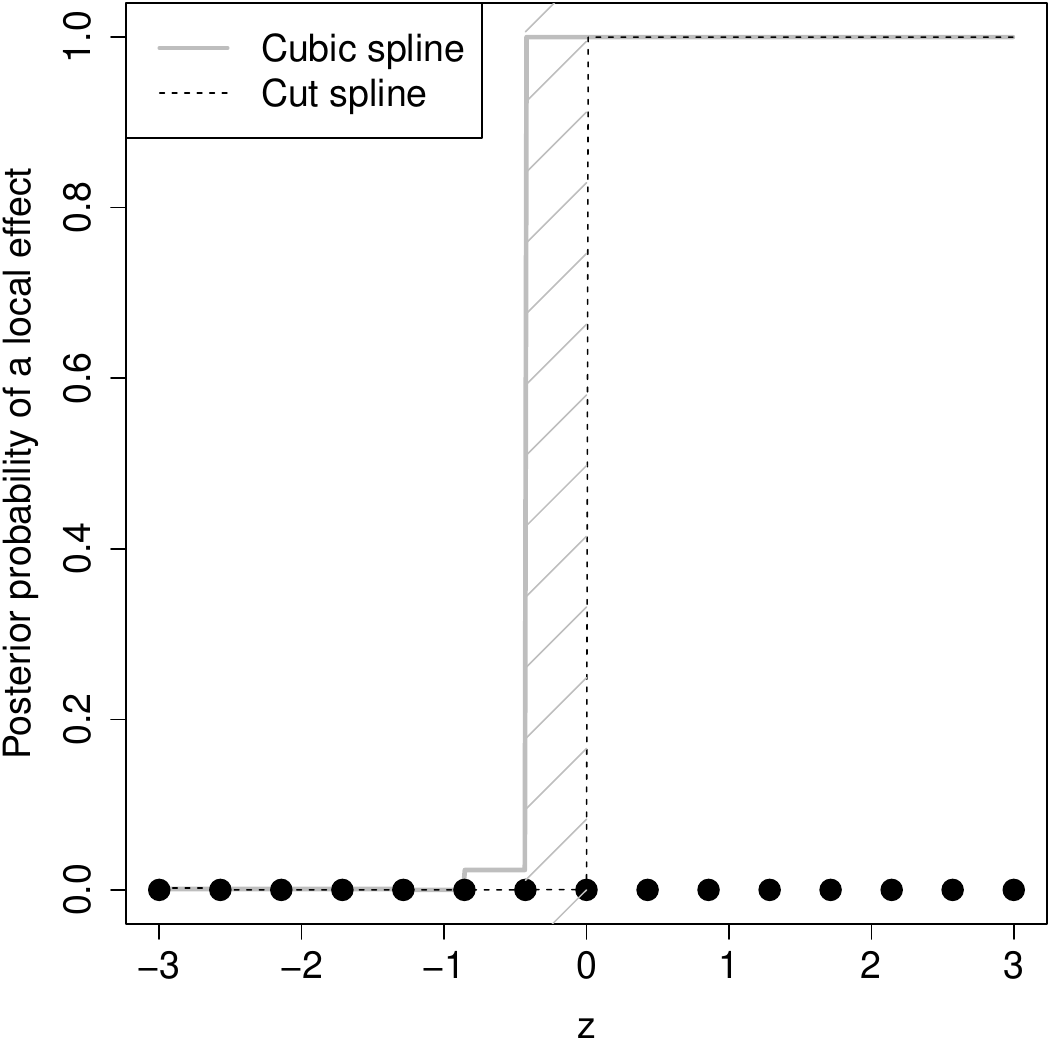} \\
\multicolumn{2}{c}{30 knots for baseline, 30 knots for local tests} \\
$n=1000$ & $n=2000$ \\
\includegraphics[width=0.48\textwidth, height=0.38\textwidth]{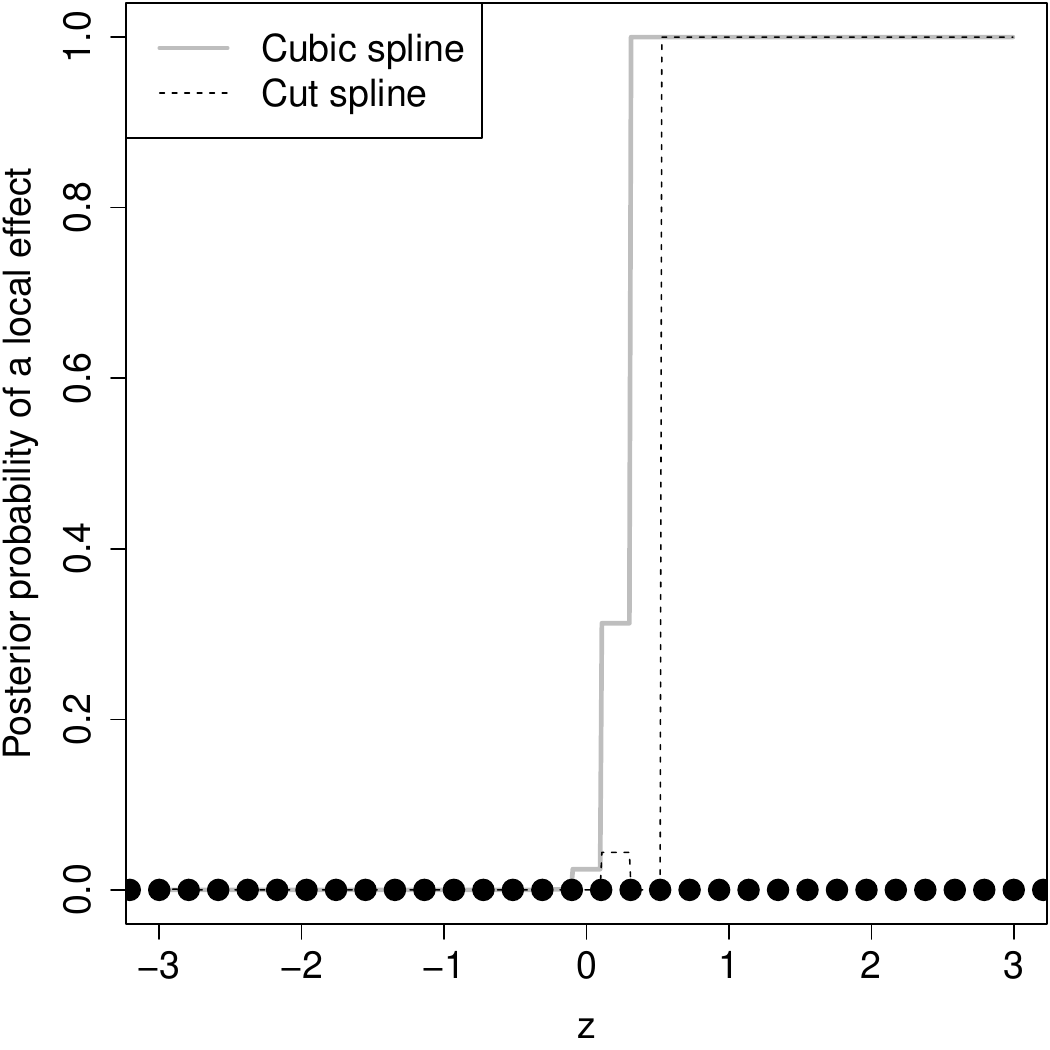} &
\includegraphics[width=0.48\textwidth, height=0.38\textwidth]{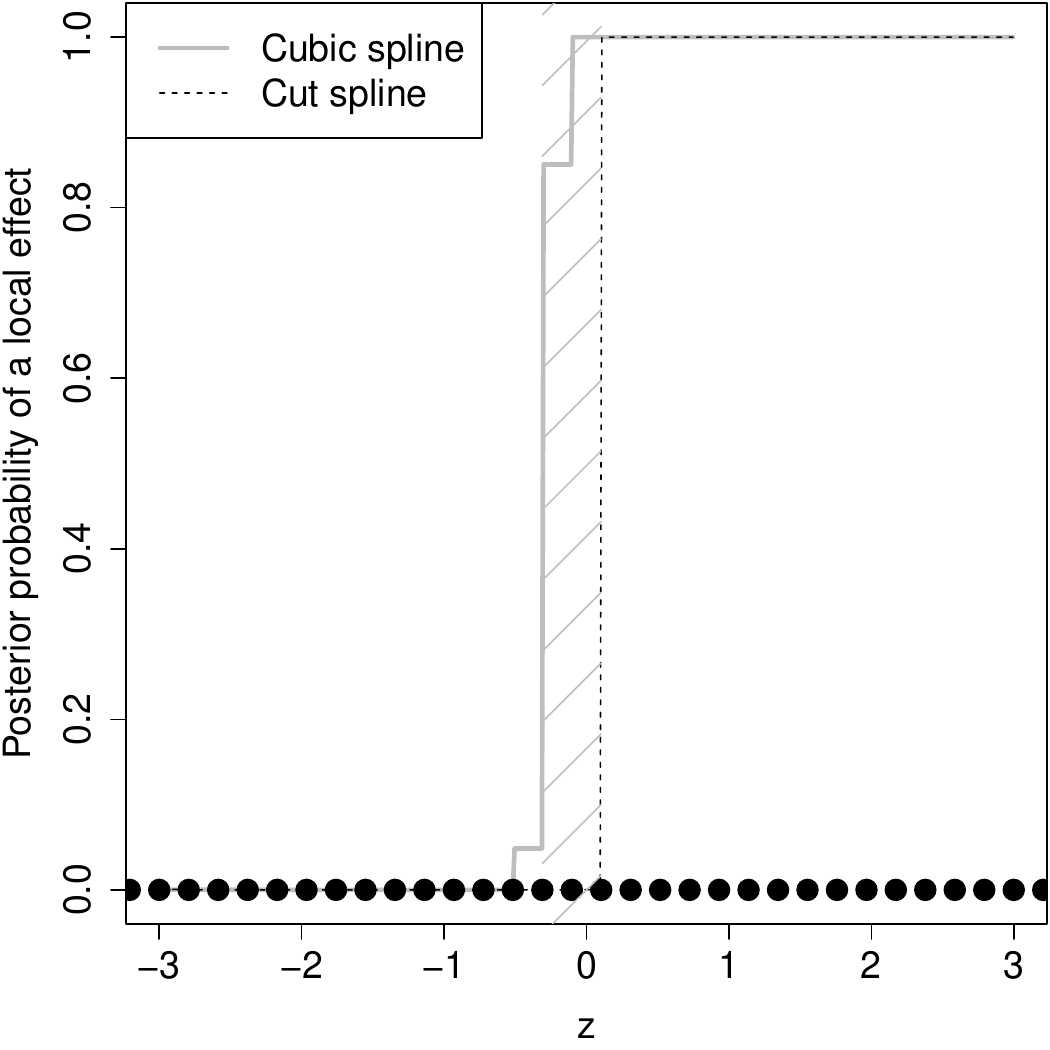} \\
\end{tabular}
\end{center}
\caption{Simulated illustration. Posterior probability of local group differences for cubic / cut cubic B-splines varying the knots and $n$. Shaded regions suffer from false positives}
\label{fig:splinefit_varyingknots}
\end{figure}

We extend the top panels in Figure \ref{fig:splinefit}, where one considers different number of knots and sample sizes.
Figure \ref{fig:splinefit_varyingknots} shows posterior probabilities for the presence of local group differences for the same examples as in Figure \ref{fig:splinefit}, but here we vary the number of knots and sample sizes.
The shaded areas indicate regions where there is false positive inflation, i.e. the posterior probability of a local covariate effect in that region is large, but there's truly no effect.
Overall the same phenomenon persists, for larger $n$ the standard B-splines leads to falsely rejecting the null in a neighborhood of $z<0$, whereas the cut B-splines avoid the false positive issue. 
The bottom left panel illustrates how, if the number of knots is too large, the power to detect local differences decreases, in this case for $z \in [0,0.5]$.

\subsection{Simulation with independent errors}
\label{ssec:simulation_iid}

We outline supplementary results for the simulation study presented in Section \ref{sec:simulation_iid} of the main manuscript.
Figure \ref{fig:simiid} shows the posterior probability (average across 100 simulations) for the presence of a covariate effect as a function of $z$.
Covariate 1 is truly active for $z>0$ and inactive for $z \leq 0$, covariates 2-10 are truly inactive at any $z \in [-3,3]$.

Table \ref{tab:simiid_extra} provide further results regarding our simulation with independent errors, related to using alternative methods to those described in the main paper.
It shows the average proportion of rejected local null hypotheses, separately for covariate 1 (which is truly active for $z>0$) and the other covariates.
Specifically, it considers a Benjamini-Hochberg P-value adjustment from a least-squares fit, a fused LASSO fit based on our degree 0 cut orthogonal basis, standard (uncut) cubic splines (which, as opposed to our cut orthogonal splines, runs into false positive issues) and a generalized additive model in mgcv with twice the default knots (24 knots).

Benjamini-Hochberg adjusted P-values were too conservative, exhibiting very low power for $n=100$, whereas the fused LASSO was overly liberal, with type I errors that were above 0.1 even when $n=1000$.
The uncut cubic splines and GAM resulted in an inflated type I error for covariate 1, for $z \in (0,1]$.

Table \ref{tab:simiid_mse} displays the root mean squared error for the various considered methods.

\begin{table}
\begin{center}
\begin{tabular}{ccccccccc} 
\multicolumn{9}{c}{$n=100$} \\ 
& \multicolumn{4}{c}{Covariate 1} & \multicolumn{4}{c}{Covariates 2-10} \\ 
 Region          & BH  &  Fused       & Cubic       & GAM          & BH    &  Fused        & Cubic & GAM \\ 
                 &     &  LASSO       & uncut       & 24 knots   &       &  LASSO        & uncut & 24 knots \\ 
$z \in $ (-3,-2] & 0   &  0.69$^{**}$ &     0        &  0.04        & 0     &  0.35 $^{**}$ & 0     &  0.04 \\ 
$z \in $ (-2,-1] & 0   &  0.89$^{**}$ &     0        &  0.03        & 0     &  0.52 $^{**}$ & 0     &  0.04 \\ 
$z \in $  (-1,0] & 0   &  0.82$^{**}$ &     1$^{**}$ &  0.28$^{**}$  & 0     &  0.56$^{**}$  &  0     &  0.05 \\ 
$z \in $   (0,1] & 0.01&  0.91$^{**}$ &     1        &  0.79        & 0.004 &  0.50 $^{**}$ & 0     &   0.05\\ 
$z \in $   (1,2] & 0   &  0.89$^{**}$ &     1        &  0.99        & 0     &  0.21 $^{**}$ & 0     &   0.05\\ 
$z \in $   (2,3] & 0   &  0.95$^{**}$ &     1        &  0.94        & 0     &  0.52 $^{**}$ & 0     &  0.04 \\ 
\multicolumn{9}{c}{$n=1000$} \\ 
& \multicolumn{4}{c}{Covariate 1} & \multicolumn{4}{c}{Covariates 2-10} \\ 
 Region         & BH   & Fused        & Cubic  & GAM         & BH    &  Fused & Cubic    & GAM  \\ 
                &     &  LASSO        & uncut  & 24 knots  &       &  LASSO & uncut    & 24 knots \\ 
$z \in $(-3,-2] & 0    &  0.22$^{**}$ & 0       &  0.0008     & 0.004 &  0.07 & 0          & 0.006  \\ 
$z \in $(-2,-1] & 0    &  0.13$^{**}$ & 0.01    &  0.0002     & 0.004 &  0    & 0          & 0.006  \\ 
$z \in $ (-1,0] & 0    &  0.12$^{**}$ & 1$^{**}$ &  0.08$^{**}$& 0    &   0.06 &  0          & 0.01  \\ 
$z \in $  (0,1] & 1    &  1          & 1       &  0.95       & 0.006 &  0    & 0          & 0.01 \\ 
$z \in $  (1,2] & 0.515&  1          & 1       &  1          & 0.001 &  0.05 & 0          & 0.01   \\ 
$z \in $  (2,3] & 1    &  1          & 1       &  1          & 0.002 &  0    & 0          & 0.01  \\ 
\end{tabular}
\end{center}
\caption{Independent errors simulation. Proportion of rejected null hypothesis for several methods.
For covariate 1 and $z>0$,  this proportion is the statistical power; otherwise, it is the type I error. ** indicates a type I error $> 0.05$.
The methods are applying Benjamini-Hochberg P-value adjustment
and fused LASSO to our degree 0 cut splines,
using uncut cubic splines in our Bayesian framework, and a GAM with 24 knots.
For fused LASSO the tuning parameters are set to minimize the extended Bayesian information criterion, further only estimates above 0.1 in absolute value are reported as significant}
\label{tab:simiid_extra}
\end{table}

\begin{figure}
\begin{center}
\begin{tabular}{cc}
Average $P(\beta_j(z) \neq 0 \mid y)$ & Proportion of rejected null hypotheses \\
\includegraphics[width=0.5\textwidth]{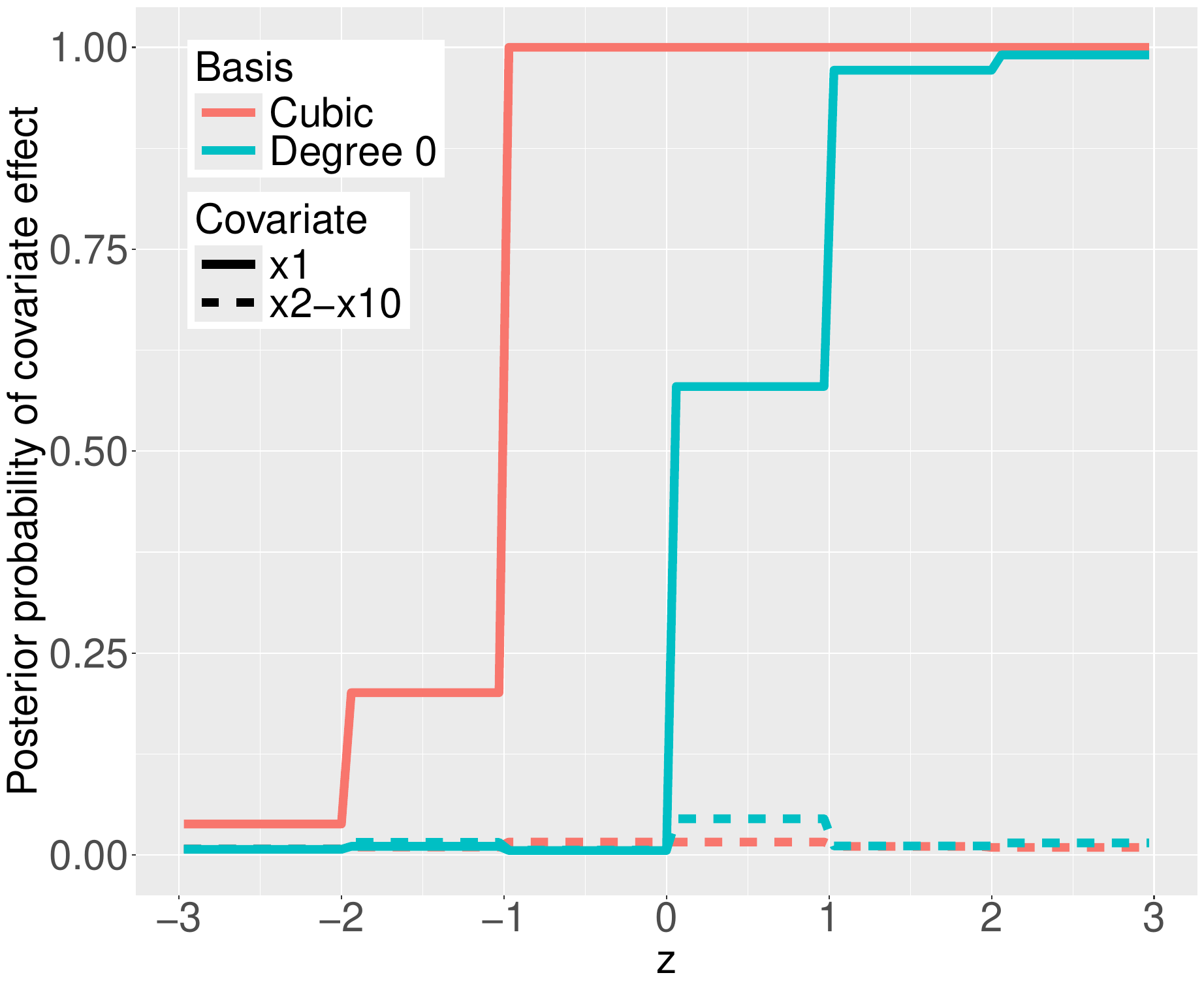} &
\includegraphics[width=0.5\textwidth]{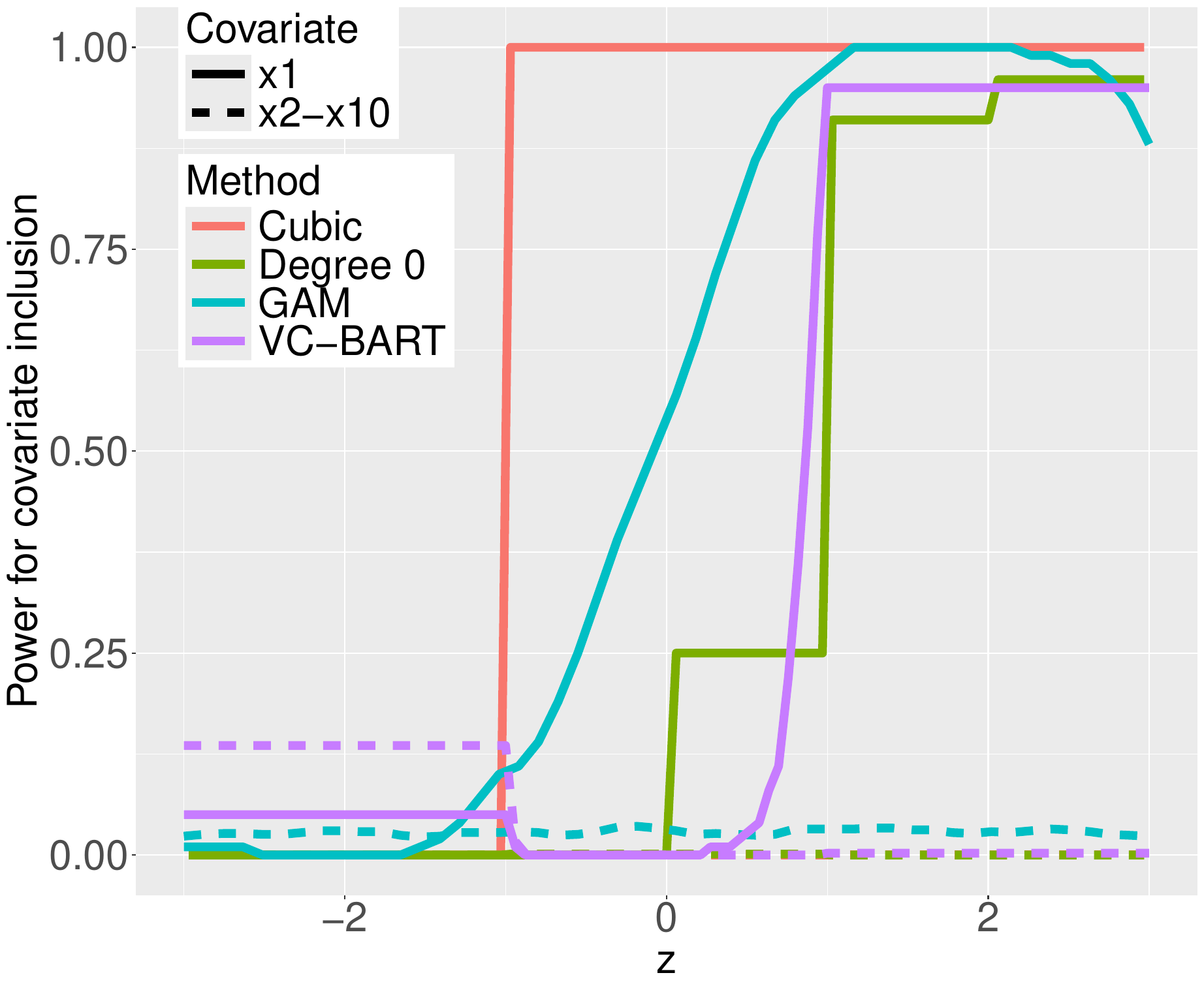} \\
\includegraphics[width=0.5\textwidth]{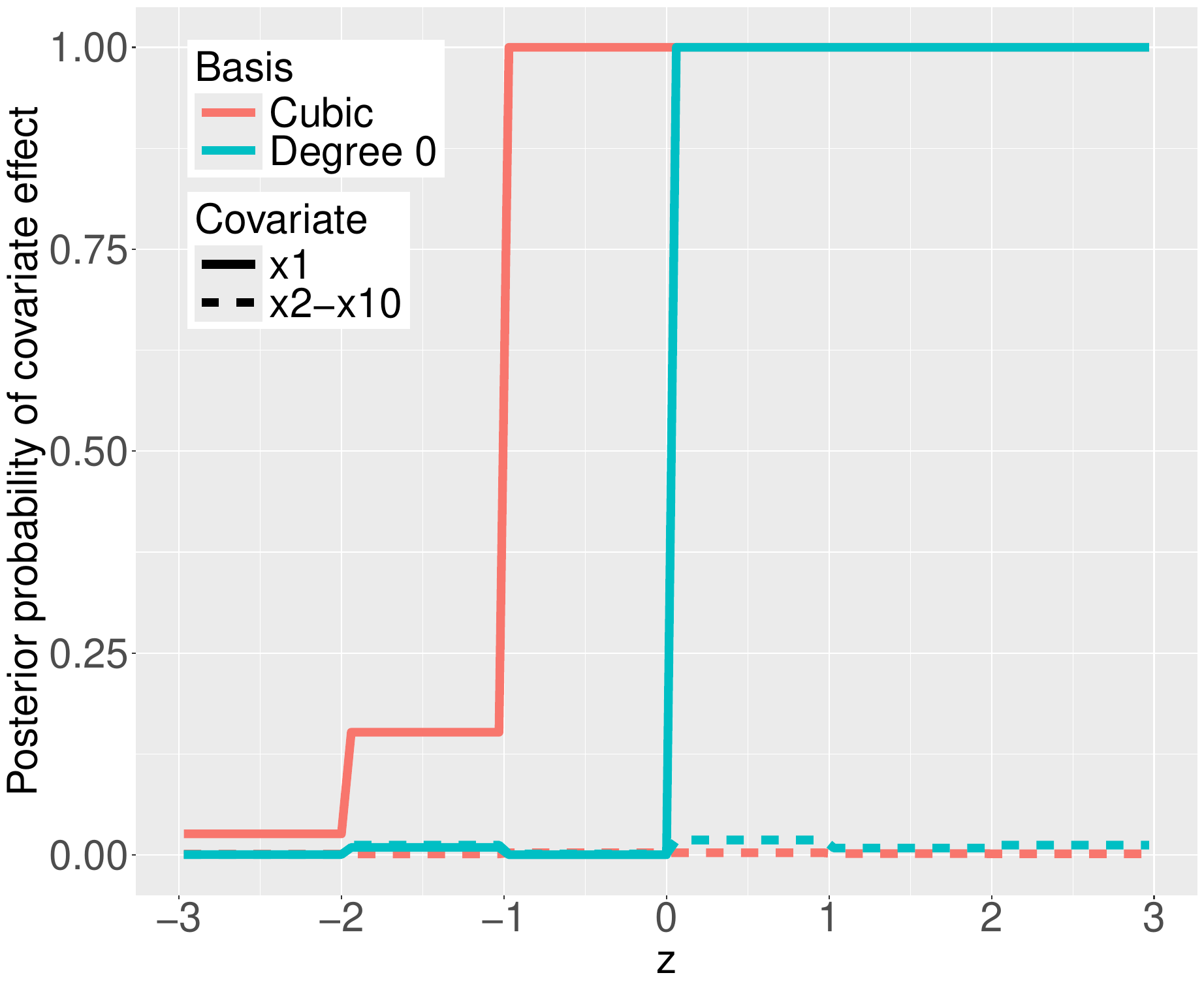} &
\includegraphics[width=0.5\textwidth]{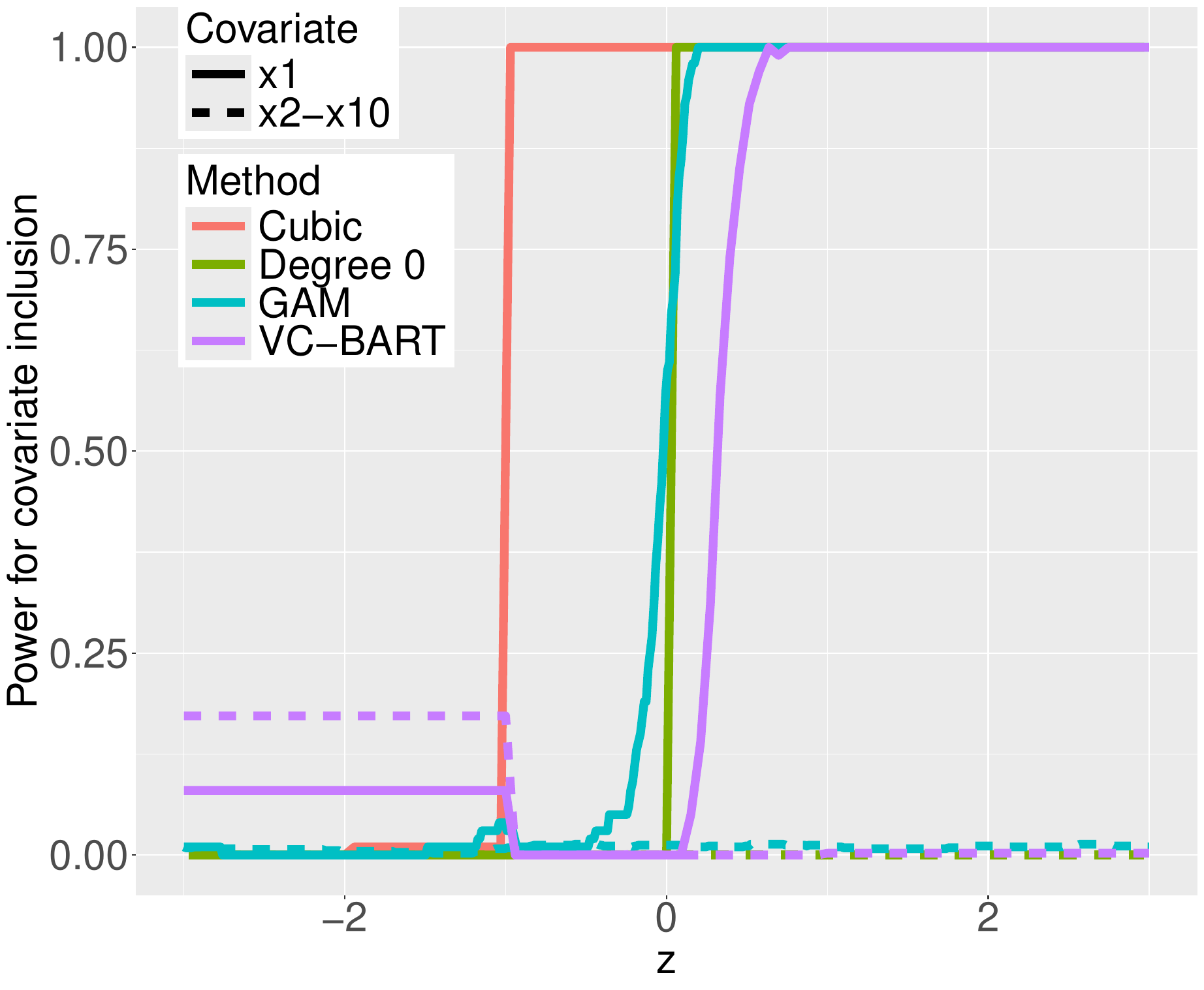}
\end{tabular}
\end{center}
\caption{Independent errors simulation. Posterior probability of local null test and proportion of rejected null hypotheses. Top: $n=100$. Bottom: $n=1,000$}
\label{fig:simiid}
\end{figure}

\begin{table}
\begin{center}
\begin{tabular}{lcccccc}  
         & Gaussian &  ICAR+ & Cubic    & VC-BART & GAM        & GAM \\ 
         & prior    &  prior & B-splines&         & 12 knots & 24 knots\\   
n=100    & 0.173    &  0.166 & 0.168    & 0.297   & 0.159      & 0.173  \\ 
n=1000   & 0.063    &  0.063 & 0.078    & 0.285   & 0.055      & 0.058 \\ 
\end{tabular}
\end{center}
\caption{Independent errors simulation. Root Mean Squared Error $E_F\left[ \hat{E}(y_i) - E_F(y_i) \right]^2$ for 0-degree cut splines, cubic B-splines, VC-BART  and GAM (R package mgcv) with $k=12$ and $k=24$ knots }
\label{tab:simiid_mse}
\end{table}

\begin{table}
\begin{center}
\begin{tabular}{ccc} 
                                                 & $n=100$  & $n=1,000$ \\ 
Gaussian shrinkage prior. Cut degree 0 (1 core)  &   8.5 sec &  5.5 sec  \\ 
Gaussian shrinkage prior. Cut degree 0 (3 cores) &   5.0 sec &  4.0 sec  \\ 
ICAR+ prior. Cut degree 0 (1 core)               &   53.3 sec &  20.3 sec  \\ 
ICAR+ prior. Cut degree 0 (3 cores)              &   31.4 sec &  12.3 sec  \\ 
VCBART          &  69.2 sec  & 7.05 min \\
GAM ($k=12$)             &  1.4 sec  & 4.2 sec\\
GAM ($k=24$)             &  1.3 sec  & 4.2 sec\\
\end{tabular}
\end{center}
\caption{Independent errors simulation. Run times for one simulation on an Ubuntu Linux desktop, 13th Gen Intel core i7 processor, 32Gb RAM}
\label{tab:cputime_simiid}
\end{table}

\subsection{Simulation with independent errors, misaligned knots}
\label{ssec:simulation_iid_misalign}

We extend the simulation results in Section \ref{ssec:simulation_iid} by considering a variation where all the knots in our methodology are misaligned with respect to the true changepoint where $\beta_1(z)$ transitions from zero to non-zero.
Specifically, all settings are equal to those of Section \ref{ssec:simulation_iid}, except that now we set a data-generating $\beta_1(z) = 0$ for $z \leq 0.15$ and $\beta_1(z) \neq 0$ for $z > 0.15$. Recall that $\beta_1(z)$ is the coefficient associated to a binary covariate, hence it defines two group means.
The top panel in Figure \ref{fig:simiid_unaligned} shows this data-generating means for both groups, along with the knots used in our multi-resolution analysis (using our default of 3 resolutions, 
The middle and bottom panels show the average posterior probability $P(\beta_j(z) \neq 0 \mid y)$ and the proportion of rejected local null hypotheses (using a cutoff of $P(\beta_j(z) \neq 0 \mid y) > 0.95$) as a function of $z$.

The results are very similar to those in Section \ref{ssec:simulation_iid}, with some nuisances.
First, note that our specified knots define regions that start at $z=0$ and include the true changepoint $z=0.15$, i.e. regions where $\beta_j(z) \neq 0$ for some $z$ in the region.
Therefore, our approach finds evidence for the existence differences are found within such a region.
That is, our posterior probabilities should be interpreted as there being evidence for differences somewhere in the region.

A second observation is that the average posterior probability and power for $z \in [0,1]$ are slightly lower than in Section \ref{ssec:simulation_iid}, where the knots for two resolutions were aligned with the true changepoint. Intuitively, since truly $\beta_j(z)=0$ for some $z$'s in the region, the statistical power decreases relative to a perfectly aligned region where $\beta_j(z) \neq 0$ for all $z$ in that region.

\begin{figure}
\begin{center}
\begin{tabular}{c}
\includegraphics[width=0.5\textwidth,height=0.4\textwidth]{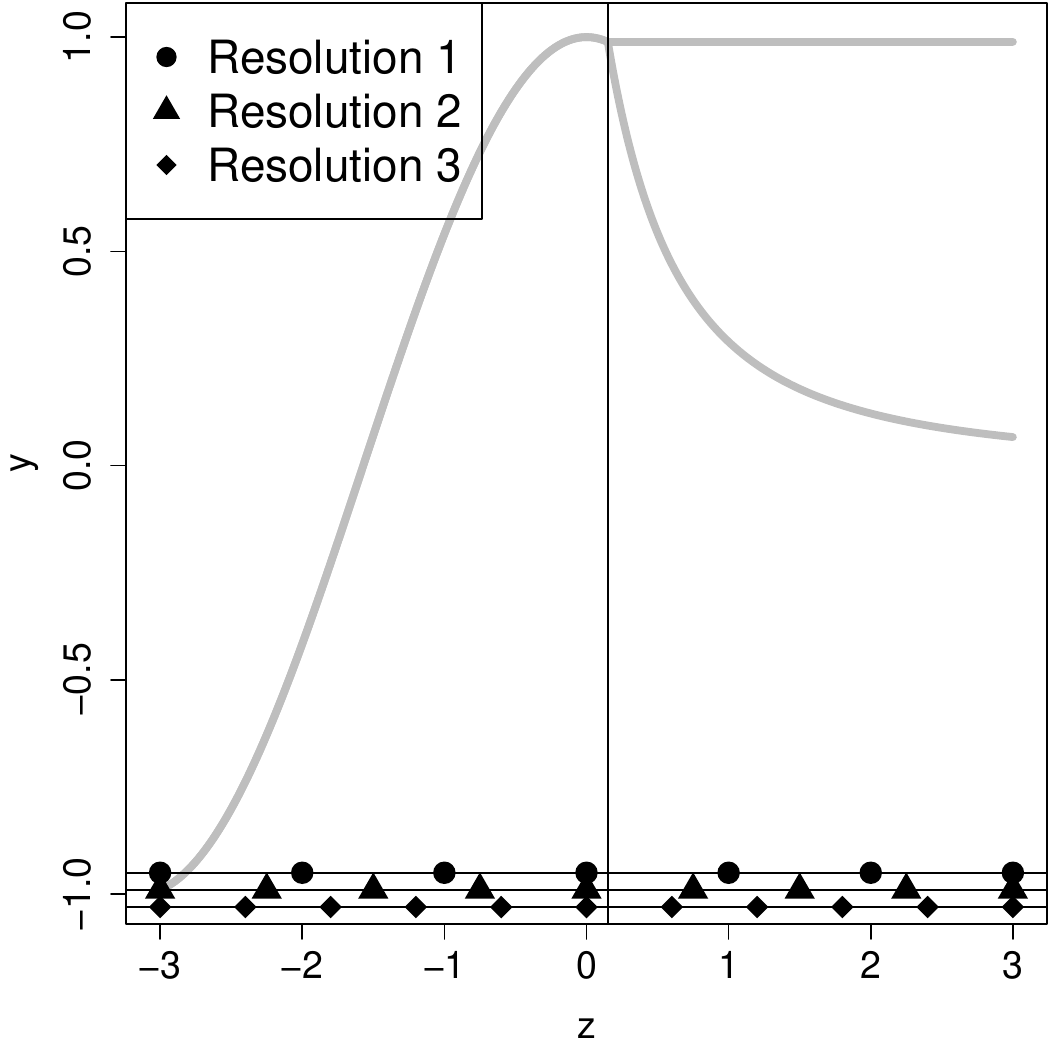} \\
\includegraphics[width=0.5\textwidth,height=0.4\textwidth]{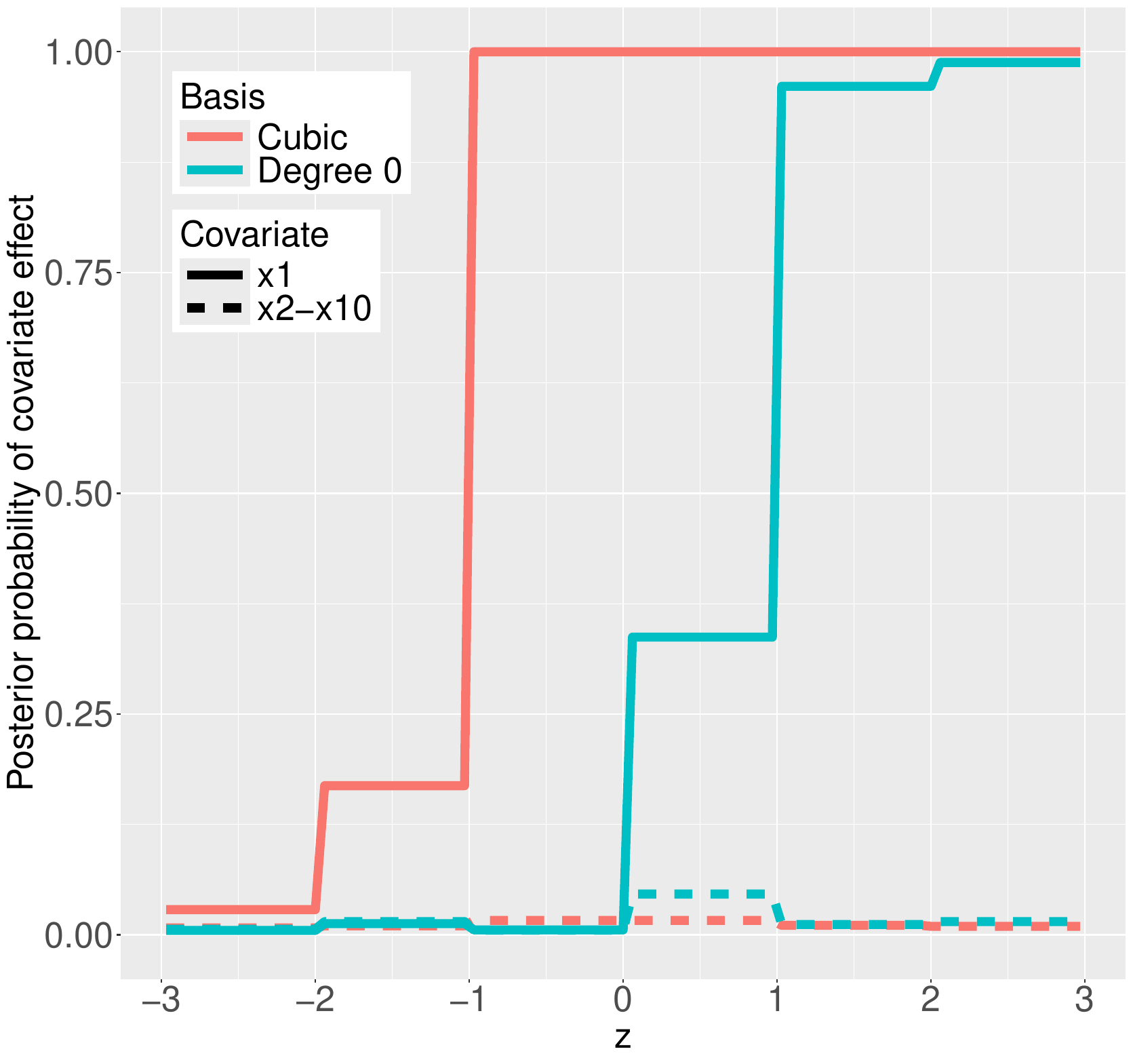} \\
\includegraphics[width=0.5\textwidth,height=0.4\textwidth]{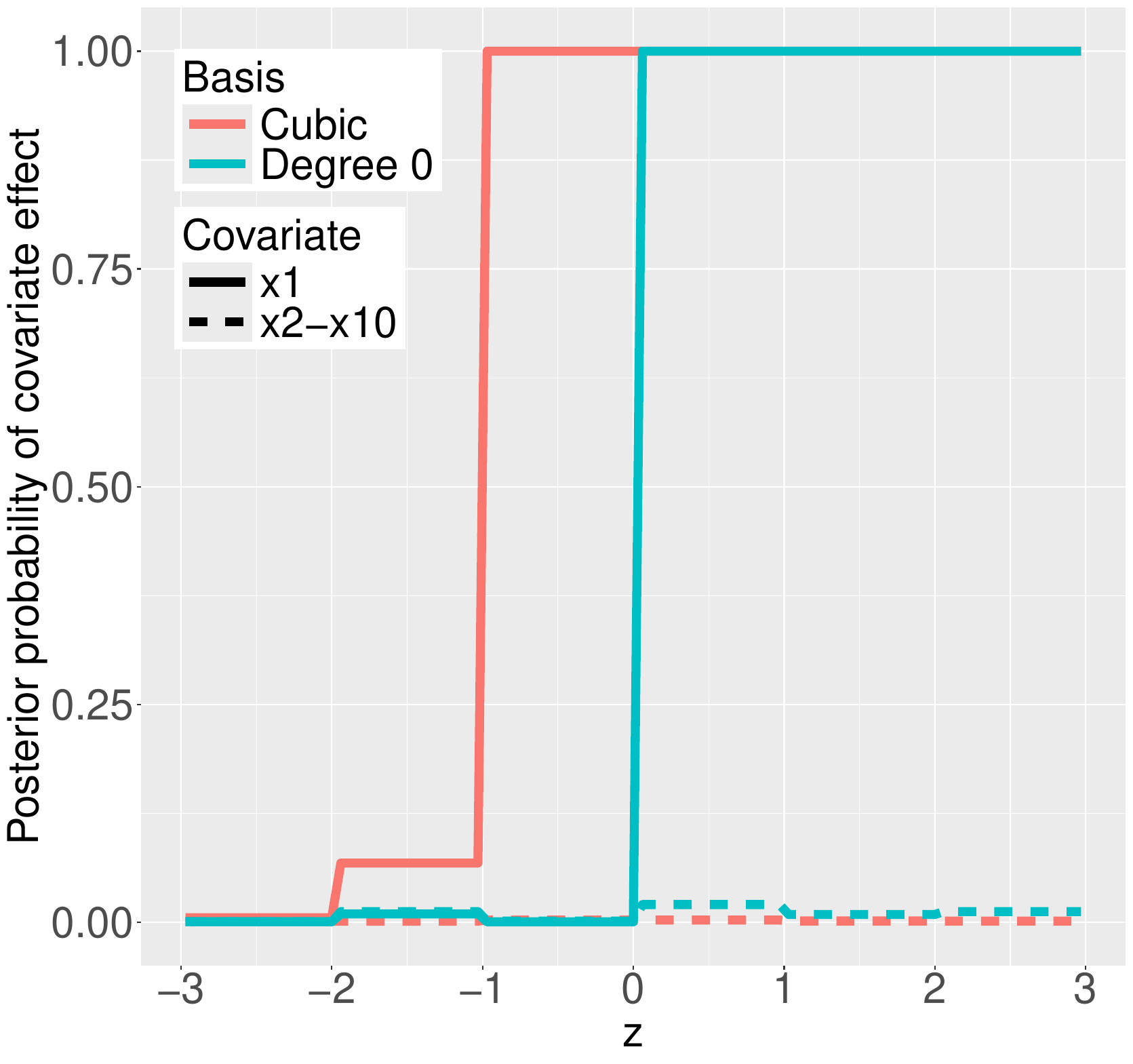} \\
\end{tabular}
\end{center}
\caption{ Independent errors simulation with unaligned knots. 
Data-generating truth (top), and average
posterior probability for a local effect for $n=100$ (middle) and $n=1,000$ (bottom) }
\label{fig:simiid_unaligned}
\end{figure}

\subsection{Simulation with independent errors, non equi-spaced knots}
\label{ssec:simulation_iid_nonequispaced}

We illustrate our methodology in a setting where some of the resolutions place non equi-spaced knots for $\beta_1(z)$, the effect of a binary covariate.
The data-generating truth and the knots for the four considered resolution levels are depicted in Figure \ref{fig:simiid_nonequispaced}, top left panel.
The simulation settings are as in Section \ref{ssec:simulation_iid_misalign}, except that here we consider a single covariate for simplicity, and an increased range of sample sizes $n = \{100,1000,5000\}$.
In particular, note that the knots are misaligned with the true change point at $z=0.15$.
Resolutions 1 and 3 are equi-spaced and set a total of 7 and 11 knots, respectively.
Resolutions 2 and 4 place the same knots as resolution 1 and 3 (respectively) for $z<0$, and twice as many knots for $z>0$.

The right panel in Figure \ref{fig:simiid_nonequispaced} shows the average posterior probability for a local covariate effect for each sample size.
The results are analogous to those in Figure \ref{fig:simiid_unaligned} where only equi-spaced knots were considered.

The bottom panel in Figure \ref{fig:simiid_nonequispaced} show the average posterior probability assigned to the 4 resolutions.
As expected, for the smaller sample size $n=100$ the more parsimonious resolution 1 is favored. For $n=1000$, resolutions 2 and 3 receive more posterior probability. Intuitively, this occurs because the larger numbers of knots in these resolutions allow approximating $\beta_1(z)$ better, and the sample size is large enough to estimate the corresponding parameters accurately enough.
Similarly, for $n=5000$ it is resolutions 2 and 4 which receive the most posterior probability. This is appealing, since these resolutions place more knots at $z>0$ where the data-generating $\beta_1(z)$ varies, and less knots for $z<0$ where $\beta_1(z)=0$ is constant.

\begin{figure}
\begin{center}
\begin{tabular}{cccc}
\multicolumn{2}{c}{Data-generating truth}      & \multicolumn{2}{c}{$P(\beta_1(z) \neq 0 \mid y)$} \\
\multicolumn{2}{c}{and knots for 4 resolutions}& \multicolumn{2}{c}{}\\
\multicolumn{2}{c}{\includegraphics[width=0.5\textwidth]{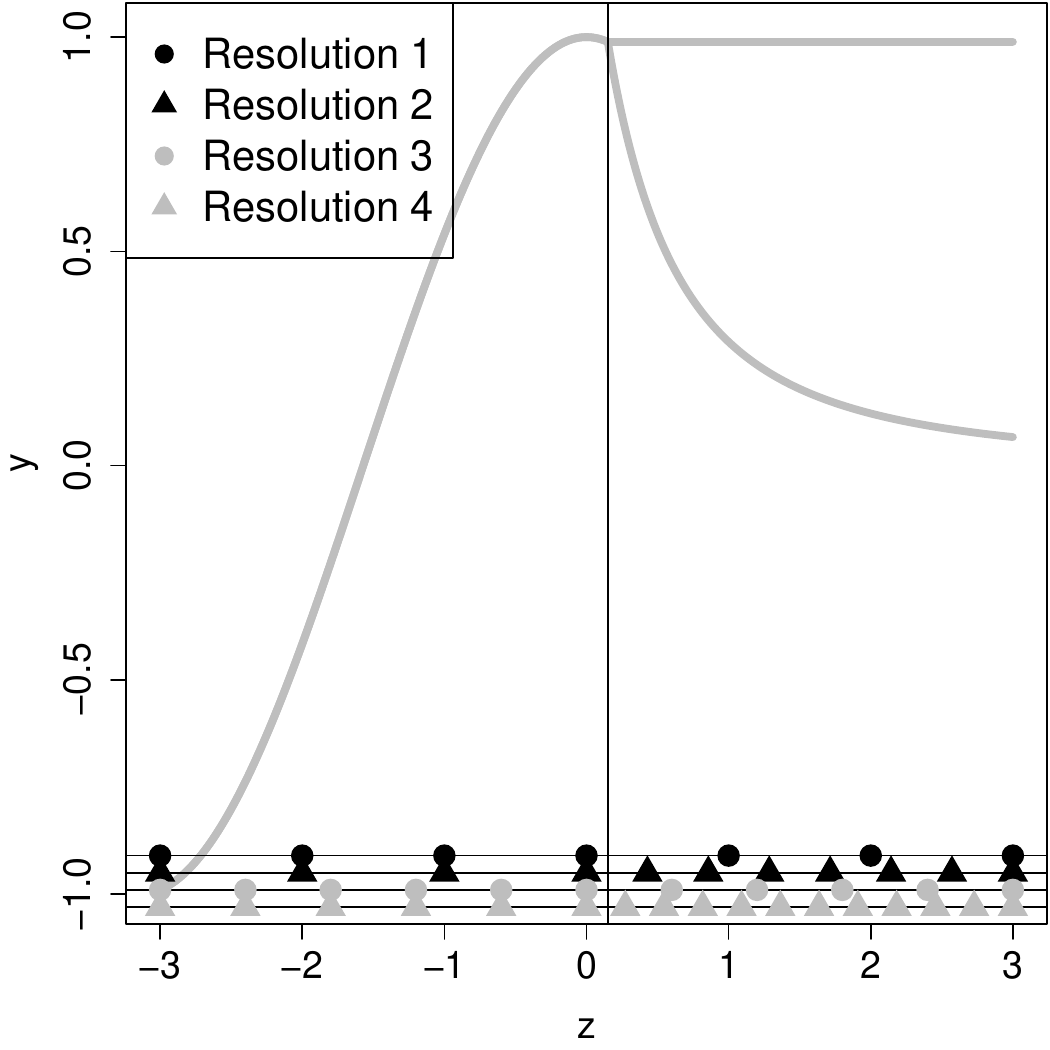}} &
\multicolumn{2}{c}{\includegraphics[width=0.5\textwidth]{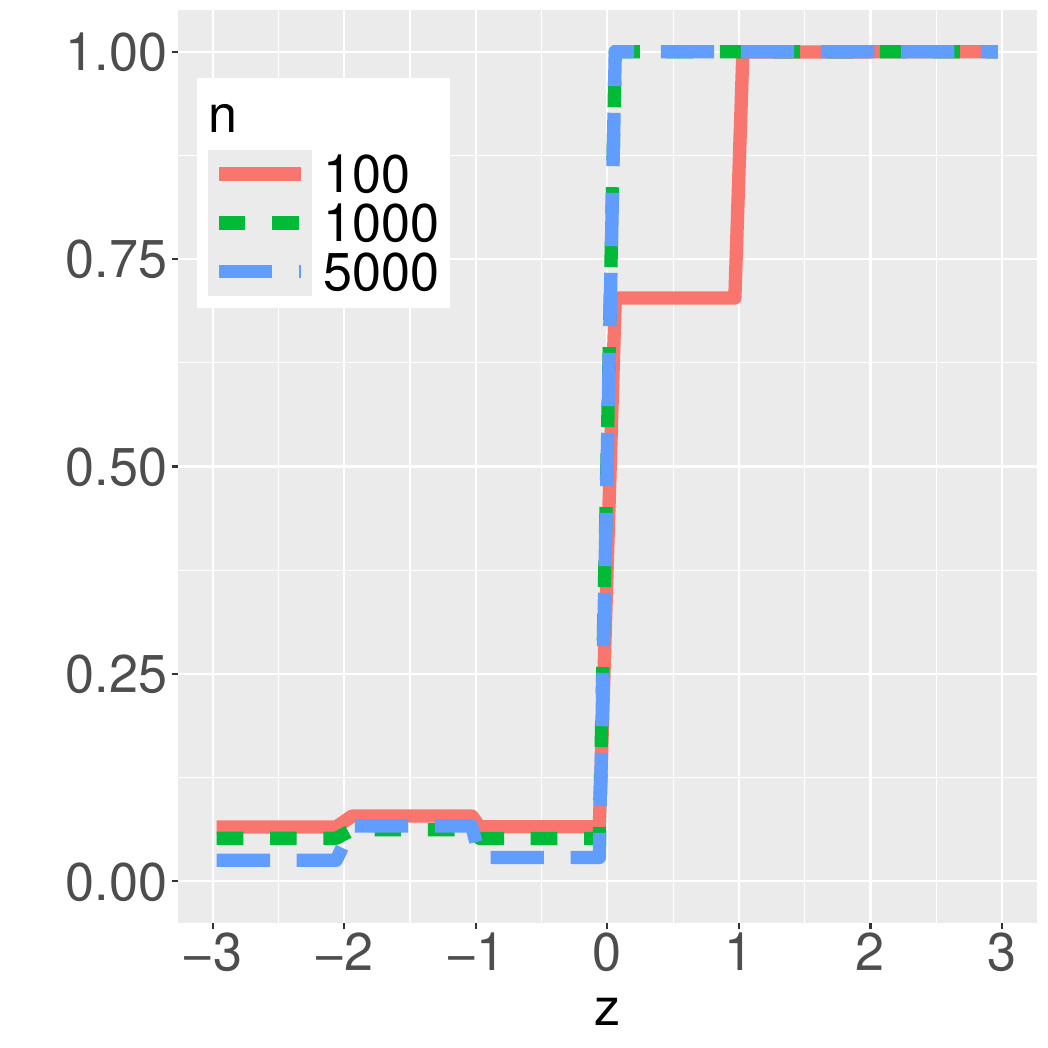}} \\
& & & \\ 
                             &  $n=100$ &  $n=1000$ &  $n=5000$ \\ 
   Resolution 1  &  0.86    &  0.21     &  0.00     \\ 
   Resolution 2  &  0.01    &  0.30     &  0.55     \\ 
   Resolution 3  &  0.13    &  0.49     &  0.03     \\ 
   Resolution 4  &  0.00    &  0.00     &  0.42     \\  
\end{tabular}
\end{center}
\caption{ Independent errors simulation with non-equispaced knots.
Top left: data-generating truth and four resolutions. Resolutions 1 and 3 have the same number of knots for $z<0$ and $z>0$.
Resolutions 2 and 4 have twice as many knots for $z>0$.
Top right: average $P(\beta_1(z) \neq 0 \mid y)$ for $n=100$, $n=1,000$ and $n=5,000$.
Bottom: average posterior probability assigned to each resolution
 }
\label{fig:simiid_nonequispaced}
\end{figure}



\subsection{Simulation with independent errors, bivariate $z$}
\label{ssec:simulation_iid_bivar}

\begin{figure}
\begin{center}
\includegraphics[width=0.6\textwidth]{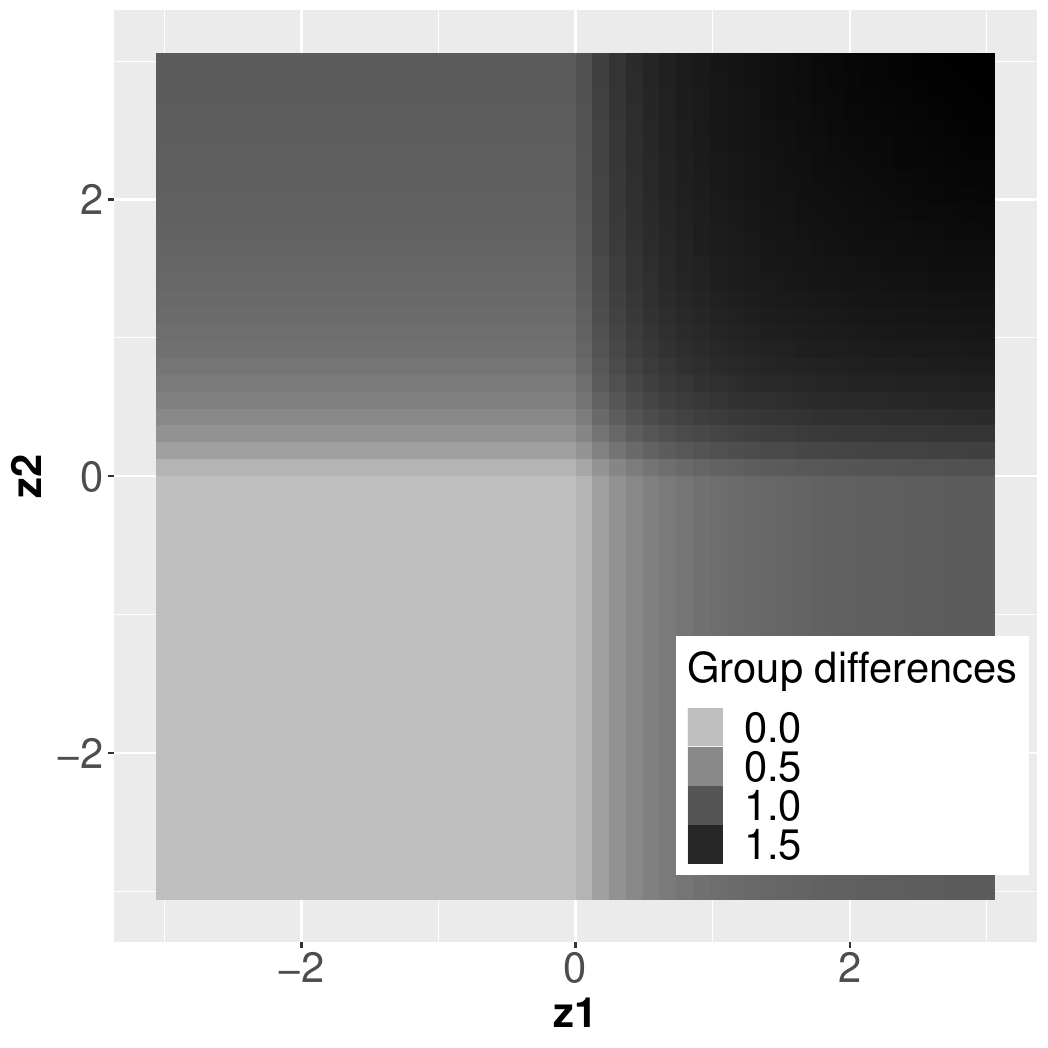}
\end{center}
\caption{True group mean differences $E_F(y_i \mid x_{i1}=1, z_i) - E_F(y_i \mid x_{i1}=0, z_i)$ in bivariate $z$ simulation}
\label{fig:bspline_fit_cubic_bivar}
\end{figure}

\begin{figure}
\begin{center}
Covariate 1 \\
\includegraphics[width=1\textwidth, height=0.65\textwidth]{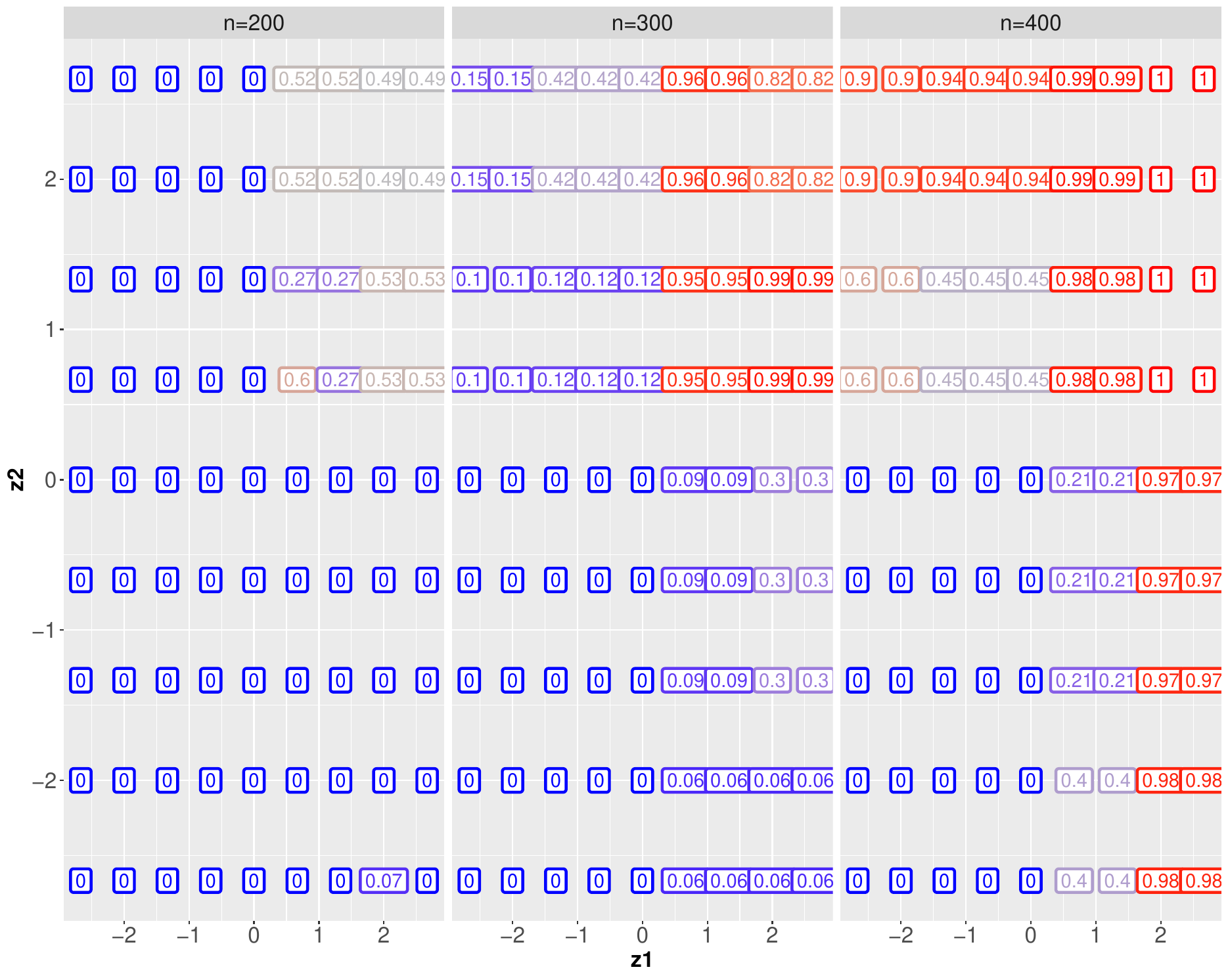} \\
Covariates 2-10 \\
\includegraphics[width=1\textwidth, height=0.65\textwidth]{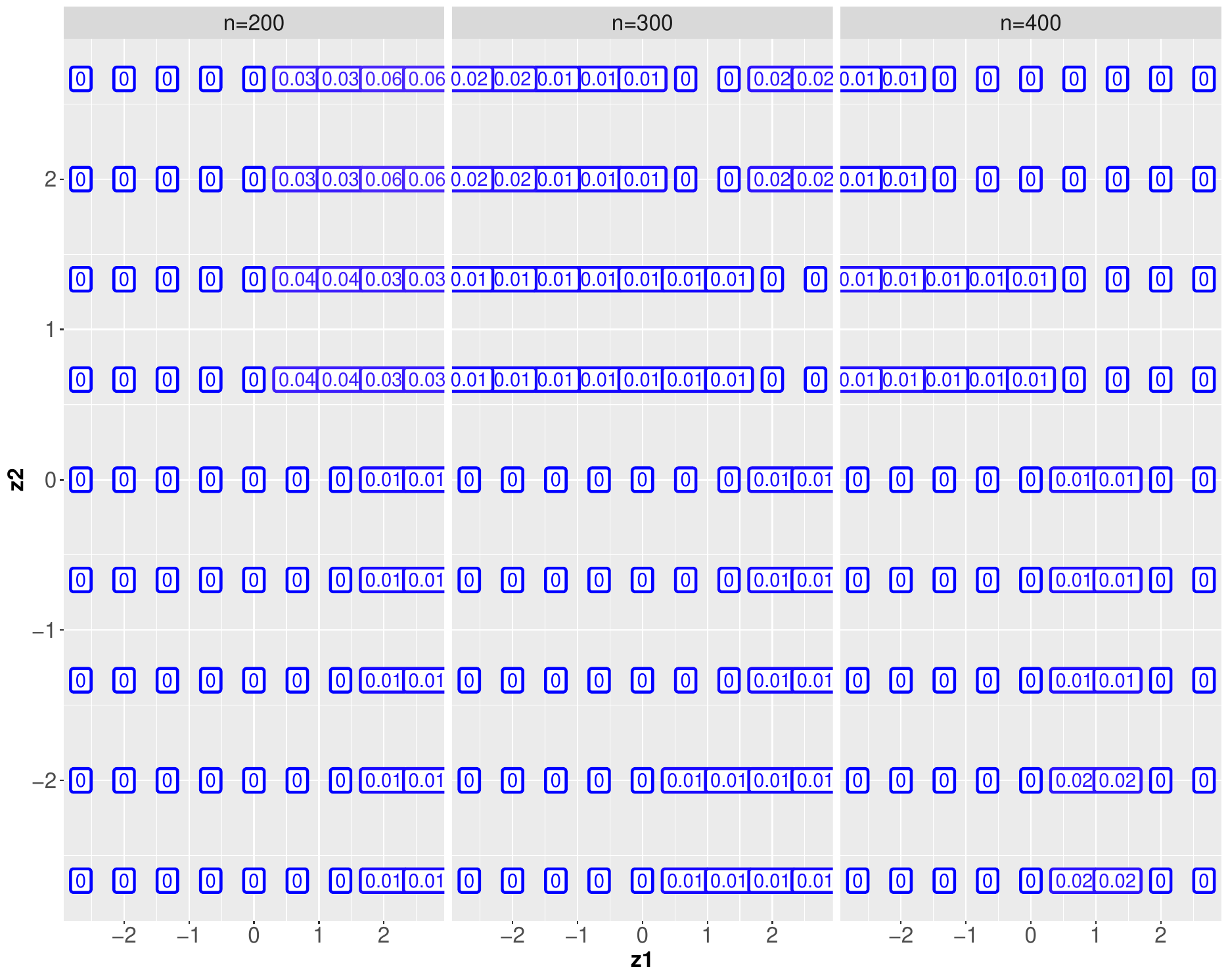}
\end{center}
\caption{Simulation example with bivariate $z$ for $p=10$, $n=200, 300, 400$.
Posterior probabilities for a local effect of covariate 1 (truly active if $z_1>0$ or $z_2>0$) and inactive covariates 2-10}
\label{fig:simiid_bivar}
\end{figure}

Figure \ref{fig:simiid_bivar} shows a simulation study with bivariate $z$ that extends that presented in Section \ref{sec:simulation_iid}. The results are averaged across 50 independent simulations.
 We used a multi-resolution analysis that placed 5, 10 and 15 knots in each of the two dimensions. The average run time was 20 seconds using 1 core and 10 seconds using 3 cores on an Ubuntu Linux desktop, 13th Gen Intel core i7 processor, with 32Gb RAM. 
As in Section \ref{sec:simulation_iid} we consider $p=10$ covariates where $x_{i1}$ is binary and the remaining covariates $x_{i2},\ldots,x_{ip}$ are Gaussian, correlated with $x_{i1}$, and truly have no effect on the outcome. 
The true mean is set such that covariate $x_{i1}$ has an effect whenever $z_{i1}>0$ or $z_{i2}>0$.
Specifically, the true mean for group $x_{i1}=1$ is
\begin{align}
E_F(y_i \mid x_{i1}=0, z_i)= 
\sum_{j=1}^2 \mbox{I}(z_{ij} \leq 0) \cos(z_{ij}) + \mbox{I}(z_{ij} \leq 0)
\nonumber
\end{align}
In contrast, for group $x_{i1}=0$ it is
\begin{align}
E_F(y_i \mid x_{i1}=0, z_i)= 
\sum_{j=1}^2 \mbox{I}(z_{ij} \leq 0) \cos(z_{ij}) + \mbox{I}(z_{ij} \leq 0) \frac{1}{(z_{ij}^2+1)^2}.
\nonumber
\end{align}
That is, the setting is akin to that in the top panels of Figure \ref{fig:splinefit}, with bivariate $z$. 
Figure \ref{fig:bspline_fit_cubic_bivar} shows the group differences $E_F(y_i \mid x_{i1}=1, z_i) - E_F(y_i \mid x_{i1}=0, z_i)$.
We consider sample sizes $n \in \{200, 300, 400\}$, which suffice to illustrate the transition from a low- to a high-power setting.

The top panel in Figure \ref{fig:simiid_bivar} show that for $n=200$ the average marginal posterior probabilities for $x_{i1}$ are nearly zero, except for the region $(z_{i1}>0, z_{i2}>0)$, where the group differences are larger.
For $n=300$ and $n=400$ said posterior probabilities grow closer to 1 whenever there are truly group differences, but remain close to 0 for $(z_{i1}<0, z_{i2}<0)$ where there are no group differences.
The bottom panel shows how for covariates 2-10 the marginal probabilities remain close to zero for all $(z_{i1},z_{i2})$, indicating that the type I error probability is satisfactorily controlled.

\subsection{Functional data simulation}
\label{ssec:simulation_fda}

\begin{table}
\begin{center}
\begin{tabular}{ccccc} 
\multicolumn{5}{c}{$M=50$ individuals}        \\ 
& \multicolumn{2}{c}{Cut degree 0} & \multicolumn{2}{c}{VC-BART} \\
 Region         & Covariate 1 & Covariates 2-10 &  Covariate 1  &  Covariates 2-10  \\
$z \in $(-3,-2] & 0.06         & 0.080          &  0.60         &  0.58 \\ 
$z \in $(-2,-1] & 0.06         & 0.078          &  0.61         &  0.57 \\ 
$z \in $(-1,0]  & 0.056        & 0.072          &  0.61         &  0.58 \\ 
$z \in $(0,1]   & 0.73         & 0.084          &  0.82         &  0.56 \\ 
$z \in $(1,2]   & 0.873        & 0.070          &  0.97         &  0.56 \\ 
$z \in $(2,3]   & 1            & 0.057          &  0.98         &  0.54 \\ 
\multicolumn{5}{c}{$M=100$ individuals} \\ 
& \multicolumn{2}{c}{Cut degree 0} & \multicolumn{2}{c}{VC-BART} \\
 Region          & Covariate 1   & Covariates 2-10 &  Covariate 1 &  Covariates 2-10 \\
$z \in $ (-3,-2] & 0.05          & 0.036           &  0.64        &   0.58           \\
$z \in $ (-2,-1] & 0.045         & 0.038           &  0.72        &   0.52           \\ 
$z \in $ (-1,0]  & 0.0376        & 0.039           &  0.59        &   0.48           \\ 
$z \in $ (0,1]   & 0.99          & 0.036           &  0.89        &   0.52           \\ 
$z \in $ (1,2]   & 0.995         & 0.031           &  1           &   0.54           \\ 
$z \in $ (2,3]   & 1             & 0.027           &  1           &   0.51           \\ 
\end{tabular}
\end{center}
\caption{Functional data simulation. Proportion of rejected null hypothesis for our 0-degree cut orthogonal basis,  and VC-BART.  For covariate 1 and $z>0$ this is the statistical power, otherwise it is the type I error}
\label{tab:simfda_10covar}
\end{table}

\begin{figure}
\begin{center}
\begin{tabular}{cc}
\includegraphics[width=0.48\textwidth]{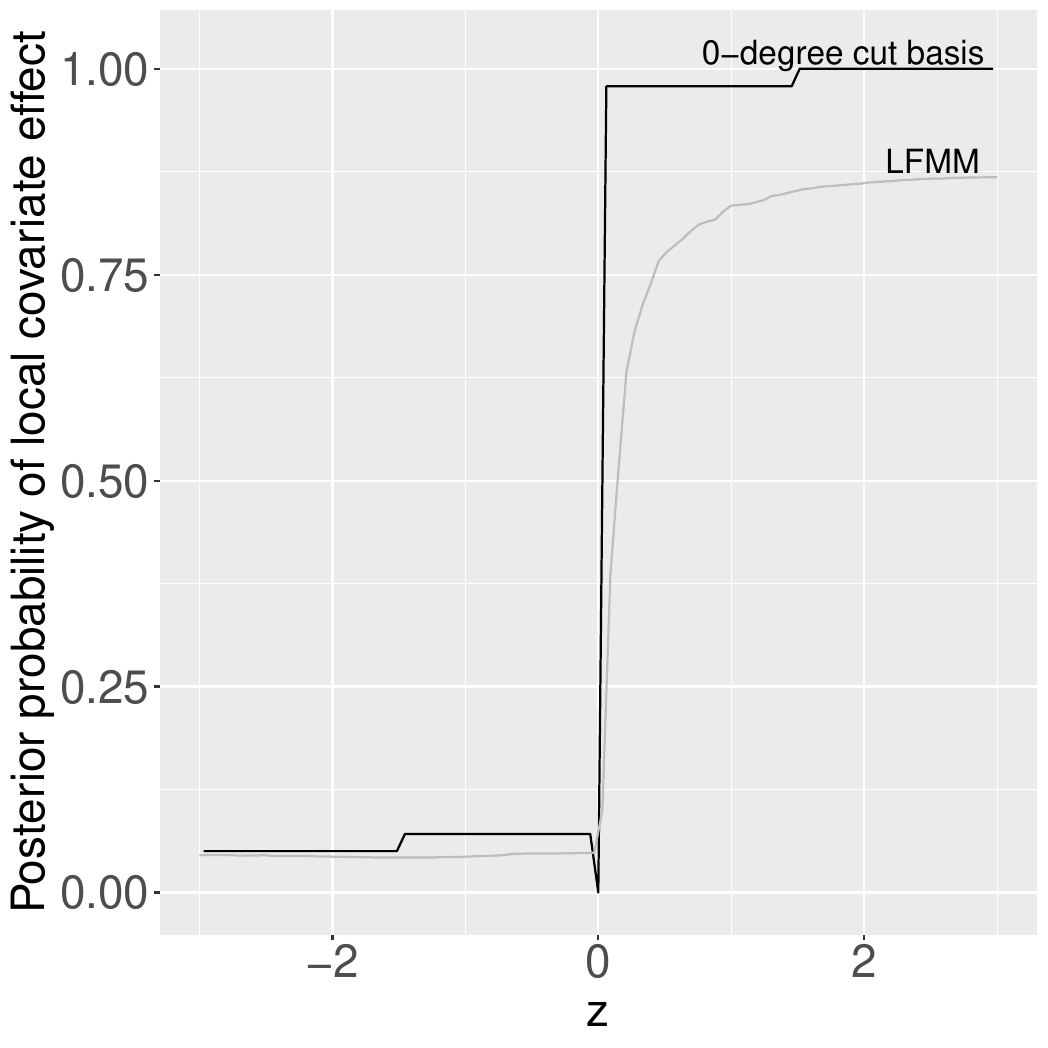} &
\includegraphics[width=0.48\textwidth]{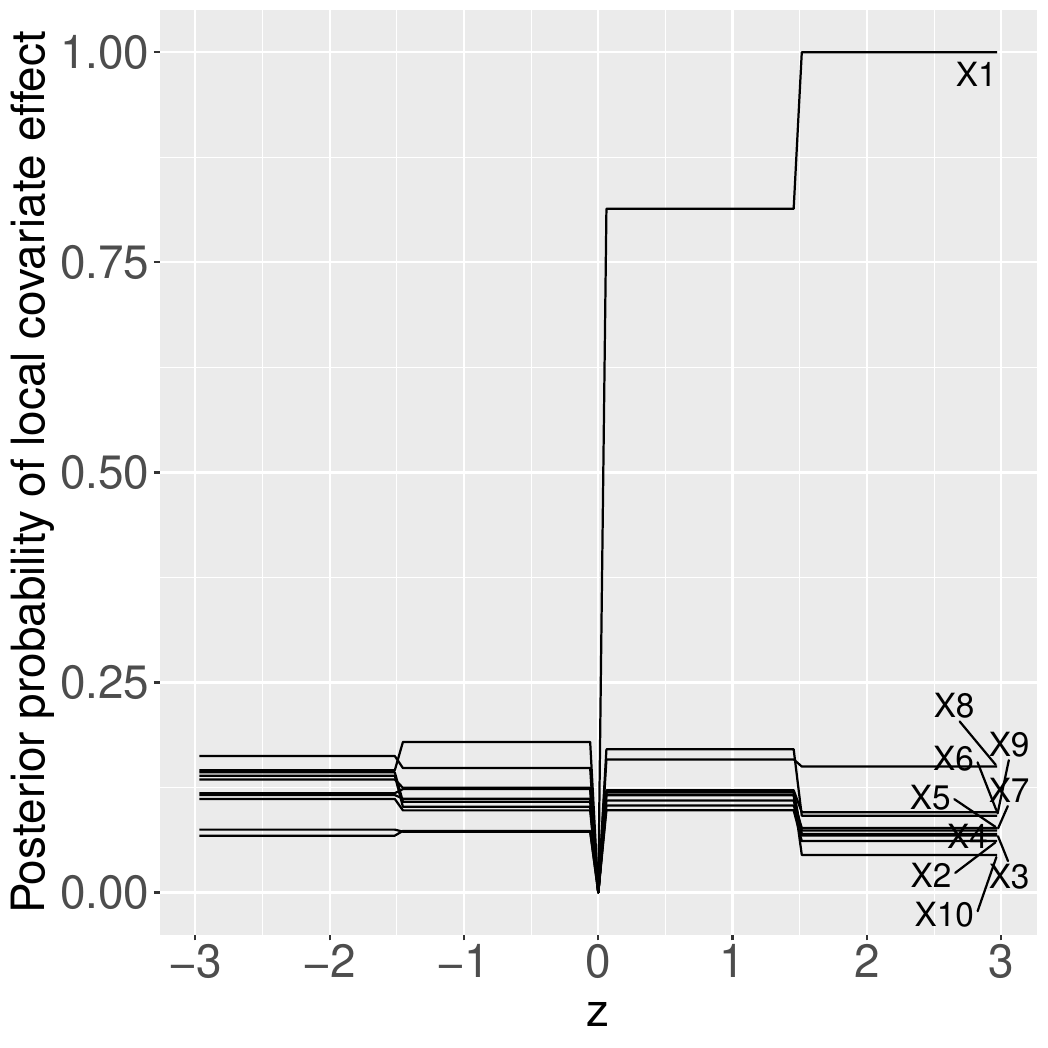}
\end{tabular}
\end{center}
\caption{Functional data simulation. Posterior probability of a local covariate effect when using a single covariate (left) and 10 covariates (right)}
\label{fig:pp_simfda}
\end{figure}

Figure \ref{fig:pp_simfda} displays a comparison of the two Bayesian methods, our framework and tensor model (referred to by the acronym LFMM by its authors) proposed by \cite{paulon:2023}, in terms of the posterior probability assigned to the existence of a local covariate effect as a function of $z$ (averaged across 100 simulations).
The left panel is based on fitting the model where one only consider the truly active covariate 1, whereas the right panel also includes truly spurious covariates 2-10. Recall that covariate 1 truly has an effect only for $z>0$.

\subsection{Salary data}

\begin{figure}
\begin{center}
\includegraphics[width=0.7\textwidth]{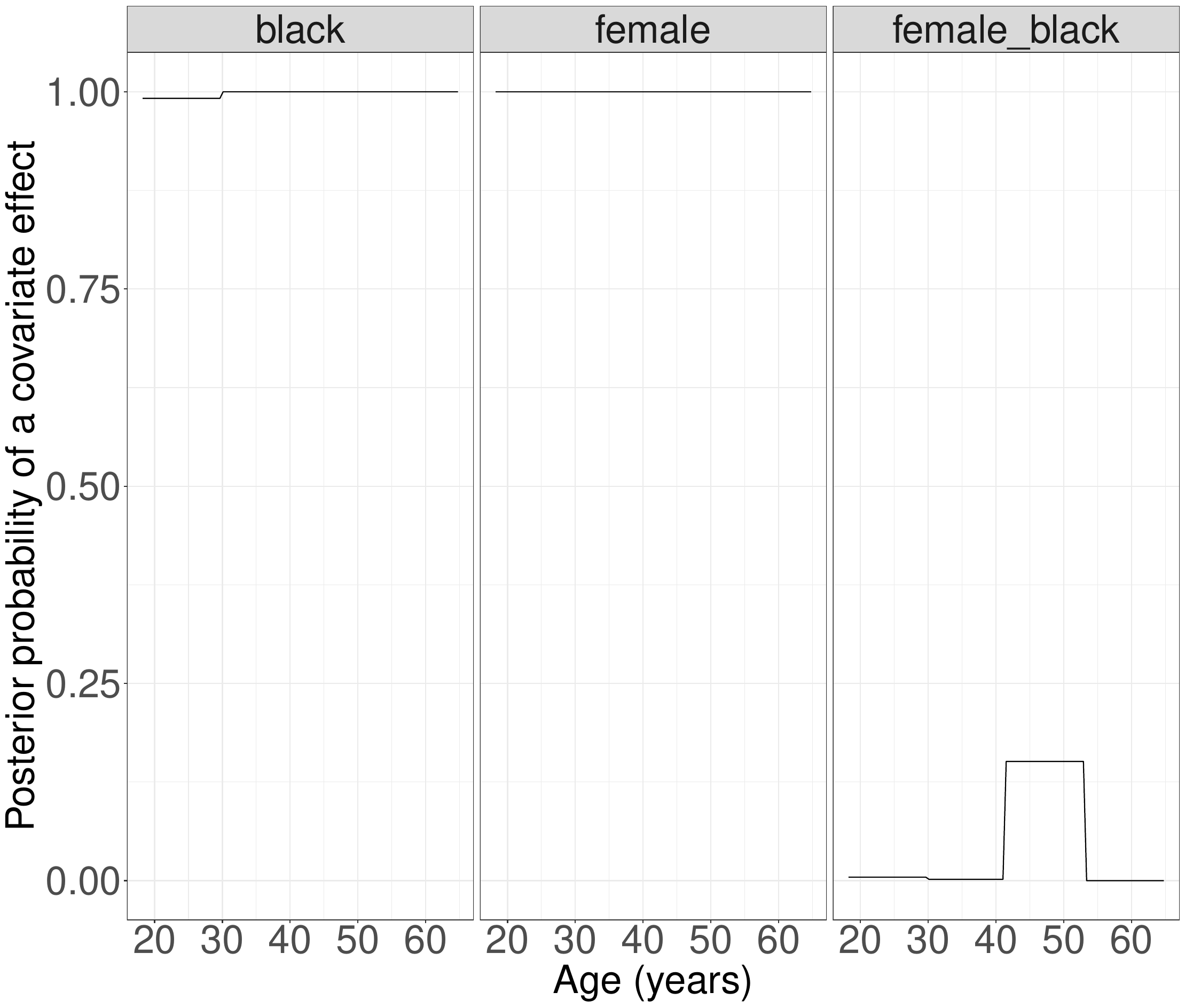}
\end{center}
\caption{Salary Data. Posterior probability of a salary gap associated with black race, sex and a black:sex interaction versus age, adjusted by occupation and worker class, and also including local effects for college, government, Hispanic race, and self-employment}
\label{fig:salary_female_race}
\end{figure}

\begin{figure}
\begin{center}
\begin{tabular}{cc}
\multicolumn{2}{c}{20 knots for baseline, 20 knots for local tests} \\
\includegraphics[width=0.5\textwidth]{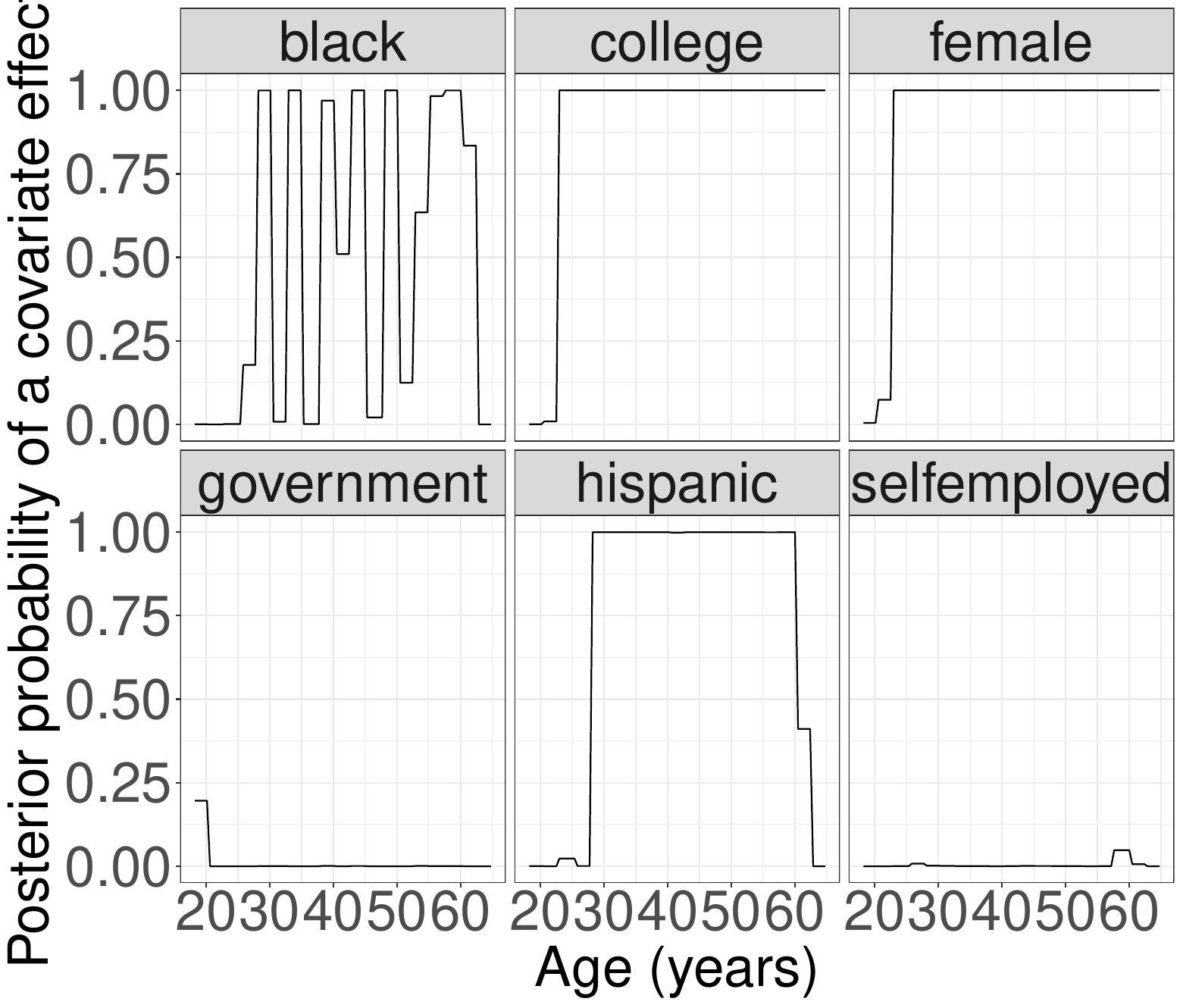} &
\includegraphics[width=0.5\textwidth]{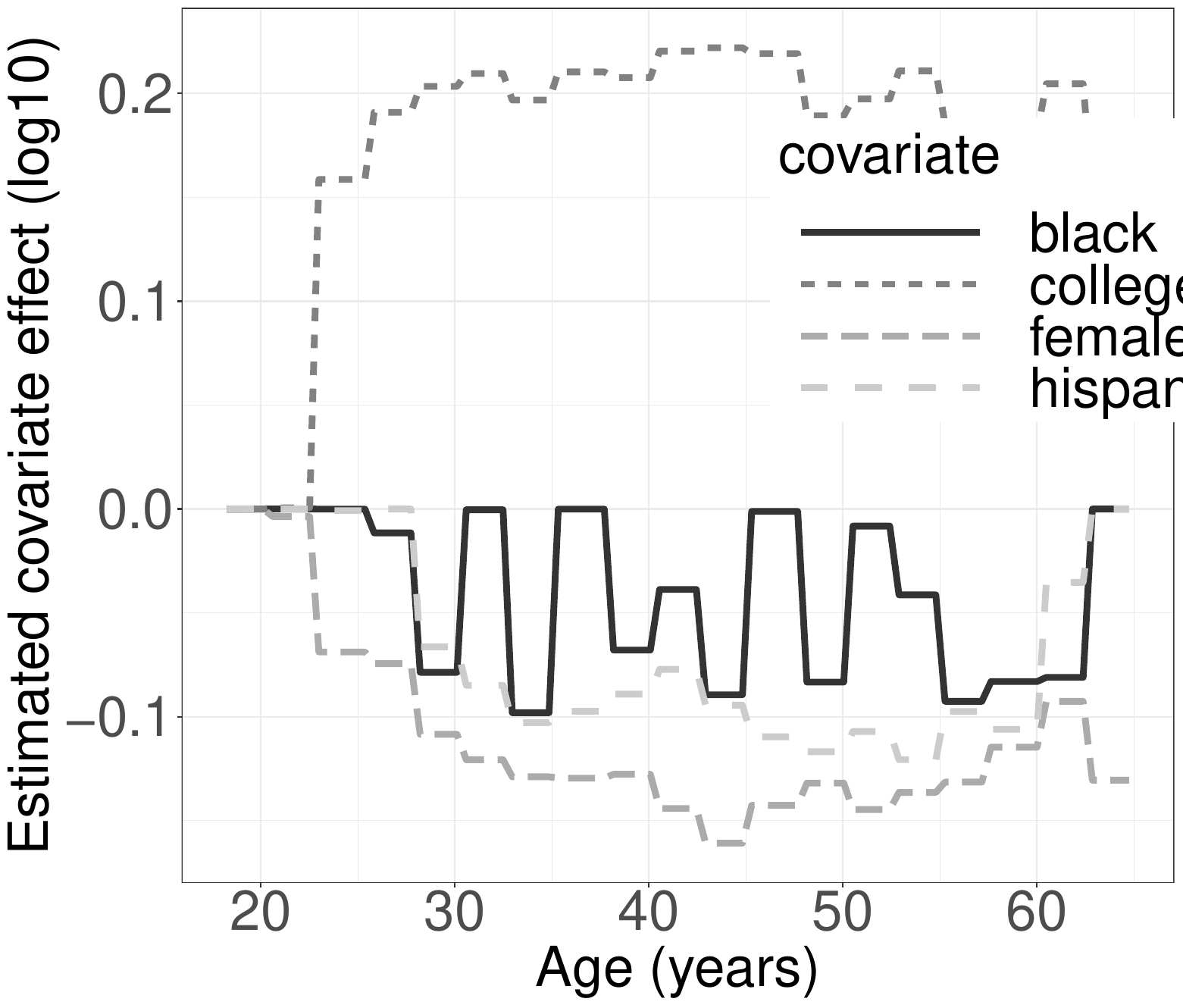} \\
\multicolumn{2}{c}{30 knots for baseline, 30 knots for local tests} \\
\includegraphics[width=0.5\textwidth]{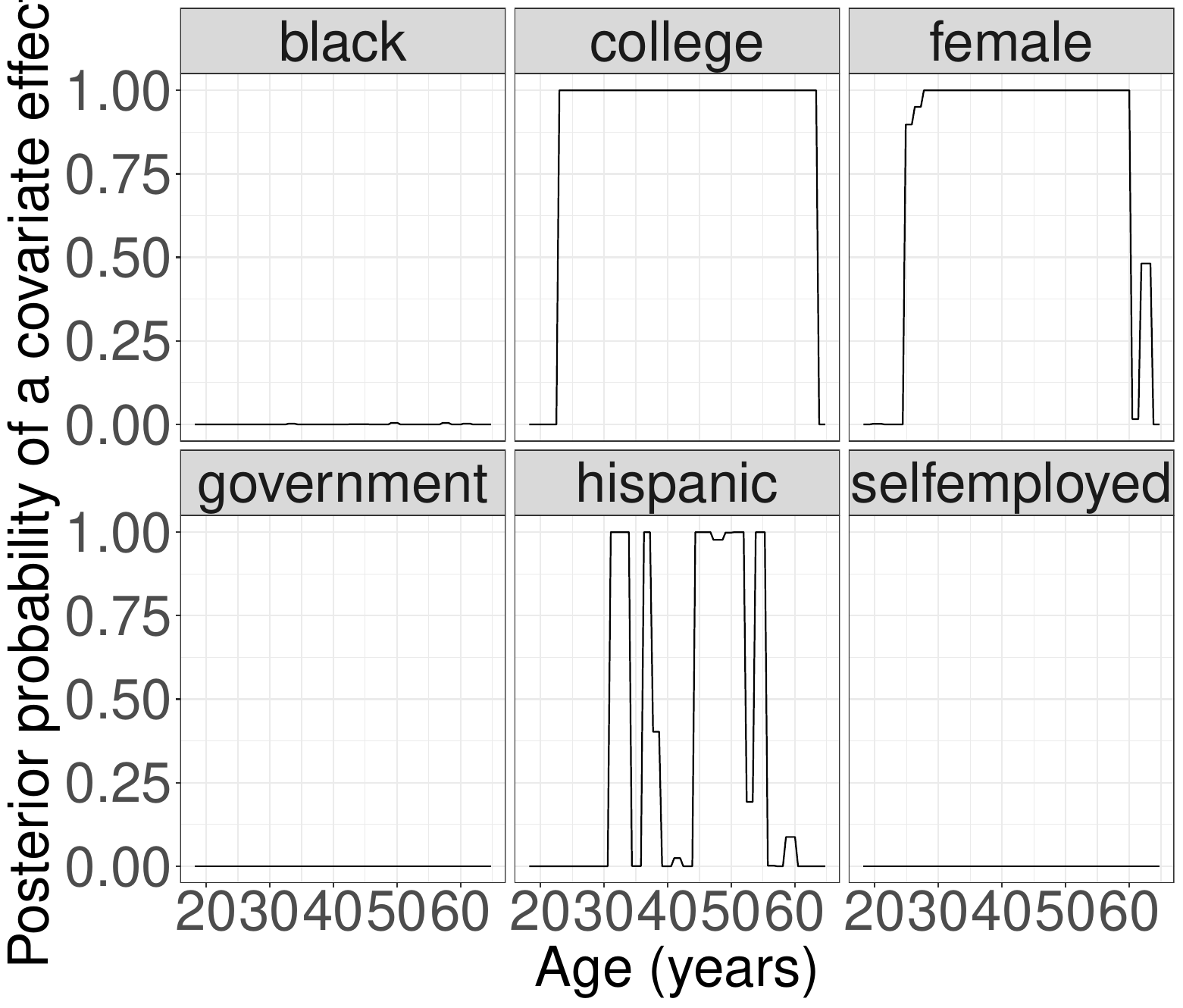} &
\includegraphics[width=0.5\textwidth]{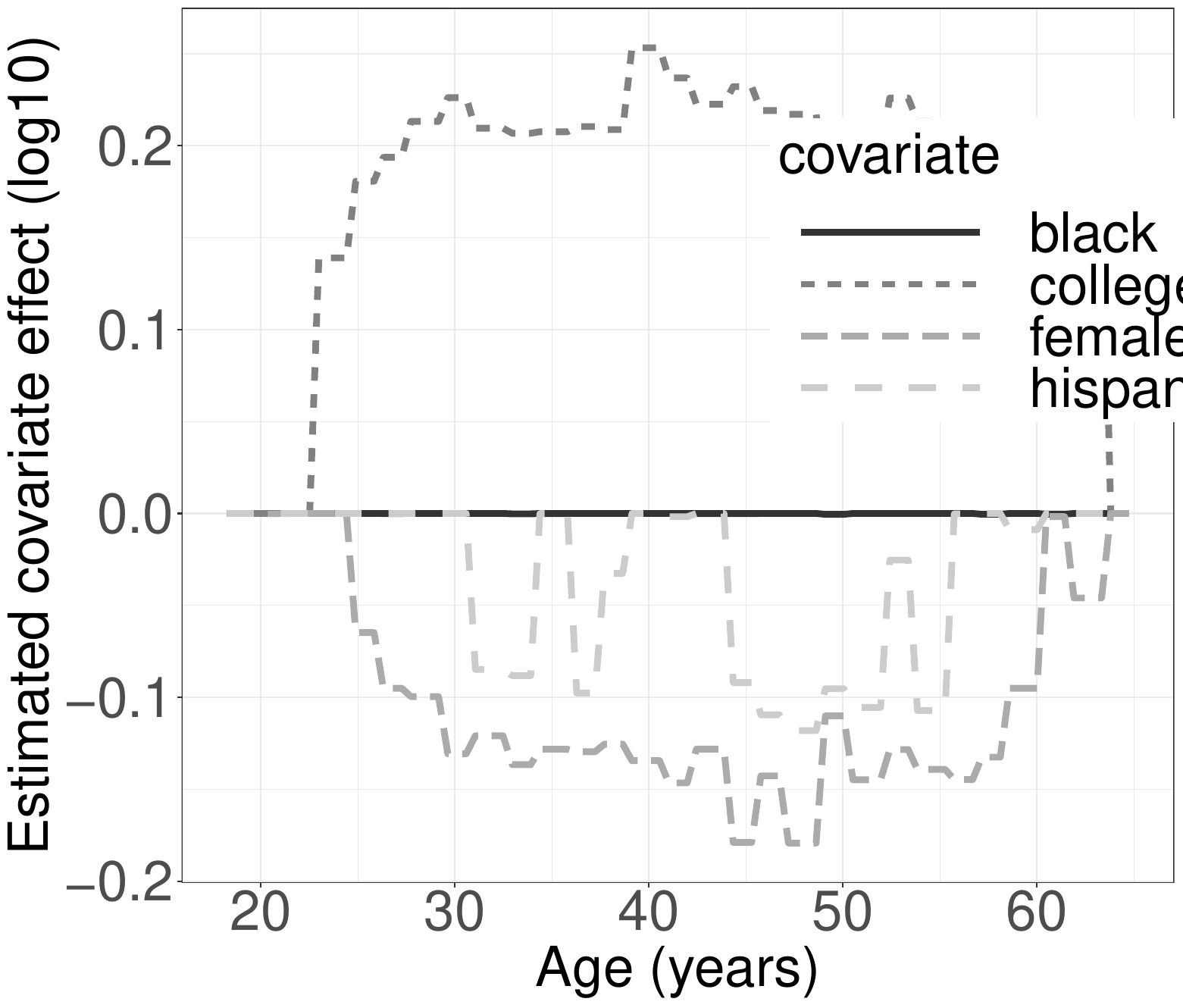}
\end{tabular}
\end{center}
\caption{Salary data results with fixed 20 (top) and 30 (bottom) fixed knots. 
Left: posterior probability of a salary gap associated with race, gender and college education versus age, adjusted by occupation and worker class. Right: corresponding estimated effect (log-10 scale)}
\label{fig:salary_suppl}
\end{figure}

The black race and female sex were found to be strongly associated to salary gap in our main analysis. To further explore this important source of potential salary discrimination, we repeated the analysis, now adding an interaction between black race and sex.
Figure \ref{fig:salary_female_race} shows that we obtained a small probability for the existence of said interaction, at all ages.

We also did additional analyses to assess the sensitivity of the results if one were to fix the number of knots, instead of using a multi-resolution analysis where one averages over resolutions.
Figure \ref{fig:salary_suppl} shows results for the salary data when fixing the number of knots to 20 (top panels) and 30 (bottom panels). The results suggest that, similar to what was observed in Figure \ref{fig:splinefit_suppl}, statistical power decreases when one adds more knots. This is apparent for covariates black and hispanic, which had the smaller estimated effects in our original analysis in Figure \ref{fig:salary} (where Bayesian model averaging was used to average over the uncertainty in the number of knots).
The results also suggest that there is no false positive inflation, covariates government and self-employed continue to show no local covariate effects at any age. Finally, covariates college and female are similar to our original analysis, in that they continue to be found to have a local effect on salary at almost all ages.

\subsection{Application to multi-electrode electrocorticography data}

\begin{figure}
	\begin{center}
		\begin{tabular}{cc}
		 	\includegraphics[width=0.45\textwidth]{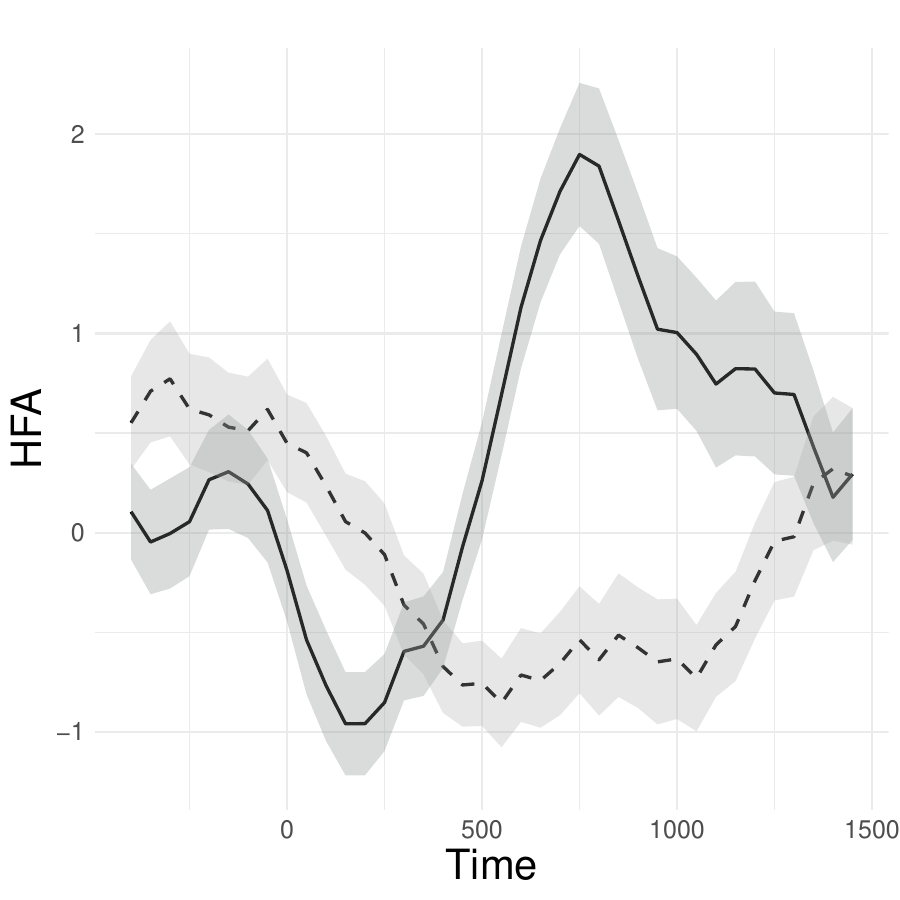} &
			\includegraphics[width=0.5\textwidth]{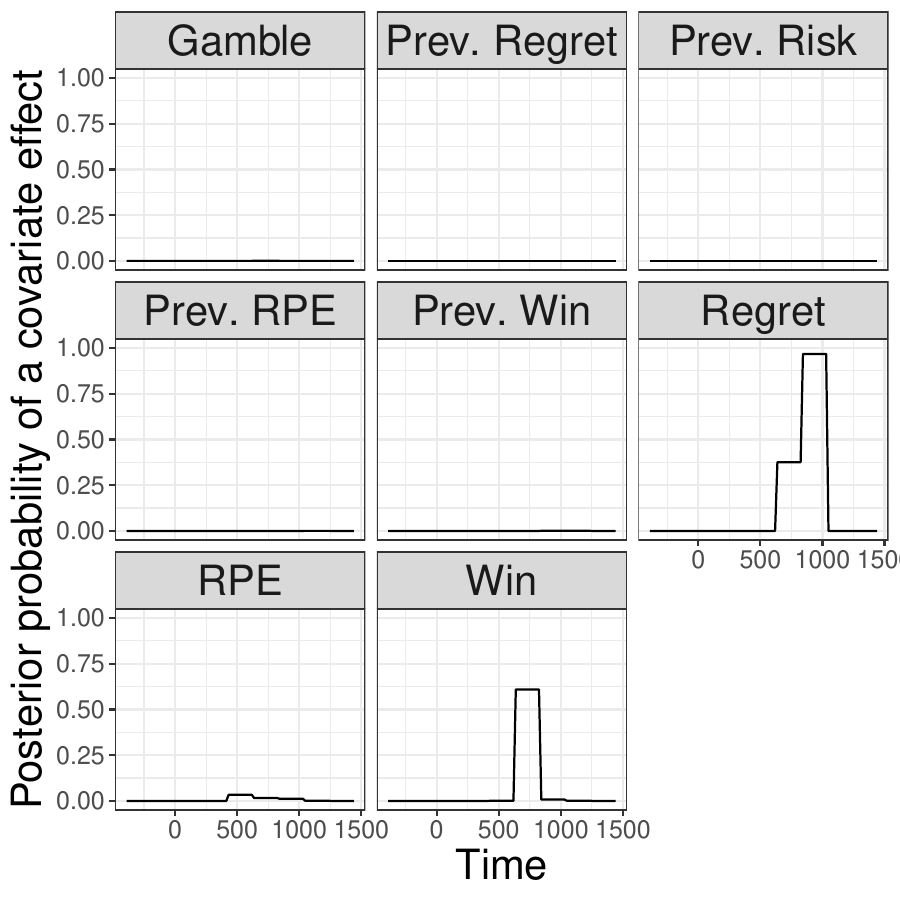} 
		\end{tabular}
	\end{center}
	\caption{ECoG data. Left: mean high-frequency activity signal averaged across trials where the gamble was won (solid line) or lost  (dashed line), and point-wise 80\% confidence intervals. 
	Right: posterior probability of local covariate effects on brain activity}
	\label{fig:ECoG}
\end{figure}

\begin{figure}[htbp]
\begin{center}
\begin{tabular}{cc}
\includegraphics[width=0.45\textwidth]{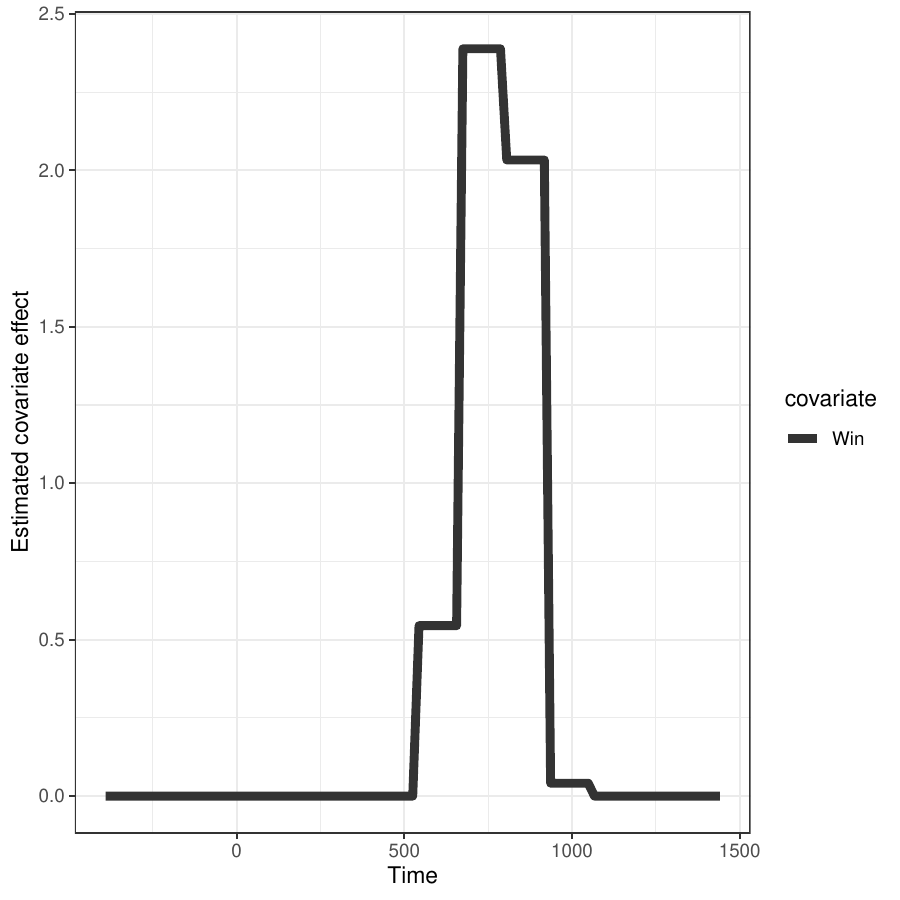} &
\includegraphics[width=0.45\textwidth]{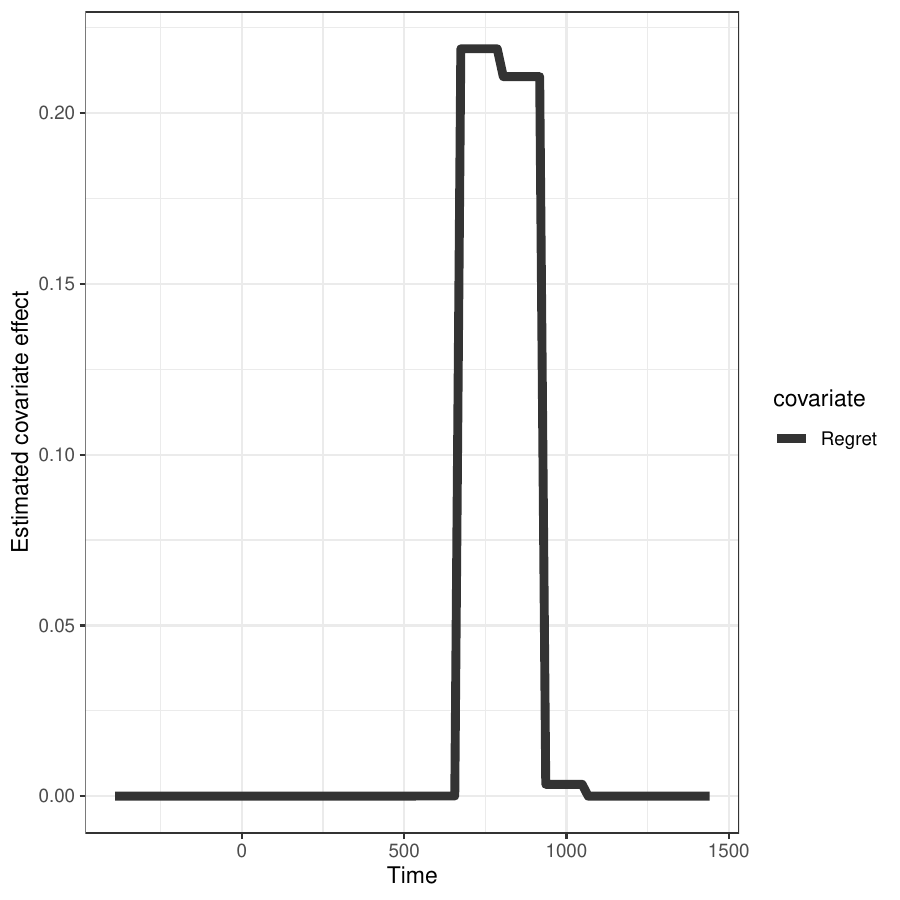} \\
Estimated effect of a Win/Loss over time &
Estimated effect of Regret over time \\
\includegraphics[width=0.45\textwidth]{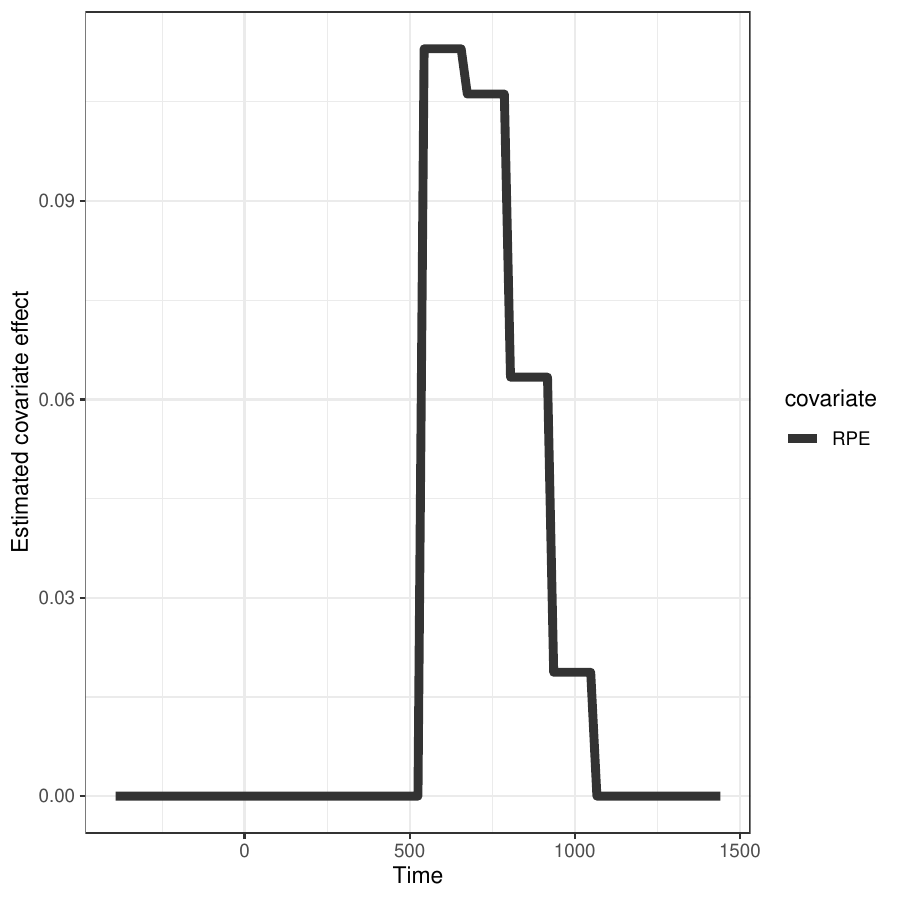} &
\includegraphics[width=0.45\textwidth]{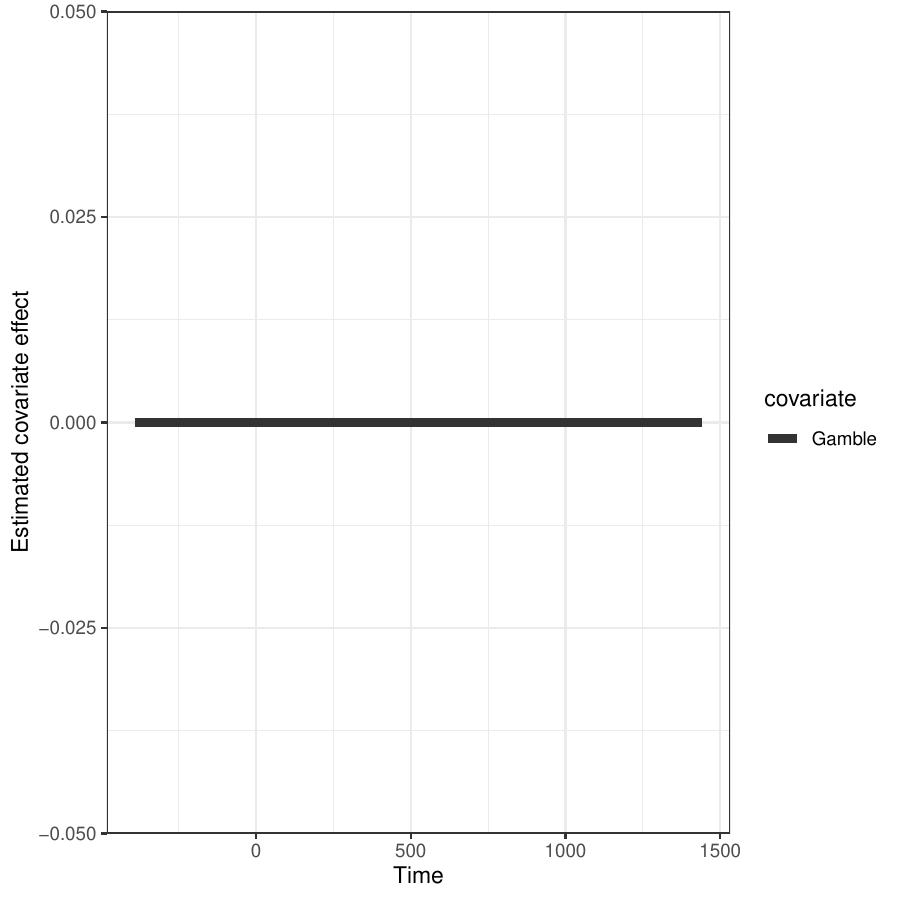} \\
Estimated effect of the reward prediction error (RPE) over time &
Estimated effect of Gamble over time \\
\end{tabular}
\end{center}
\caption{Estimated time-varying effects of the three outcome-related covariates (Win, Regret, RPE) and the choice-related covariate Gamble in single-variate analyses for the application of Section \ref{ssec:saez_data}. Only Win and Regret show high posterior probabilities of having an effect on brain activity in the time intervals}
\label{fig:EcoG_effect_singles}
\end{figure}

\begin{figure}
	\begin{center}
		\includegraphics[width=0.7\textwidth]{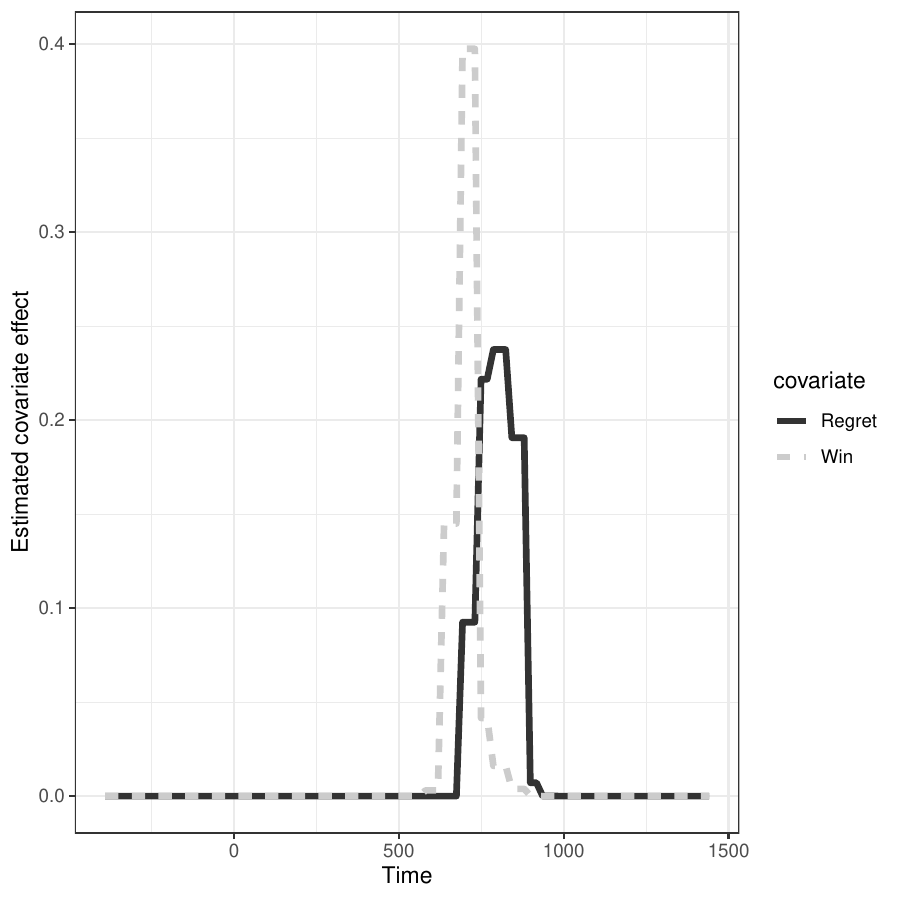}
	\end{center}
	\caption{Estimated time-varying effects of two outcome-related covariates (Win, Regret) in the multi-variable  analysis for the application of Section \ref{ssec:saez_data}.  The only relevant variable is Regret, with highest marginal probability of an effect $0.99$}
	\label{fig:EcoG_effect_multi}
\end{figure}

\bibliographystyle{Chicago}

\bibliography{references}
\end{document}